\documentclass[article]{siamart190516}
\usepackage[utf8]{inputenc}

\usepackage{todonotes}
\usepackage{enumitem}
\usepackage{amsfonts}
\usepackage{amsmath}
\usepackage{algorithm}
\usepackage{algpseudocode}
\usepackage{hyperref}
\usepackage{multirow}
\usepackage{multicol}
\usepackage{subcaption}
\usepackage[capitalize]{cleveref}
\usepackage{epsfig}
\usepackage{amssymb}
\usepackage{graphicx}
\usepackage{grffile}
\usepackage{appendix}
\usepackage{placeins}

\newcommand{\D}{\displaystyle}
\newcommand{\DF}[2]{\frac{\D#1}{\D#2}}

\newcommand{\ba}[1]{\begin{array}{#1}}

\newcommand{\ea}{\end{array}}
\newcommand{\br}{\textbf{r}}
\newcommand{\Hrm}{{\rm H}}

\newcommand{\beq}[1]{\begin{equation}\label{#1}}
\newcommand{\eeq}{\end{equation}}

\usepackage{color}

\newcommand{\mat}[1]{{#1}}

\newcommand\fpath{./}
\newcommand\tfpath{./}



\allowdisplaybreaks

\newcommand{\ignore}[1]{}
\newcommand{\ylrev}[1]{{\color{black}{#1}}}
\newcommand{\ylrevnew}[1]{{\color{black}{#1}}}

\usepackage{tikz}
\tikzstyle{cnode} = [draw, circle,scale=.6]
\tikzstyle{level 1} = [level distance=.35\textwidth, sibling distance=1\textwidth]
\tikzstyle{level 2} = [level distance=.35\textwidth, sibling distance=.45\textwidth]
\tikzstyle{level 3} = [level distance=.3\textwidth, sibling distance=.25\textwidth]
\usetikzlibrary{decorations.pathreplacing,patterns}

\title{A Fast Butterfly-compressed Hadamard-Babich Integrator for High-Frequency Helmholtz Equations in Inhomogeneous Media with Arbitrary Sources}

\author{Yang Liu\thanks{\ylrev{Applied Mathematics and Computational Research Division}, Lawrence Berkeley National Laboratory, Berkeley, CA, USA.\newline Email: 
		\texttt{liuyangzhuan@lbl.gov}} \and Jian Song \thanks{Department of Mathematics, Michigan State University, East Lansing, MI 48824, USA. Email: 
		\texttt{\{songji12,jqian\}@msu.edu}} \and Robert Burridge \thanks{Department of Mathematics and Statistics, University of New Mexico, Albuquerque, NM 87131, USA. Email: {\tt burridge137@gmail.com}} \and Jianliang Qian\footnotemark[2]}


\begin{document}
	
	\maketitle
	
	\begin{abstract}
		We present a butterfly-compressed representation of the Hadamard-Babich (HB) ansatz for the Green's function of the high-frequency Helmholtz equation in smooth inhomogeneous media. For a computational domain discretized with $N_v$ \ylrev{discretization cells}, the proposed algorithm first solves and tabulates the phase and HB coefficients via eikonal and transport equations with \ylrev{observation points and} point sources located at the Chebyshev nodes \ylrev{using a set of much coarser computation grids}, and then butterfly compresses the resulting HB interactions from all $N_v$ \ylrev{cell centers} to each other. The overall CPU time and memory requirement scale as $O(N_v\log^2N_v)$ \ylrev{for any bounded 2D domains with arbitrary excitation sources. A direct extension of this scheme to bounded 3D domains yields an $O(N_v^{4/3})$ CPU complexity, which can be further reduced to quasi-linear complexities with proposed remedies.} The scheme can also efficiently handle scattering problems involving inclusions in inhomogeneous media. Although the current construction of our HB integrator does not accommodate caustics, the resulting HB integrator itself can be applied to certain sources, such as concave-shaped sources, to produce caustic effects. Compared to finite-difference frequency-domain (FDFD) methods, the proposed HB integrator is free of numerical dispersion and requires fewer discretization points per wavelength. As a result, it can solve wave-propagation problems well beyond the capability of existing solvers. Remarkably, the proposed scheme can accurately model wave propagation in 2D domains with 640 wavelengths per direction and in 3D domains with 54 wavelengths per direction \ylrev{on a state-the-art supercomputer at Lawrence Berkeley National Laboratory}.        
	\end{abstract}
	
	\begin{keyword}
		butterfly algorithm, high-frequency waves, inhomogeneous Helmholtz equation, Hadamard-Babich ansatz, Lax-Friedrichs weighted non-oscillatory (WENO) schemes, Chebyshev interpolation, fast solvers, scattering problem, caustics, eikonal equation, transport equation, finite-difference frequency domain (FDFD) methods
	\end{keyword}
	
	\begin{AMS}
		15A23, 65F50, 65R10, 65R20
	\end{AMS}

\section{Introduction}
We are interested in finding the solution to the high-frequency Helmholtz equation with variable refractive index $n(\mathbf{r})$ subject to an arbitrary source:  
\begin{equation}
\label{eq:helm}[ \Delta + \omega^2 n^2(\mathbf{r}) ] u = -s(\mathbf{r})  \quad \mbox{in}\quad R^d, 
\end{equation}
where
\begin{equation}
\Delta = \sum_{i=1}^{d} \DF{\partial^2}{\partial x^2_i}, \quad \mathbf{r} = [x_1, x_2, \ldots, x_d]^T,
\end{equation}
$n(\mathbf{r})$ is the index of refraction (or the slowness function), $s(\mathbf{r})$ is a generic source function with compact support in a bounded domain $V \subset R^d$, $\omega$ is a large angular frequency, $d$ is the dimension, and the Sommerfeld radiation condition is imposed at infinity. When the source is a point source, $s(\mathbf{r})=\delta(\mathbf{r}, \mathbf{r}_0)$, the point-source solution of \eqref{eq:helm} is the Green's function $G(\mathbf{r},\mathbf{r}_0)$ with source location $\mathbf{r}_0$. Assume that we are given a volumetric discretization of the computational domain $V$ with $N_v$ degrees-of-freedom (DOFs) so that the Shannon sampling principle, $\omega h = O(1)$, is obeyed, where $h$ is the volumetric mesh size, implying that $N_v= O(\omega^d)$. Accordingly, numerical discretization of the Helmholtz equation \eqref{eq:helm} by a variety of methods, such as finite-difference, finite-element, integral-equation, or hybrid asymptotic  finite-element, gives rise to an  $N_v\times N_v$ linear system. Ideally, we desire a numerical scheme to have two properties: having almost linear complexity, up to poly-logarithmic factors, in both CPU time and memory storage units (in solving the linear system), and having asymptotically uniform accuracy with respect to $1/\omega$ (at least) as $\omega\rightarrow\infty$ while respecting the Shannon  sampling principle. However, so far, no method is available enjoying the two properties simultaneously in the literature. Observing that the Helmholtz solution can be written as  
\begin{eqnarray}
\label{v2vintro}u(\mathbf{r})=\iint_V G(\mathbf{r},\mathbf{r}_0)s(\mathbf{r}_0)d\mathbf{r}_0
\end{eqnarray} 
if the Green's function $G(\mathbf{r},\mathbf{r}_0)$ is known, we propose to first use the Hadamard-Babich (HB) high-frequency asymptotic ansatz to compute the Green's function, and then use the fast butterfly algorithm to compress the HB integrator, and finally apply the compressed integrator to the source function to obtain the desired Helmholtz solution. As we will see, the resulting new numerical scheme enjoys the two desired properties simultaneously.

{\bf Why to use the HB ansatz.} High-frequency asymptotics typically assumes an expansion series for the Green's function in terms of the phase (or traveltime) and amplitude functions, which satisfy the eikonal and transport equations, respectively. The Eulerian asymptotics solves these equations with partial differential equation (PDE) solvers and utilizes the resulting asymptotic ingredients to construct the Green's function for each point source. However, the usual geometrical-optics ansatz \cite{avikel63} does not yield uniform accuracy near the source as $\omega\rightarrow\infty$ and poses difficulties when initializing the amplitudes. Recently, the HB ansatz \cite{bab65} based Eulerian asymptotics has been developed in \cite{qiayualiuluobur16} which yields a uniform asymptotic solution in the region of space containing a point source but no other caustics. The eikonal and transport equations for the HB coefficients are solved with high-order Lax-Friedrichs weighted non-oscillatory (WENO) schemes which are initialized near the source point with high-order Taylor expansions \cite{qiayualiuluobur16}. The resulting HB integrator is a highly accurate approximation of the Green's function for media with a smooth and analytic refractive index $n(\mathbf{r})$ that does not introduce caustics. That said, for a computational domain discretized with $N_v$ \ylrev{cells}, the Eulerian asymptotics requires solving the eikonal and transport equations $N_v$ times for an arbitrary source function $s(\mathbf{r})$ that can be nonzero across the entire computational domain. Because such an $N_v\times N_v$ discretized HB integrator is still prohibitively expensive to compute, a fast compressed representation is called for. Therefore, we will develop low-rank representations using the Chebyshev interpolation for these HB ingredients. 

{\bf Why to use the butterfly algorithm.} We consider an algebraic compression tool called butterfly \cite{Eric_1994_butterfly,liu2021butterfly,li2015butterfly,Yingzhou_2017_IBF,Pang2020IDBF}, a multilevel numerical linear algebra algorithm well-suited for representing highly oscillatory operators such as Fourier transforms and integral operators \cite{Candes_butterfly_2009,Lexing_SFT_2009,Haizhao_2018_Phase}, special function transforms \cite{Tygert_2010_spherical,bremer2020SHT,Oneil_2010_specialfunction}, free-space \cite{michielssen_multilevel_1996,Eric_1994_butterfly,Yang_2020_BFprecondition}, numerical \cite{liu2021sparse}, and inverse \cite{Han_2013_butterflyLU,Liu_2017_HODBF,Han_2017_butterflyLUPEC,Han_2018_butterflyLUDielectric} Green's functions for Helmholtz and Maxwell's equations. As the 
HB integrator consists of non-oscillatory HB coefficient functions and oscillatory Hankel functions defined via non-oscillatory phase functions, we show that the discretized HB integrator is butterfly compressible. The proposed scheme first constructs low-rank representations for the phase function and HB coefficients via solving the eikonal and transport equations with a set of coarse grids for a constant number of point sources located at the Chebyshev interpolation nodes. Next, it leverages butterfly algorithms and its hierarchical extension, the hierarchical off-diagonal butterfly (HODBF) matrix \cite{Liu_2017_HODBF}, to compress the HB integrator for \ylrev{cell} sizes proportional to the angular frequency, via sampling the phase and HB coefficients in a manageable way. Once compressed, the HB integrator can be applied to any source function as a simple matrix-vector multiplication. This framework is also extended to handle a computational domain with sound-hard inclusion, where an additional surface integral equation using the HB integrator is solved. We analyze our proposed algorithm to validate that the CPU time and memory requirement for most involved discretized integrators scale at most as $O(N_v\log^2N_v)$. Moreover, this scheme obeys the Shannon sampling principle, is free of dispersion errors due to the asymptotic nature of the method, and requires much smaller numbers of \ylrev{cell} points per wavelength than finite-difference solvers, and it further has been distributed-memory parallelized. As a result, it can solve wave propagation problems well beyond the capability of existing FDFD solvers. Remarkably, the proposed scheme can accurately model wave propagation in 2D domains with 640 wavelengths per direction, and 3D domains with 54 wavelengths per direction \ylrev{on a state-the-art supercomputer at Lawrence Berkeley National Lab}.      

\subsection{Related works} 
To put our work into perspective, let us first point out that in the high-frequency regime, the notion of convergence is different from the standard numerical analysis. Because of pollution errors (numerical dispersions) \cite{babsau00}, numerical errors of standard methods for the Helmholtz equation \eqref{eq:helm} do not decay as $\omega \rightarrow \infty$ if $\omega h$ is fixed, namely, the Shannon sampling principle is respected.  Consequently, in the high-frequency asymptotic regime, namely, $\omega\rightarrow \infty$, we seek numerical methods which both converge  asymptotically with respect to $\omega$ and obey the Shannon sampling principle. 

There are three popular classes of numerical methods for solving the variable coefficient Helmholtz (or Maxwell's) equations: the differential-equation method such as finite-difference \cite{chen2013optimal,operto20073d} or finite-element \cite{Babuska:A_Generalized_Finite_Element_Method_for_solving_the_Helmholtz_equation_in_two_dimensions_with_minimal_pollution, Monk_Wang:A_least-squares_method_for_the_Helmholtz_equation} method, the volume-integral equation (VIE) method \cite{schaubert1984tetrahedral,YURKIN2007558,Eikrem2020IterativeLS}, and the hybrid asymptotic finite-element based methods \cite{Giladi_Keller:A_hybrid_numerical_asymptotic_method_for_scattering_problems,DBLP:journals/jcphy/NguyenPRC15,Howarth2014:New_generation_finite_element_methods_for_forward_seismic_modelling,fanqiazepzha17,fanqiazepzha18,lamqia19}. We consider the following three aspects. 

{\bf Accuracy.} The differential-equation method, for instance the finite-difference frequency-domain (FDFD) method, leverages absorbing boundary conditions  and finite-difference stencils to form a sparse $N_v\times N_v$ linear system, whose inverse gives numerical Green's functions. Given a uniform accuracy requirement for all frequencies $\omega$, pollution errors demand that differential-equation methods oversample the numerical solution to mitigate the pollution effects, leading to large-scale systems with sub-optimal DOFs to solve when $\omega\rightarrow \infty$, and the resulting CPU time and memory storage units are sub-optimal with respect to the Shannon principle. On the other hand, the VIE method \cite{Sadeed2022VIE} leverages volumetric equivalent sources and the exact, free-space Green's function to form a dense $N_v\times N_v$ linear system. The solution of the linear system yields the equivalent source densities. Although VIE is almost free of numerical dispersion, the system solve is even more expensive than FDFD. 
The hybrid asymptotic finite-element methods \cite{Giladi_Keller:A_hybrid_numerical_asymptotic_method_for_scattering_problems,DBLP:journals/jcphy/NguyenPRC15,Howarth2014:New_generation_finite_element_methods_for_forward_seismic_modelling,fanqiazepzha17,fanqiazepzha18,lamqia19} incorporate phase or phase gradient information into the formulation, and the hybrid methods in \cite{fanqiazepzha17,fanqiazepzha18} demonstrate almost linear complexity in CPU time and converge asymptotically as $1/\sqrt{\omega}$ when $\omega\rightarrow \infty$ for 2-D problems. We refer to \cite{fanqiazepzha18} for references on various approaches to eliminate or mitigate pollution effects.  Our proposed approach is different from the above three classes, and it is based on the uniform asymptotic HB ansatz and enjoys the following unique feature: given a uniform accuracy requirement for all frequencies $\omega$, the accuracy behaves as $1/\omega$ (at least) asymptotically and shows no apparent dispersion errors when $\omega\rightarrow \infty$, as long as $\omega h$ is fixed. 

{\bf Efficiency.} Given CPU time and storage requirement scalable with respect to $\omega$ by fixing $\omega h$ to be a constant, we would like to solve the linear system $N_v\times N_v$ in $O(N_v)$ time and $O(N_v)$ memory storage, up to poly-logarithmic factors.  
As stated in \cite{fanqiazepzha18}, standard sparse linear algebra algorithms based on nested dissection \cite{GeorgeNested_dissection} and multi-frontal methods \cite{Duff_Reid:The_Multifrontal_Solution_of_Indefinite_Sparse_Symmetric_Linear} have a suboptimal complexity, and they are prohibitively expensive memory-wise in dimension greater than two \cite{Davis:UMFPACK,li_demmel03:SuperLU_DIST,Amestoy_Weisbecker:compressed_MUMPS,Wang_Li_Sia_Situ_Hoop:Efficient_Scalable_Algorithms_for_Solving_Dense_Linear_Systems_with_Hierarchically_Semiseparable_Structures,Bebendorf:2008,Hackbusch:Hierarchical_matrices}.  As a result, quasi-linear-cost   preconditioners are developed, such as \cite{EngquistYing:Sweeping_H,EngquistYing:Sweeping_PML,CStolk_rapidily_converging_domain_decomposition,Chen_Xiang:a_source_transfer_ddm_for_helmholtz_equations_in_unbounded_domain,GeuzaineVion:double_sweep,ZepedaDemanet:the_method_of_polarized_traces}, among many others.  In a most recent work \cite{liu2021sparse}, a sparse approximate multifrontal factorization with butterfly compression for high-frequency wave equations has been developed, and complexity analysis and numerical experiments demonstrate that it enjoys $O(N_v \log^2N_v)$ computation and $O(N_v)$ memory complexity when applied to an $N_v\times N_v$ sparse system arising from 3-D high-frequency Helmholtz problems. However, most of these methods use low-order discretizations so that they require oversampling to produce accurate solutions, thus resulting in suboptimal complexities with respect to the frequency $\omega$. Our proposed approach is also based on butterfly compression, but it enjoys the following unique feature: 
 the overall CPU time and memory requirement scale as $O(N_v\log^2N_v)$ when $\omega\rightarrow \infty$ as long as $\omega h$ is fixed (no oversampling). 

{\bf Accuracy and efficiency.} As stated in \cite{fanqiazepzha18}, only a few references deal with both accuracy and efficiency simultaneously. \cite{Taus_Demanet_Zepeda:HDG_Helmholtz} develops a hybridizable discontinuous Galerkin method coupled with the method of polarized traces, \cite{ZepedaZhao:Fast_Lippmann_Schwinger_solver,YingLiu:Sparsify_and_sweep} deal with an integral version of the Helmholtz equation with inclusions by coupling with sparsification and a fast preconditioner, and \cite{fanqiazepzha17,fanqiazepzha18,lamqia19} develop adaptive discretizations for Helmholtz equations by learning the dominant wave directions. In our work, we also consider an integral version of the Helmholtz equation for curved inclusions. Our proposed approach for both cases of inclusion and no-inclusion enjoys both accuracy and efficiency simultaneously in the sense that the overall CPU time and memory requirement scale as $O(N_v\log^2N_v)$ and the accuracy behaves as $1/\omega$ (at least) asymptotically when 
$\omega\rightarrow \infty$, as long as $\omega h$ is fixed (no oversampling). 

Finally, we remark that a version of the butterfly algorithm \cite{Candes_butterfly_2009} has been implemented in fast Huygens sweeping methods for computing high-frequency Green's functions of point-source Helmholtz equations based on traditional geometrical optics \cite{luoqiabur14a} and the HB ansatz \cite{luqiabur16}, respectively. Our butterfly compressed HB integrator proposed here is different from those in \cite{luoqiabur14a,luqiabur16,qialuyualuobur16} in that we are treating arbitrary sources rather than a single point source and we are using the hierarchical off-diagonal butterfly (HODBF) algorithm \cite{Liu_2017_HODBF}, which is more efficient than the classical butterfly algorithm \cite{Candes_butterfly_2009} or hierarchical matrices \cite{Han_2017_butterflyLUPEC} to compress the high-frequency interaction matrix. We also mention in passing that the necessity of using butterfly compression for fast computing high-frequency waves has been shown in \cite{engzha18}.

\subsection{Contents}
In Section \ref{sec:babich}, we introduce the HB ansatz for the point-source problem of Helmholtz equations. Direct computation of the HB integrator for arbitrary sources is detailed in Section \ref{sec:babich_anysource}. Fast computation of the HB integrator for arbitrary sources is described in Section \ref{sec:fastbabich_anysource}. In Section \ref{sec:example}, we show numerical results to demonstrate the performance of the proposed scheme.

\section{Hadamard-Babich Ansatz for Point Source}\label{sec:babich}
To solve equation \eqref{eq:helm} asymptotically with a point source when $\omega\rightarrow\infty$, Babich \cite{bab65} used Hadamard's method to obtain the following Hankel-based ansatz  so as to expand the solution $u(\mathbf{r})=G(\mathbf{r},\mathbf{r}_0)$, 
\begin{equation}
\label{hb} g_{hb}(\mathbf{r},\mathbf{r}_0) = \sum_{s=0}^{\infty} v_s(\mathbf{r},\mathbf{r}_0) f_{s - (d-2)/2}(\omega,\tau),
\end{equation}
where
\begin{equation}
\label{termf} f_{q}(\omega,\tau) = \mathrm{i} \DF{\sqrt{\pi}}{2} e^{i q \pi} \left( \DF{2 \tau}{\omega} \right)^q \Hrm^{(1)}_{q}(\omega\tau).
\end{equation}
Here $\Hrm^{(1)}_{q}$ is the $q$-th Hankel function of the first kind, and the phase $\tau$, more precisely its square $\tau^2$, satisfies the eikonal-squared equation \eqref{eikonal}, 
\begin{equation}
\label{eikonal}
|\nabla\tau^2|^2 = 4\tau^2n^2,\quad \tau^2(\mathbf{r},\mathbf{r}_0)|_{\mathbf{r}=\mathbf{r}_0} =0,  
\end{equation}
which is equivalent to the usual form of the eikonal equation,  
\begin{equation}
\label{eikonalnormal}
|\nabla\tau| = n, \quad \tau(\mathbf{r},\mathbf{r}_0)|_{\mathbf{r}=\mathbf{r}_0} =0.   
\end{equation}
We remark that two forms of eikonal equations as \eqref{eikonal} and \eqref{eikonalnormal} are needed because $\tau$ itself as a distance function near the source is not differentiable at the source but $\tau^2$ is, as long as $n$ is locally smooth \cite{avikel63,symqia03slowm}.  

We call the above ansatz the Hadamard-Babich ansatz and the coefficients $v_s$ the HB coefficients which are different from amplitude functions in the classical geometrical optics. 

The HB coefficients $v_{s+1}$ in expansion (\ref{hb}) satisfy the recurrent system
\begin{equation}
\label{equVs} 4 \tau n^2 \DF{\partial v_{s+1}}{\partial \tau} + v_{s+1} \left[ \Delta \tau^2 + 2 n^2 (2 s + 2 -d) \right] = \Delta v_s, \quad s = -1, 0, 1, \ldots,
\end{equation}
and $v_{-1} \equiv 0$, where the differentiation $\frac{\partial}{\partial\tau}$ is performed along the ray departing from $\mathbf{r}_0$. Assuming $v_s(\mathbf{r},\mathbf{r}_0)$ to be continuous in the neighborhood of $\mathbf{r}=\mathbf{r}_0$, we get the initial conditions for $v_0$ at $\mathbf{r} = \mathbf{r}_0$,
\begin{eqnarray}
\label{initialv0} v_0(\mathbf{r},\mathbf{r}_0)|_{\mathbf{r}=\mathbf{r}_0} &=& \DF{n_0^{d-2}}{2 \pi^{(d-1)/2}}, \quad n_0 = n(\mathbf{r}_0).
\end{eqnarray}
If $n(\mathbf{r})$ is smooth, then $\tau$ is smooth in the neighborhood of $\mathbf{r}_0$ except at the source point itself, but $\tau^2$ is smooth in the source neighborhood, including the source itself. If  $n(\mathbf{r})$ is analytic, it can be shown \cite{bab65,luqiabur16} that the function $v_0(\mathbf{r},\mathbf{r}_0)$ will also be analytic in $\mathbf{r}$ when $\mathbf{r}$ is in the neighborhood of $\mathbf{r}_0$; furthermore, $v_{s+1}(\mathbf{r},\mathbf{r}_0)$ are determined in terms of $v_0$ and $\tau$ so that $v_{s+1}$ are analytic when $v_0$ are analytic for $s=0, 1, \ldots$. 

\subsection{Assumption and essential estimates}
In the following, we {\bf assume} that these functions $\tau^2$ and $v_s$ ($s=0,1,2, \cdots$) are analytic in the computational domain $V$ for all source points $\mathbf{r}_0\in V$, so that {\it the HB ansatz is valid for all point sources} $\mathbf{r}_0\in V$. This means that all these HB ingredients are smooth single-valued functions in $V$ for any point source in $V$. Although this is a strong assumption, it will serve as a good starting point for many applications.  

By the essence of asymptotics \cite{lax57}, the difference between the true Green's function of \eqref{eq:helm} and the HB ansatz \eqref{hb} can be written as 
\begin{eqnarray}
G(\mathbf{r},\mathbf{r}_0) - g_{hb}(\mathbf{r}, \mathbf{r}_0)  = O({1}/{\omega^\infty}),
\end{eqnarray}
where the ``error'' term on the right-hand side means that the difference can be made arbitrarily smooth for all $\mathbf{r}$, as long as the HB ingredients $\tau^2$ and $v_s$ ($s=0,1, \cdots$) are analytic. 

Moreover, when $|\tau|\leq A\equiv{\rm constant}$, the Hankel-based terms $f_q$ with $q=N-(d-2)/2$ have the following asymptotic forms for large $\omega$ \cite{bab65}, 
\begin{equation}\label{eq:Babich_f_asymptotics}
f_q(\omega, \tau) = \left \{
\begin{array}{ll}
O \left( \left( \frac{\tau}{\omega} \right)^q \left( \omega \tau \right)^{-1/2} \right) = O(\omega^{-q-1/2} \tau^{q-1/2}) , & \quad \mbox{if }  \omega \tau \geq A_1 \equiv \mbox{constant}, \\

O \left(\ln (\omega \tau) +1 \right), &\quad  \mbox{if }\omega \tau \leq A_1 \mbox{ and } q = 0, \\

O \left( \left( \frac{\tau}{\omega} \right)^q \left( \omega \tau \right)^{-q} \right) = O(\omega^{-2q}) , & \quad \mbox{if } \omega \tau \leq A_1 \mbox{ and } q \geq 1,
\end{array}  \right .
\end{equation}
where $N=0,1,2,\ldots$, and both $A$ and $A_1$ are arbitrarily fixed positive constants. 

In this article, we truncate the HB ansatz \eqref{hb} to obtain a $(N+1)$-term expansion, 
\begin{equation}
\label{babich} g(\mathbf{r},\mathbf{r}_0) = \sum_{s=0}^{N} v_s(\mathbf{r},\mathbf{r}_0) f_{s - (d-2)/2}(\omega,\tau),
\end{equation}
and we can estimate the truncation error asymptotically in $\omega$ by using the asymptotic forms of $f_q$ in \eqref{eq:Babich_f_asymptotics}, 
\begin{equation}\label{eq:inferror}
\|g_{hb}(\cdot, \mathbf{r}_0)-g(\cdot,\mathbf{r}_0)\|_{L^{\infty}(V)}\leq O\left(\left({1}/{\omega}\right)^{(N+1-\frac{d-3}{2})}\right); 
\end{equation}
see a similar analysis in \cite{fanqiazepzha18}. 

Now we estimate the difference of the following solution formulas,
\begin{eqnarray}
u_{true}(\mathbf{r})&=&\iint_V G(\mathbf{r},\mathbf{r}_0)s(\mathbf{r}_0)d\mathbf{r}_0  \quad \mbox{for}\quad \mathbf{r}\in V,\\
u_{\rm hb}(\mathbf{r})&=&\iint_V g(\mathbf{r},\mathbf{r}_0)s(\mathbf{r}_0)d\mathbf{r}_0   \quad \mbox{for}\quad \mathbf{r}\in V,
\end{eqnarray}
where we have assumed that the source $s$ is compactly supported and an appropriate absorbing boundary condition has been used to truncate the entire space $R^d$ to the computational domain $V$. Accordingly, we have 
\begin{eqnarray} 
\|u_{true}-u_{\rm hb}\|_{L^\infty(V)} &\leq& \iint_V \|G(\cdot,\mathbf{r}_0)-g(\cdot, \mathbf{r}_0)\|_{L^{\infty}(V)} |s(\mathbf{r}_0)|d\mathbf{r}_0 \nonumber\\
&\leq& \iint_V \|G(\cdot,\mathbf{r}_0)-g_{hb}(\cdot, \mathbf{r}_0)\|_{L^{\infty}(V)} |s(\mathbf{r}_0)|d\mathbf{r}_0 \nonumber \\
& & \quad + \iint_V \|g_{hb}(\cdot,\mathbf{r}_0)-g(\cdot, \mathbf{r}_0)\|_{L^{\infty}(V)}|s(\mathbf{r}_0)|d\mathbf{r}_0 \nonumber \\
&\leq& O\left(\left({1}/{\omega}\right)^{(N+1-\frac{d-3}{2})}\right). 
\end{eqnarray}
This estimate is the foundation for our new numerical schemes, and further numerical analysis is an ongoing work. 

In what follows, we drop the subscript of $u$ and by default refer to it as the HB ansatz solution, and we will only consider two cases: $N=0$ and $N=1$, yielding the first- and second-order asymptotic expansion, respectively.   

\subsection{Approximations of eikonals and HB coefficients near the source}

Evaluating \eqref{babich} requires computation of the squared-phase function $\tau^2$ and the HB coefficient functions $v_0$ and $v_1$. In a neighborhood of the point source where the function $\tau$ is single-valued, $\tau^2$, $v_0$, and $v_1$ can be computed by solving the eikonal equation \eqref{eikonalnormal} and transport equations \eqref{equVs} with initial conditions $v_0$ in \eqref{initialv0} and $v_1(\mathbf{r},\mathbf{r}_0)|_{\mathbf{r}=\mathbf{r}_0}$ to be determined, and the numerical details have been given in \cite{qiayualiuluobur16}. 

To expedite our presentation, we summarize some numerical aspects in the following subsections. One essential difficulty in computing the eikonal and HB coefficients to high-order accuracy is how to initialize these quantities near the point source for numerical PDE solvers, such as Lax-Friedrichs WENO schemes \cite{kaooshqia04,zha05,zhazhaqia06}. Because initial conditions for the eikonal and transport equations are only specified at the source point and high-order schemes need accurate initial values within a small neighborhood of the source to start with, our analyticity assumptions allow us to extract high-order approximations of eikonals and HB coefficients near the source. To do that, we may carry out local Taylor expansions of these functions and further insert these relations into related PDEs so that we may obtain some recursive relations to compute these functions locally. Since such recursive relations actually provide (crude) approximations to the HB ingredients which in turn may be used to construct the Green's function (in a very crude manner) in a large neighborhood, we illustrate how to obtain such recursive relations in the following.

\subsubsection{High-order factorization of eikonals}
We have the following expansion near the source point $\mathbf{r}_0$ (dropped in the expressions below) for squared $\tau$ and $n$:
\begin{equation}
\label{expansionT} \tau^2({\bf{r}}) \approx \sum_{k=0}^{P_T} T_k({\bf{r}}), \quad n^2({\bf{r}}) \approx \sum_{k=0}^{P_S} S_k({\bf{r}}),
\end{equation}
where $T_k({\bf{r}})$ and $S_k({\bf{r}})$ are homogeneous polynomials of degree $k$ in ${\bf{r}}$, and $P_T$ and $P_S$ denote the truncation order of $\tau^2$ and $n^2$, respectively. Upon substituting \eqref{expansionT} into \eqref{eikonal}, we can determine $T_k$ term-by-term by
\begin{eqnarray}
&&T_0 = 0, \quad T_1 = 0,\quad  T_2({\bf{r}}) = S_0 {\bf{r}}^2, \\
\label{recursiveT} && (P-1) S_0 T_P = \sum_{k=1}^{P-2} S_k T_{P-k} - \DF{1}{4} \sum_{k=2}^{P-2} \nabla T_{k+1} \cdot \nabla T_{P-k+1}, ~~P\geq 3. 
\end{eqnarray}

Since we will solve for $\tau$ in the eikonal equation \eqref{eikonalnormal} rather than $\tau^2$, we will take the square root of 
\eqref{expansionT} to obtain an approximate $\tau$, which will serve as a high-order approximation of the exact $\tau$ in numerically solving the eikonal equation \eqref{eikonalnormal}. 

\subsubsection{High-order factorization of coefficients $v_s$}
Although $v_s$ are assumed to be analytic near the source, we still need to obtain high-order approximations to $v_s$ within a small neighborhood of the source so that high-order numerical schemes can be initialized near the source. Therefore, we will expand the coefficients $v_s$ as homogeneous polynomials of degree $k$ in $\br$ as well.

The coefficient $v_0$ can be expanded as
\begin{equation}
\label{expansionV0} v_0 = \sum_{k=0}^{P_B} B_k({\bf r}),
\end{equation}
where $B_k({\bf{r}})$ are homogeneous polynomials of degree $k$ in ${\bf{r}}$. Upon substituting \eqref{expansionV0} into \eqref{equVs}, we can determine $B_k$ term-by-term by
\begin{eqnarray}
B_0 &=& \DF{n_0^{d-2}}{2 \pi^{(d-1)/2}},\quad B_1 = \frac{1}{2S_0} \left(-\frac{1}{2}B_0 \Delta T_3 + d B_0S_1\right), \\
\label{recursiveB} 2 P S_0 B_P &=& - \sum_{k=1}^{P-1} \nabla B_k \cdot \nabla T_{P+2-k} - \DF{1}{2} \sum_{k=0}^{P-1} B_k \Delta T_{P+2-k} \\ 
\nonumber          & &\quad + d \sum_{k=0}^{P-1} B_k S_{P-k}, ~ P\geq 2. 
\end{eqnarray}

Similarly, we expand $v_1$ as 
\begin{equation}
\label{expansionV1} v_1 = \sum_{k=0}^{P_C}C_k({\bf r}), 
\end{equation}
and obtain $C_0$ and $C_1$, 
\begin{eqnarray}
2S_0 C_0 &=& \frac{1}{2}\Delta{B}_2. \label{C0}, \\
4S_0 C_1 &=& \frac{1}{2}\Delta{B}_3 -\frac{1}{2}\Delta{T}_3\;C_0 - (2-d)S_1\;C_0, \label{C1} 
\end{eqnarray}
and $C_P$ for $P\geq2$ by
\begin{eqnarray}
2 (P+1) S_0 C_P &=& \frac{1}{2}\Delta{B}_{p+2}-\sum_{k=1}^{P-1} \nabla C_k \cdot \nabla T_{P+2-k} - \DF{1}{2} \sum_{k=0}^{P-1} C_k \Delta T_{P+2-k} \nonumber \\
&&\quad\quad - (2-d)\sum_{k=0}^{P-1} C_k S_{P-k}. \label{recursiveC}
\end{eqnarray}

To ensure the same order of accuracy for solving \eqref{eikonal} and \eqref{equVs}, the truncation orders are chosen as $P_B=P_S=P_T-2=P_C+2$.  If the analytic function $n$ varies very slowly in a certain neighborhood of the point source that we are interested in so that the resulting Taylor expansions for $\tau^2$, $v_0$, and $v_1$ are sufficiently accurate, then we may use these Taylor expansions to compute the HB ingredients so as to construct the Green's function in this neighborhood. However, in most of situations, we are interested in wave propagation in large domains, and we will use these local Taylor expansions to initialize high-order Lax-Friedrichs WENO sweeping schemes; the related details have been given in \cite{qiayualiuluobur16} which is briefly summarized here.   

\subsection{Efficient algorithms for computing eikonals and HB coefficients}
According to numerical analysis for PDEs, to obtain $v_1$ with first-order accuracy, we need a third-order accurate approximation of $v_0$ and hence a fifth-order accurate approximation of $\tau$. The Lax-Friedrichs WENO schemes as illustrated in \cite{qiayualiuluobur16} can be employed to solve the eikonal equation \eqref{eikonal}. To resolve the singularity at the source, we use the factorization approach \cite{pic97,zharechov05,fomluozha09,luoqiabur14} so that $\tau$ can be factored as 
\begin{equation}
\tau = \tilde{\tau}\bar\tau.
\end{equation}
Here $\tilde{\tau}$ is pre-determined analytically to capture the source singularity, and for instance, we can choose $\tilde{\tau}(\bold r, \bold r_0)$ as $n(\bold r_0)|\bold r-\bold r_0|$ or the square root of the expansion \eqref{expansionT}. Hence $\bar\tau$ is the new unknown which is smooth at the source and satisfies the factored eikonal equation
\begin{equation}\label{faceik}
|\tilde{\tau}\nabla\bar\tau+\bar\tau\nabla\tilde{\tau}| = n.
\end{equation}

Accordingly, high-order Lax-Friedrichs WENO schemes \cite{kaooshqia04,zhazhaqia06,xiozhazhashu10,qiayualiuluobur16} can be applied to solve \eqref{faceik}. In order to obtain a $P_T$-th order accurate approximation of $\tau$ on a mesh of size $h_0$, $\bar\tau$ needs to be initialized in a neighborhood of size $2(P_T-1)h_0$ centered at the source, and these initial values will be fixed during the iterations. In the computation, we will take $P_T=6$ to obtain $\tau$. 

After obtaining a high-order approximation for $\tau$, we apply Lax-Friedrichs WENO schemes to solve transport equations \eqref{equVs} as illustrated in \cite{qiayualiuluobur16}. When $s=-1$, $v_0$ will be initialized as \eqref{expansionV0} in a neighborhood of size $2(P_B-1)h_0$ centered at the source and these values will be fixed during iterations; at other points, high-order Lax-Friedrichs WENO-based schemes are used to update $v_0$. Similarly, when $s=0$, $v_1$ is then initialized as \eqref{expansionV1} in a neighborhood of size $2(P_C-1)h_0$ centered at the source and these values will be fixed during iterations; the values at other points for $v_1$ will be updated using the high-order Lax-Friedrichs WENO-based schemes.

To analyze the complexity of these numerical schemes for computing these HB ingredients, we assume that the $d$-dimensional computational domain is partitioned into a finite-difference mesh of $N_W^d$ points with $N_W$ sampling points in each direction. 
Since these high-order Lax-Friedrichs WENO-based schemes are iterative by nature, we may assume that  these high-order schemes have a superlinear complexity as analyzed \cite{luqiabur16}, and hence the computational cost for these HB ingredients is $O(N_W^d\log N_W)$ in general. 

Nevertheless, we emphasize that since these HB ingredients are independent of the angular frequency $\omega$, the above complexity is for computing the HB ingredients only and consequently is not directly related to computing wave fields, and we can use very coarse meshes to compute these quantities.  Moreover, once they are computed, these ingredients can be compressed into low-rank representations and reused for different frequencies as shown in Section \ref{sec:lowrankpa}. 

\section{Direct Computation of Hadamard-Babich Integrator for Arbitrary Source}\label{sec:babich_anysource}
Considering equation \eqref{eq:helm} in a spatial domain $V=[0,1]^d$, we can apply the HB ansatz to compute the wave field in $V$ due to an arbitrary source function $s(\mathbf{r})$. We consider two situations: domains without inclusion and domains with inclusion. 

\subsection{Domain with no inclusion} When no inclusion is present, the field $u(\mathbf{r})$ can be expressed as 
\begin{eqnarray}
\label{v2v}u(\mathbf{r})=\iint_V g(\mathbf{r},\mathbf{r}_0)s(\mathbf{r}_0)d\mathbf{r}_0,
\end{eqnarray} 
where the Green's function $g(\mathbf{r},\mathbf{r}_0)$ is given in equation (\ref{babich}). We call this the Hadamard-Babich integrator. 

To numerically compute (\ref{v2v}), the domain is discretized into $N_v$ regular \ylrev{cells with cell} size $h$, where $h$ is typically a constant fraction of the $\omega$-dependent wavelength. The source function can be discretized with local volume basis functions $s(\mathbf{r}) = \sum_j s(\mathbf{r}_j)\ylrev{h^d} b^v_j(\mathbf{r})$, where the basis function $b^v_j(\mathbf{r})$ is nonzero only inside the source cell $c_j$ with center $\mathbf{r}_j$:
\begin{eqnarray} 
b^v_j(\mathbf{r})=\begin{cases} \frac{1}{h^d} &\mbox{if } \mathbf{r} \in c_j, \\
0 & \mbox{otherwise}. \end{cases} 
\end{eqnarray}
\ylrev{Note that for a point source located inside cell $c_i$, the source function is approximated with $s(\mathbf{r}_j)={\delta_{ij}}/{h^d}$}. The field at the center of each cell $u(\mathbf{r}_i)$ can be computed as  
\begin{eqnarray}
U = K^{v2v}I, \label{eq:m_v2v}
\end{eqnarray}  
where $I$ denotes a vector of length $N_v$ that collects $s(\mathbf{r}_j)\ylrev{h^d}$, $U$ denotes a vector that collects $u(\mathbf{r}_i)$, and $K^{v2v}_{ij}= g(\mathbf{r}_i,\mathbf{r}_j)$ for $i\neq j$. The self term $K^{v2v}_{ii}$ can be computed analytically by integrating the free space Green's function over the source cell, where the index of refraction $n$ is taken to be constant; see \cref{sec:self} for more details. 

The naive computation of all non-diagonal terms of $K^{v2v}$ requires solving the eikonal equation \eqref{eikonalnormal} and transport equations \eqref{equVs} for $N_v$ times to tabulate $\tau^2(\mathbf{r}_i,\mathbf{r}_j)$, $v_0(\mathbf{r}_i,\mathbf{r}_j)$ and $v_1(\mathbf{r}_i,\mathbf{r}_j)$, requiring at least $O(N_v^2)$ CPU time and memory. Moreover, the assembly and application of $K^{v2v}$ also require $O(N_v^2)$ time and memory. 

We remark in passing that when the medium is homogeneous, the HB integrator \eqref{v2v} is exact since the HB ansatz \eqref{babich} yields the exact Green's function in this case; when the medium is inhomogeneous and smooth without inducing caustics,  the HB integrator \eqref{v2v} is accurate in the asymptotic order $O(1/\omega^{\infty})$ without truncating the ansatz \eqref{babich}, and it is accurate in the asymptotic order $O({1}/{\omega}^{(2-\frac{d-3}{2})})$ when truncating the ansatz \eqref{babich} up to the first two terms. Detailed numerical analysis on this is an ongoing work. 
 
\subsection{Domain with sound hard inclusion} For simplicity, this subsection only considers 2D domains with curve inclusion;  however the proposed scheme can be trivially extended to 3D domains with surface inclusion. Considering a sound hard curve denoted by $C$, the source function will generate an incident field $u^{\rm inc}(\mathbf{r})$ that induces an equivalent source $p(\mathbf{r})$ on $C$, which in turn generates a scattered field $u^{\rm sca}(\mathbf{r})$. More specifically, we have 
\begin{eqnarray}
u^{\rm inc}(\mathbf{r}):=\iint_{V}g(\mathbf{r},\mathbf{r}_0)s(\mathbf{r}_0)d\mathbf{r}_0, ~~\mathbf{r}\in C,\\
u^{\rm sca}(\mathbf{r}):=\int_{C}g(\mathbf{r},\mathbf{r}_0)p(\mathbf{r}_0)d\mathbf{r}_0, ~~\mathbf{r}\in C ~\mathrm{or}~ V.\label{eq:s2v}
\end{eqnarray} 

Here the equivalent source $p$ can be solved by the following integral equation:
\begin{eqnarray}
u^{\rm inc}(\mathbf{r})=-u^{\rm sca}(\mathbf{r}), ~~\mathbf{r}\in C. \label{eq:sie}
\end{eqnarray} 

To numerically solve (\ref{eq:sie}) and compute the total field $u=u^{\rm inc}+u^{\rm sca}$ in $V$, the domain $V$ is discretized with $N_v$ \ylrev{cells with cell size} $h$. After discretization of the inclusion into $N_s$ line segments with length $w_i$ for $i=1,\ldots,N_s$ and enforcement of (\ref{eq:sie}) at segment centers, we solve a linear system 
\begin{eqnarray} 
K^{s2s}P=U^{\rm inc}\label{eq:m_sie}
\end{eqnarray} 
with 
\begin{eqnarray}
K^{s2s}_{ij} = \begin{cases} \frac{-w_i}{4}(1+\mathrm{i}\frac{2}{\pi}(\log(\frac{\gamma \omega n(\mathbf{r}_i^s) w_i}{4}-1)))&\mbox{if } i=j, \\
-g(\mathbf{r}_i^s,\mathbf{r}_j^s)w_j  & \mbox{otherwise}, \end{cases}\label{eq:s2s} 
\end{eqnarray}
where $\gamma$ is the Euler constant, $U^{\rm inc}_i= u^{\rm inc}(\mathbf{r}_i^s)$ and $P_i=p(\mathbf{r}_i^s)$ with $\mathbf{r}_i^s$ denoting segment centers. The right-hand-side (RHS) in (\ref{eq:m_sie}) is computed by 
\begin{eqnarray}
U^{\rm inc}=K^{v2s}I, \label{eq:m_v2s}
\end{eqnarray} 
where $I$ is the same as that in (\ref{eq:m_v2v}), and the discretized volume-to-surface operator is $K^{v2s}_{ij}= g(\mathbf{r}_i^s,\mathbf{r}_j)$. Here the source function does not overlap with the curve. 

Once the equivalent source $p(\mathbf{r})$ is obtained, the scattered field at any point $\mathbf{r}$ of the computational domain is computed using (\ref{eq:s2v}),  
\begin{eqnarray}
U^{\rm sca}=K^{s2v}P. \label{eq:m_s2v}
\end{eqnarray} 
Here $U^{\rm sca}_i = u^{\rm sca}(\mathbf{r}_i)$ with $\mathbf{r}_i$ being the center of cell $i$, and $K^{s2v}_{ij}=g(\mathbf{r}_i,\mathbf{r}_j^s)w_j$ which can be directly calculated from the transpose of $K^{v2s}$. Combining (\ref{eq:m_v2v}), (\ref{eq:m_sie}), (\ref{eq:m_v2s}) and (\ref{eq:m_s2v}), the total fields at the cell centers can be expressed as 
\begin{eqnarray}
U=(K^{v2v}+K^{s2v}\big(K^{s2s}\big)^{-1}K^{v2s})I. \label{eq:m_v2v_inclu}
\end{eqnarray} 

As a typical curve in 2D requires $N_s=N_v^{1/2}$ discretization \ylrev{segments}, the naive computation of $K^{v2s}$ and $K^{s2s}$ requires solving the eikonal equation (\ref{eikonalnormal}) and transport equations (\ref{equVs}) for $N_s$ times. In addition, the computation of $K^{v2s}$, $K^{s2s}$, and $\big(K^{s2s}\big)^{-1}$ requires $O(N_v^{3/2})$ time and memory. Overall, the computation of (\ref{eq:m_v2v_inclu}) is still $O(N_v^{2})$ dominated by the computation of $K^{v2v}$.

\section{Fast Computation of Hadamard-Babich Integrator for Arbitrary Source}\label{sec:fastbabich_anysource}
Here we propose a quasi-linear complexity algorithm for the computation of (\ref{eq:m_v2v}) and (\ref{eq:m_v2v_inclu}). The proposed algorithm leverages the low-rank representation of the squared phase function $\tau^2$ and HB coefficient functions $v_0$ and $v_1$ to avoid solving eikonal and transport equations for all point sources. Once these low-rank representations are obtained, the discretized volume-to-volume, volume-to-surface, and surface-to-surface operators are compressed using butterfly algorithms and their hierarchical extensions. 

\subsection{Low-rank Representation of the Phase and HB Coefficients\label{sec:lowrankpa}}
Since we have assumed that $\tau^2$, $v_0$ and $v_1$ are analytic, they permit low-rank representations as shown in \cite{luqiabur16,luqiabur16b,luqiabur18,qiasonlubur21}. Letting $f=\tau^2$, $v_0$ and $v_1$, respectively, we consider the following analytical low-rank representation using the Chebyshev interpolation, 
\begin{eqnarray}
f(\mathbf{r},\mathbf{r}_0) \approx \sum_{i=1}^{N_I}\sum_{j=1}^{N_I}T_i(\mathbf{r})f(\mathbf{r}^c_i,\mathbf{r}^c_j)T_j(\mathbf{r}_0). \label{eq:chebinterp} 
\end{eqnarray} 
Let $n_I$ denote order of the Chebyshev interpolation and $\mathbf{r}^c_i$ for $i=1,\ldots,N_I$ with $N_I=n_I^d$ be the Chebyshev nodes \ylrev{(i.e., $d$-dimensional Chebyshev sampling of the domain $V=[0,1]^d$)}. Accordingly, we define the Lagrange interpolants $T_i(\mathbf{r})$, where 
 \begin{eqnarray}
T_i(\mathbf{r}) = \prod_{s=1}^{d}l_i(r_s)=\prod_{s=1}^{d}\prod_{\substack{1\leq k\leq n_I \\ k\neq i}}\frac{r_s-r_{sk}^c}{r_{si}^c-r_{sk}^c}, 
 \end{eqnarray} 
$\mathbf{r}=[r_1,\ldots,r_d]$, and $\mathbf{r}^c_i=[r_{1i}^c,\ldots,r_{di}^c]$.

In (\ref{eq:chebinterp}), the set of function samples $\{f(\mathbf{r}^c_i,\mathbf{r}^c_j)\}$ of cardinality $N_I^2$ requires solving the eikonal and transport equations via the Lax-Friedrichs WENO schemes with $N_I$ point sources located at $\mathbf{r}^c_j$, as explained in Section \ref{sec:babich}. More specifically, for each Chebyshev node $\mathbf{r}^c_j$, we create a grid with mesh size $h_0$ and grid points $\mathbf{r}^0_i$ for $i=1,\ldots,1/h_0^d$ that are aligned with the point source at $\mathbf{r}^c_j$. \ylrev{In other words, the grid covers the computation domain $V$ and its ghost regions, and has one grid point collocated with $\mathbf{r}^c_j$.} This gives rise to solutions $f(\mathbf{r}^0_i,\mathbf{r}^c_j)$. For each $\mathbf{r}_i^c$, we compute $f(\mathbf{r}^c_i,\mathbf{r}^c_j)$ with $i\neq j$ from a local cubic interpolation using data points $f(\mathbf{r}^0_i,\mathbf{r}^c_j)$. See \cref{fig:table-compute} for a 2D example with two point sources (in red dots). 

Once the set $\{f(\mathbf{r}^c_i,\mathbf{r}^c_j)\}$ is obtained, the computation of $f(\mathbf{r},\mathbf{r}_0)$ for any point pair $(\mathbf{r},\mathbf{r}_0)$ requires $O(N_I^2)=O(n_I^{2d})$ time. We can assume $N_I$ to be constant as typically $n_I<15$. In addition, we can leverage a blocked version of (\ref{eq:chebinterp}) to further improve its computational efficiency. Consider an $m\times n$ block $\mat{F}$ with $F_{ij}=f(\mathbf{r}_i,\mathbf{r}_j)$ for arbitrary lists of $n$ source points $\mathbf{r}_j$ and $m$ observation points $\mathbf{r}_i$. The block $\mat{F}$ can be computed as
\begin{eqnarray}
\mat{F}\approx\mat{T}^o\mat{F}^c\mat{T}^s.\label{eq:chebinterpblock} 
\end{eqnarray}  
Here, $\mat{T}^o_{ij}=T_j(\mathbf{r}_i)$,  $\mat{T}^s_{ij}=T_i(\mathbf{r}_j)$, and $\mat{F}^c_{i,j}=f(\mathbf{r}^c_i,\mathbf{r}^c_j)$. By using this blocked form, repetitive computation of the interpolants $\mat{T}^o$ and $\mat{T}^s$ is avoided, and high-performance BLAS libraries can be used. As a result, this requires $O(\min(m,n)N_I^2+mnN_I)$ time using (\ref{eq:chebinterpblock}) as opposed to $O(mnN_I^2)$ using (\ref{eq:chebinterp}). 

One may attempt to compute $F$ for all entries of $K^{v2v}$ (and similarly for $K^{s2v}$, $K^{v2s}$ and $K^{s2s}$), but this leads to $O(N_v^2N_I)$ computational time. As we will see next, we propose the butterfly algorithm for constructing a compressed representation of $K^{v2v}$ and the other discretized operators, requiring only a total of $mn=O(N_v\log^2N_v)$ entries in (\ref{eq:chebinterpblock}).   

\begin{figure}
	\centering
	\vspace{-7.5pt}
	\begin{subfigure}[t]{0.9\textwidth}
		\centering
		\includegraphics[width=\linewidth]{./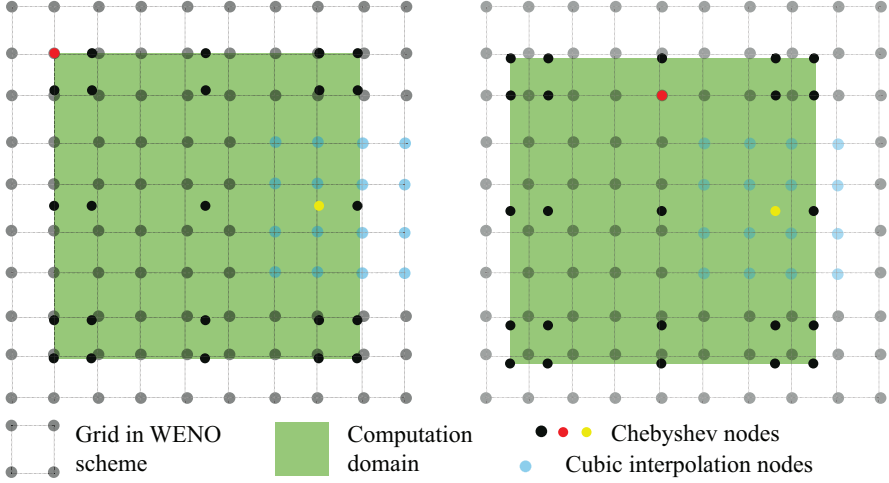}
	\end{subfigure}
	\vspace{-5pt}
	\caption{The computation of $f(\mathbf{r}_i^c,\mathbf{r}_j^c)$ in a 2D domain (shown as the green area) with $f=\tau^2,v_0,v_1$. Here $r_i^c$, $i=1,\ldots,N_I$ denote the Chebyshev nodes with $N_I=5^2$ (the black dots). Two instances of $\mathbf{r}_j^c$ are shown: for each instance, a fast sweeping method with a grid size $h_0$, grid points $\mathbf{r}^0_i$ and a point source collocated at one Chebyshev node (the red dot) is used to compute $f(\mathbf{r}^0_i, \mathbf{r}_j^c)$. For each $\mathbf{r}_i^c$, $i\neq j$ (the yellow dots), $f(\mathbf{r}_i^c,\mathbf{r}_j^c)$ is computed via cubic interpolation from the grid points $\mathbf{r}^0_i$ (the blue dots). }
	\label{fig:table-compute}
\end{figure}

	\subsection{Butterfly Representation of the Discretized Integral Operators\label{sec:BF}}
	The butterfly algorithm exploits the fact that judiciously selected submatrices of the discretized operators $K^{v2v}$, $K^{s2v}$, $K^{v2s}$ and $K^{s2s}$ are low-rank compressible, despite of the full rankness of these operators. 
	
	The algorithm first recursively subdivides the geometry point sets associated with the rows and columns of these operators into two subsets of approximately equal sizes, using such as k-dimensional (K-D) tree clustering algorithms, until the subsets contain a predefined number of points. For the $N_v$ \ylrev{cell} centroids in the computational domain, the procedure generates a complete binary tree $\mathcal{T}_{H_v}$ of $L_v$ levels with root level $0$ and leaf level $L_v$. Each node $\tau$ at level $l$ is an index set $\tau \subset \{1,\ldots,N_v\}$. Similarly for the $N_s$ segment centers for the inclusion, the procedure generates a complete binary tree $\mathcal{T}_{H_s}$ of $L_s$ levels. In both trees, a non-leaf node $\tau$ at level $l$ has two children $\tau_1$ and $\tau_2$, where $\tau=\tau_1\cup\tau_2$ and $\tau_1 \cap \tau_2 = \emptyset$. For a non-root node $\tau$, its parent is denoted $p_\tau$. 
	
The butterfly representation of a matrix requires binary trees $\mathcal{T}_o$ and $\mathcal{T}_s$ with $L$ levels for the row and column indices, respectively, which are defined for the integral operators as follows:
\begin{itemize}
		\item $K^{v2v}$: For any two siblings $\tau_1$ and $\tau_2$ at level $l$ of $\mathcal{T}_{H_v}$, let $o=\tau_1$ and $s=\tau_2$. $K^{v2v}(o,s)$ is compressed as a butterfly with $L=L_v-l$ levels. Let $\mathcal{T}_o$ and $\mathcal{T}_s$ be the subtrees of $\mathcal{T}_{H_v}$ rooted at $o$ and $s$, respectively. As a result, there are $2^l$ butterfly representations at each level $l=1,\ldots,L_v$. The $2^{L_v}$ blocks $K^{v2v}(\tau,\tau)$ for node $\tau$ at level $l=L_v$ are kept as dense blocks. Such a representation is called the hierarchically off-diagonal butterfly (HODBF) representation \cite{Liu_2017_HODBF}.
		\item $K^{s2s}$: For any two siblings $\tau_1$ and $\tau_2$ at level $l$ of $\mathcal{T}_{H_s}$, let $o=\tau_1$ and $s=\tau_2$. $K^{s2s}(o,s)$ is compressed as a butterfly with $L=L_s-l$ levels. Let $\mathcal{T}_o$ and $\mathcal{T}_s$ be the subtrees of $\mathcal{T}_{H_s}$ rooted at $o$ and $s$, respectively. Similar to $K^{v2v}$, we seek a HODBF representation of $K^{s2s}$.
		\item $K^{s2v}$ (or $K^{v2s}$): Let $o$ and $s$ be the roots of $\mathcal{T}_{H_v}$ and $\mathcal{T}_{H_s}$, respectively. $K^{s2v}(o,s)=K^{s2v}$ is compressed as a \textit{single} butterfly with $L=\min\{L_v,L_s\}$ levels. Let $\mathcal{T}_o=\mathcal{T}_{H_v}$ and $\mathcal{T}_s=\mathcal{T}_{H_s}$. 
		  
	\end{itemize} 
	
	\subsubsection{Butterfly algorithm}		
	
	The $L$-level butterfly representation of the integral operator $K(o,s)\in \mathbb{C}^{m\times n}$ (superscripts of $K$ are dropped) requires the complementary low-rank property: at any level $0\leq l \leq L$, for any node $\tau$ at level $l$ of $\mathcal{T}_{o}$ and any node $\nu$ at level $L-l$ of $\mathcal{T}_{s}$, the subblock $K(\tau,\nu)$ is numerically low-rank with rank $r_{\tau,\nu}$ bounded by a small number $r$ called the butterfly rank. We will comment on the butterfly rank for the three integral operators in subsections \ref{sec:v2v}, \ref{sec:s2v}, and \ref{sec:s2s}, respectively. 
	
	For any subblock $K(\tau,\nu)$, the complementary low-rank property permits a low-rank representation using for instance interpolative decomposition (ID)  as 
	\begin{equation}\label{eqn:ID}
	K(\tau,\nu) \approx K(\tau, \bar{\nu}) {V}_{\tau,\nu},
	\end{equation}
	where the skeleton matrix $K(\tau, \bar{\nu})$ contains $r_{\tau,\nu}$ skeleton columns indexed by $\bar{\nu}$, and the interpolation matrix ${V}_{\tau,\nu}$ has bounded entries. The ID can be computed via for instance rank-revealing QR decomposition with a relative tolerance $tol$. There are several equivalent butterfly representations in literature \cite{liu2021butterfly,Yingzhou_2017_IBF,li2015butterfly,Pang2020IDBF} and here we briefly describe the so-called column-wise butterfly representation \cite{liu2021butterfly}.
	
	At level $l=0$, the interpolation matrices ${V}_{\tau,\nu}$ are explicitly formed. While at level $l>0$, they are represented in a nested fashion. To see this, consider a node pair $(\tau,\nu)$ at level $l>0$ and let $\nu_1,\nu_2$ and $p_\tau$ be the children and parent of $\nu$ and $\tau$, respectively. From (\ref{eqn:ID}), we have  
\begin{align}\label{eqn:transfer_computation}
\mat{K}(\tau,\nu) &=\begin{bmatrix}
\mat{K}(\tau,{\nu_1}) & \mat{K}(\tau,{\nu_2})\end{bmatrix}\nonumber\\
&\approx\begin{bmatrix}
\mat{K}(\tau,\bar{\nu}_1) & \mat{K}(\tau,\bar{\nu}_2)\end{bmatrix}\begin{bmatrix}
\mat{V}_{p_\tau, \nu_1} & \\
& \mat{V}_{p_\tau, \nu_2}
\end{bmatrix}\\
&\approx\mat{K}(\tau,\bar{\nu})\mat{W}_{\tau,\nu}\begin{bmatrix}
\mat{V}_{p_\tau, \nu_1} & \\
& \mat{V}_{p_\tau, \nu_2}
\end{bmatrix}.
\end{align}
Here $\mat{W}_{\tau,\nu}$ and $\bar{\nu}$ are the interpolation matrix and skeleton columns from the ID of $[\mat{K}(\tau,\bar{\nu}_1), \mat{K}(\tau,\bar{\nu}_2)]$, respectively. This allows representing $\mat{V}_{\tau,\nu}$ as 
\begin{equation}\label{eqn:nested_basis}
\mat{V}_{\tau,\nu} =
\mat{W}_{\tau,\nu}\begin{bmatrix}
\mat{V}_{p_\tau, \nu_1} & \\
& \mat{V}_{p_\tau, \nu_2}
\end{bmatrix}.
\end{equation}
We will refer to $\mat{W}_{\tau,\nu}$ as the transfer matrices in the rest of this paper. We note that the computation of interpolation matrices $\mat{V}_{\tau,\nu}$ at level $l=0$ and transfer matrices $\mat{W}_{\tau,\nu}$ at level $0<l<L$ does not require the ID on the full subblocks $\mat{K}(\tau,\nu)$ and $[\mat{K}(\tau,\bar{\nu}_1), \mat{K}(\tau,\bar{\nu}_2)]$, as this immediately leads to an $O(mn)$ compression complexity at level $l=0$. 

Instead, we can select a number of $O(r_{\tau,\nu})$ proxy rows $\hat{\tau}\subset\tau$ to compute $\mat{V}_{\tau,\nu}$ and $\mat{W}_{\tau,\nu}$ via ID as:
 \begin{align}\label{eqn:proxy}
	K(\hat{\tau},\nu) \approx K(\hat{\tau}, \bar{\nu}) {V}_{\tau,\nu},~~l=0,\\
	\begin{bmatrix}
	\mat{K}(\hat{\tau},\bar{\nu}_1) & \mat{K}(\hat{\tau},\bar{\nu}_2)\end{bmatrix}\approx\mat{K}(\hat{\tau},\bar{\nu})\mat{W}_{\tau,\nu},~~l>0. 
 \end{align}
When $l=L$ and $\hat{\tau}=\tau$, no proxy rows are needed. We will discuss the choice of the proxy rows in more details 
in subsections \ref{sec:v2v}, \ref{sec:s2v}, and \ref{sec:s2s}.


With all the interpolation and transfer matrices computed, the butterfly representation of $\mat{\mat{K}(o,s)}$ is:
\begin{align}
\mat{K}(o,s) \approx \mat{K}^{L}\mat{W}^{L}\mat{W}^{L-1}\ldots \mat{W}^1\mat{V}^0.
\label{eqn:butterfly_mat}
\end{align}
Let $\nu_1,\nu_2,\ldots,\nu_{2^{L-l}}$ denote the nodes at level $L-l$ of $\mathcal{T}_s$, and $\tau_1,\tau_2,\ldots,\tau_{2^{l}}$ denote the nodes at level $l$ of $\mathcal{T}_o$. The interpolation factor $\mat{V}^0$, the transfer factors $\mat{W}^l$ for $l=1$, $\ldots$, $L$, and the skeleton factor $\mat{K}^{L}$ are:  
\begin{align}
\mat{V}^0&=\mathrm{diag}(\mat{V}_{\tau,\nu_1},\ldots,\mat{V}_{\tau,\nu_{2^L}})
\label{eqn:butterfly_factors},~~(\tau,\nu_i) \mathrm{~at~level~} l=0,\\
\mat{K}^L&=\mathrm{diag}(\mat{K}(\tau_1,\bar{\nu}),\ldots,\mat{K}(\tau_{2^L},\bar{\nu})),~~(\tau_i,\nu) \mathrm{~at~level~} l=L,\\
\mat{W}^{l}&=\mathrm{diag}(\mat{W}{\tau_1},\ldots,\mat{W}{\tau_{2^{l-1}}}), ~~l=1,\ldots,L,\\
\mat{W}_{\tau_i}&=
\begin{bmatrix}
\text{diag}(\mat{W}_{\tau_i^1,\nu_1}, \dots, \mat{W}_{\tau_i^1,\nu_{2^{L-l}}}) \\
\text{diag}(\mat{W}_{\tau_i^2,\nu_1}, \dots, \mat{W}_{\tau_i^2,\nu_{2^{L-l}}})
\end{bmatrix},~~(\tau_i^{\{1,2\}},\nu_i) \mathrm{~at~level~} l, 
\end{align}
where $\tau_i^1$ and $\tau_i^2$ denote the children of $\tau_i$. Note that $V^0$ and $K^L$ contain $2^L$ diagonal blocks each with $O(r_{\tau\nu})$ nonzeros, and $W^l$ contains $2^{L}$ blocks $W_{\tau,\nu}$ each with $O(r_{\tau\nu}^2)$ nonzeros. The construction of these blocks via (\ref{eqn:proxy}) at each level $l$ requires the computation of $O(n)$ submatrices (i.e., the left-hand side (LHS) of (\ref{eqn:proxy})) of sizes $O(r_{\tau,\nu})\times O(r_{\tau,\nu})$. From the discussion of (\ref{eq:chebinterpblock}) in \cref{sec:lowrankpa}, each submatrix can be computed in $O(r_{\tau,\nu}N_I^2+r_{\tau,\nu}^2N_I)$ time, which is time-wise optimal assuming $N_I$ constant. If $\max_{\tau,\nu} r_{\tau\nu}$ is $O(1)$, it is immediately clear that the butterfly representation (IDs and matrix entry computation) requires $O(n\log n)$ memory and CPU time. We will see that this is not the case for any of $K^{v2v}$, $K^{s2v}$ and $K^{s2s}$, but quasi-linear complexities can still be attained for most of these operators.

\begin{figure}
	\centering
	\vspace{-7.5pt}
	\begin{subfigure}[t]{\textwidth}
		\centering
		\includegraphics[width=\linewidth]{./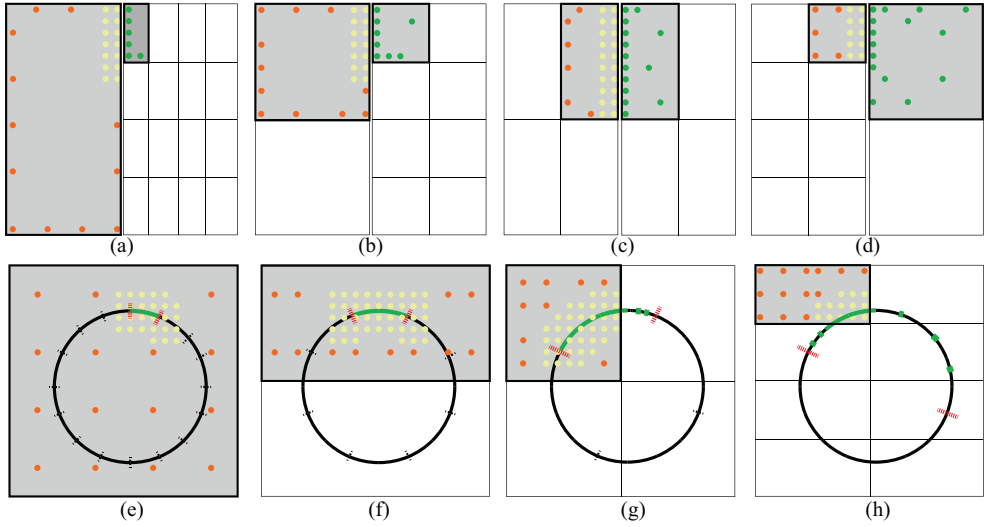}
	\end{subfigure}
	\vspace{-5pt}
	\caption{\ylrev{Top: illustration} of the butterfly compression of $4$ levels for one top-level off-diagonal block of HODBF representation of $K^{v2v}$ in a 2D computational domain. The subdomains at levels $l=0,1,2,3$ ((a)-(d)) are denoted by the vertical and horizontal lines. One subdomain pair $(\tau,\nu)$ at each level with non-constant rank $r_{\tau,\nu}$ is shown in the shaded areas (left: $\tau$, right: $\nu$). The green dots correspond to skeleton columns \ylrev{(cell centroids)} for the ID computation, and the red and yellow dots correspond to uniformly sampled and nearest neighboring proxy rows \ylrev{(cell centroids)} used to compute the ID. \ylrev{Note that the unused cell centroids at each stage are not plotted.} Bottom: illustration of the butterfly compression with $5$ levels for $K^{s2v}$ involving a circular inclusion in a 2D computational domain. The subdomains and arcs at levels $l=0,1,2,3$ ((e)-(h)) are denoted by the vertical/horizontal lines and dashed lines, respectively. One subdomain-arc pair $(\tau,\nu)$ at each level with non-constant rank $r_{\tau,\nu}$ is shown with the shaded areas for the subdomain and curves \ylrev{ending with red dashed lines} for the arc. The green dots correspond to skeleton columns \ylrev{(line segments)} for the ID computation, and the red and yellow dots correspond to uniformly sampled and nearest neighboring proxy rows \ylrev{(cell centroids)} used to compute the ID. }
	\label{fig:BFproxy}
\end{figure}	   

 In what follows, we discuss the rank estimate, proxy row selection, and computational complexity for each of the three discretized integral operators $K^{v2v}$, $K^{s2v}$ and $K^{s2s}$.   
 
\subsubsection{Computation of $K^{v2v}$\label{sec:v2v}} 
Consider the top level off-diagonal block of its HODBF representation, $K^{v2v}(o,s)$ with $o$ and $s$ being the children of the root of $\mathcal{T}_{H_v}$, and sizes $m=n=N_v/2$. This permits a butterfly representation with $L=L_v-1$ levels. Letting  $l_m=L/2$ denote the middle butterfly level, we can show that among the $O(N_v)$ subblocks $\mat{K}(\tau,\nu)$ at each level $l$, there are $O(2^{|l-l_m|/d}N_v^{(d-1)/2d})$ subblocks representing interactions between adjacent or close-by geometry subdomains, where $d=2$ or 3 denotes the problem dimension. Their ranks scale as $r_{\tau,\nu}=O(2^{-|l-l_m|/d}N_v^{(d-1)/2d})$ dominated by the interface DOFs between the two computational subdomains. Each of these non-constant rank subblocks requires $O(r_{\tau,\nu}^2)$ storage and matrix entry computation, and $O(r_{\tau,\nu}^3)$ ID cost. Thus, they require 
$$\sum_lr_{\tau,\nu}^2O(2^{|l-l_m|/d}N_v^{(d-1)/2d})=O(N_v^{3(d-1)/2d})\leq O(N_v)\quad \mathrm{~storage}$$ 
and 
$$\sum_lr_{\tau,\nu}^3O(2^{|l-l_m|/d}N_v^{(d-1)/2d})=O(N_v^{2(d-1)/d})\quad \mathrm{~CPU~time}$$ 
for the interpolation and transfer matrices. Specifically, the CPU time is $O(N_v)$ when $d=2$ and $O(N_v^{4/3})$ when $d=3$. The rest of the subblocks essentially has $r_{\tau,\nu}=O(1)$ and requires $O(N_v\log N_v)$ CPU time and memory based on the DOF analysis in \cite{michielssen_multilevel_1996}. Note that the sub-optimal CPU complexity for $d=3$ can be improved by considering strong-admissible hierarchical matrices \cite{Han_2017_butterflyLUPEC} to keep butterfly rank constant, or considering alternatives to ID, e.g., analytical interpolation schemes such as (\ref{eq:chebinterp}) or the one in \cite{Candes_butterfly_2009}, which brings the $O(r_{\tau,\nu}^3)$ CPU cost per block back to $O(r_{\tau,\nu}^2)$. However, we will not implement the analytical interpolation-based butterfly representation in this article due to the large prefactors in these schemes and will leave that as a future work. 

As an example, \cref{fig:BFproxy} (top) shows a 2D computational domain with a partitioning tree $\mathcal{T}_{H_v}$ with $L_v=5$ levels. $K^{v2v}(o,s)$ with $o$ and $s$ being the children of the root $\mathcal{T}_{H_v}$ (i.e., the left and right half of the domain) is compressed as a $4$-level butterfly. \cref{fig:BFproxy}(a)-(d) shows the subdomain pairs at levels $l=0,1,2,3$,  respectively. For each level, one subdomain pair with non-constant interaction rank is shown in grey. The \ylrev{cell centroids} in green represent the skeleton columns $\bar{\nu}$, which mainly lie on the subdomain interface. 

As mentioned above, the computation of $\bar{\nu}$ is performed via (\ref{eqn:proxy}) with proxy rows. Specifically, consider $(\tau,\nu)$ at level $0<l<L$ (with $l=0$, $L$ being similar). Let $n_i$ denote the nearest neighboring \ylrev{centroids} $i$ (e.g., all \ylrev{centroids} that are within a $10h$ distance of $\mathrm{r}_i$), and let $f_\tau$ denote the $\chi|\bar{\nu}_1\cup\bar{\nu}_2|$ uniformly selected \ylrev{centroids} near the boundary of subdomain $\tau$ with an oversampling parameter $\chi$. We choose the proxy rows as 
\begin{align}
\hat{\tau}=(\cup_{i\in\bar{\nu}_1\cup\bar{\nu}_2}n_i)\cap\tau\cup f_\tau.\label{eq:proxyv2v}
\end{align} 
For example, \cref{fig:BFproxy}(a)-(d) shows the nearest neighboring proxy rows in yellow and the uniform proxy rows in red. 

Because of the above rank estimate and proxy row selection scheme, $O(N_v\log N_v)$ CPU time and memory complexities can be achieved for the top-level off-diagonal block of the HODBF. This leads to an $O(N_v\log^2 N_v)$ complexity for the overall HODBF construction.

\subsubsection{Computation of $K^{s2v}$\label{sec:s2v}}
Consider $K^{s2v}$ between a 2D computational domain with $N_v$ \ylrev{cells} and a curve inclusion with $N_s$ discretization segments with typically $N_s=N_v^{1/2}$. $K^{s2v}$ is compressed as a butterfly with $L=\min\{L_v,L_s\}$ levels. For simplicity we assume that $L_s=L_v$, each leaf node in $\mathcal{T}_s$ contains $O(1)$ columns, and each leaf node in $\mathcal{T}_v$ contains $O(N_v/N_s)=O(N_s)$ rows. Just like the $K^{v2v}$ operator, we need to identify the subblocks with non-constant ranks $r_{\tau,\nu}$. We first identify a level $l_m$ at which the side length of $\tau$ is on the same order as the length of $\nu$. From $N_s^{1/2}2^{(L-l_m)/2}=N_s2^{l_m}$, where the LHS and right-hand-side (RHS) represent lengths of $\tau$ and $\nu$ at level $l_m$, respectively, we have $l_m=O(\frac{2L}{3})$. For each level $l\leq l_m$, we can show that there are $O(2^{-l}N_s)$ subblocks, out of the total of $O(N_s)$ subblocks, representing interactions between a node $\nu$ of $\mathcal{T}_s$ fully contained in or close to some node $\tau$ of $\mathcal{T}_o$. These subblocks have ranks at most $r_{\tau,\nu}=O(2^{l})$ (i.e., the length of $\nu$). These non-constant rank subblocks require $\sum_{l\leq l_m}r_{\tau,\nu}^2O(2^{-l}N_s)=O(N_s^{5/3})=O(N_v^{5/6})$ storage and matrix entry computation, and $\sum_{l\leq l_m}r_{\tau,\nu}^3O(2^{-l}N_s)=O(N_s^{7/3})=O(N_v^{7/6})$ CPU time for the IDs. On the other hand, for each level $l> l_m$, we can show that there are $O(2^{l/2})$ subblocks, out of the total of $O(N_s)$ subblocks, representing interactions between a node $\tau$ of $\mathcal{T}_o$ intersecting with some node $\nu$ of $\mathcal{T}_s$. These subblocks have ranks at most $r_{\tau,\nu}=O(2^{l/2}N_s)$ (i.e., the side length of $\tau$). These non-constant rank subblocks again require $O(N_v^{5/6})$ storage and matrix entry computation, and $O(N_v^{7/6})$ CPU time for the IDs. Just like $K^{v2v}$ in 3D, we can use analytical interpolation schemes to reduce $O(N_v^{7/6})$ to $O(N_v^{5/6})$. The rest of the subblocks essentially have $r_{\tau,\nu}=O(1)$ based on the DOF analysis in \cite{michielssen_multilevel_1996}, and their CPU and memory complexities are dominated by $l=L$, which scale as $O(N_v)$. 

As an example, \cref{fig:BFproxy} (bottom) shows the compression of $K^{s2v}$ representing interaction between a 2D computational domain and a circle inclusion, with $L=L_s=5$ levels. \cref{fig:BFproxy}(e)-(h) shows the subdomain-arc pairs at levels $l=0,1,2,3$ respectively. For each level, one subdomain-arc pair with non-constant interaction rank is shown in grey for the subdomain and red for the arc. The discretization \ylrev{line segments} in green on the curve represent the skeleton columns $\bar{\nu}$, which mainly lie on the intersection of the arc with the subdomain. Note that for $l=0,1$, the arc is fully contained in the subdomain, and the green points overlap with the red arc; for $l=2,3$, the subdomain intersects with the arc. Just like the $K^{v2v}$ operator, proxy rows in the subdomains are chosen by (\ref{eq:proxyv2v}), except that $f_\tau$ is a set of $\chi|\bar{\nu}_1\cup\bar{\nu}_2|$ uniformly distributed \ylrev{centroids} in the subdomains. In \cref{fig:BFproxy}(e)-(h), the nearest neighboring and uniform proxy rows (\ylrev{centroids}) are shown in yellow and red, respectively. 

\subsubsection{Computation of $K^{s2s}$ and its inverse\label{sec:s2s}}
Just like $K^{v2v}$, we seek a HODBF representation of $K^{s2s}$ for $d=2$. Considering a top-level off-diagonal block $K^{s2s}(o,s)$, it has been shown in \cite{Yang_2020_BFprecondition} that its butterfly rank scales as $O(\log N_s)$ and its CPU time and memory scales as $O(N_s\log N_s)$. As a result, the HODBF representation of $K^{s2s}$ requires $O(N_s\log^2 N_s)=O(N_v^{1/2}\log^2N_v)$ CPU time and storage units.  

Once constructed, the inverse of the HODBF compressed $K^{s2s}$ can be computed using algorithms described in \cite{Liu_2017_HODBF,liu2021sparse} leveraging sketching-based butterfly construction algorithms \cite{liu2021butterfly}. For $K^{s2s}$, the HODBF inversion requires $O(N_s^{3/2}\log N_s)=O(N_v^{3/4}\log N_v)$ based on the analysis in \cite{Liu_2017_HODBF}.

We summarize the algorithmic complexities in subsections \ref{sec:v2v}, \ref{sec:s2v}, and \ref{sec:s2s}. The computation of $K^{v2v}$ requires $O(N_v\log^2N_v)$ memory for $d=2,3$, $O(N_v\log^2N_v)$ time for $d=2$, and $O(N_v^{4/3})$ time for $d=3$ (the latter becoming quasi-linear if analytical interpolation rather than ID-based butterfly algorithms are used or strong-admissible hierarchical matrices are used). The computation of $K^{s2v}$ for $d=2$ requires $O(N_v)$ memory and $O(N_v^{7/6})$ CPU time ($O(N_v)$ attainable if the analytical interpolation is used). The computation and inversion of $K^{s2s}$ for $d=2$ require $O(N_v^{1/2}\log^2N_v)$ memory and $O(N_v^{3/4}\log N_v)$ CPU time. Therefore, the overall complexities of the proposed butterfly compressed HB integrator scale as $O(N_v\log^2N_v)$. 

\subsection{Expected convergence behavior}
Although a complete numerical analysis of our new method is an ongoing work, we sketch the expected convergence behavior of the algorithm. 

Our new algorithm has three principal sources of numerical errors: the first source is due to numerical computation of HB ingredients, such as $\tau$, $v_0$ and $v_1$; the second one is due to truncating the infinite asymptotic series to keep only the leading or the first two terms: $N=0$ or $N=1$ in \eqref{babich}; the third one is due to the butterfly compression. Therefore, the overall error of our HB integrator will be roughly controlled by the above three sources of errors.  

To start with, we consider the first source of errors. According to equation \eqref{equVs}, $v_1$ depends on $\Delta v_0$ (the Laplacian of $v_0$) and $\Delta \tau$. To have first-order accurate $v_1$ (which is a minimum requirement for our current setup of the new algorithm), we need at least first-order accurate $\Delta v_0$ and $\Delta \tau$. To have first-order $\Delta v_0$, $v_0$ itself must be computed to third-order accuracy so that it can be numerically differentiated twice to yield first-order accurate Laplacian $\Delta v_0$. To have third-order accurate $v_0$, $\Delta \tau$ (the Laplacian of $\tau$) must be computed to third-order accuracy according to equation \eqref{equVs} again, which in turn implies that $\tau$ itself must be computed to fifth-order accuracy. Consequently, we have chosen to apply the fifth-order Lax-Friedrichs WENO scheme to compute $\tau$, $v_0$ and $v_1$ in our current implementation so that the computed $\tau$, $v_0$ and $v_1$ will have fifth-, third-, and first-order accuracy, respectively. We denote the computational errors of these ingredients as $O(h_0^{\beta})$, where $h_0$ is the mesh size, and $\beta$ is the order of accuracy: $\beta=1,3,5$, respectively. Since these ingredients are independent of the frequency parameter $\omega$, they can be computed just once and re-used for many different frequencies \ylrev{as explained below.} 

Now we consider the second source of errors. According to equation \eqref{babich}, when we keep the leading-order term of the HB ansatz, we are expecting the first-order asymptotic convergence in the form of $O\left(\left({1}/{\omega}\right)^{(N+1-\frac{d-3}{2})}\right)$, where $N=0$; when we keep the first two terms of the HB ansatz, we are expecting the second-order asymptotic convergence in the form of $O\left(\left({1}/{\omega}\right)^{(N+1-\frac{d-3}{2})}\right)$, where $N=1$. 

In addition, we consider the error due to the butterfly compression. Since a detailed analysis of the butterfly compression algorithm \ylrev{with prescribed proxy rows} is beyond the scope of the current work, we assume that the error is simply represented as $O(e_{\rm bf})$ to simplify the matter, where $e_{\rm bf}$ is a small positive constant. \ylrev{That said, one can refer to Section 5 of \cite{liu2021butterfly} for a detailed analysis of SVD-based butterfly algorithms.} 

Finally, \ylrev{when $d=3$, by using some estimates provided in \cref{sec:go2hb}, we can write the overall error of our algorithm roughly as 
\begin{equation}
\label{totalerror}
   E_{\rm total}= O\left(({1}/{\omega})^{(N+1-\frac{d-3}{2})}\right)+O(h_0^3) + O(\omega h_0^5) + O(e_{\rm bf}),
\end{equation}
for $N=0$, and 
\begin{equation}
\label{totalerrorAdded}
   E_{\rm total}= O\left(({1}/{\omega})^{(N+1-\frac{d-3}{2})}\right)+O(h_0^3) + O(\omega h_0^5) + O\left(\frac{h_0}{\omega}\right)+O(e_{\rm bf}),
\end{equation}
for $N=1$. When $d=2$, we may use the analysis in \cite{fanqiazepzha18} for the H-B ansatz to obtain analogous estimates. Note that here we ignore the errors of Chebyshev and cubic interpolations assuming that they do not dominate over those induced by the high-order LxF-WENO schemes.} 

\ylrev{We remark that the term $O(\omega h_0^5)$ in \eqref{totalerror} captures the magnification of numerical phase errors by the frequency $\omega$ in the Hankel-based H-B ansatz, in which the phase function $\tau$ appears together with $\omega$ in the form of ${\omega\tau}$; this implies two things: first, given a set of computed H-B ingredients, we can reuse these ingredients for many different frequencies as long as the sum of the error terms $O(\omega h_0^5)$ and $O(h_0^3)$ is not dominant over the other errors; second, if it happens that the frequency $\omega$ is so large that the error from $O(\omega h_0^5)+O(h_0^3)$ is dominant over other errors for a given set of computed H-B ingredients, then we can always compute more accurate H-B ingredients on finer meshes so that the resulting error from $O(\omega h_0^5)+O(h_0^3)$ is not dominant. Note that such computation tabulates the H-B ingredients on the fixed Chebyshev nodes in the {\it off-line} stage and does not affect the computation time for the wave function in the {\it online} stage. Similar observations apply to the estimate \eqref{totalerrorAdded}.}

Our numerical results demonstrate that the above error estimates are sound. 

\section{Numerical Results}\label{sec:example}
This section provides several numerical examples to demonstrate the accuracy and efficiency of the proposed butterfly-compressed HB integrator when applied to 2D and 3D computational domains with both homogeneous and inhomogeneous media. The low-rank compression of the phase/HB coefficients in \cref{sec:lowrankpa} and the butterfly compression of $K^{v2v}$, $K^{s2v}$, $K^{s2s}$ and $(K^{s2s})^{-1}$ in \cref{sec:BF} have been implemented with distributed-memory parallelism. Most of the experiments are performed on the Haswell nodes of the Cori machine, a Cray XC40, at NERSC in Berkeley, where each of the $2,388$ Haswell nodes has two $16$-core Intel Xeon E5-2698v3 processors and 128GB of 2133MHz DDR4 memory. For most experiments we use 64 Haswell nodes for both the proposed algorithm and the reference FDFD solver. Part of the experiments are performed on development nodes at the High Performance Computing Center (HPCC) of MSU, where each node has two 2.4Ghz 20-core Intel Xeon Gold 6148 CPU and 377GB of RAM.

\subsection{Accuracy comparison with FDFD}
We first compare the performance of the proposed algorithm with that of state-of-the-art FDFD solvers for both 2D and 3D computational domains. 
\subsubsection{2D domains without and with inclusion\label{sec:acc_2d}}
We consider the following examples of homogeneous and inhomogeneous media:
\begin{itemize}
\item \textbf{Constant media}: the computational domain is $V=[0,2]^2$ with $n(\mathbf{r})=n(x,y)=2$. The phase has an analytical form $\tau(\mathbf{r},\mathbf{r}_0)=n(\mathbf{r}_0)|\mathbf{r}-\mathbf{r}_0|$. The HB coefficients have analytical formulas $v_0(\mathbf{r},\mathbf{r}_0)=1/(2\sqrt{\pi})$ and $v_1(\mathbf{r},\mathbf{r}_0)=0$. Therefore, the HB integrator becomes $g(\mathbf{r},\mathbf{r}_0)=\frac{\mathrm{i}}{4}H_0^{(1)}(\omega n(\mathbf{r}_0) |\mathbf{r}-\mathbf{r}_0|)$, i.e., the well-known form of the free-space Green's function. We use high-order Lax-Friedrichs WENO methods with $h_0=0.01$ to solve equations (\ref{eikonalnormal}) and (\ref{equVs}) with point sources, construct their low-rank representation with an order of $n_I=3$ for the Chebyshev interpolation, and compare the results with these exact formulas.

\item \textbf{Constant-gradient media}: the computational domain is $V=[0,1]^2$ with $n(\mathbf{r})\equiv n(x,y)=\frac{1}{0.5-0.25(y-0.5)}$. Note that $1/n$ has a nonzero constant partial derivative in $y$. Let $\mathbf{r}_c=(0.5, 0.5)$. When the point source $\mathbf{r}_0$ is in the interior of $V$, the phase function has an analytical formula $\tau(\mathbf{r},\mathbf{r}_0)=\frac{1}{|\mathbf{G_0}|}\mathrm{arccosh}\Big(1+\frac{1}{2}n(\mathbf{r})n_0|\mathbf{G}_0|^2|\mathbf{r}-\mathbf{r}_0|^2\Big)$, where $\mathbf{G}_0=[0,-0.25]$ and $n_0=\frac{n(\mathbf{r}_c)}{1+n(\mathbf{r}_c)\mathbf{G}_0\cdot(\mathbf{r}_0-\mathbf{r}_c)}$. The HB coefficients have no known analytical expressions. We use the high-order Lax-Friedrichs(LxF)-WENO method with $h_0=0.0025$ to solve equation (\ref{eikonalnormal}) and (\ref{equVs}) with point sources, and construct the their low-rank representation with an order of $n_I=13$ for the Chebyshev interpolation.

\item \textbf{Sinusoidal media}: the computational domain is $V=[0,1]^2$ with $n(\mathbf{r})=n(x,y)=\frac{1}{1+0.2\sin(\pi(x+0.05))\sin(0.5\pi y)}$. In this case, both the phase and the HB coefficients have no known analytical expressions. We use the high-order LxF-WENO method with $h_0=0.01$ to solve \eqref{eikonalnormal} and \eqref{equVs} with point sources, and construct their low-rank representation with an order of $n_I=13$ for the Chebyshev interpolation. Since the exact solutions are not available, we use FDFD solutions as references and compare the corresponding results accordingly.

\item \textbf{Waveguide media}:  the computational domain is $V=[0,1]^2$ with $n(\mathbf{r})=n(x,y)=\frac{1}{1-0.5e^{-2.0(x-0.5)^2}}$. In this case, both the phase and the HB coefficients have no known analytical expressions. We first use the high-order LxF-WENO  method with $h_0=0.01$ to solve \eqref{eikonalnormal} and \eqref{equVs} with point sources, and construct the  low-rank representation with an order of $n_I=13$ for the Chebyshev interpolation. We then compare the results with the FDFD solutions.

\end{itemize}	
	
For the inclusion (if present), we consider a semi-circle of radius $0.5$ and an open square of side length $0.8$ centered 
at the domain center. When computing $K^{v2v}$, $K^{s2v}$, and $K^{s2s}$, let $n_p$ denote the number of points per wavelength (PPW) for the discretization of the computational domain or the curve inclusion. The computational domain is discretized with $h=2\pi/(\omega n_{\max}n_p)$ with $n_p=10$ and $n_{\max}$ being the maximum refractive index over the domain. Similarly, the curve inclusion is discretized with $w_i=2\pi/(\omega n_{\max}n_p)$ with $n_p=100\sim500$ to ensure highly accurate  approximation for $K^{s2s}$. The butterfly and HODBF compression of $K^{v2v}$, $K^{s2v}$ and $K^{s2s}$ are computed with tolerance ${\rm tol}=10^{-8}$ in (\ref{eqn:ID}) and oversampling factor $\chi=20$ in (\ref{eq:proxyv2v}). 

Once the discretized integral operators are computed, we apply (\ref{eq:m_v2v}) and (\ref{eq:m_v2v_inclu}) to the following source functions (i.e., RHSs) centered at $\mathbf{r}_c=(x_{c},y_{c})$ (the domain center): 
\begin{itemize}
\item \textbf{Point source:} $s(\mathbf{r})=1/h^2$ if $\mathbf{r}$ is inside the source cell centered at $\mathbf{r}_c$.  
\item \textbf{Gaussian wavepacket source:} $s(\mathbf{r})={\rm exp}(-|\mathbf{r}-\mathbf{r}_c|^2/(2\sigma^2)){\rm exp}(i\omega_0(\mathbf{r}\cdot\mathbf{d}))t(|\mathbf{r}-\mathbf{r}_c|,w_1,w_2)$ with $\sigma=0.15$, $w_1=0.3$, $w_2=0.1$, $\omega_0=0.9\omega$, and $\mathbf{d}=\frac{1}{\sqrt 2}[1,1]$. Here $t(x,w_1,w_2)$ is the cosine tapering function: 
\begin{eqnarray}
t(x,w_1,w_2) = 0.5(1+\cos((x-w_1)\pi/w_2)) ~\mathrm{if}~ w_1<x<w_2,\label{eq:taper}
\end{eqnarray}
and $t(x,w_1,w_2)=1$ if $x\leq w_1$, and $t(x,w_1,w_2)=0$ if $x\geq w_2$.  	
\item \textbf{Concave kite-shaped source:} Let a kite-shaped curve be $\partial\Omega=\{(x,y): x(t)=b(\cos(t)+0.65\cos(2t)-0.65)+x_{c},\; y(t)=1.5b\sin(t)+y_{c}\mbox{ for } 0\leq t\leq 2\pi\}$ with a scaling factor $b=0.2$. $s(\mathbf{r})=1$ if $\mathbf{r}\in\Omega$ and $s(\mathbf{r})=t(|\mathbf{r}-\mathbf{r}_c|,w_1,w_2)$ if $\mathbf{r}\notin\Omega$ with the tapering function in (\ref{eq:taper}). Here $w_2=0.1$, $w_1=|\mathbf{r}_{\min}|$, and  $\mathbf{r}_{\min}$ is the point on $\partial\Omega$ closest to the line $\mathbf{r}-\mathbf{r}_c$.  

\end{itemize}

Note that although the current form of HB ansatz \eqref{babich} needs further modification to handle media permitting presence of caustics, we can still model caustics induced by the interaction of the RHS (the source) with the medium, as shown in the two examples: the concave kite-shaped source and the open cavity inclusion; both cases will induce caustics in the wave field as we will see. 

As for the reference FDFD solver for computing wavefields $u_{\rm fd}(\mathbf{r})$, we use the 9-point staggered grid scheme in \cite{chen2013optimal}. The computational domain is extended in each direction with a perfectly matched layer (PML) of thickness $8\pi/(\omega n_{\max})$ (i.e., 4 wavelengths). The extended domain is discretized with $h=2\pi/(\omega n_{\max}n_p)$ with PPW $n_p=10\sim50$. The resulting sparse linear system is solved with a multi-frontal sparse direct solver STRUMPACK \cite{ghysels2017robust,liu2021sparse}. When inclusion is present, the rows and columns of the system which represent grid cells overlapping with the inclusion are removed from the system. This can introduce significant staircase approximation errors to the inclusion, unless more sophisticated subgridding techniques are used. Nevertheless, by careful implementation, we still manage to produce good FDFD results in the examples that we are going to show.

For the constant medium, we first consider $\omega=80\pi$, which amounts to 160 wavelengths in each direction. The fields computed by the proposed scheme (PPW=10) and FDFD (PPW=10, 50) and their differences are shown in \cref{fig:ex1_f40}. Note that for the point source without inclusion (Row 1 in \cref{fig:ex1_f40}), the exact solution is also plotted (in dashed green) in the third column. One can clearly see that the solution by the proposed scheme matches well with the exact solution, while it requires PPW=50 or higher for FDFD to achieve a similar order of accuracy. For the kite-shaped source (Row 2 in \cref{fig:ex1_f40}), the concave shape can induce caustics, which are well-captured by the proposed scheme. FDFD matches better with the proposed scheme if PPW=50 other than PPW=10 is used. For the Gaussian wavepacket source with the semi-circle inclusion (Row 3 in \cref{fig:ex1_f40}), FDFD results match poorly with the proposed scheme even using PPW=50, particularly near the inclusion, due to the stair-case error in FDFD (this is the case for all the source functions considered). As a workaround, we consider the open square inclusion (Row 4 in \cref{fig:ex1_f40}) for which FDFD introduces no staircase error. Again, FDFD requires PPW=50 or higher to achieve a similar accuracy as the proposed scheme. Note that the square inclusion permits a hierarchical matrix representation of $K^{s2s}$ as an alternative to HODBF, but this is not considered in this paper.

Based on these experiments, we further consider $\omega=320\pi$, which amounts to 640 wavelengths in each direction. The fields computed by the proposed scheme (PPW=10) and FDFD (PPW=10) and their differences are shown in \cref{fig:ex1_f160}. We note that the FDFD solver with PPW=10 already results in a sparse system of dimension $387,223,684$, and denser discretization causes memory crashes. As a comparison, the proposed scheme results in a dense, butterfly compressed $K^{v2v}$ matrix of dimension $N_v=40,972,801$. One can see from the point source case (Row 1 in \cref{fig:ex1_f160}) that the proposed scheme is still very accurate when changing from $\omega=80\pi$ to $\omega=320\pi$, but FDFD suffers from dispersion errors. From the Gaussian wavepacket source without and with inclusion (Row 2-3 in \cref{fig:ex1_f160}), it is clear that PPW=10 for FDFD does not give satisfactory results.

For the constant-gradient medium, we consider $\omega=50\pi$ and $\omega=100\pi$, which amount to 100 and 200 wavelengths in each direction, respectively. The results are shown in \cref{fig:ex2_f25} and \cref{fig:ex2_f50}. It is not hard to see that for the point source, kite source, and Gaussian wavepacket source without inclusion (Row 1 and 2 in \cref{fig:ex2_f25} and \cref{fig:ex2_f50}), FDFD requires at least PPW=50 to achieve a similar accuracy as the proposed scheme. However for $\omega=100\pi$, FDFD with PPW=20 already leads to a sparse system of dimension $193,710,724$. In contrast, the proposed scheme leads to a dense, butterfly compressed $K^{v2v}$ matrix of dimension $N_v=16,008,001$. Similar conclusions can be drawn for the Gaussian wavepacket source with inclusions (Row 3 and 4 in \cref{fig:ex2_f25} and \cref{fig:ex2_f50}).

To see that our method can be applied to generic inhomogeneous media, we test two more models: the sinusoidal model and the waveguide model. 

For the sinusoidal model, we first consider $\omega=80\pi$, which amounts to 50 wavelengths in each direction, and the results are shown in the first three rows of \cref{fig:ex3_f40}. It is not hard to see that for the point source and Gaussian wavepacket source without inclusion (Row 1 and 2 in \cref{fig:ex3_f40}), the proposed scheme can achieve a similar accuracy as FDFD with PPW=50; similar conclusions can be drawn for the Gaussian wavepacket source with inclusions (Row 3 in \cref{fig:ex3_f40}). In addition, we also show the result in Row 4 of \cref{fig:ex3_f40} when $\omega=40\pi$ for the concave kite-shaped source, and we have chosen this particular frequency so that caustic effects are apparent near the concave region. 

For the waveguide model, we consider $\omega=40\pi$, which amounts to 40 wavelengths in each direction. We compute wavefields for three different types of sources: the point source, the Gaussian wavepacket source, and the concave kite-shaped source, and the results are shown in Row 1 to 3 in \cref{fig:ex4_f20}. The FDTD solutions are computed as references. It can be seen that the proposed scheme can achieve a similar accuracy as FDFD with PPW=50. 

\begin{figure}[!htp]
	\centering
	\vspace{-7.5pt}
	\begin{subfigure}[t]{.29\textwidth}
		\centering
		\includegraphics[width=\linewidth]{\fpath/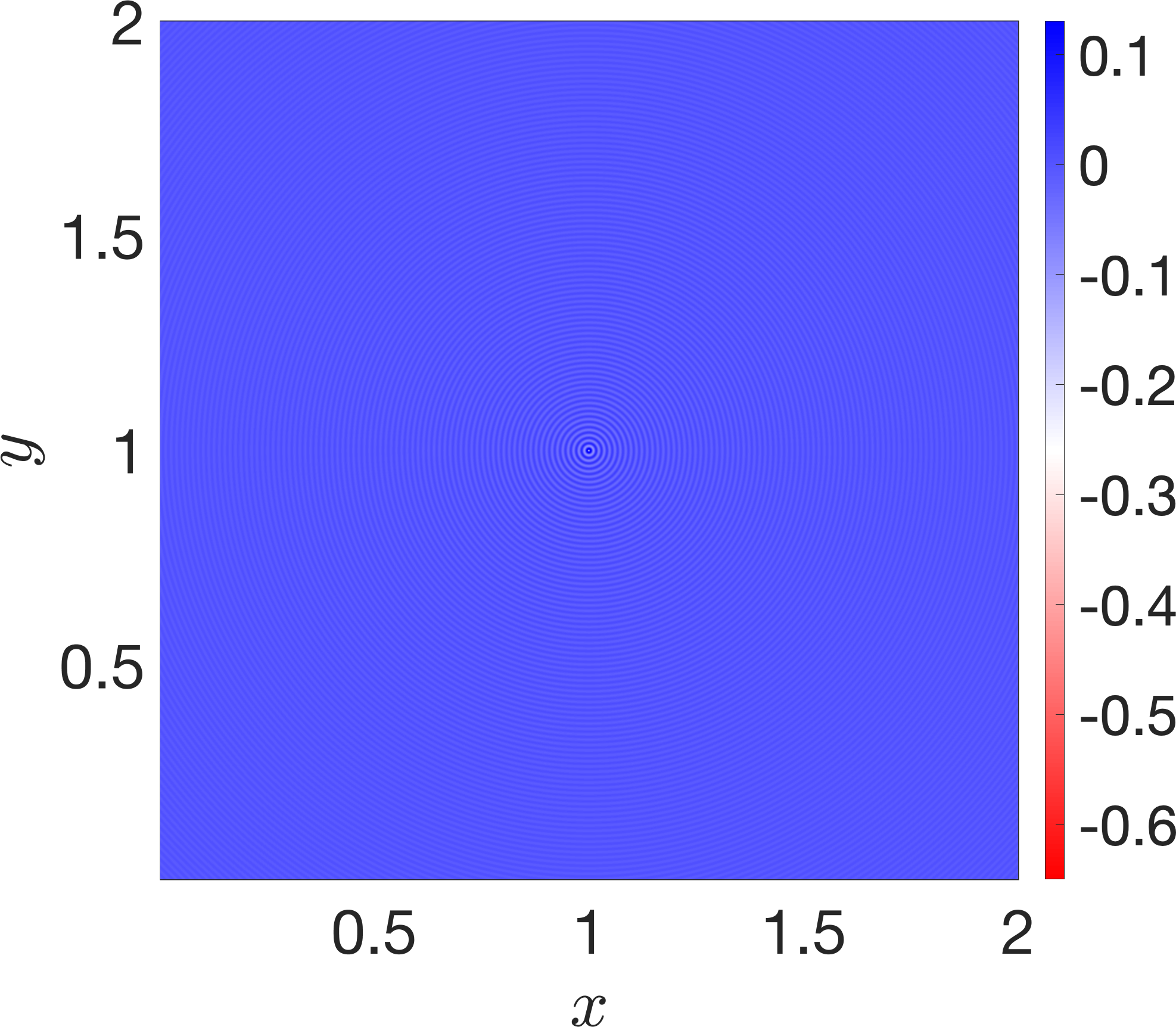}
	\end{subfigure}
	\begin{subfigure}[t]{.29\textwidth}
		\centering
		\includegraphics[width=\linewidth]{\fpath/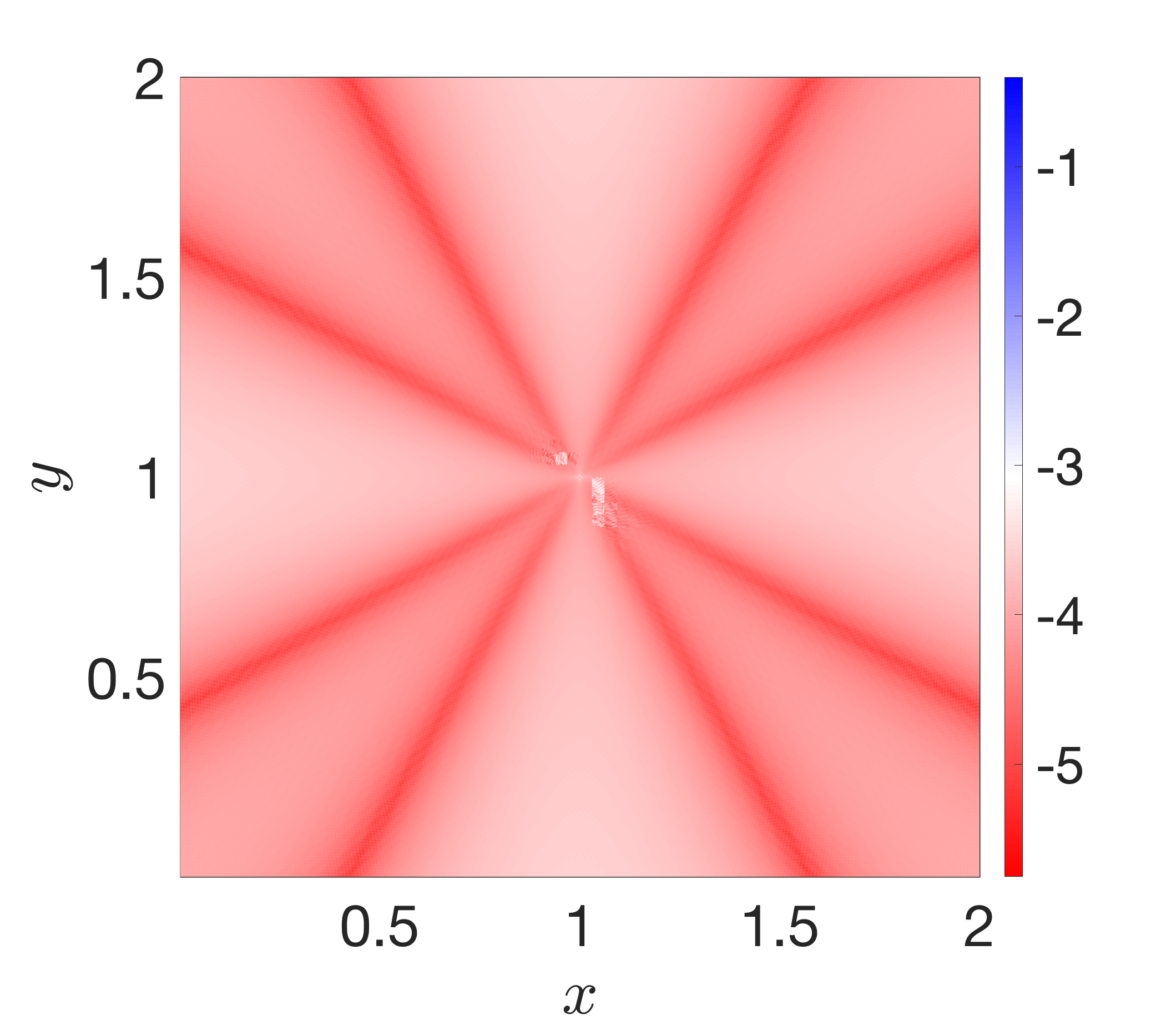}
	\end{subfigure}	
	\begin{subfigure}[t]{.37\textwidth}
	\centering
	\includegraphics[width=\linewidth]{\fpath/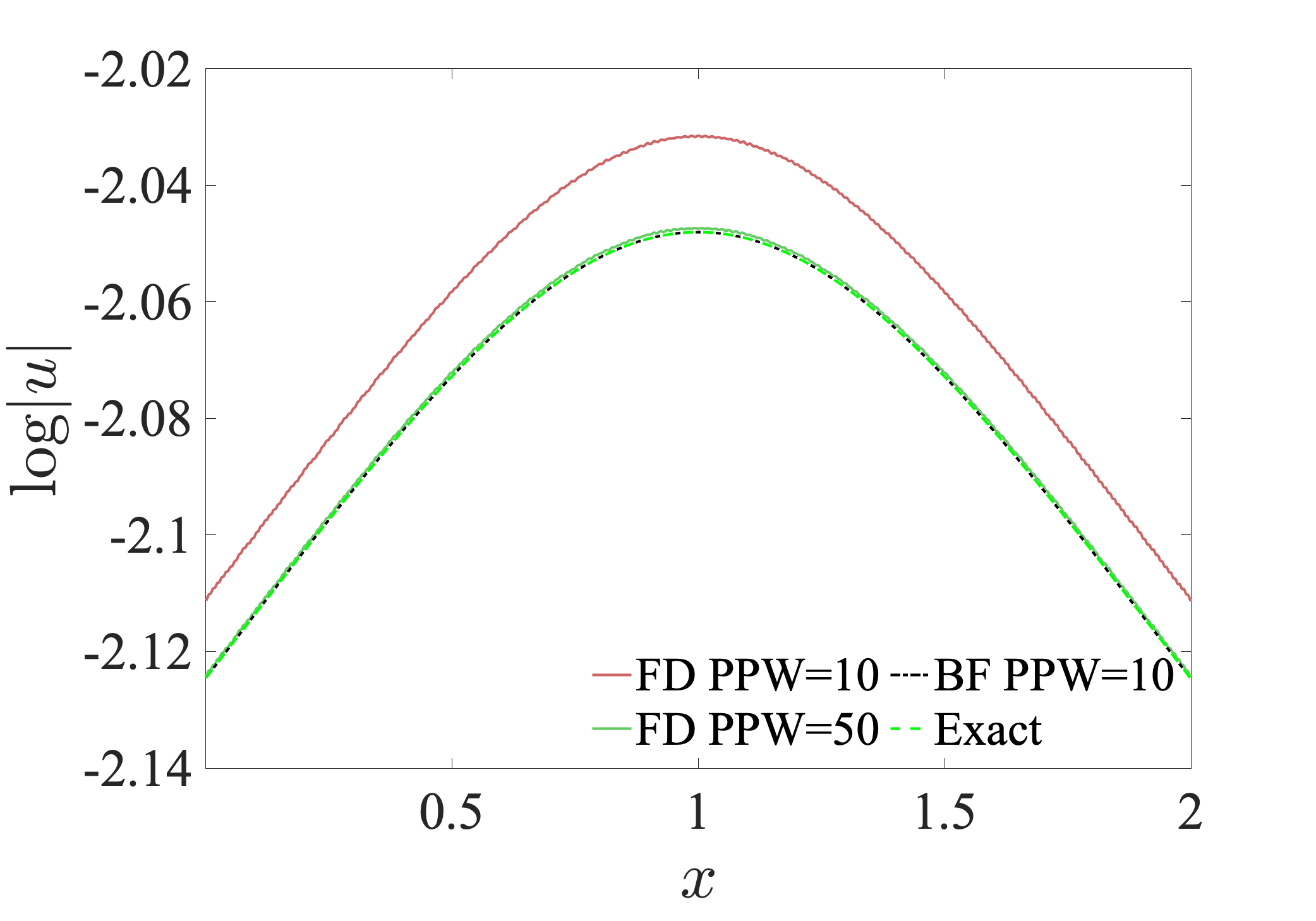}
\end{subfigure}	
	\begin{subfigure}[t]{.29\textwidth}
	\centering
	\includegraphics[width=\linewidth]{\fpath/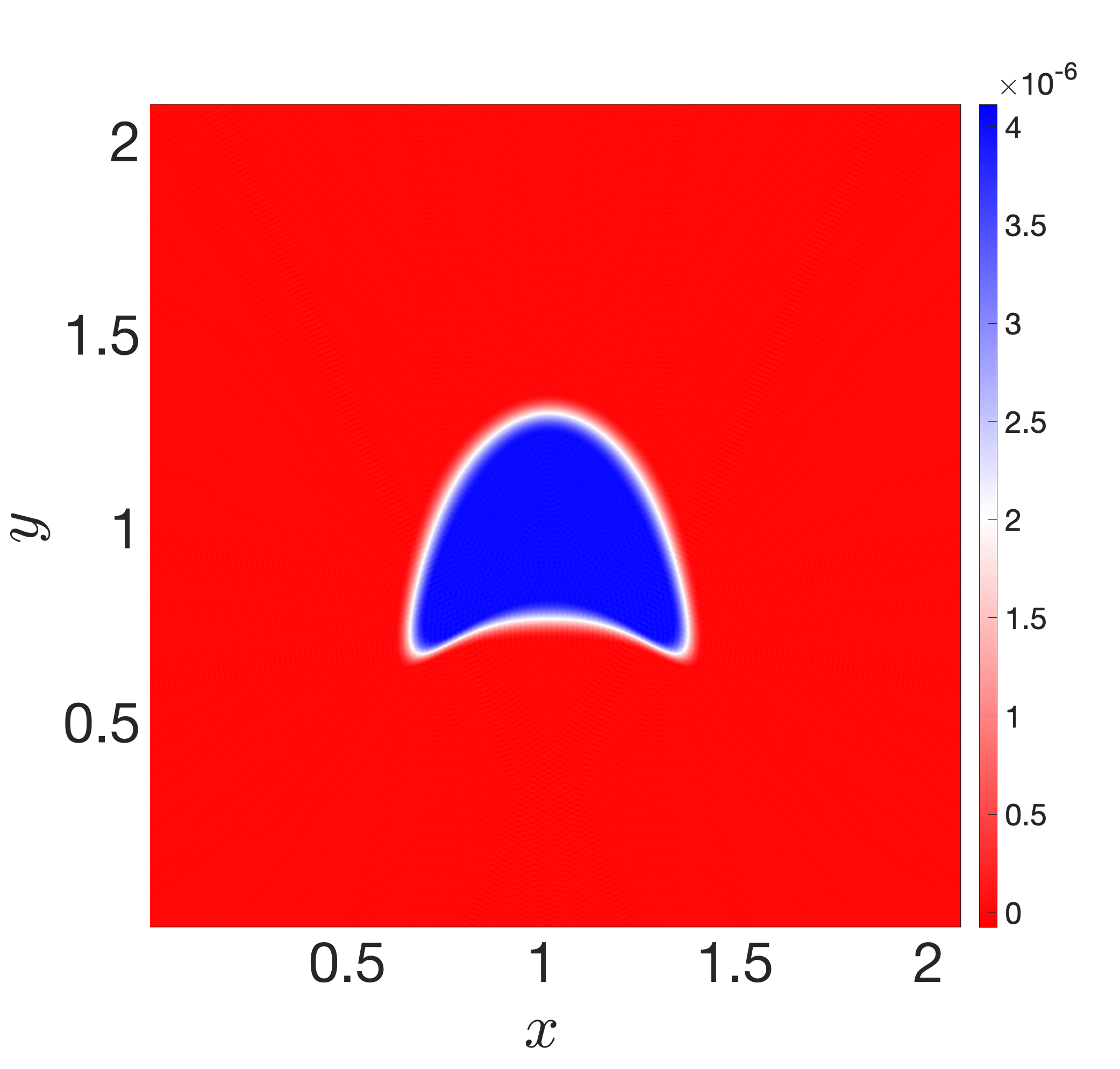}
\end{subfigure}
\begin{subfigure}[t]{.29\textwidth}
	\centering
	\includegraphics[width=\linewidth]{\fpath/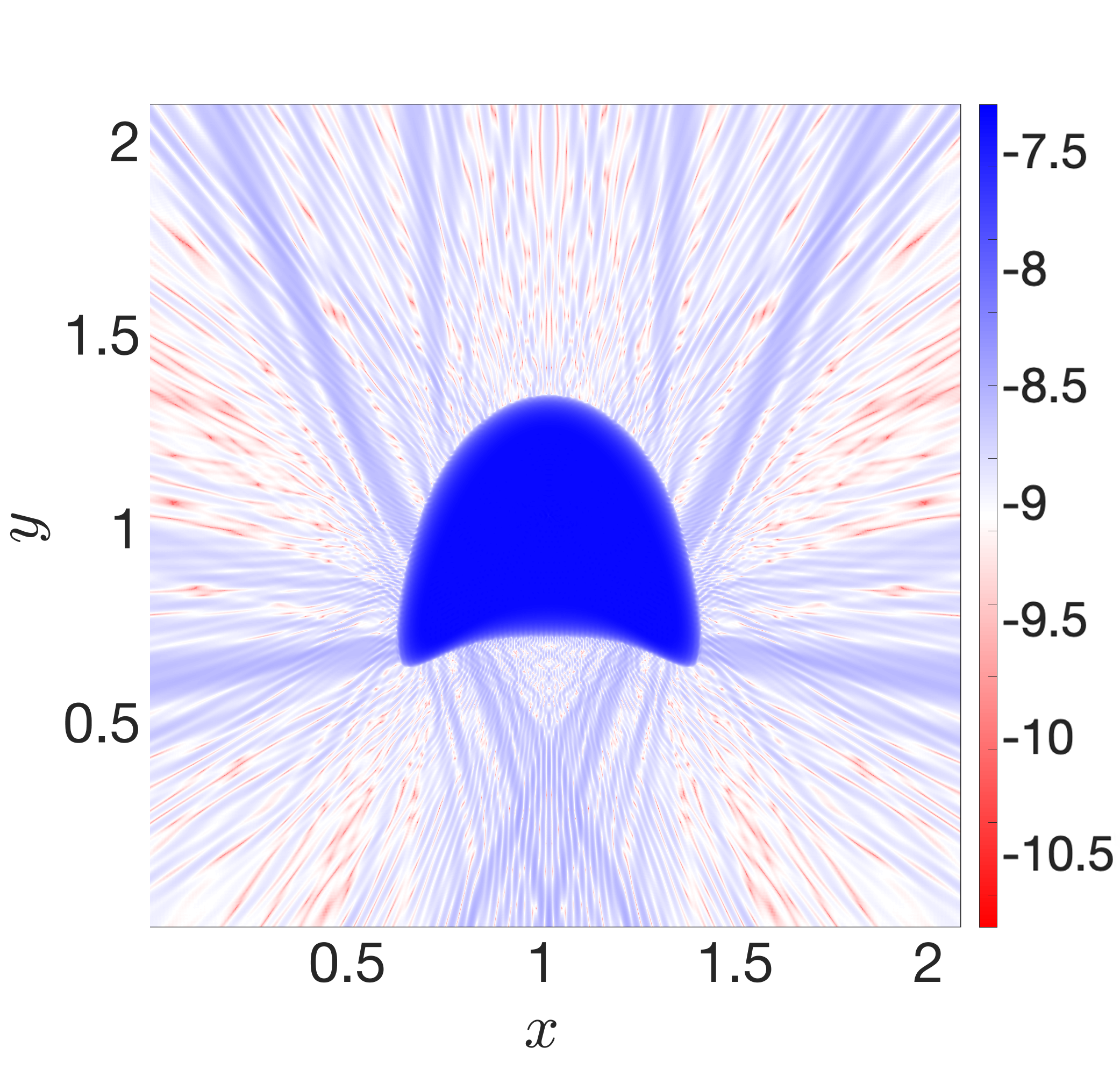}
\end{subfigure}	
\begin{subfigure}[t]{.37\textwidth}
	\centering
	\includegraphics[width=\linewidth]{\fpath/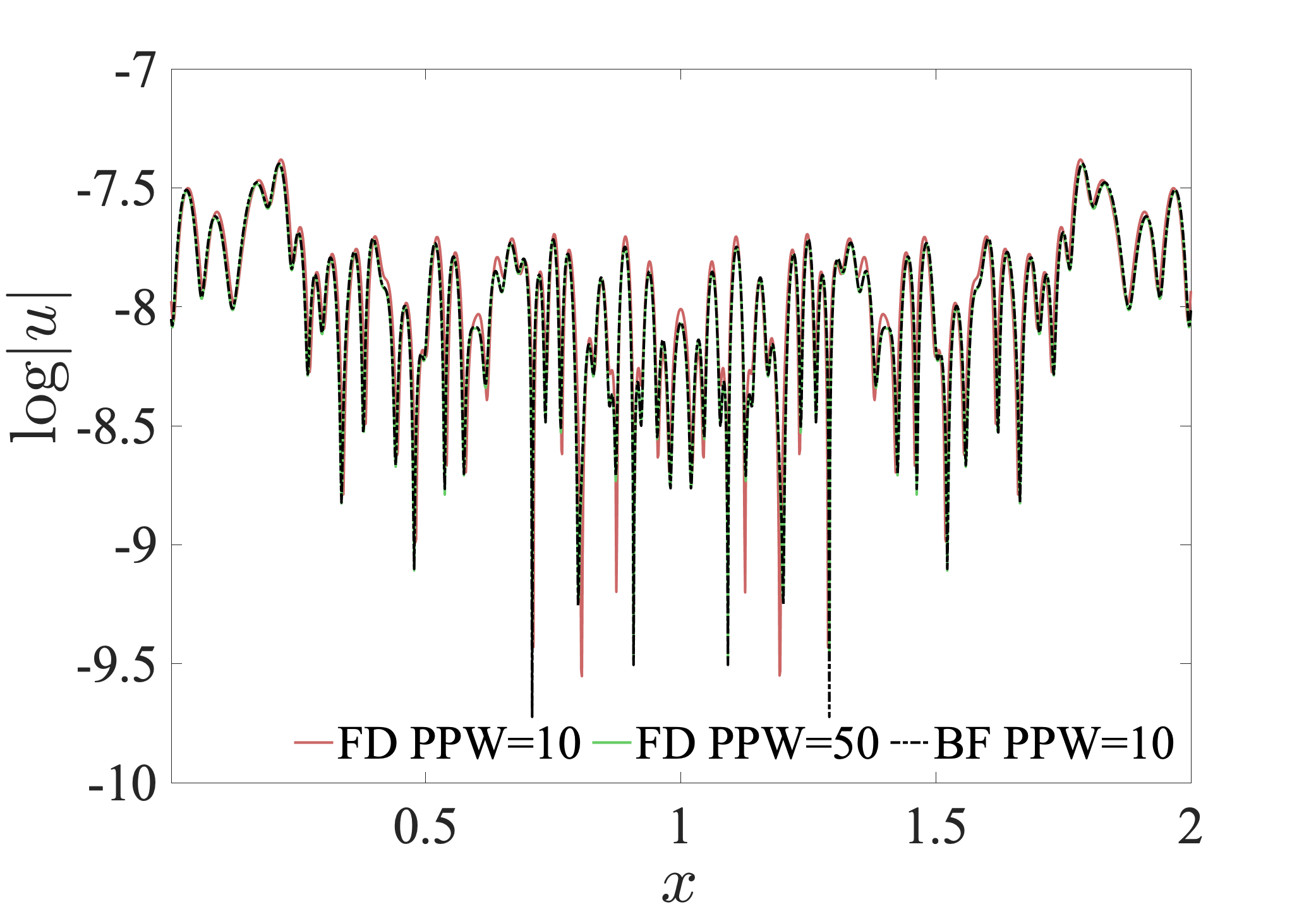}
\end{subfigure}	

\begin{subfigure}[t]{.29\textwidth}
	\centering
	\includegraphics[width=\linewidth]{\fpath/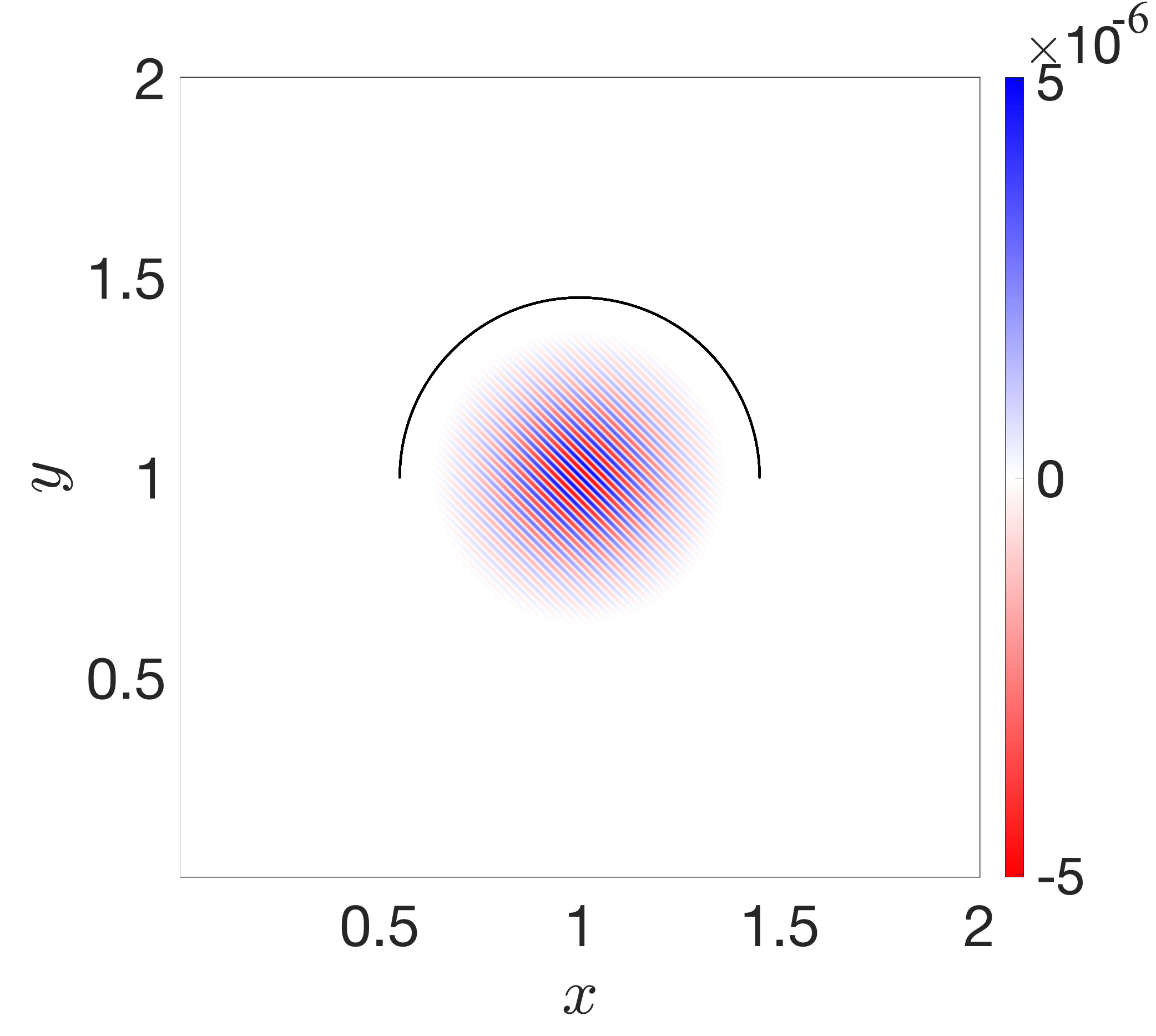}
\end{subfigure}
\begin{subfigure}[t]{.29\textwidth}
	\centering
	\includegraphics[width=\linewidth]{\fpath/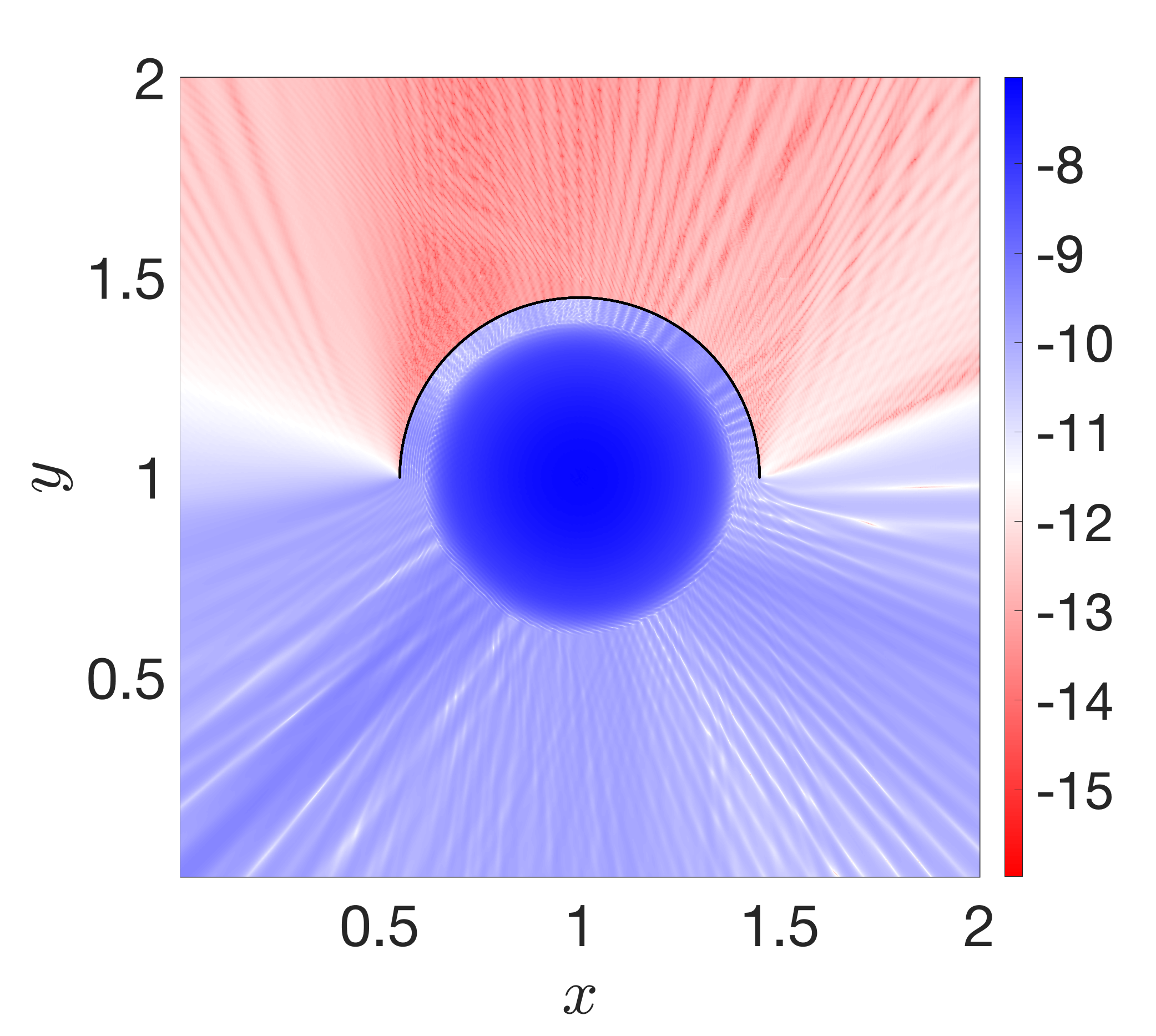}
\end{subfigure}	
\begin{subfigure}[t]{.37\textwidth}
	\centering
	\includegraphics[width=\linewidth]{\fpath/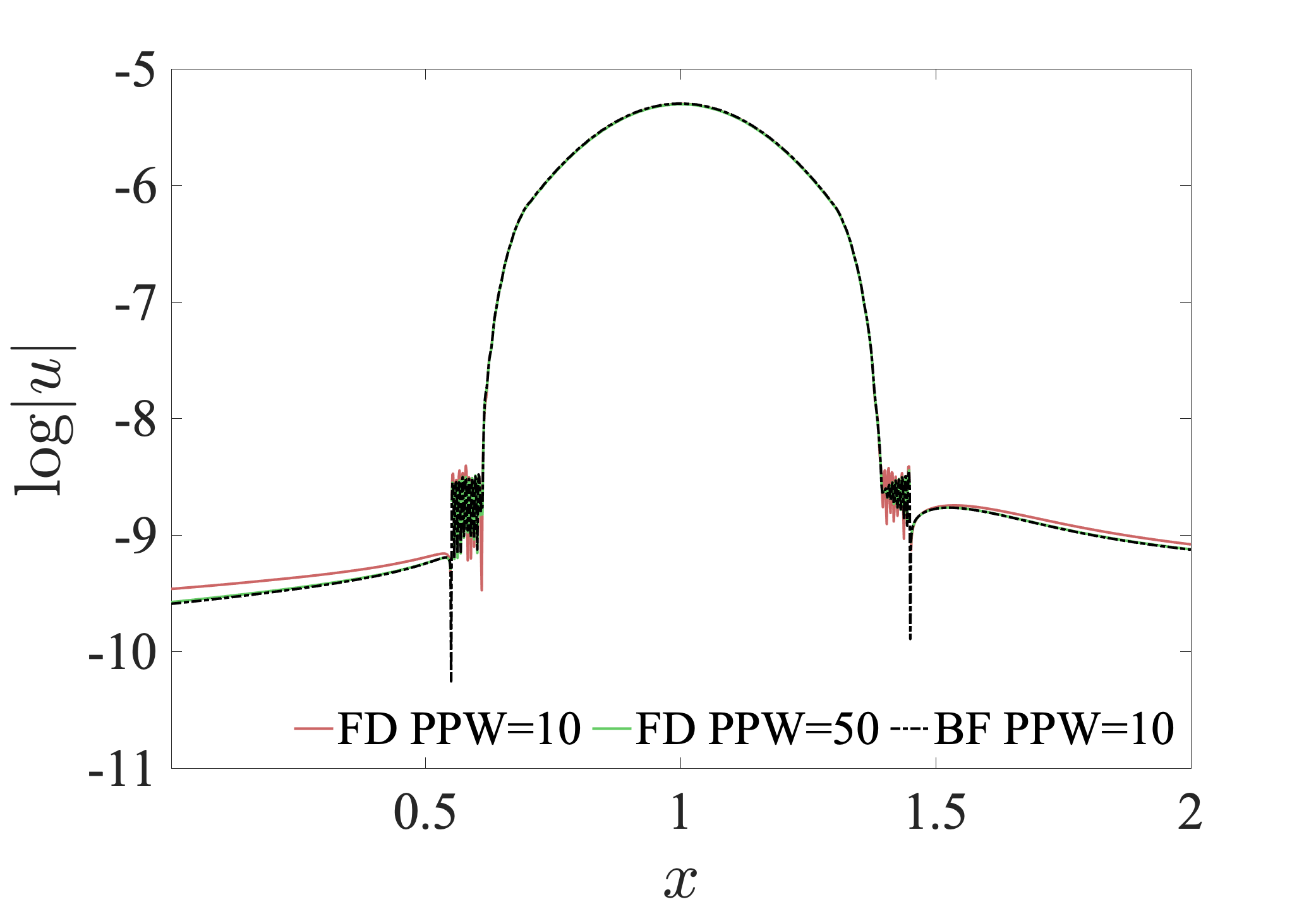}
\end{subfigure}		

\begin{subfigure}[t]{.29\textwidth}
	\centering
	\includegraphics[width=\linewidth]{\fpath/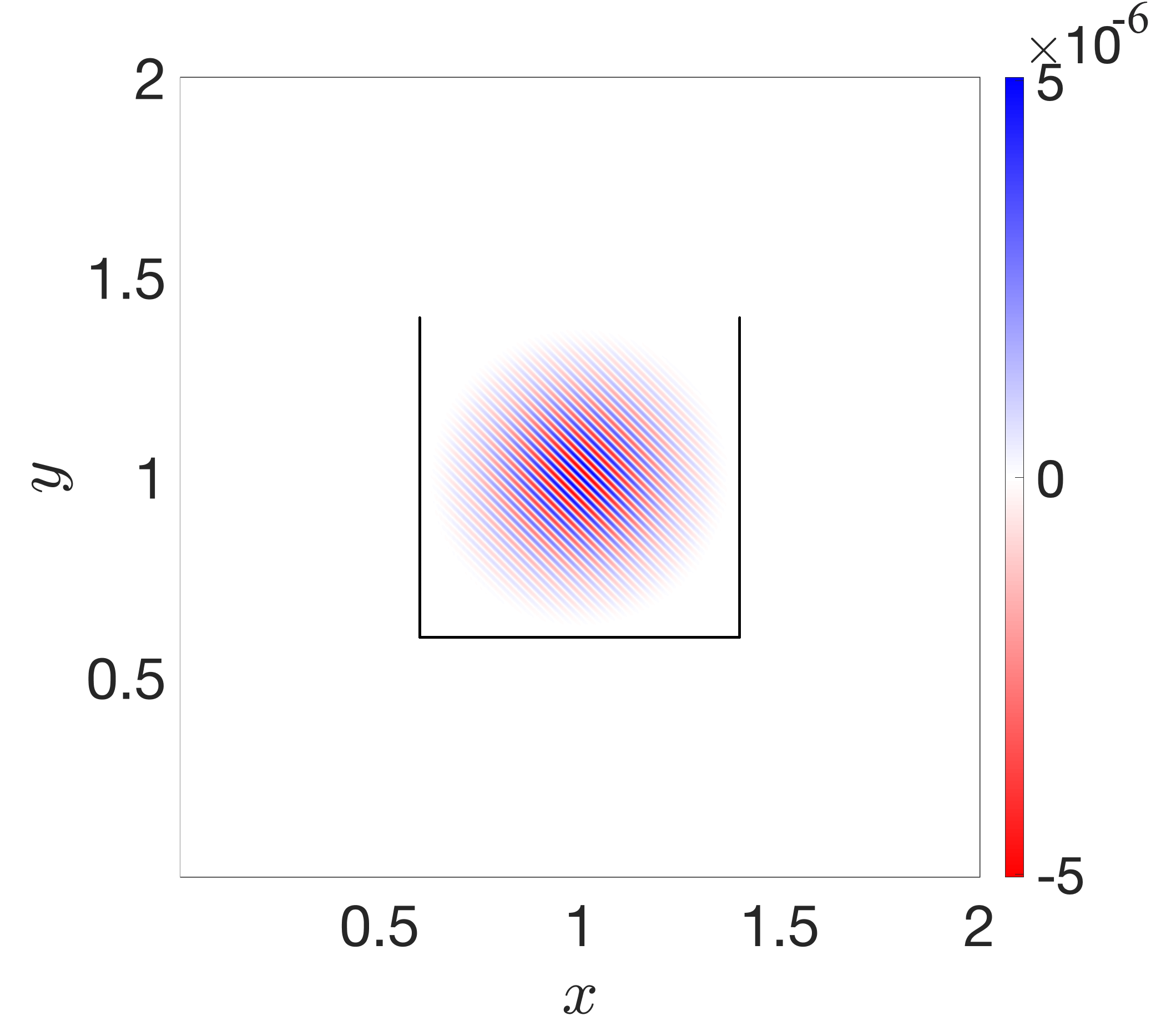}
\end{subfigure}
\begin{subfigure}[t]{.29\textwidth}
	\centering
	\includegraphics[width=\linewidth]{\fpath/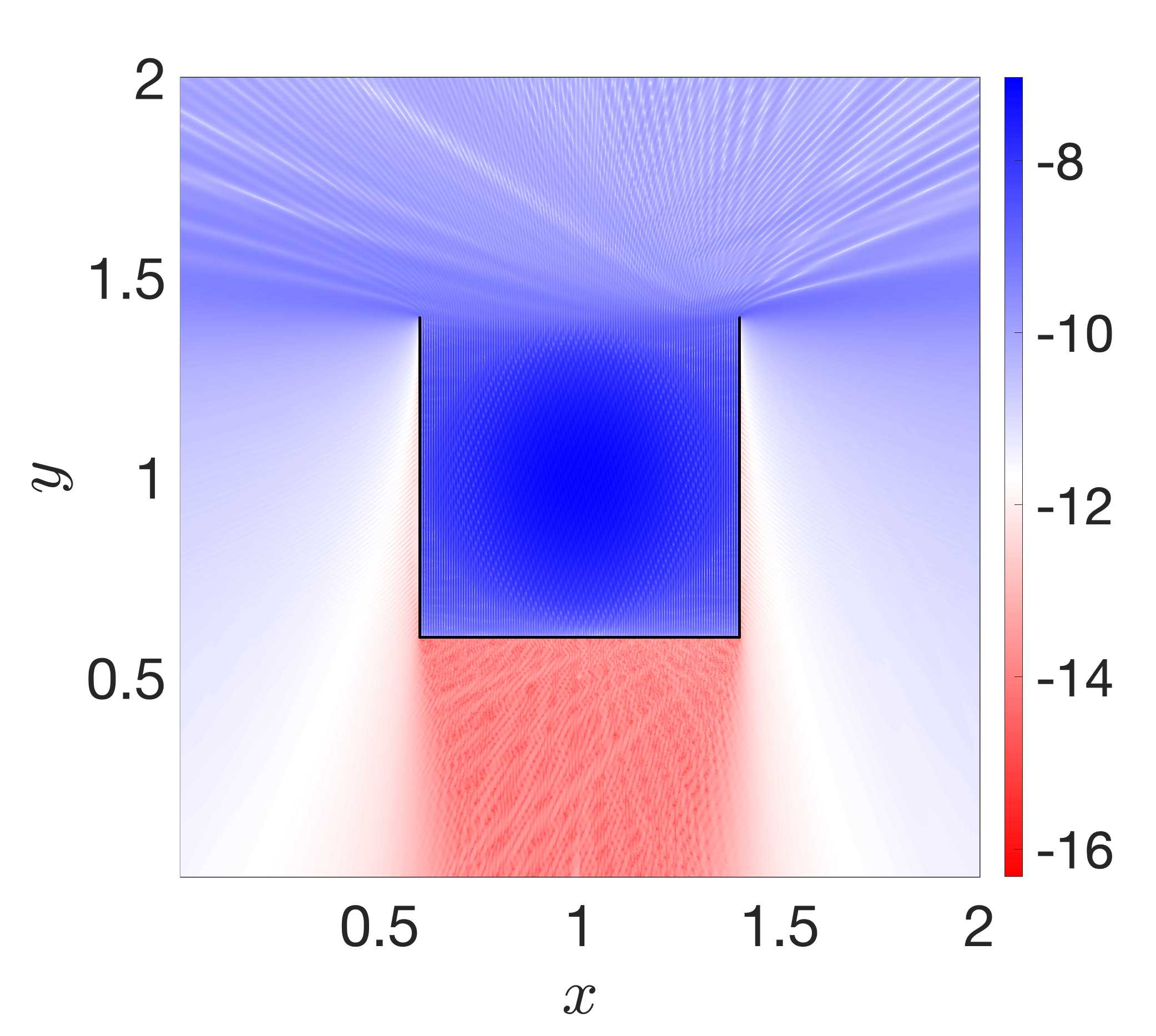}
\end{subfigure}	
\begin{subfigure}[t]{.37\textwidth}
	\centering
	\includegraphics[width=\linewidth]{\fpath/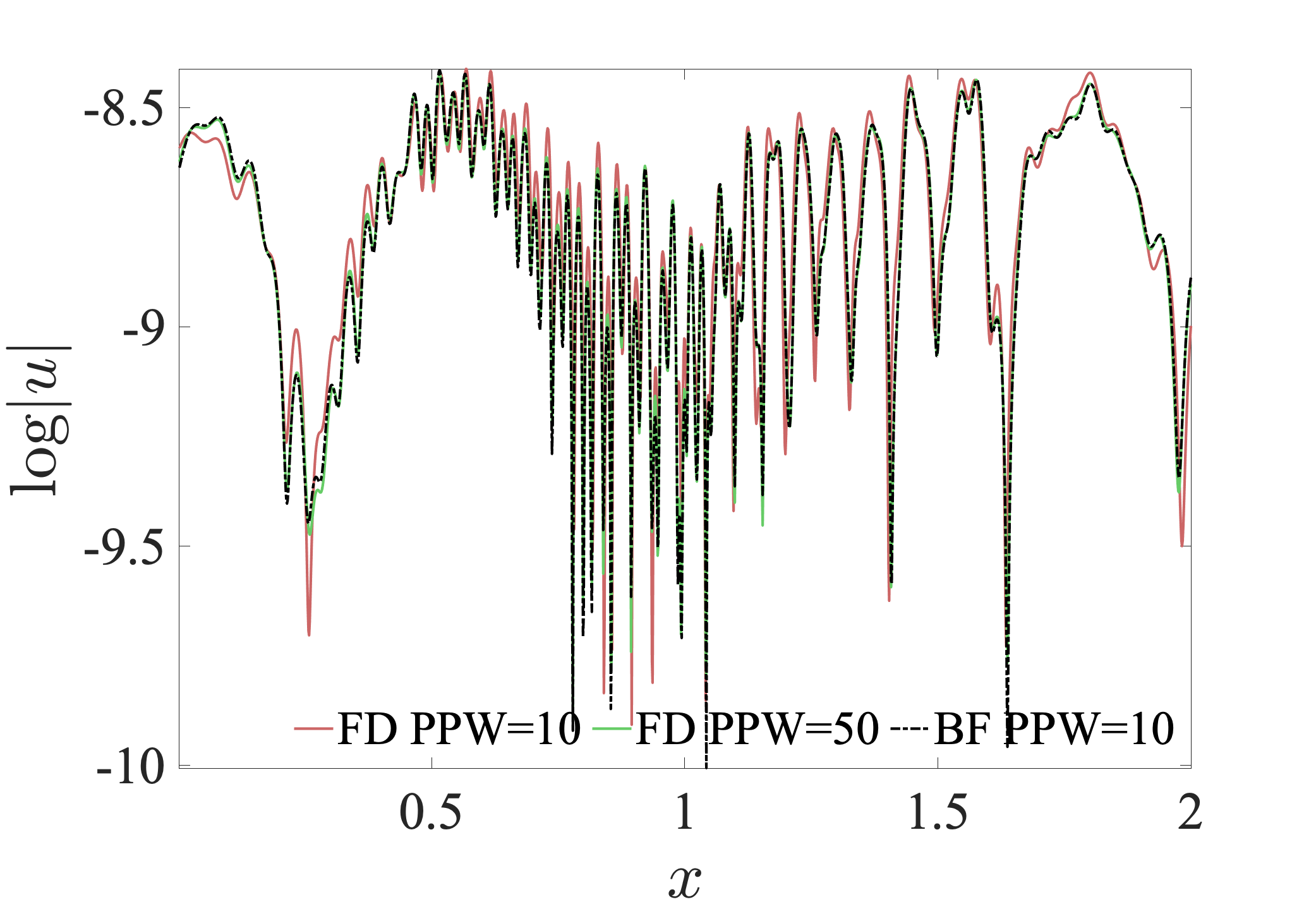}
\end{subfigure}		

	\vspace{-5pt}
	\caption{Constant media. $\omega=80\pi$ (160 wavelengths each direction). Left column: the field $\mathrm{Re}(u_{\rm hb})$ (in linear scale) computed by the proposed scheme. Middle column: difference $|u_{\rm hb}-u_{\rm fd}|$ (in log scale) between the fields computed by the proposed scheme (PPW=10) and FDFD (PPW=50). Right column: the fields $|u_{\rm hb}|,|u_{\rm fd}|,|u_{true}|$ (in log scale) drawn along the line $y=y_{\rm post}$. Row 1: point source with $y_{\rm post}=2-10h$ and $h$ corresponding to PPW=10. Row 2: kite-shaped source with $y_{\rm post}=2-10h$. Row 3: Gaussian packet source with $y_{\rm post}=1-10h$ and a semi-circle inclusion. Row 4: Gaussian packet source with $y_{\rm post}=2-10h$ and an open-square inclusion. }
		\label{fig:ex1_f40}
\end{figure}

\begin{figure}[!htp]
	\centering
	\vspace{-7.5pt}
	\begin{subfigure}[t]{.29\textwidth}
		\centering
		\includegraphics[width=\linewidth]{\fpath/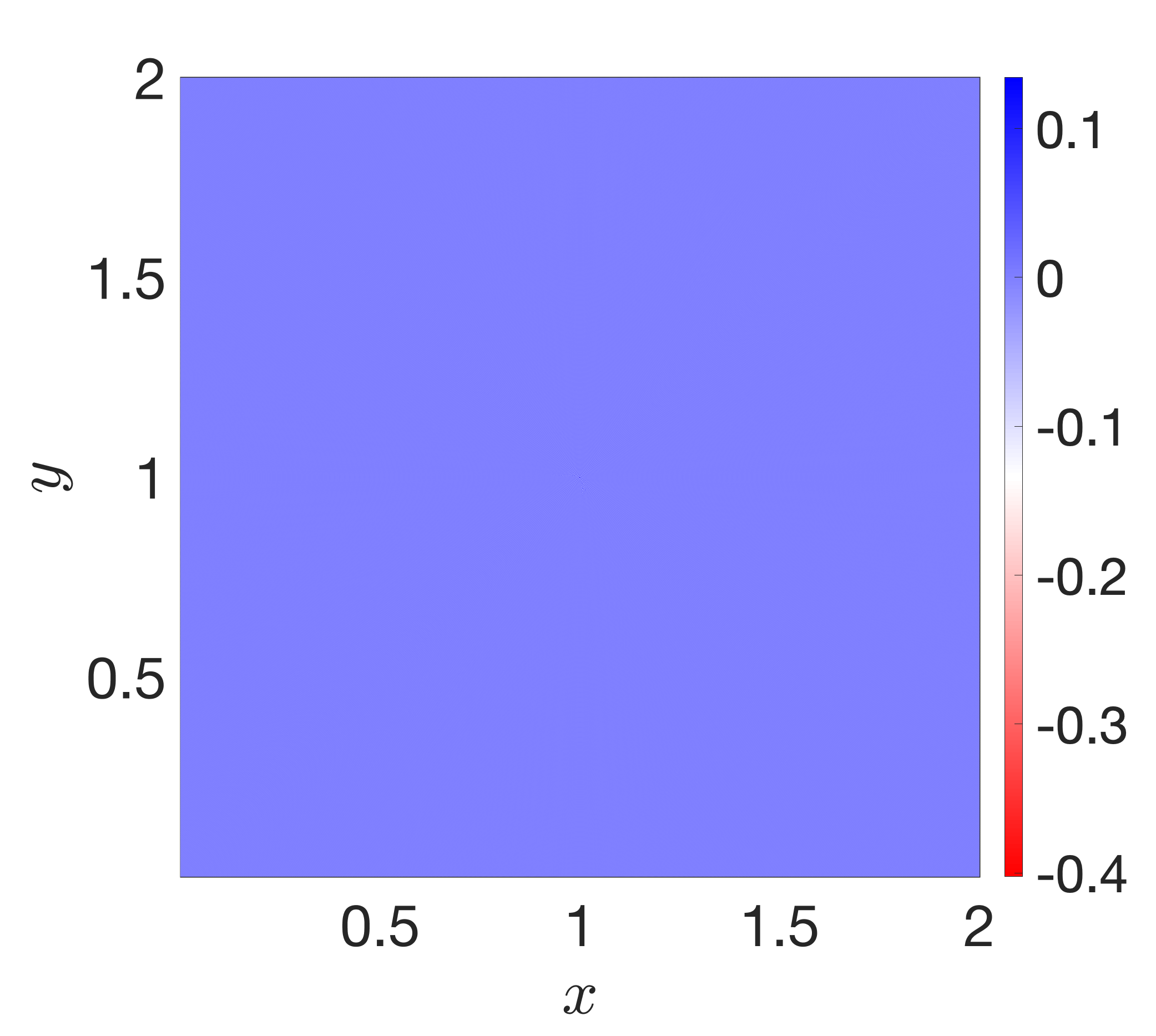}
	\end{subfigure}
	\begin{subfigure}[t]{.29\textwidth}
		\centering
		\includegraphics[width=\linewidth]{\fpath/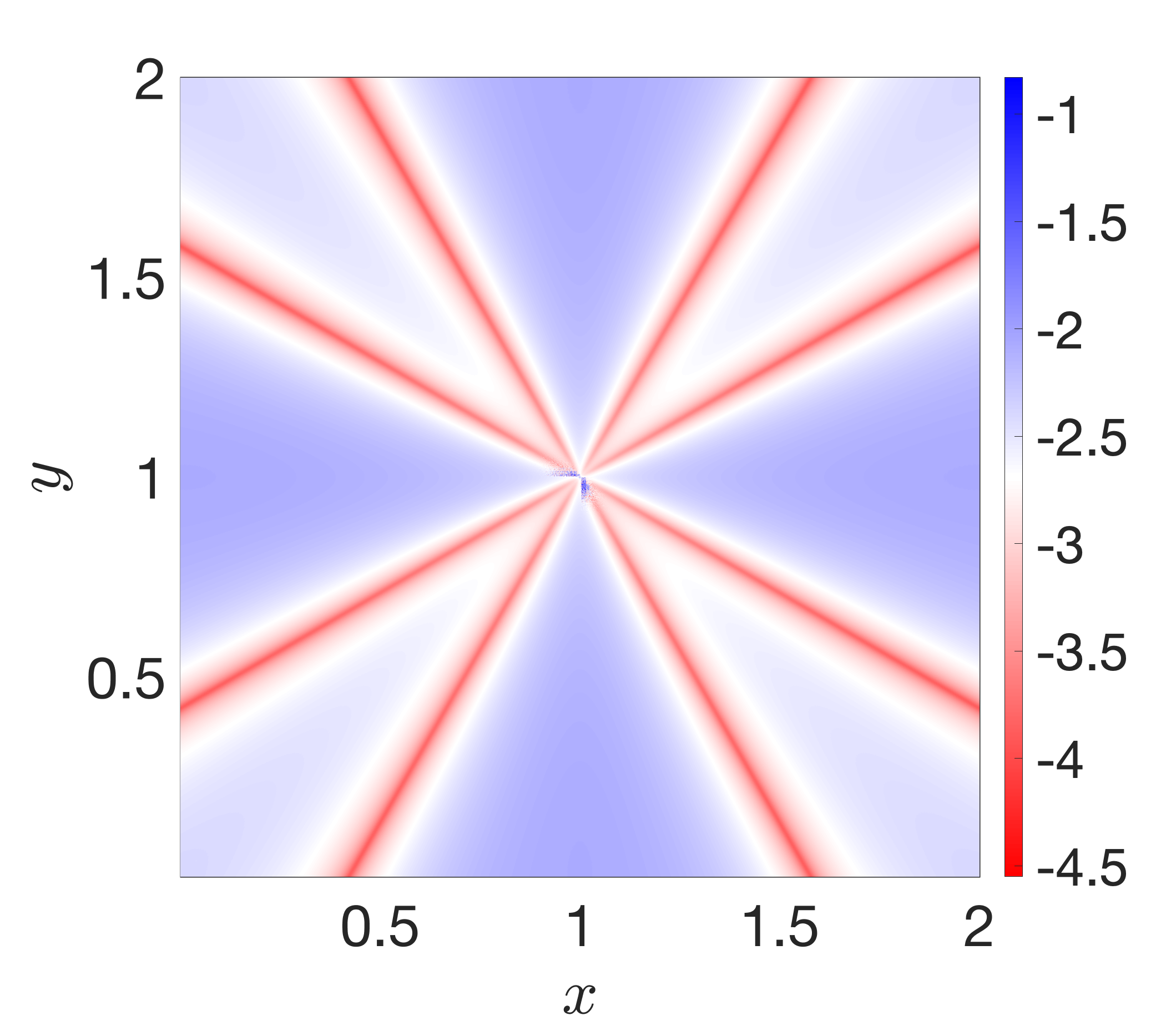}
	\end{subfigure}	
	\begin{subfigure}[t]{.37\textwidth}
		\centering
		\includegraphics[width=\linewidth]{\fpath/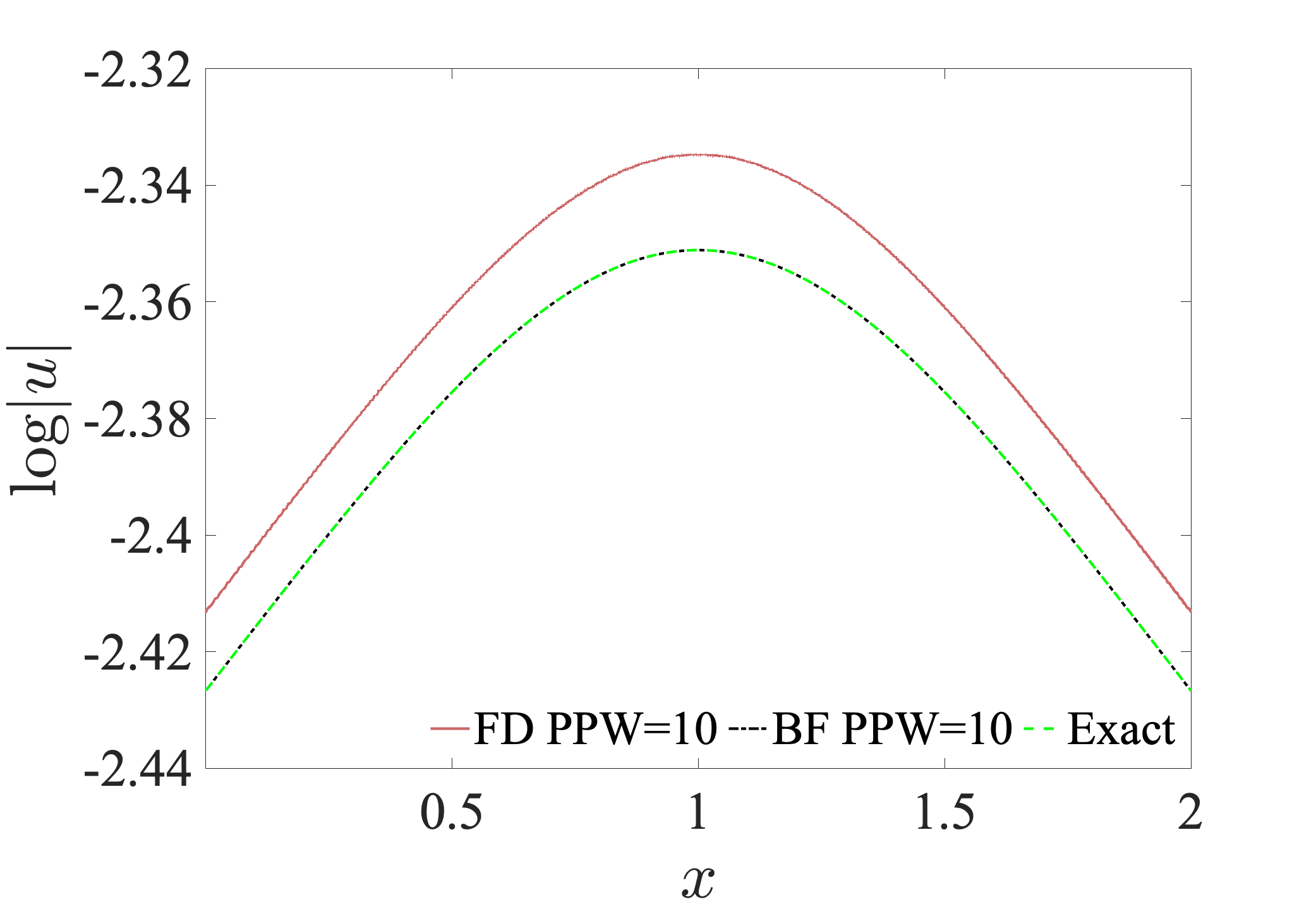}
	\end{subfigure}	
	\begin{subfigure}[t]{.29\textwidth}
	\centering
	\includegraphics[width=\linewidth]{\fpath/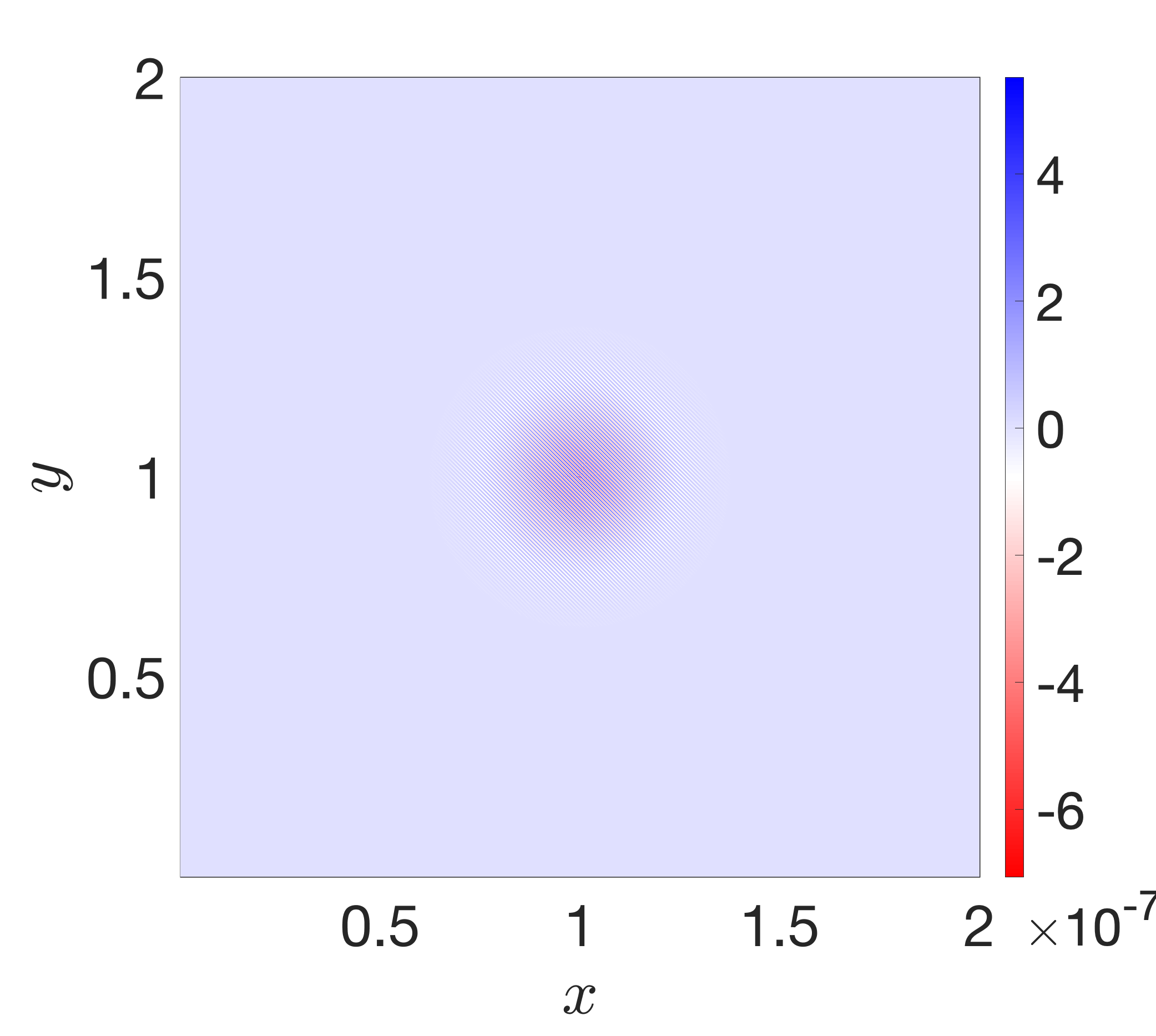}
\end{subfigure}
\begin{subfigure}[t]{.29\textwidth}
	\centering
	\includegraphics[width=\linewidth]{\fpath/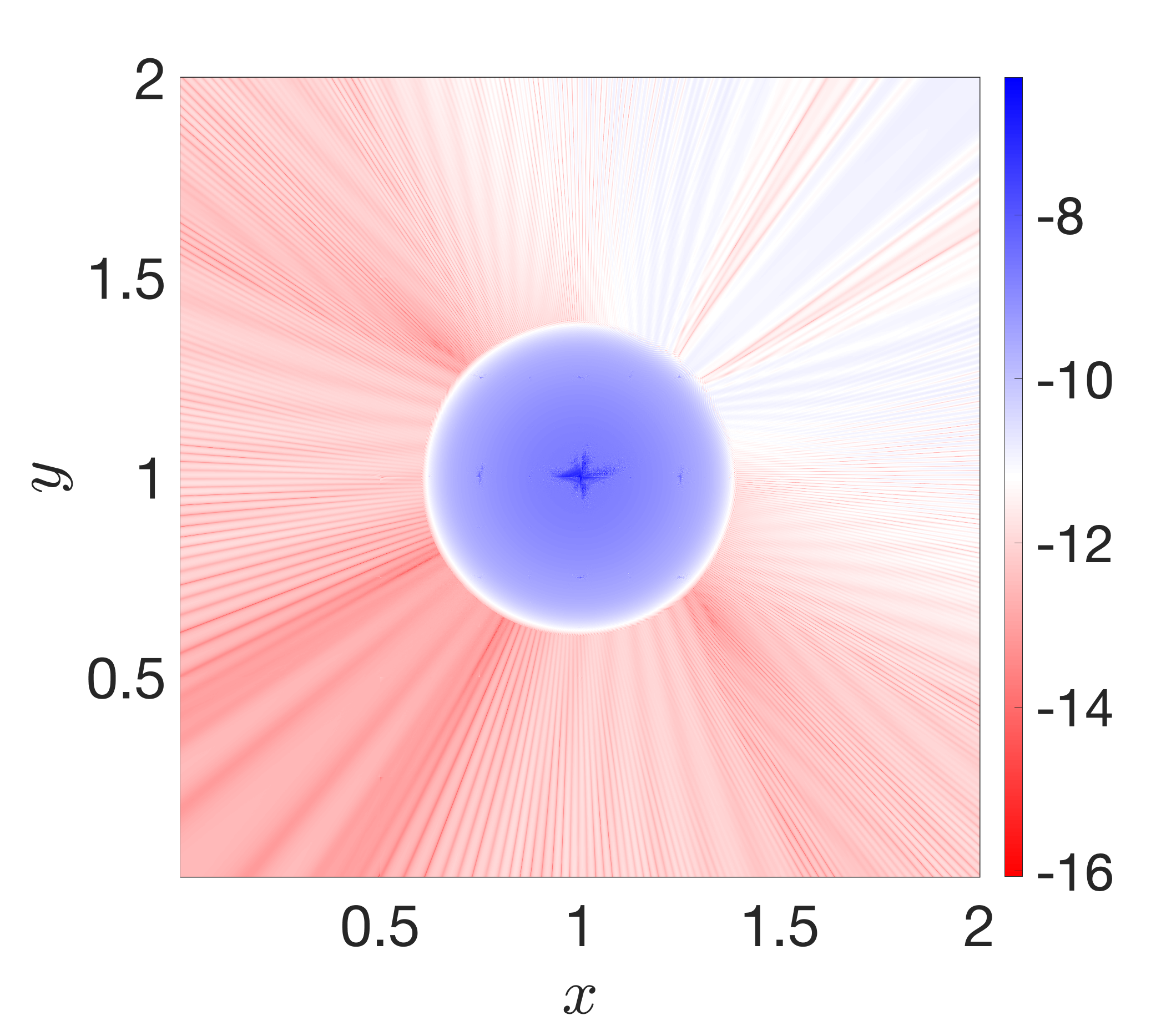}
\end{subfigure}	
\begin{subfigure}[t]{.37\textwidth}
	\centering
	\includegraphics[width=\linewidth]{\fpath/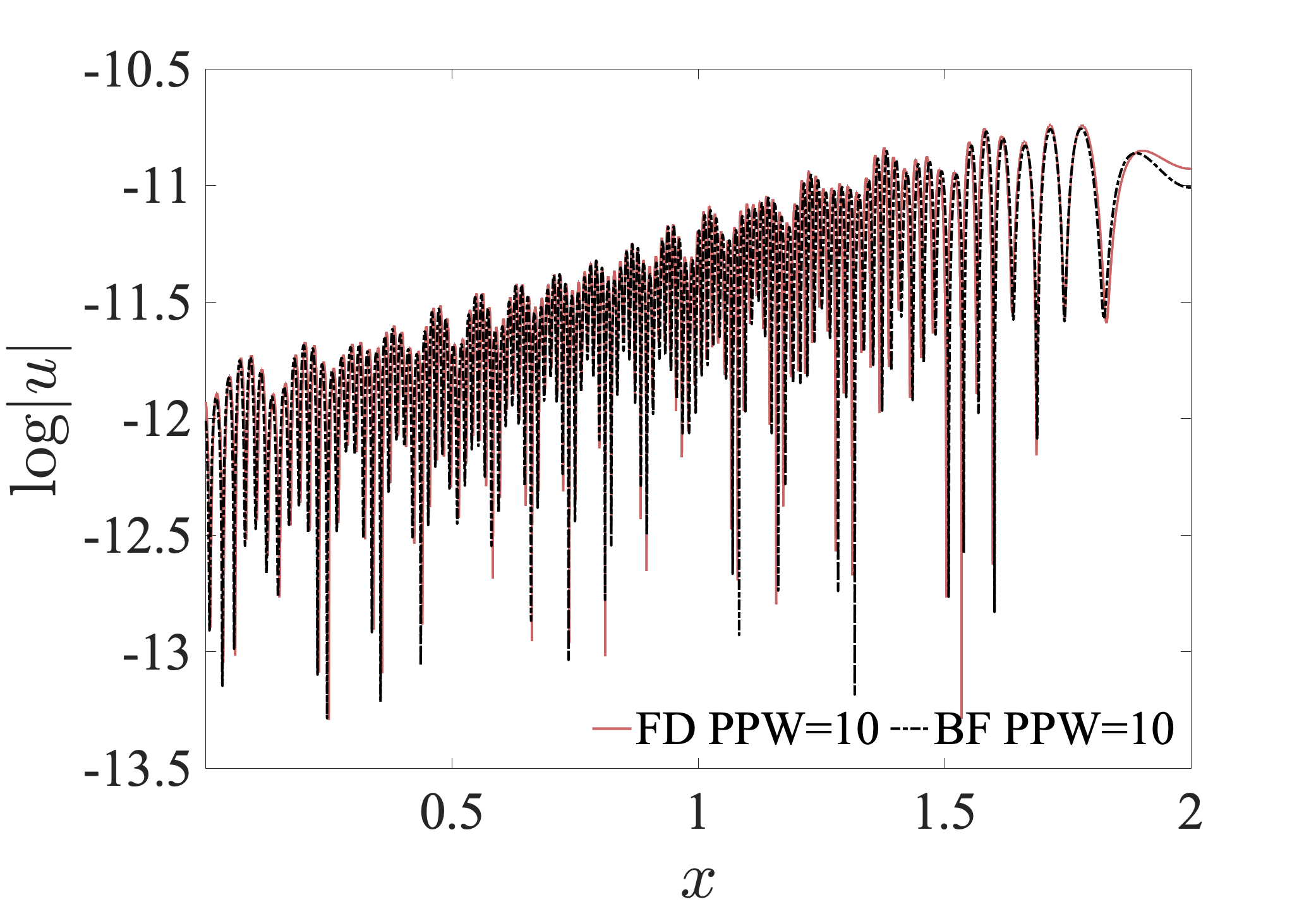}
\end{subfigure}	
	\begin{subfigure}[t]{.29\textwidth}
	\centering
	\includegraphics[width=\linewidth]{\fpath/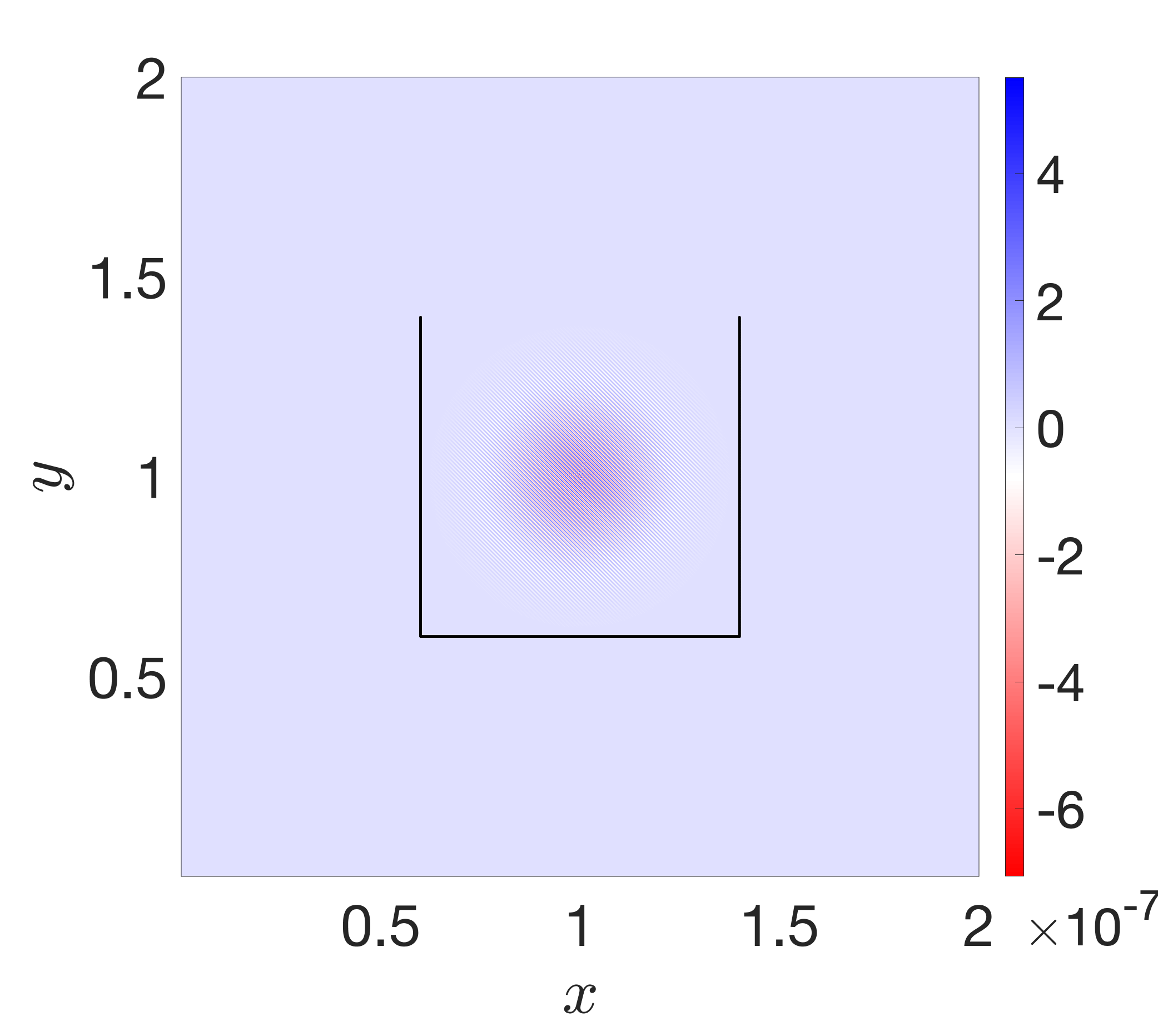}
\end{subfigure}
\begin{subfigure}[t]{.29\textwidth}
	\centering
	\includegraphics[width=\linewidth]{\fpath/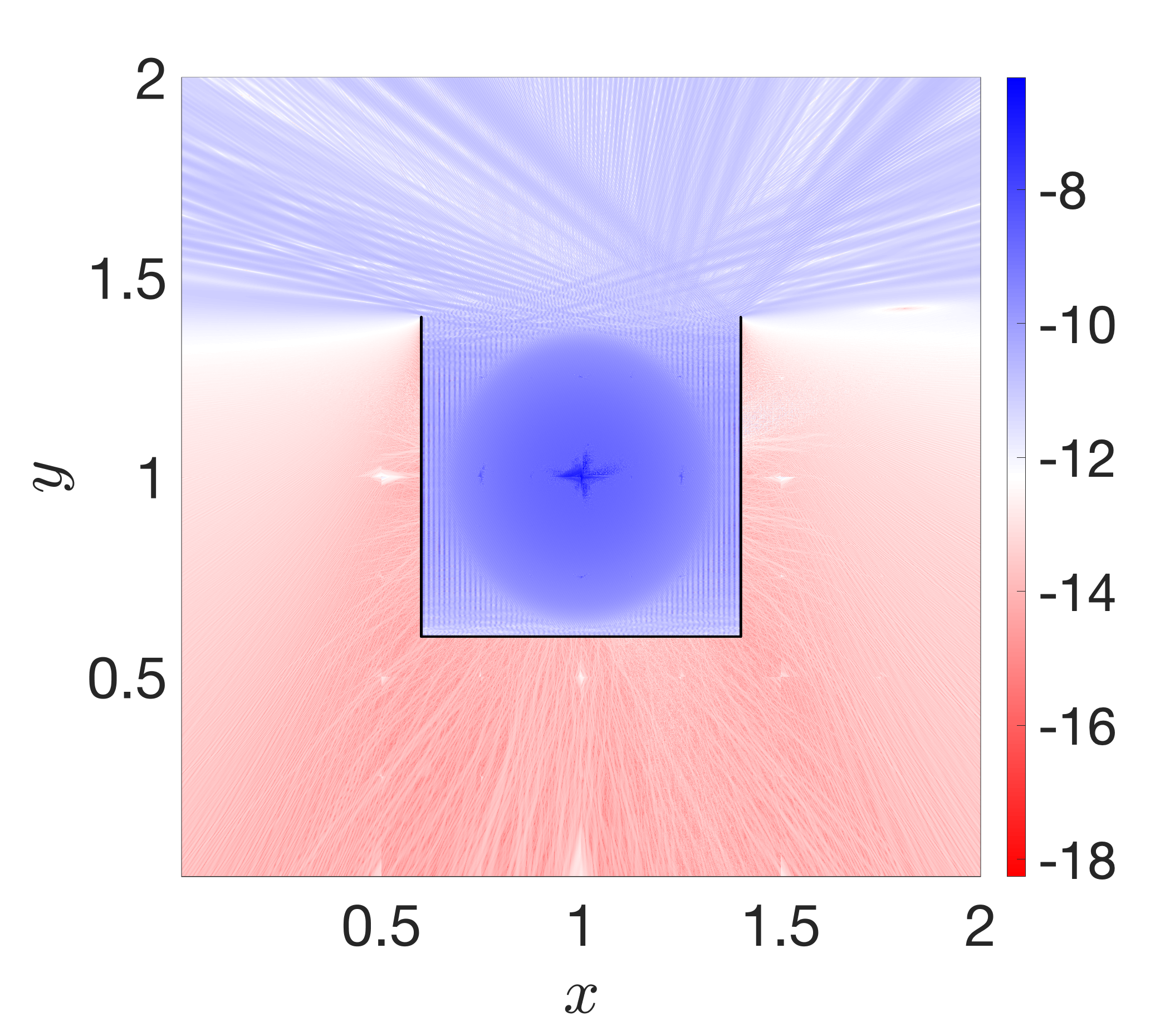}
\end{subfigure}	
\begin{subfigure}[t]{.37\textwidth}
	\centering
	\includegraphics[width=\linewidth]{\fpath/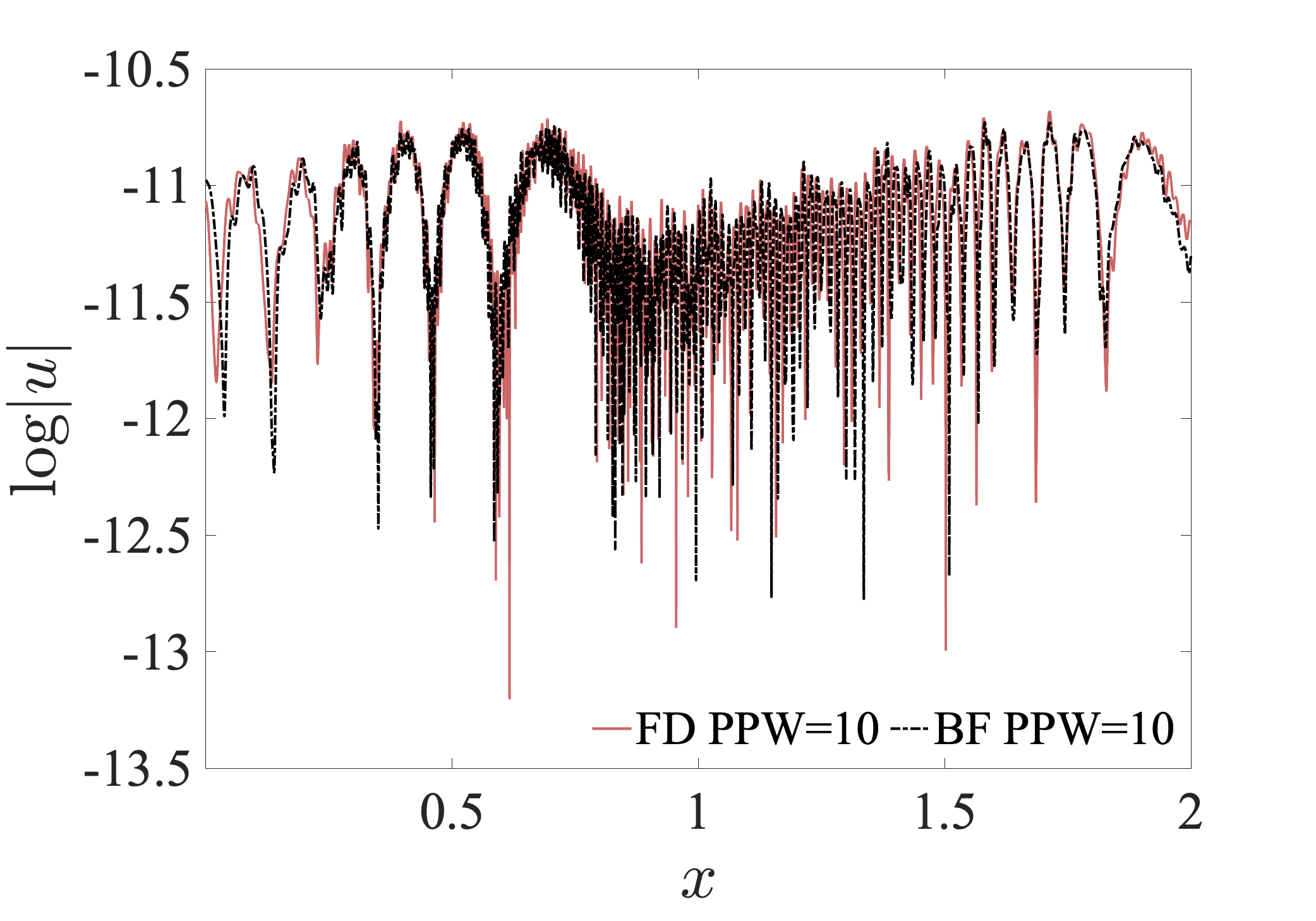}
\end{subfigure}		
	\vspace{-5pt}
	\caption{Constant media. $\omega=320\pi$ (640 wavelengths each direction). Left column: the field $\mathrm{Re}(u_{\rm hb})$ (in linear scale) computed by the proposed scheme. Middle column: difference $|u_{\rm hb}-u_{\rm fd}|$ (in log scale) between the fields computed by the proposed scheme (PPW=10) and FDFD (PPW=10). Right column: the fields $|u_{\rm hb}|,|u_{\rm fd}|,|u_{true}|$ (in log scale) drawn along the line $y=y_{\rm post}$. Row 1: point source with $y_{\rm post}=2-10h$ and $h$ corresponding to PPW=10. Row 2: Gaussian packet source with $y_{\rm post}=2-10h$. Row 3: Gaussian packet source with $y_{\rm post}=2-10h$ and an open-square inclusion. }
	\label{fig:ex1_f160}
\end{figure}

\begin{figure}[!htp]
	\centering
	\vspace{-7.5pt}
	\begin{subfigure}[t]{.29\textwidth}
		\centering
		\includegraphics[width=\linewidth]{\fpath/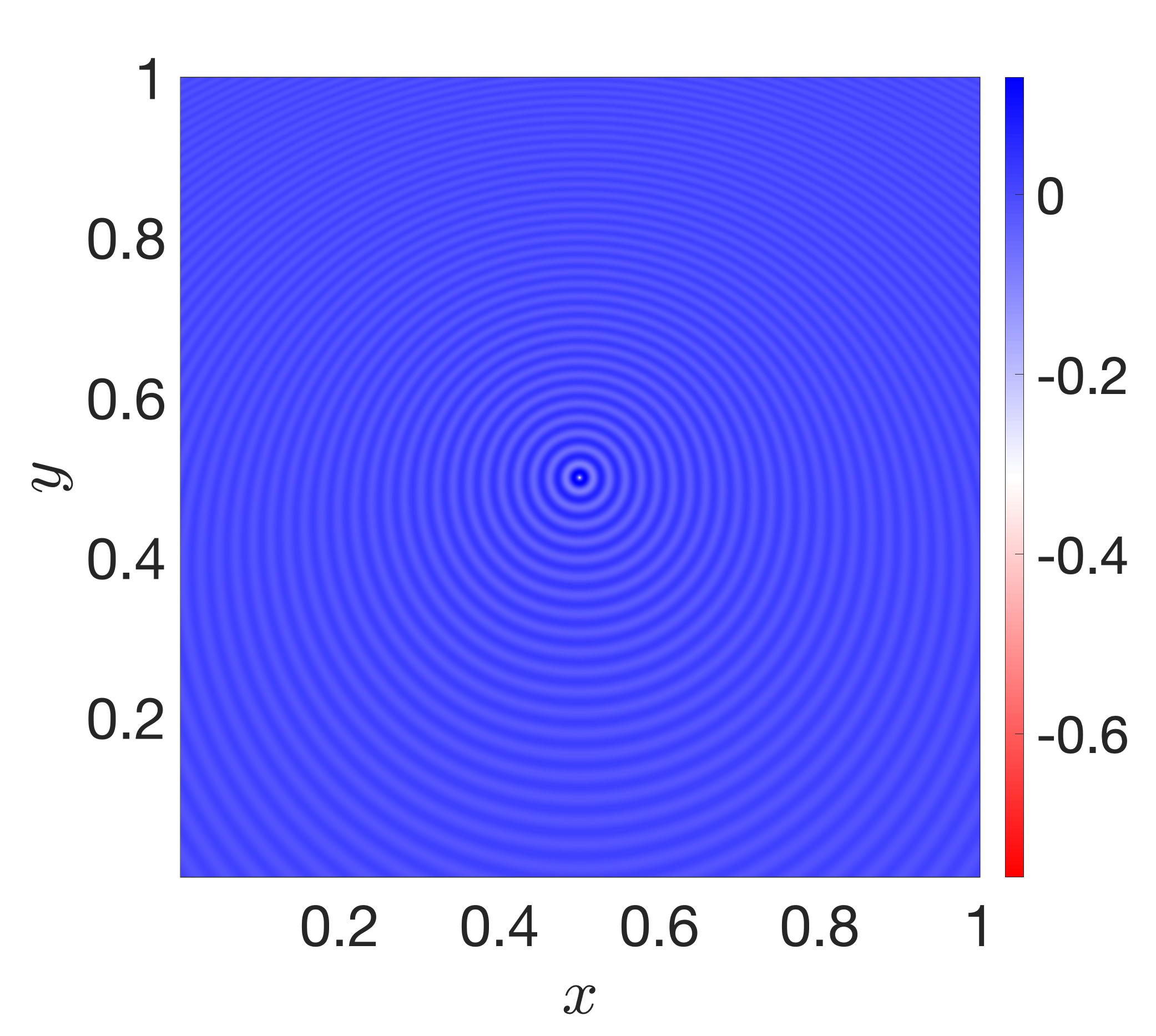}
	\end{subfigure}
	\begin{subfigure}[t]{.29\textwidth}
		\centering
		\includegraphics[width=\linewidth]{\fpath/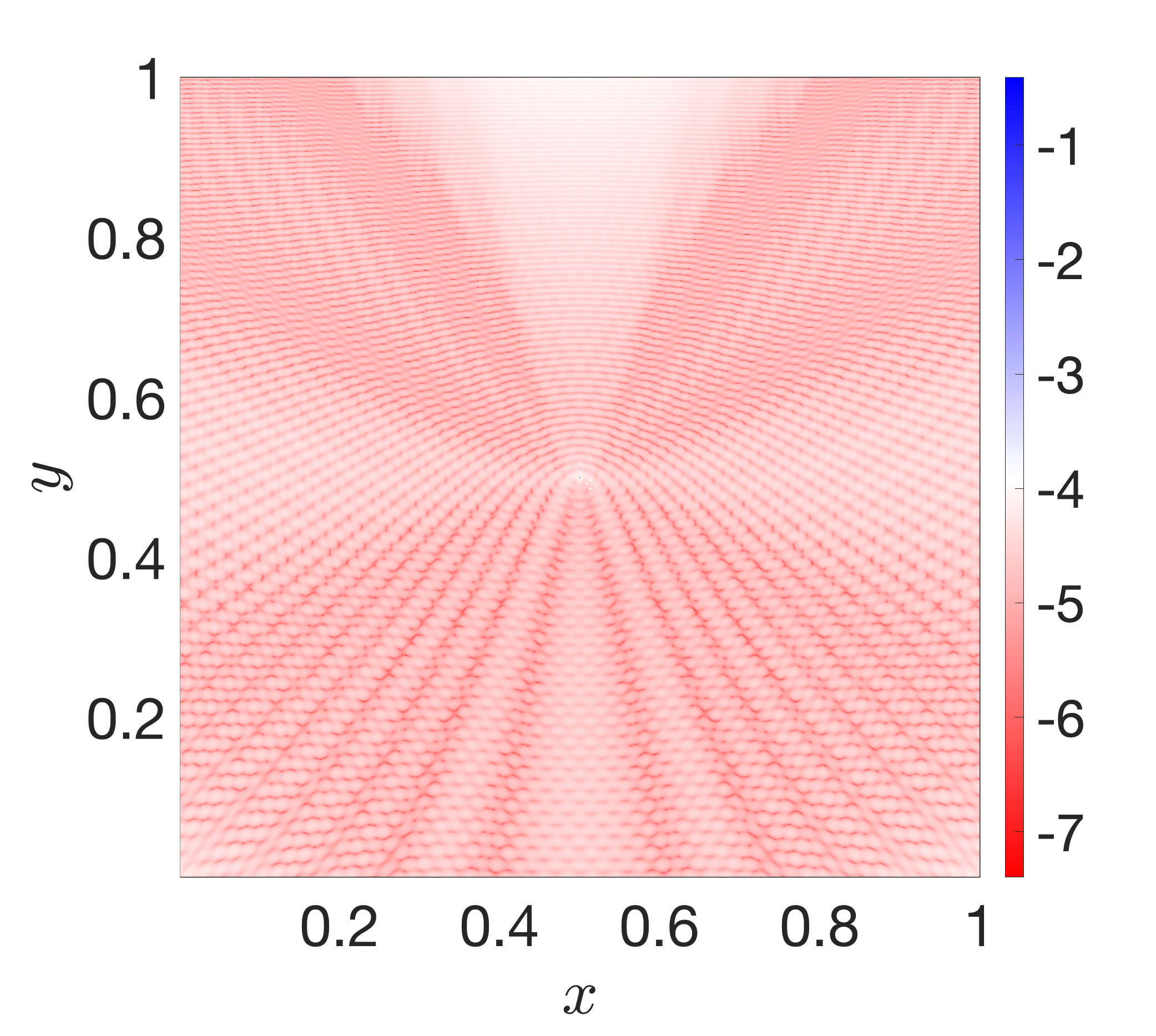}
	\end{subfigure}	
	\begin{subfigure}[t]{.37\textwidth}
		\centering
		\includegraphics[width=\linewidth]{\fpath/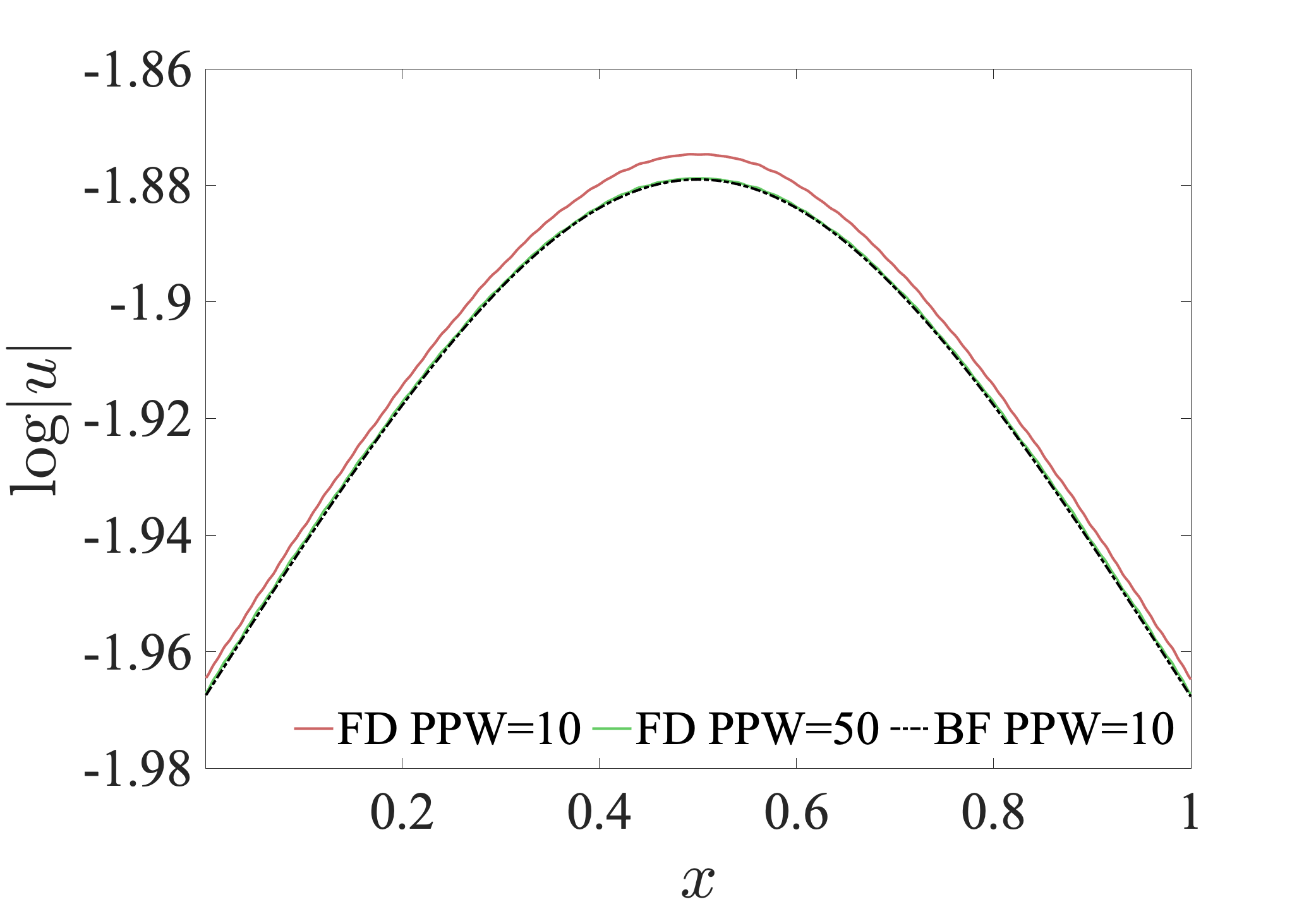}
	\end{subfigure}	
	\begin{subfigure}[t]{.29\textwidth}
	\centering
	\includegraphics[width=\linewidth]{\fpath/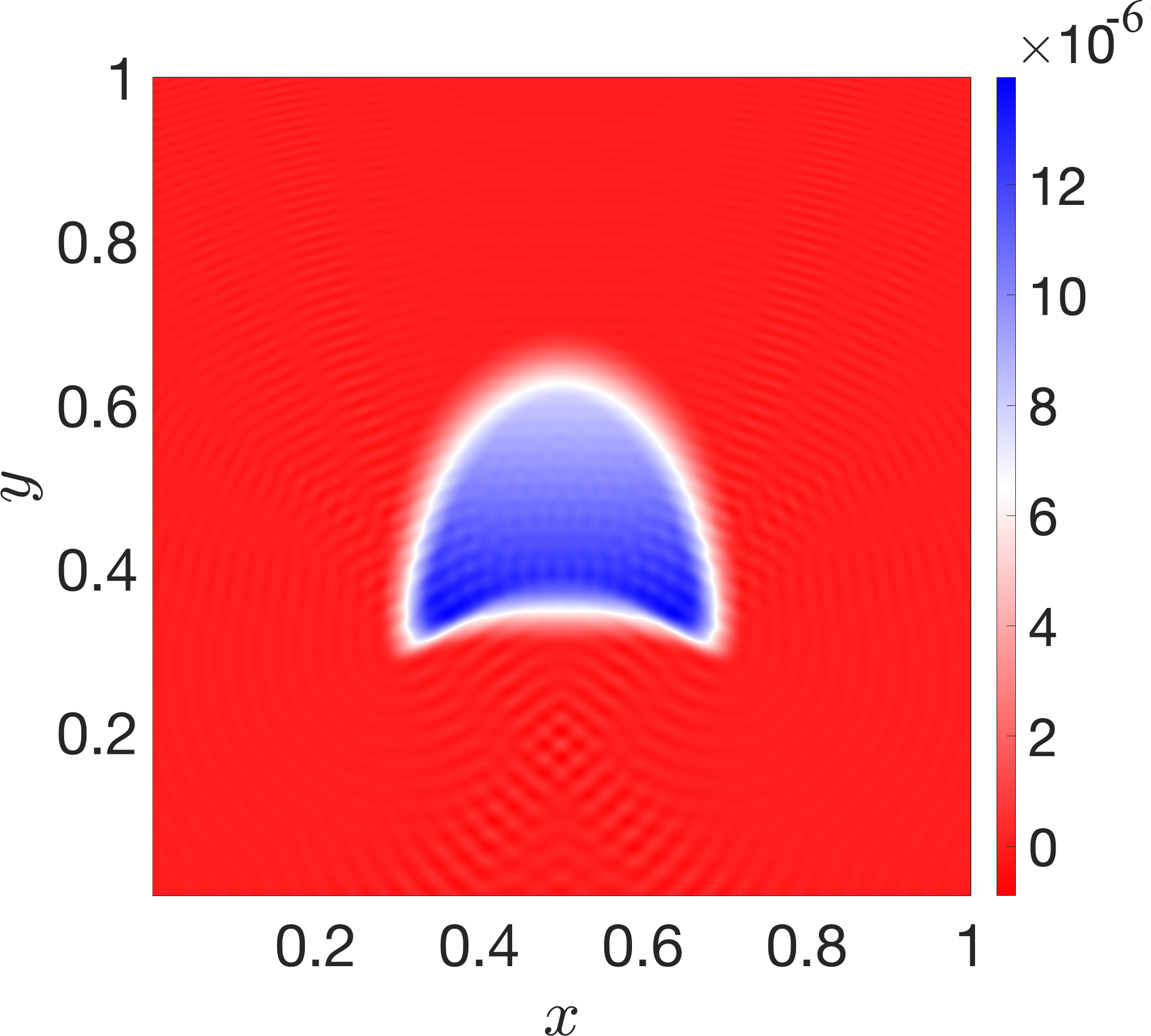}
\end{subfigure}
\begin{subfigure}[t]{.29\textwidth}
	\centering
	\includegraphics[width=\linewidth]{\fpath/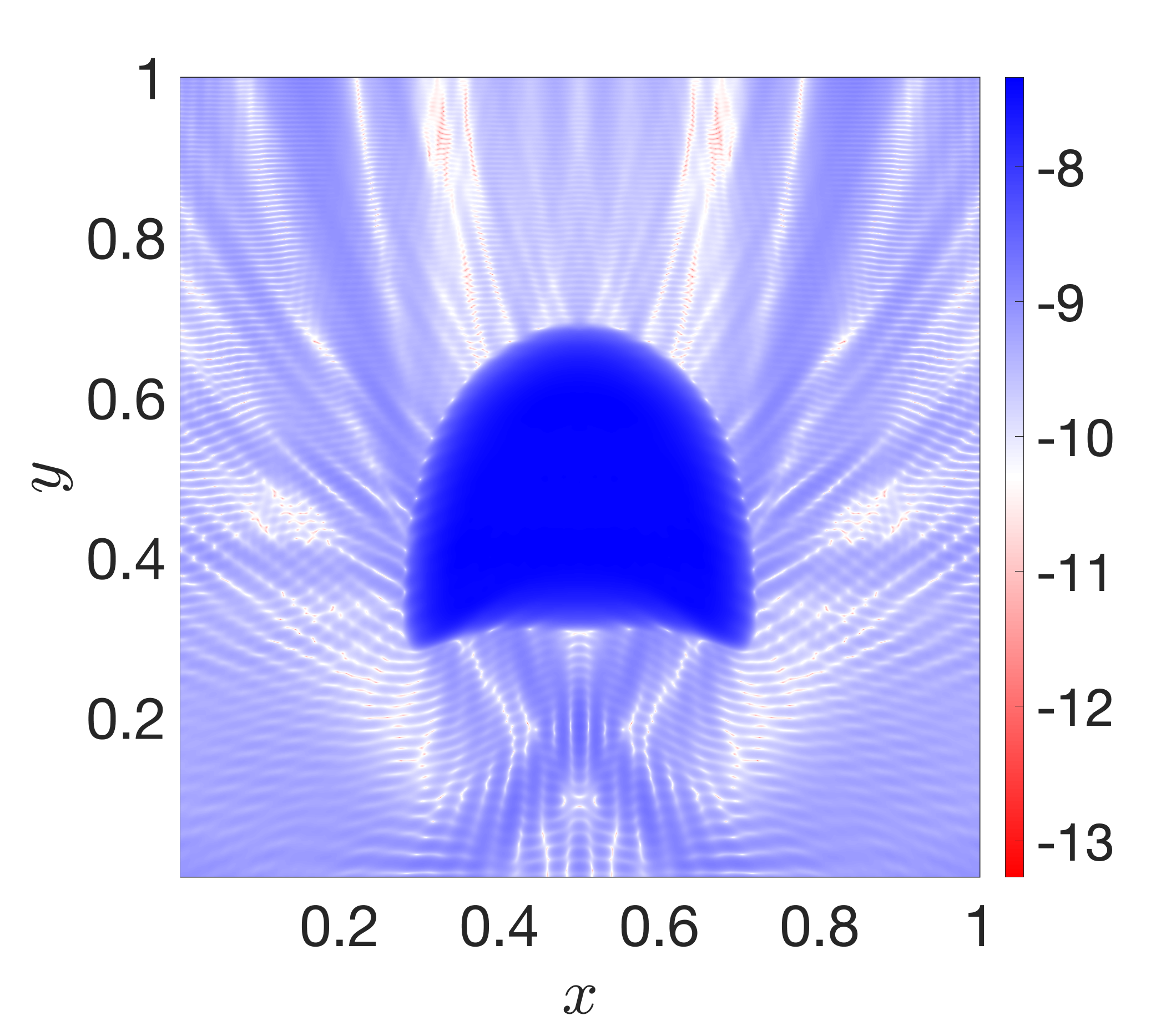}
\end{subfigure}	
\begin{subfigure}[t]{.37\textwidth}
	\centering
	\includegraphics[width=\linewidth]{\fpath/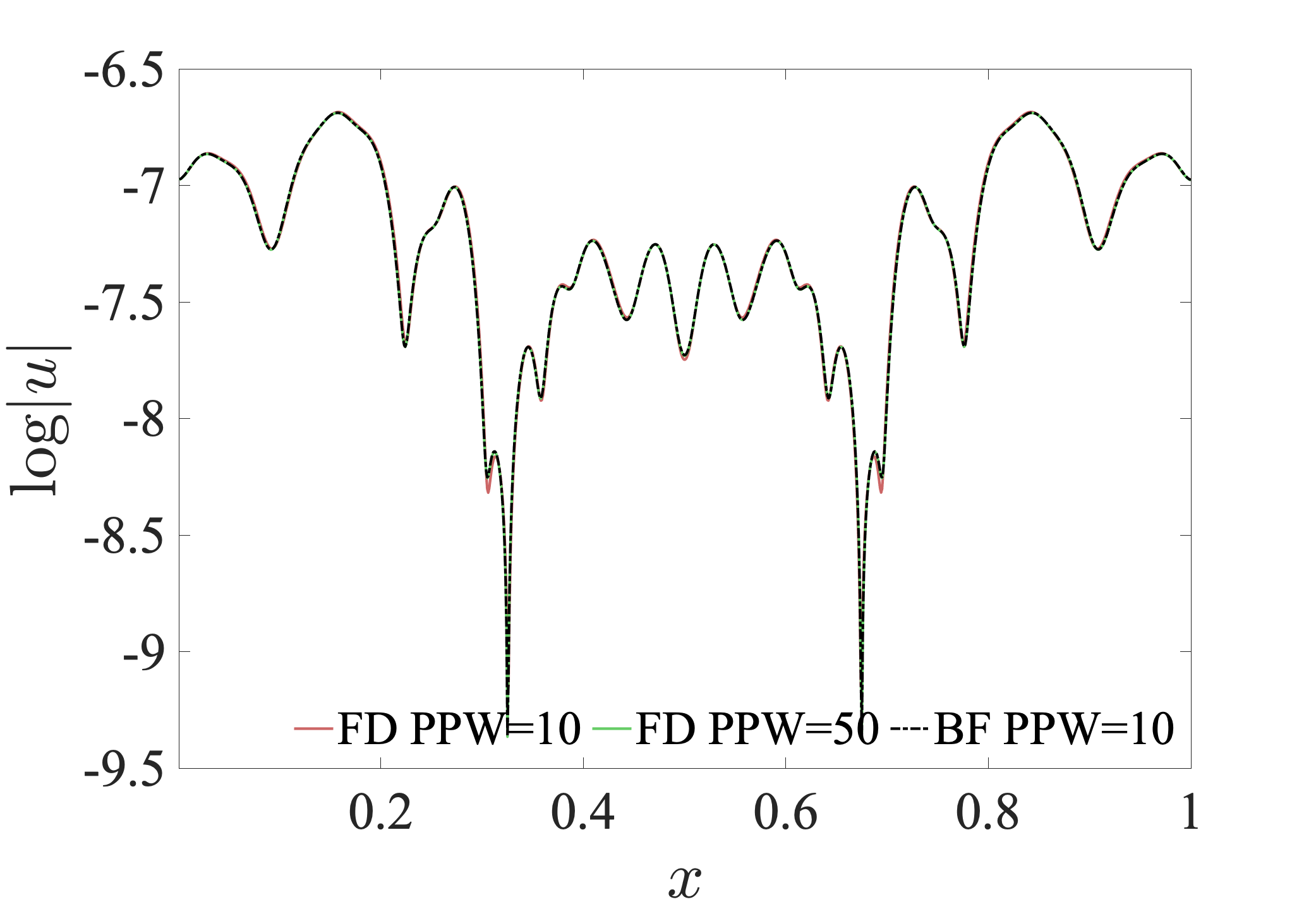}
\end{subfigure}	

\begin{subfigure}[t]{.29\textwidth}
	\centering
	\includegraphics[width=\linewidth]{\fpath/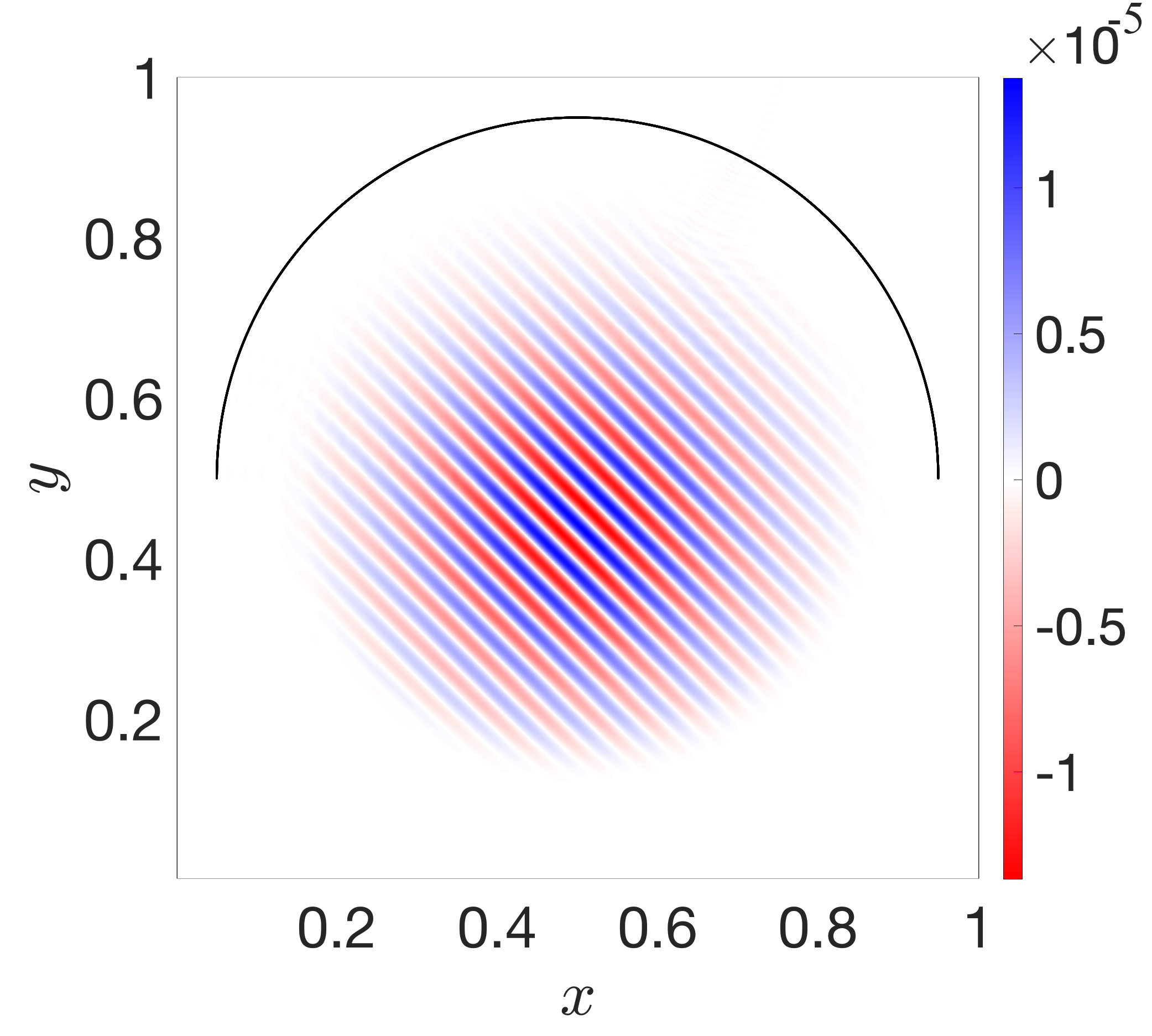}
\end{subfigure}
\begin{subfigure}[t]{.29\textwidth}
	\centering
	\includegraphics[width=\linewidth]{\fpath/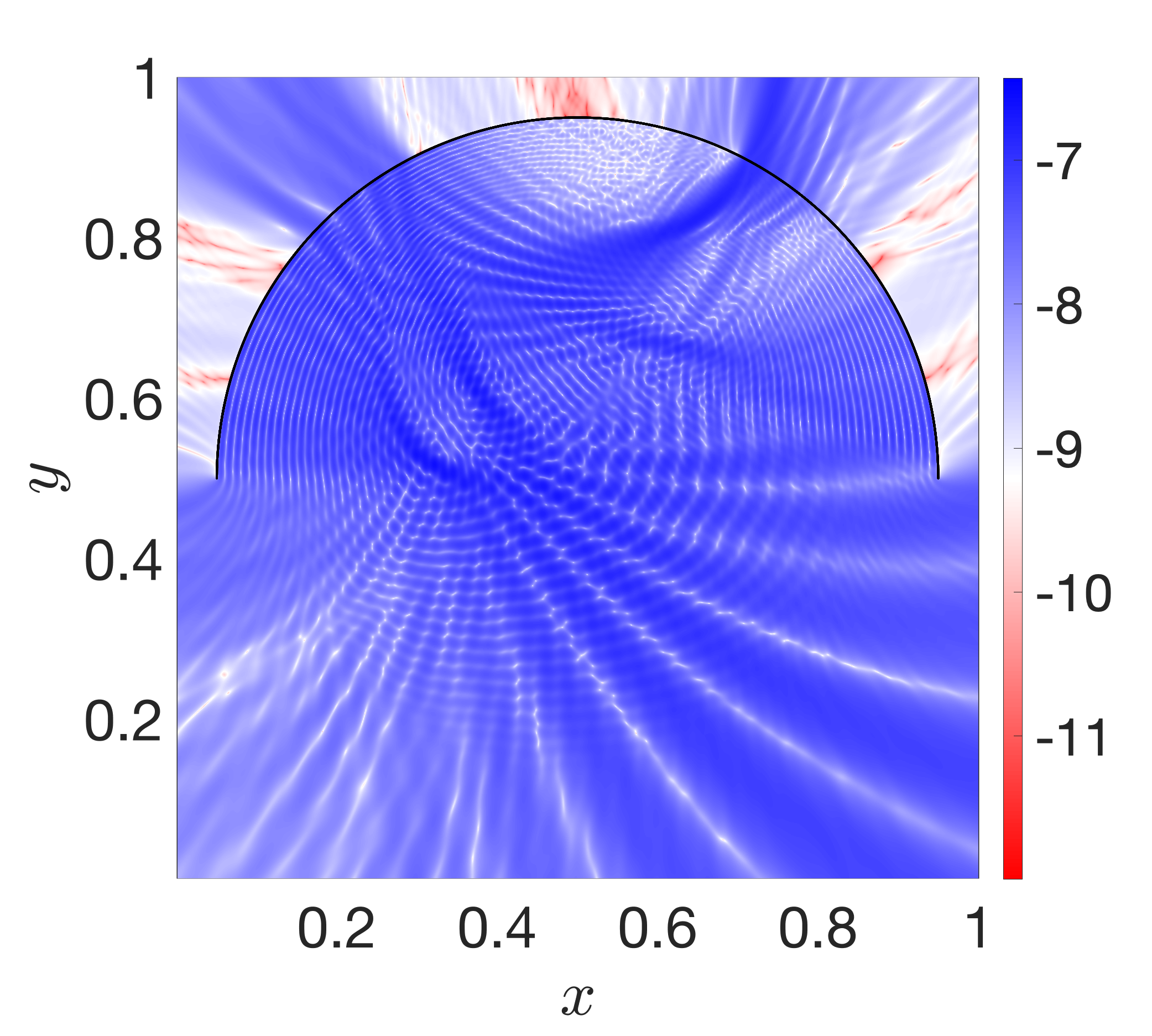}
\end{subfigure}	
\begin{subfigure}[t]{.37\textwidth}
	\centering
	\includegraphics[width=\linewidth]{\fpath/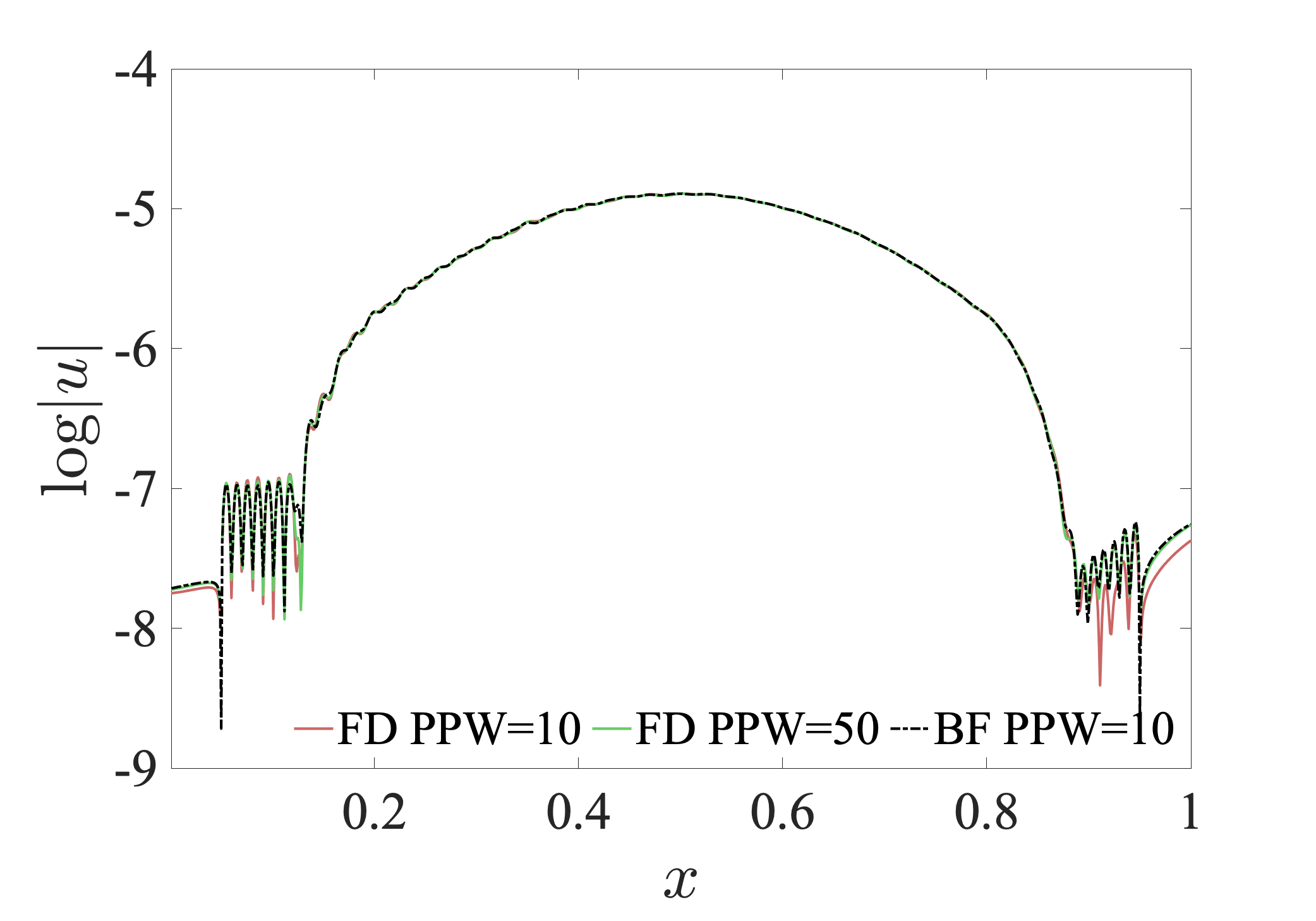}
\end{subfigure}		

\begin{subfigure}[t]{.29\textwidth}
	\centering
	\includegraphics[width=\linewidth]{\fpath/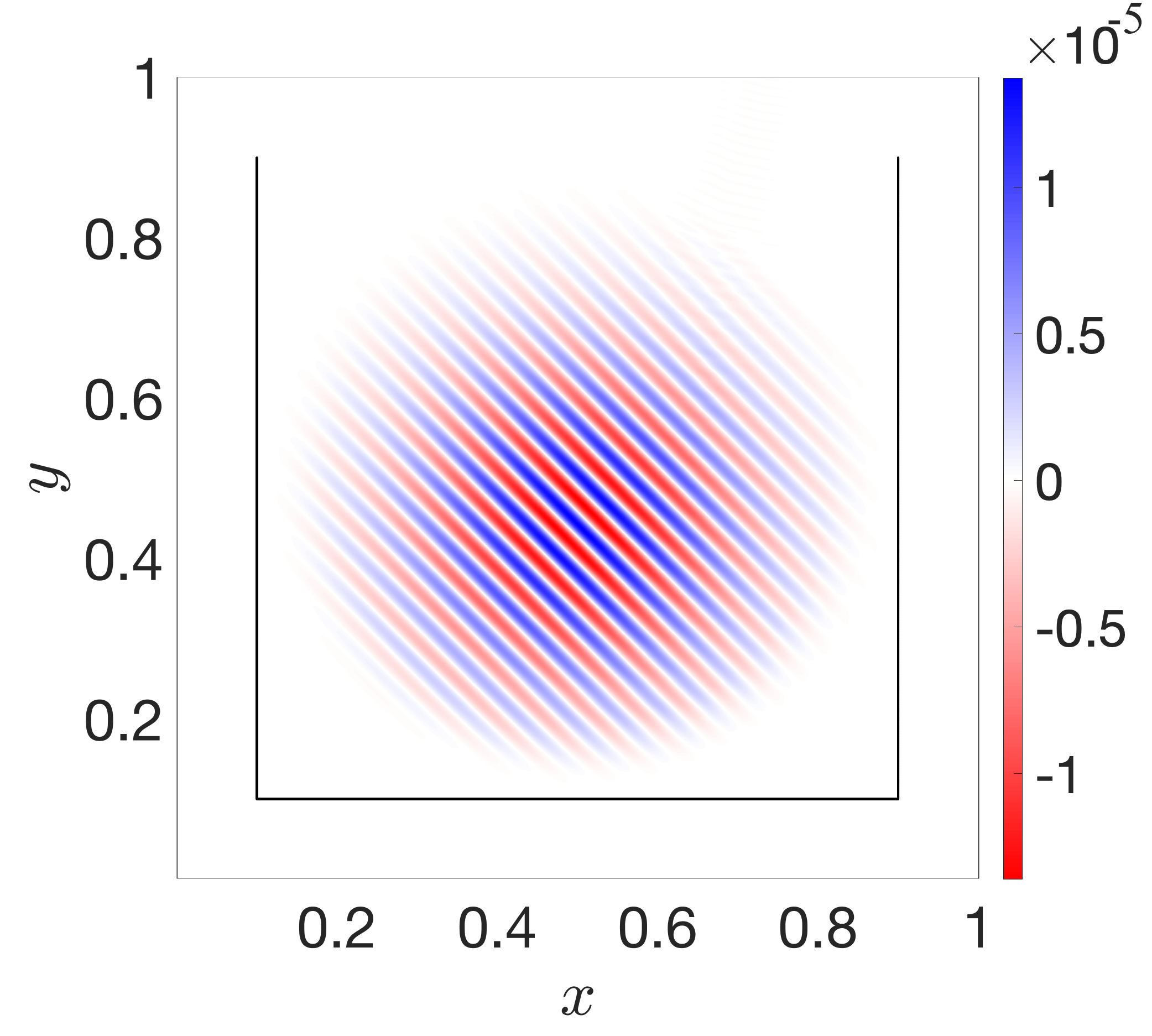}
\end{subfigure}
\begin{subfigure}[t]{.29\textwidth}
	\centering
	\includegraphics[width=\linewidth]{\fpath/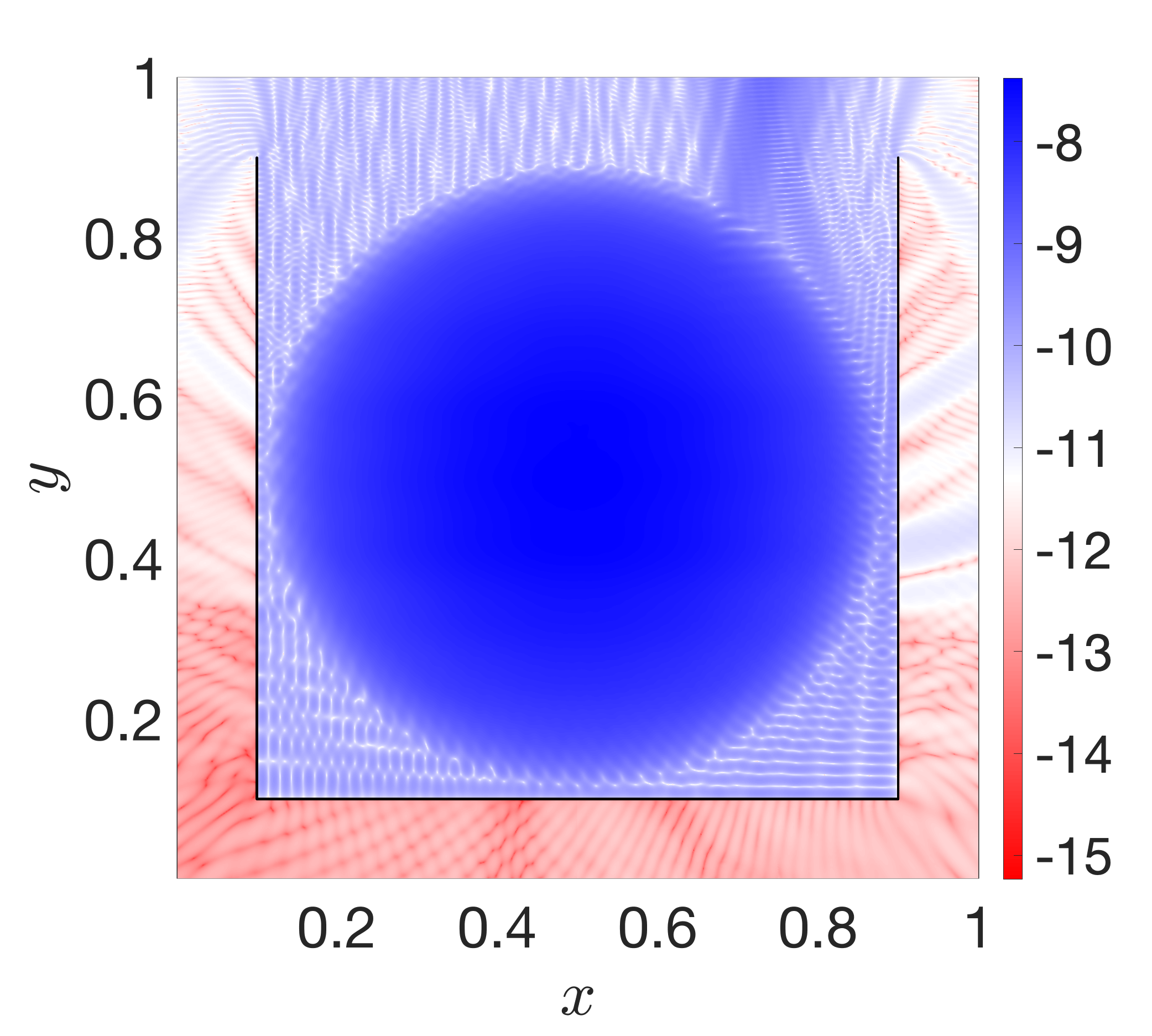}
\end{subfigure}	
\begin{subfigure}[t]{.37\textwidth}
	\centering
	\includegraphics[width=\linewidth]{\fpath/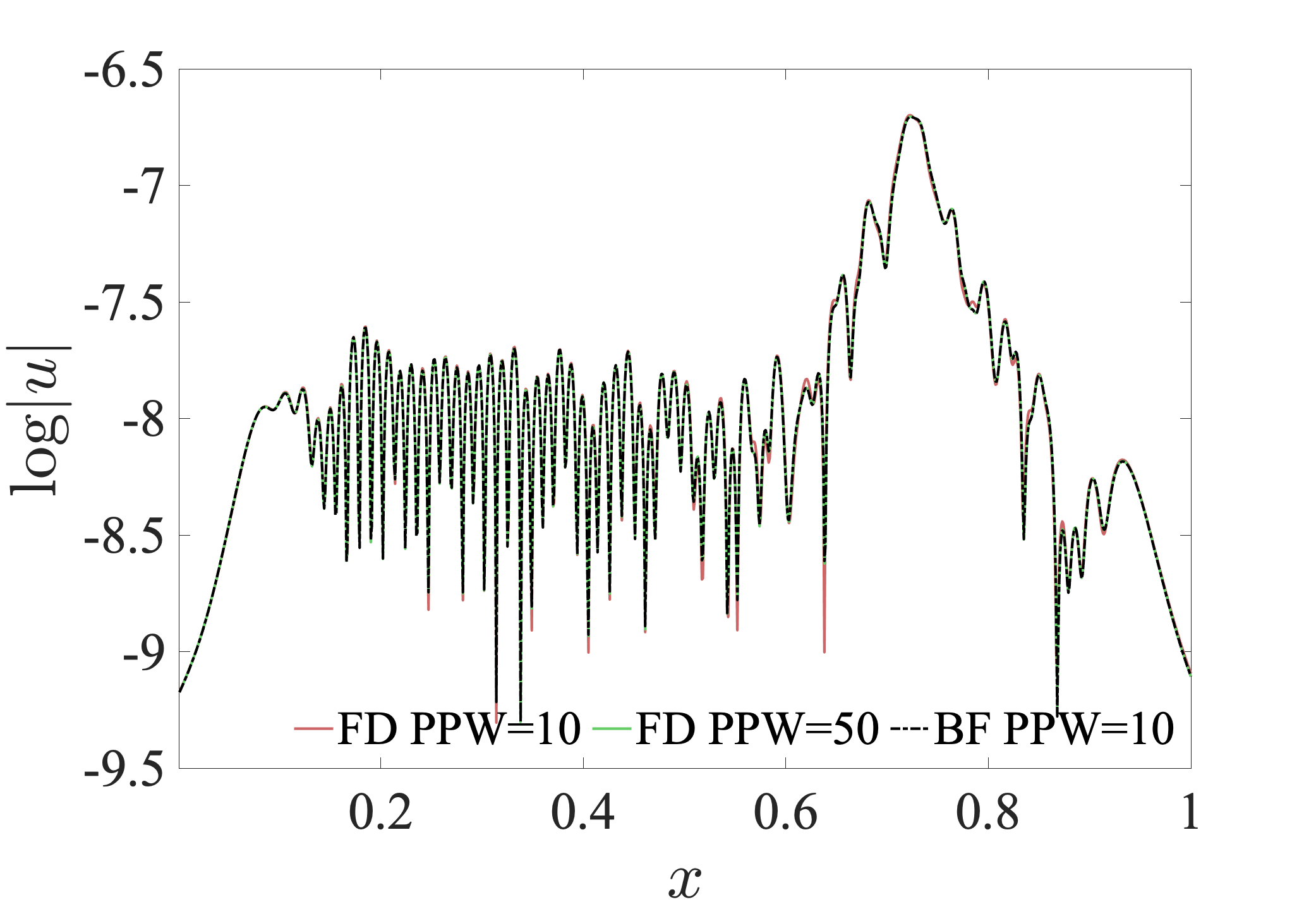}
\end{subfigure}		
	
	\vspace{-5pt}
	\caption{Constant-gradient media. $\omega=50\pi$ (100 wavelengths each direction). Left column: the field $\mathrm{Re}(u_{\rm hb})$ (in linear scale) computed by the proposed scheme. Middle column: difference $|u_{\rm hb}-u_{\rm fd}|$ (in log scale) between the fields computed by the proposed scheme (PPW=10) and FDFD (PPW=50). Right column: the fields $|u_{\rm hb}|,|u_{\rm fd}|$ (in log scale) drawn along the line $y=y_{\rm post}$. Row 1: point source with $y_{\rm post}=1-10h$ and $h$ corresponding to PPW=10. Row 2: kite-shaped source with $y_{\rm post}=1-10h$. Row 3: Gaussian packet source with $y_{\rm post}=0.5-10h$ and a semi-circle inclusion. Row 4: Gaussian packet source with $y_{\rm post}=1-10h$ and an open-square inclusion.}
	\label{fig:ex2_f25}
\end{figure}

\begin{figure}[!htp]
	\centering
	\vspace{-7.5pt}
	\begin{subfigure}[t]{.29\textwidth}
		\centering
		\includegraphics[width=\linewidth]{\fpath/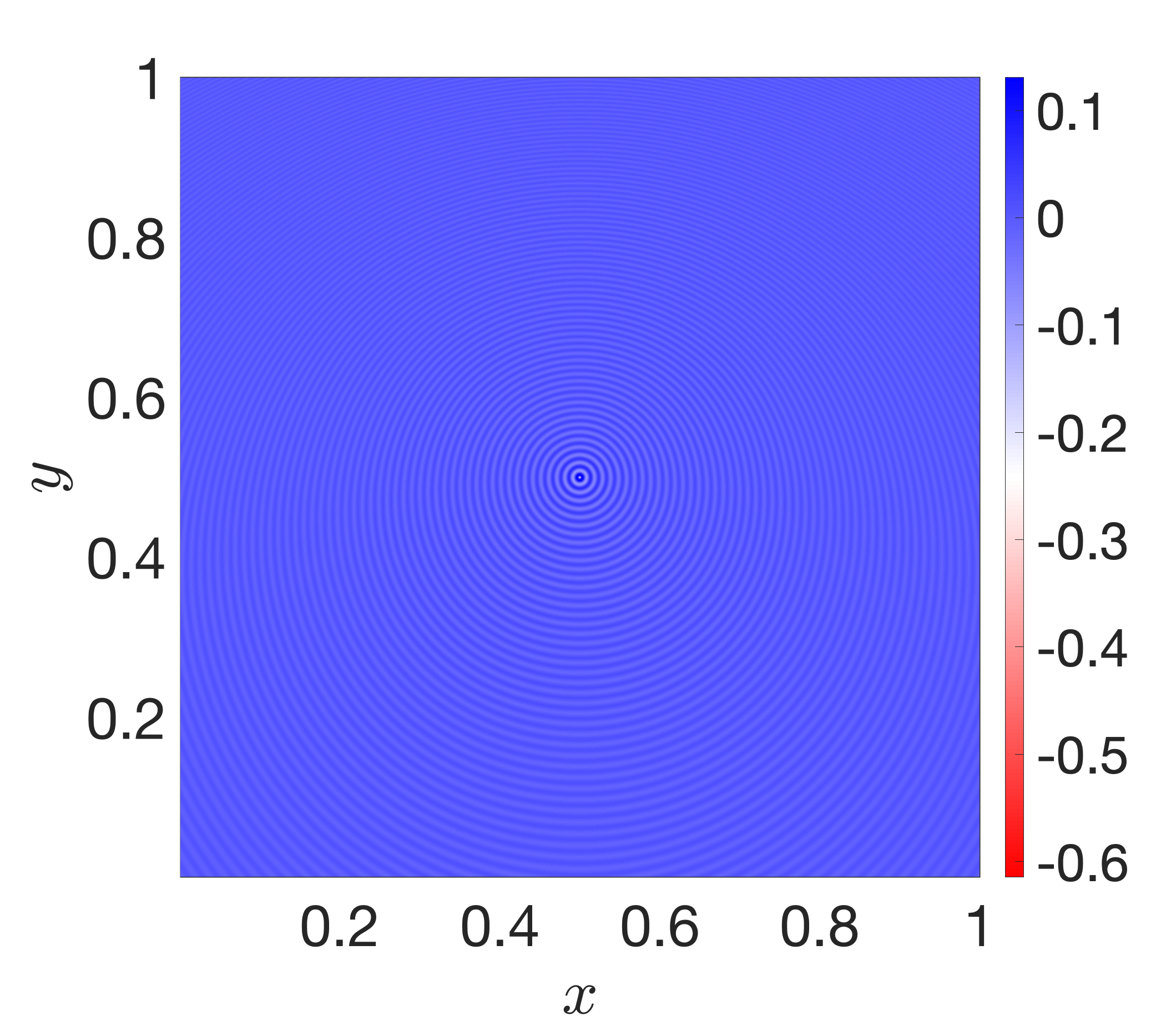}
	\end{subfigure}
	\begin{subfigure}[t]{.29\textwidth}
		\centering
		\includegraphics[width=\linewidth]{\fpath/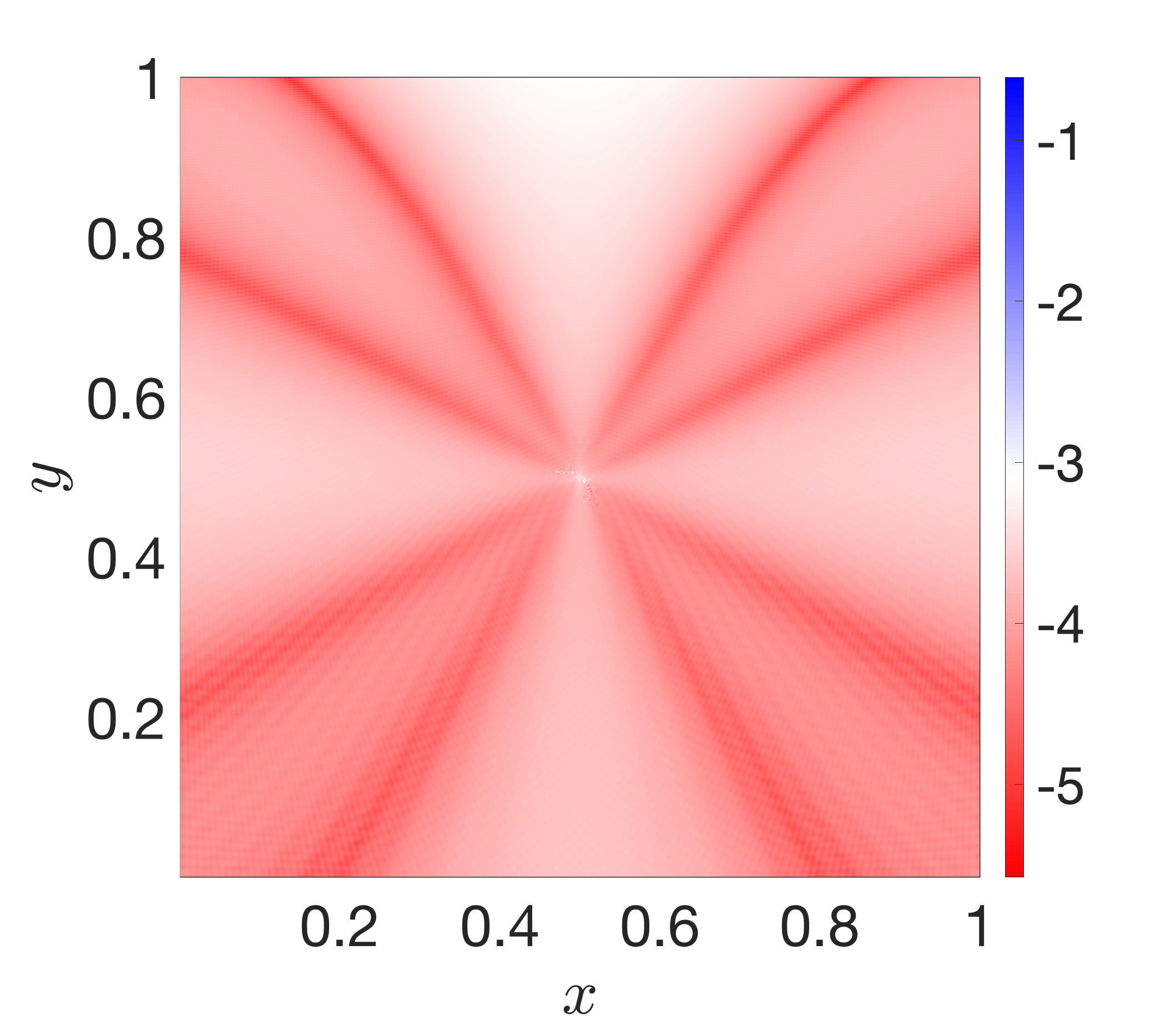}
	\end{subfigure}	
	\begin{subfigure}[t]{.37\textwidth}
		\centering
		\includegraphics[width=\linewidth]{\fpath/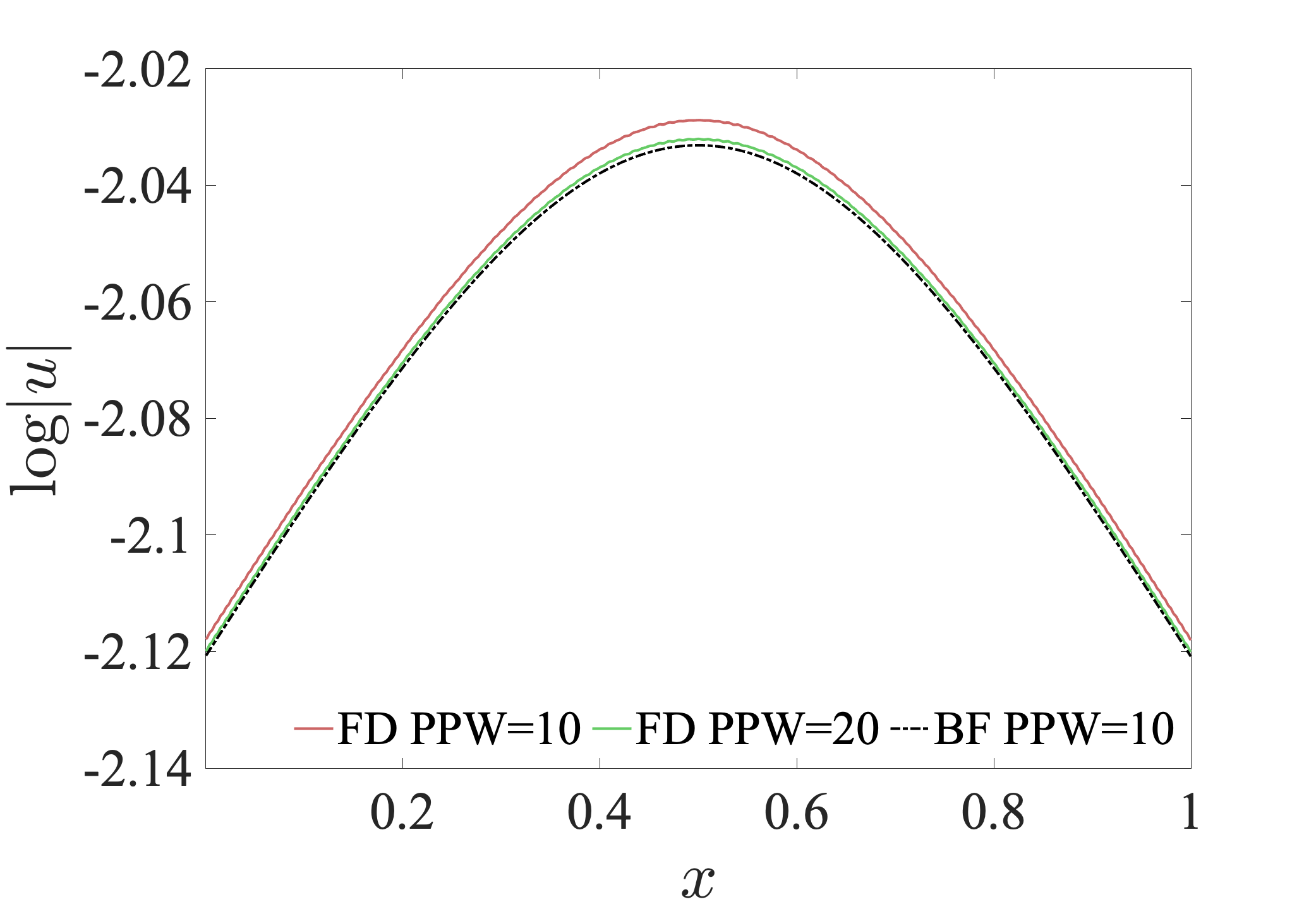}
	\end{subfigure}	
	\begin{subfigure}[t]{.29\textwidth}
		\centering
		\includegraphics[width=\linewidth]{\fpath/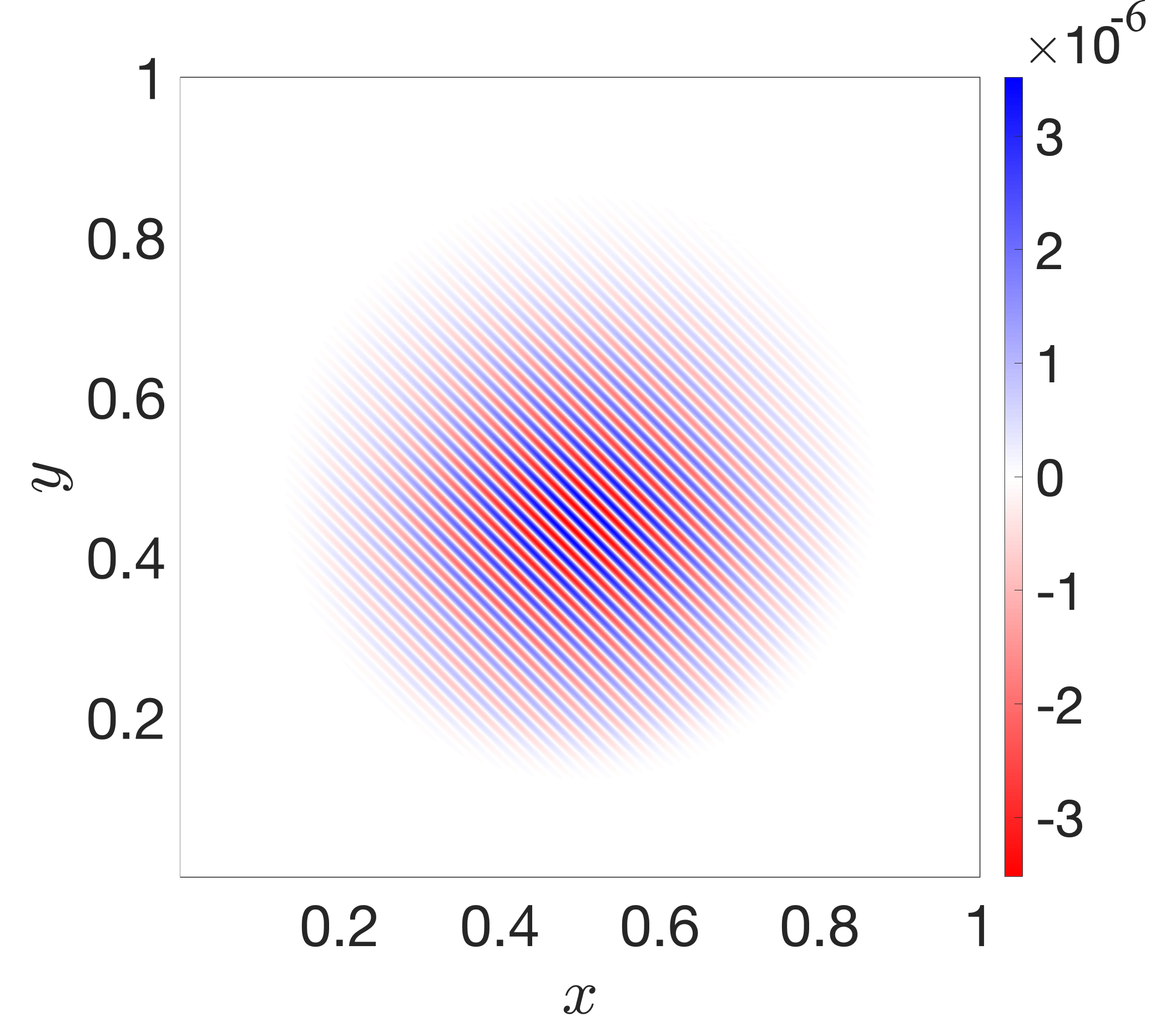}
	\end{subfigure}
	\begin{subfigure}[t]{.29\textwidth}
		\centering
		\includegraphics[width=\linewidth]{\fpath/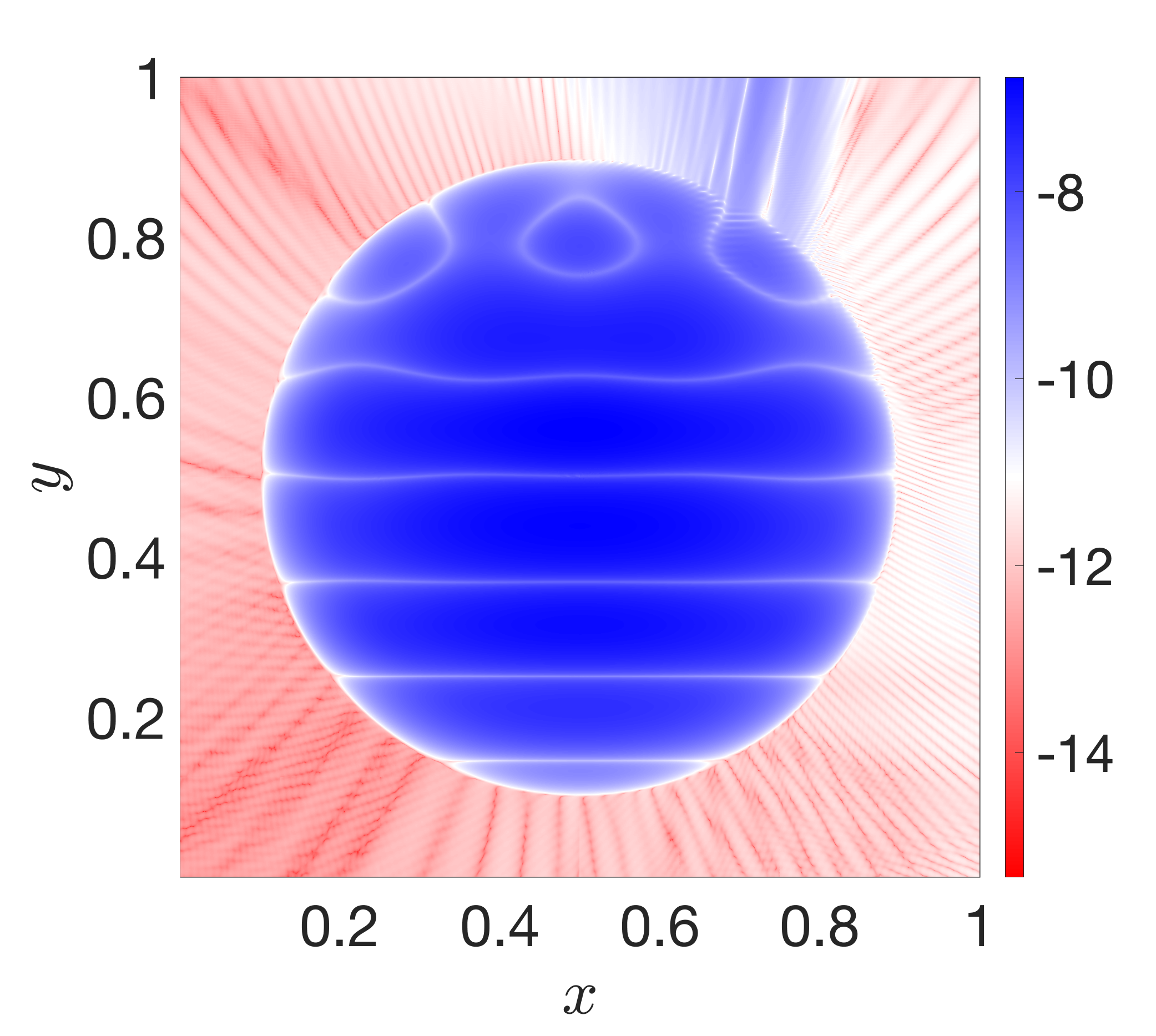}
	\end{subfigure}	
	\begin{subfigure}[t]{.37\textwidth}
		\centering
		\includegraphics[width=\linewidth]{\fpath/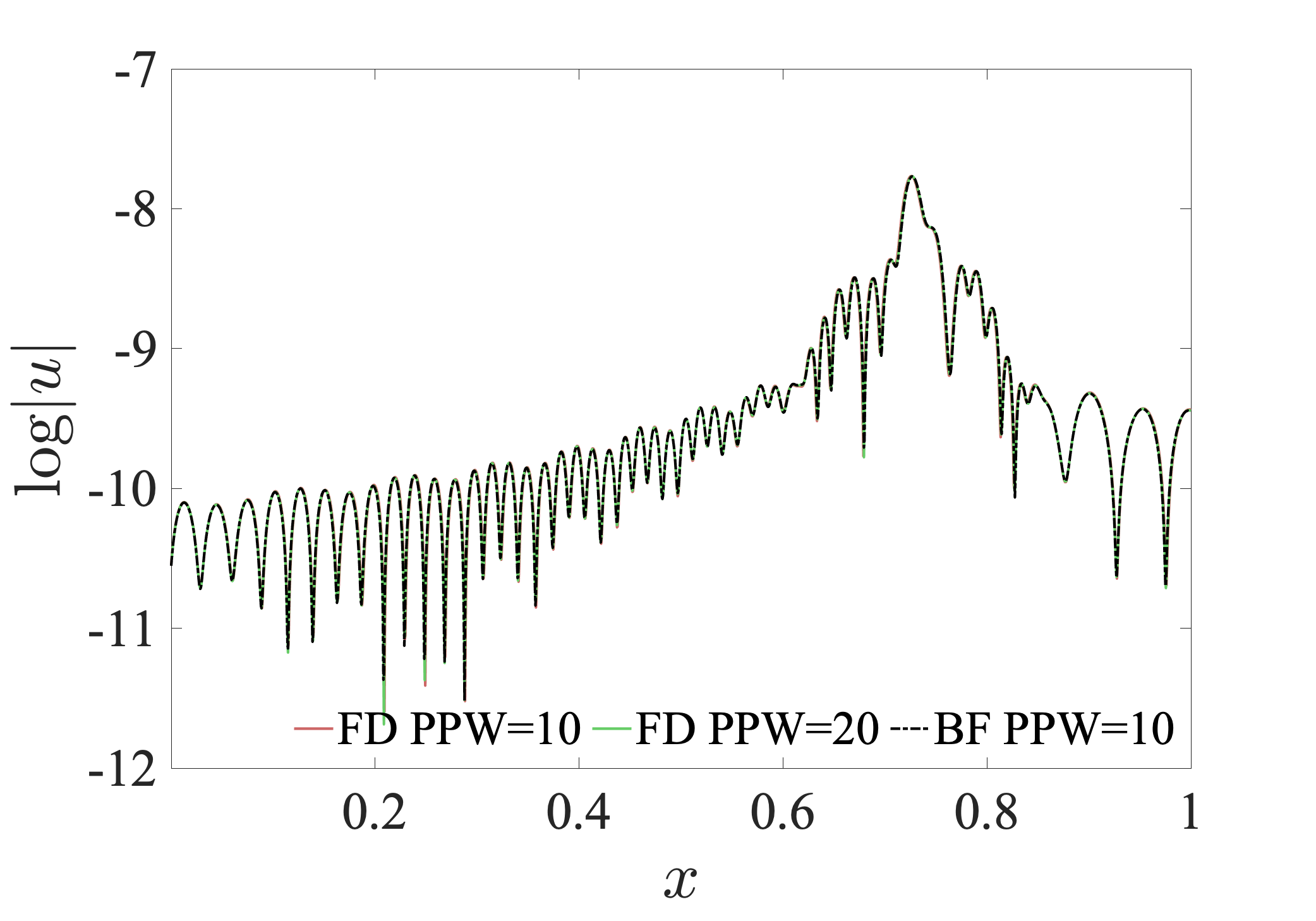}
	\end{subfigure}	
	
	\begin{subfigure}[t]{.29\textwidth}
		\centering
		\includegraphics[width=\linewidth]{\fpath/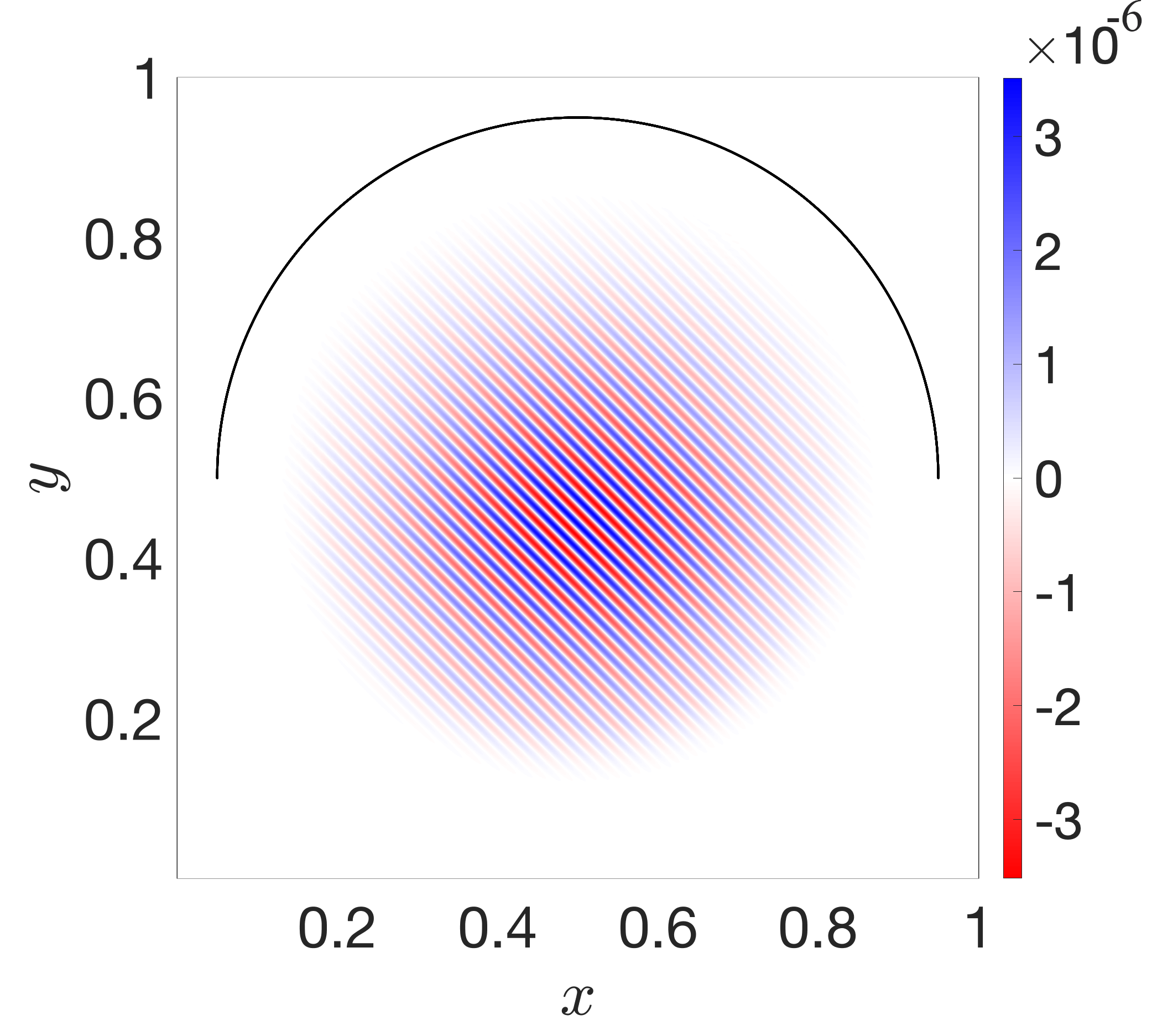}
	\end{subfigure}
	\begin{subfigure}[t]{.29\textwidth}
		\centering
		\includegraphics[width=\linewidth]{\fpath/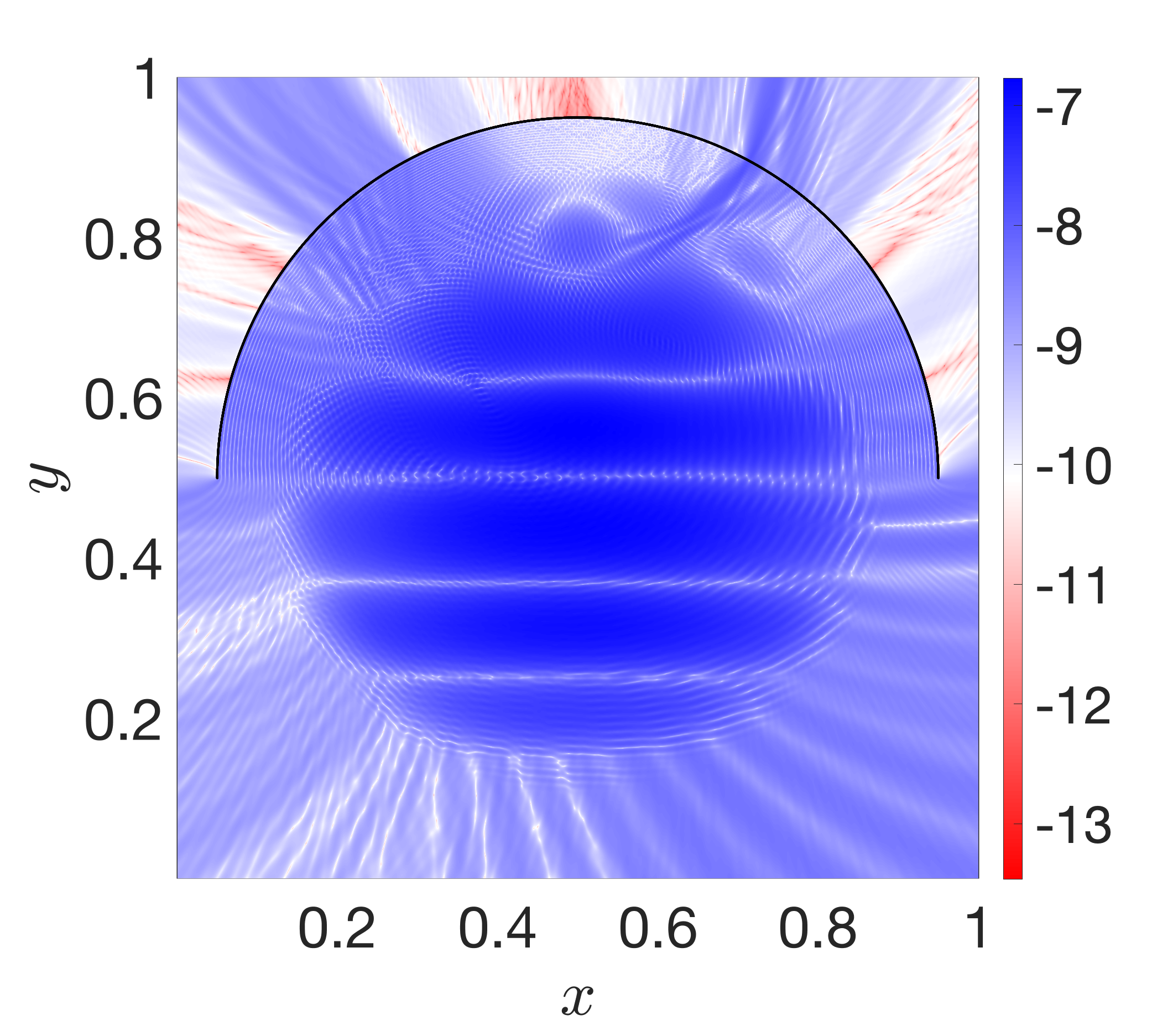}
	\end{subfigure}	
	\begin{subfigure}[t]{.37\textwidth}
		\centering
		\includegraphics[width=\linewidth]{\fpath/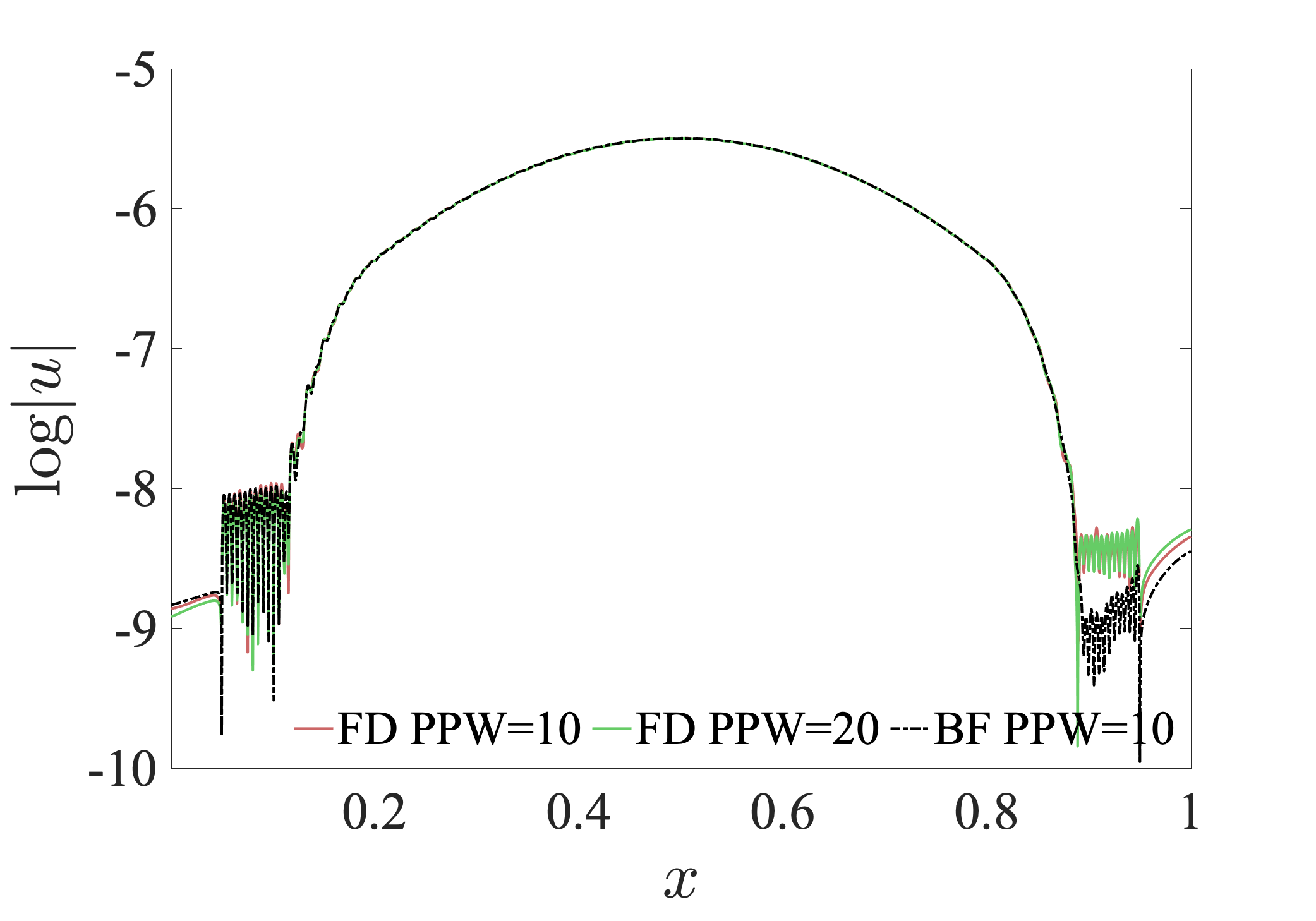}
	\end{subfigure}		
	
	\begin{subfigure}[t]{.29\textwidth}
		\centering
		\includegraphics[width=\linewidth]{\fpath/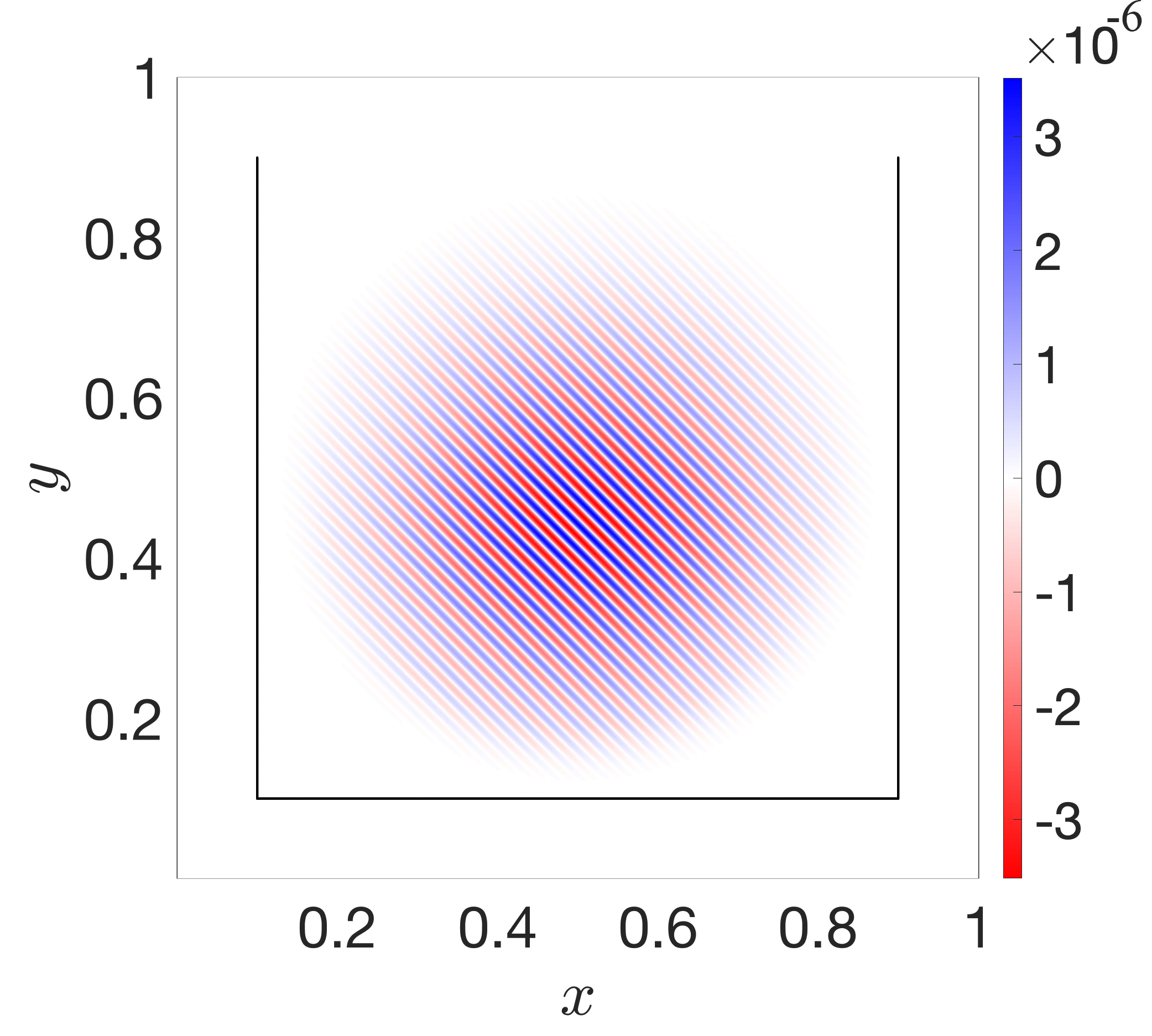}
	\end{subfigure}
	\begin{subfigure}[t]{.29\textwidth}
		\centering
		\includegraphics[width=\linewidth]{\fpath/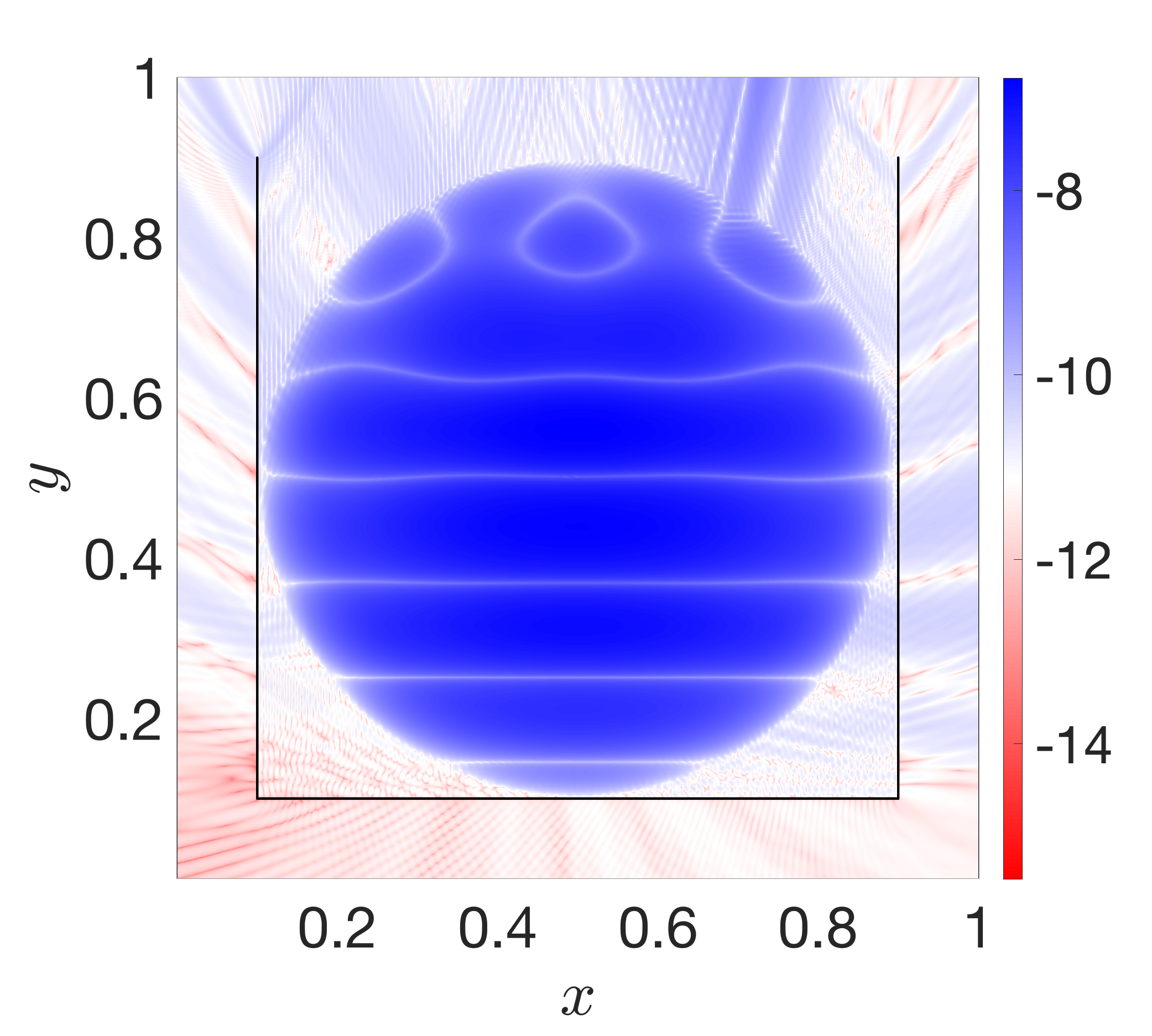}
	\end{subfigure}	
	\begin{subfigure}[t]{.37\textwidth}
		\centering
		\includegraphics[width=\linewidth]{\fpath/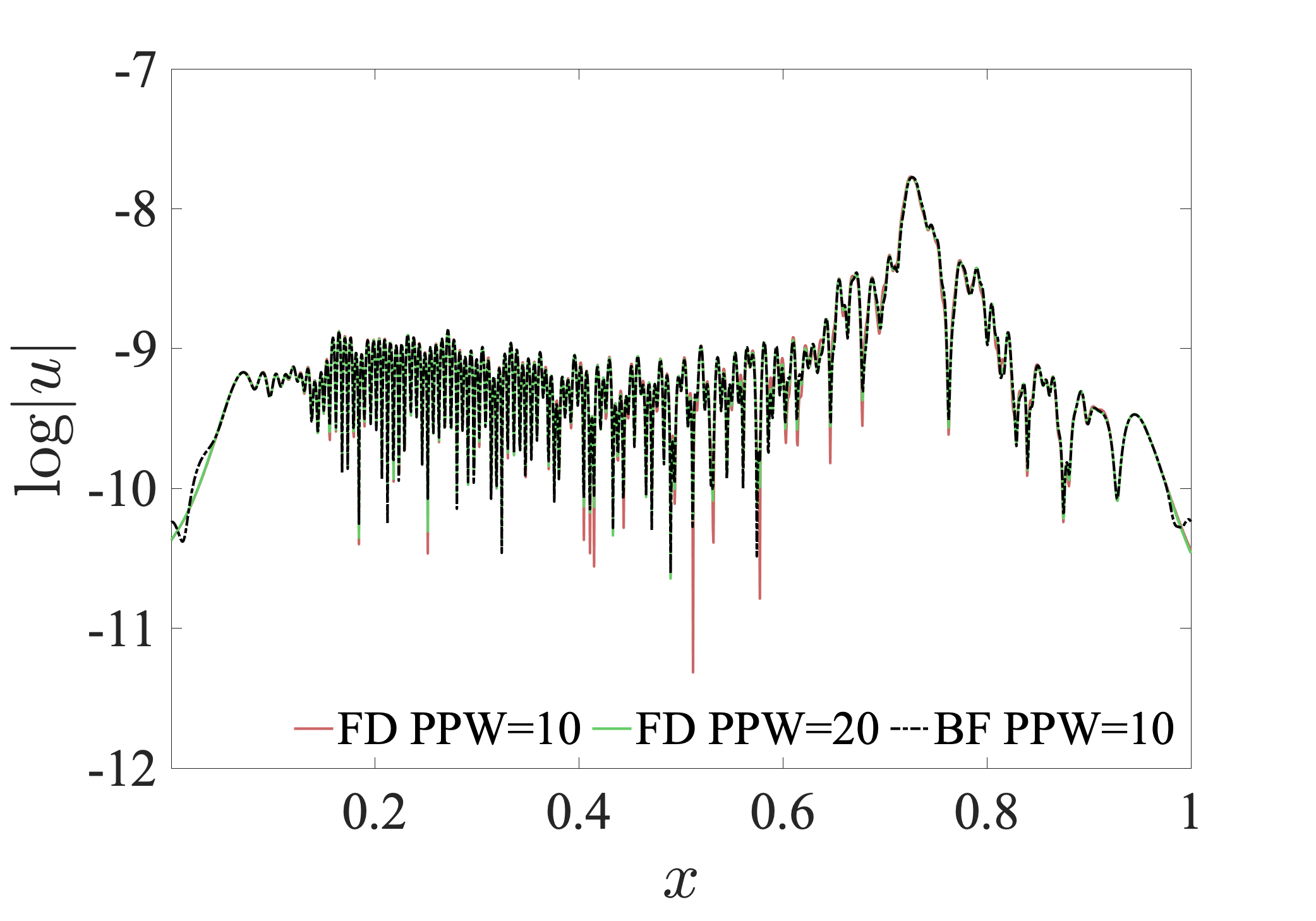}
	\end{subfigure}		
	
	\vspace{-5pt}
	\caption{Constant-gradient media. $\omega=100\pi$ (200 wavelengths each direction). Left column: the field $\mathrm{Re}(u_{\rm hb})$ (in linear scale) computed by the proposed scheme. Middle column: difference $|u_{\rm hb}-u_{\rm fd}|$ (in log scale) between the fields computed by the proposed scheme (PPW=10) and FDFD (PPW=20). Right column: the field $|u_{\rm hb}|,|u_{\rm fd}|$ (in log scale) drawn along the line $y=y_{\rm post}$. Row 1: point source with $y_{\rm post}=1-10h$ and $h$ corresponding to PPW=10. Row 2: Gaussian packet source with $y_{\rm post}=1-10h$. Row 3: Gaussian packet source with $y_{\rm post}=0.5-10h$ and a semi-circle inclusion. Row 4: Gaussian packet source with $y_{\rm post}=1-10h$ and an open-square inclusion.}	
	\label{fig:ex2_f50}
\end{figure}

\begin{figure}[!htp]
	\centering
	\vspace{-7.5pt}
	\begin{subfigure}[t]{.29\textwidth}
		\centering
		\includegraphics[width=\linewidth]{\fpath/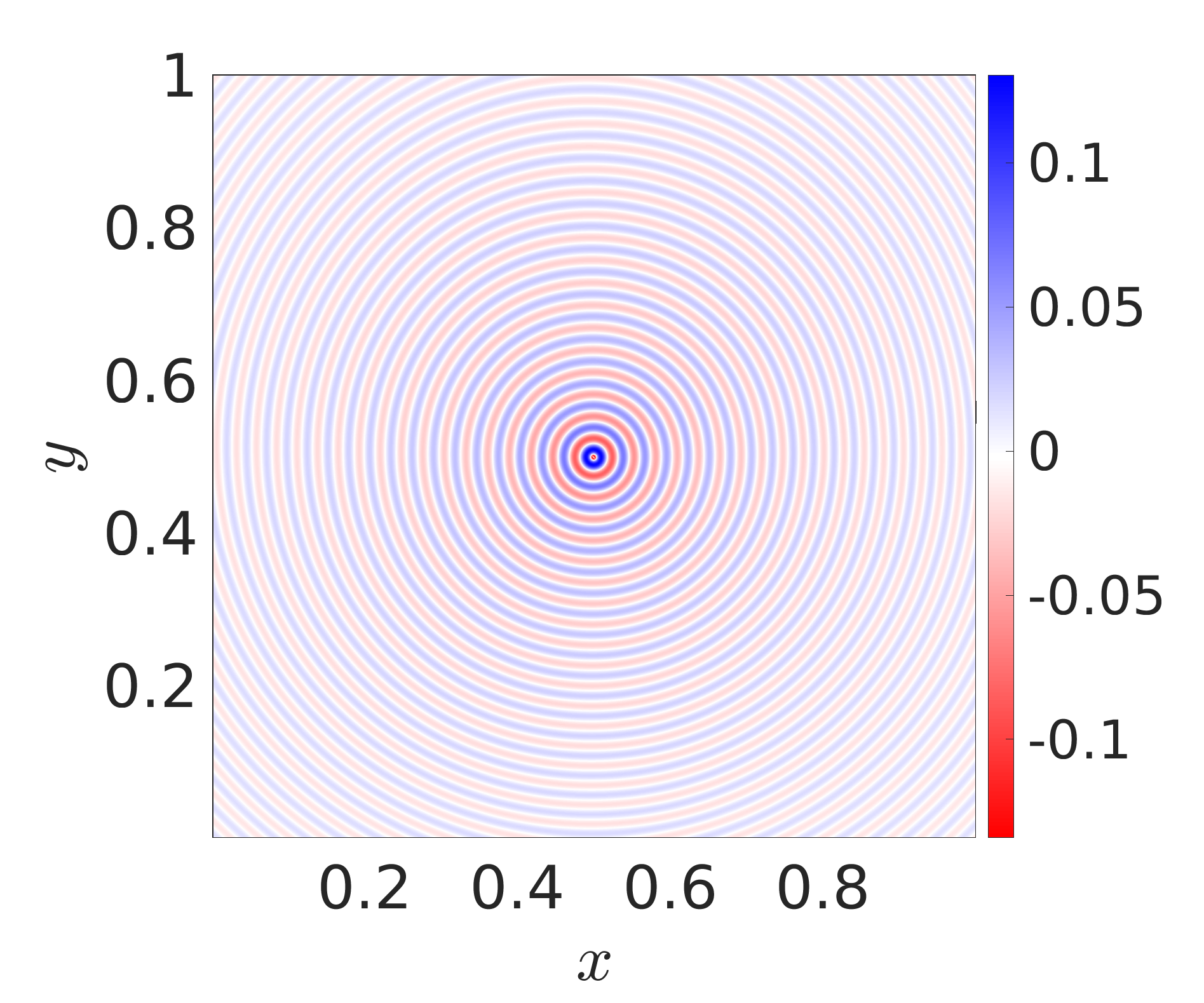}
	\end{subfigure}
	\begin{subfigure}[t]{.29\textwidth}
		\centering
		\includegraphics[width=\linewidth]{\fpath/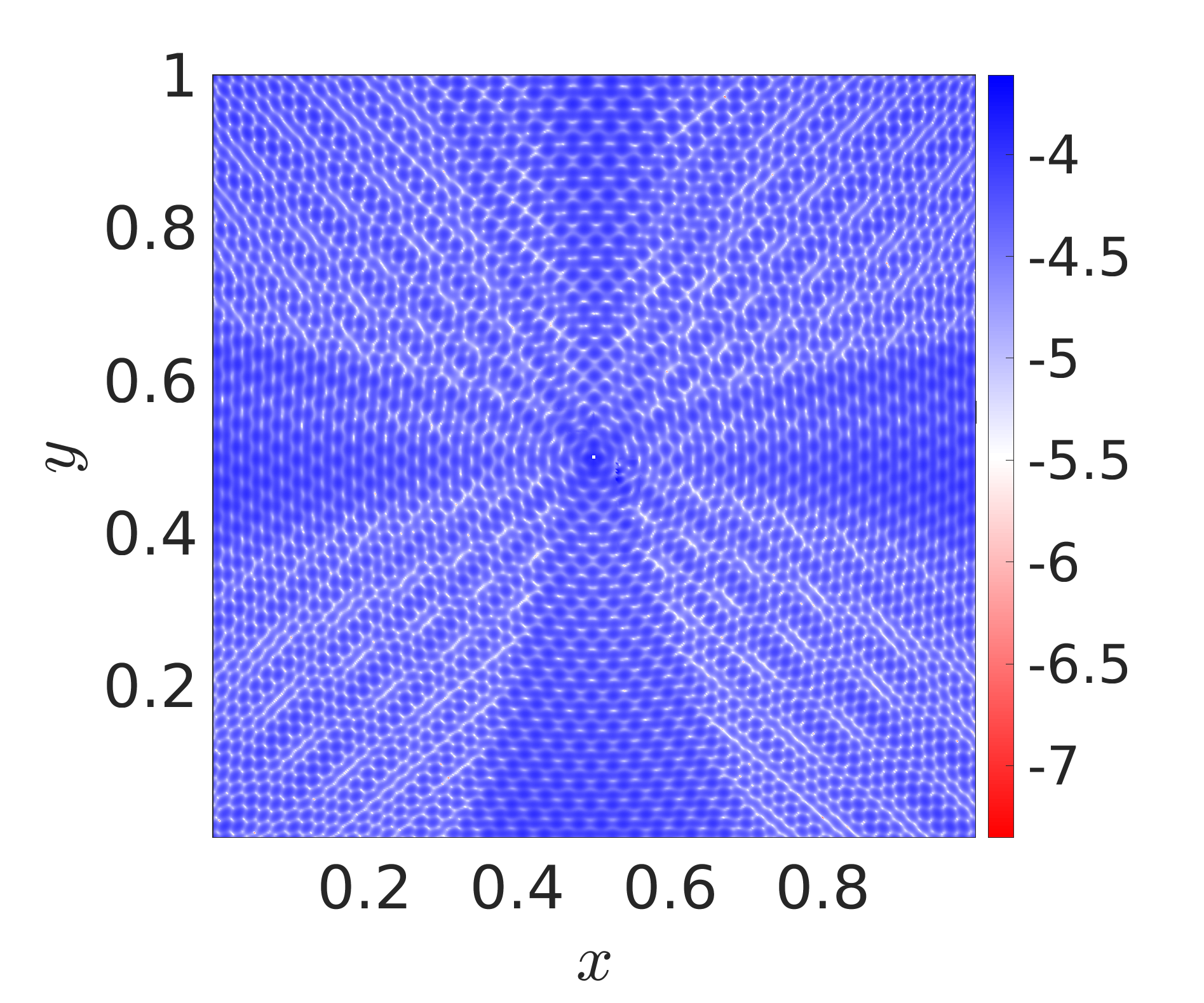}
	\end{subfigure}	
	\begin{subfigure}[t]{.37\textwidth}
		\centering
		\includegraphics[width=\linewidth]{\fpath/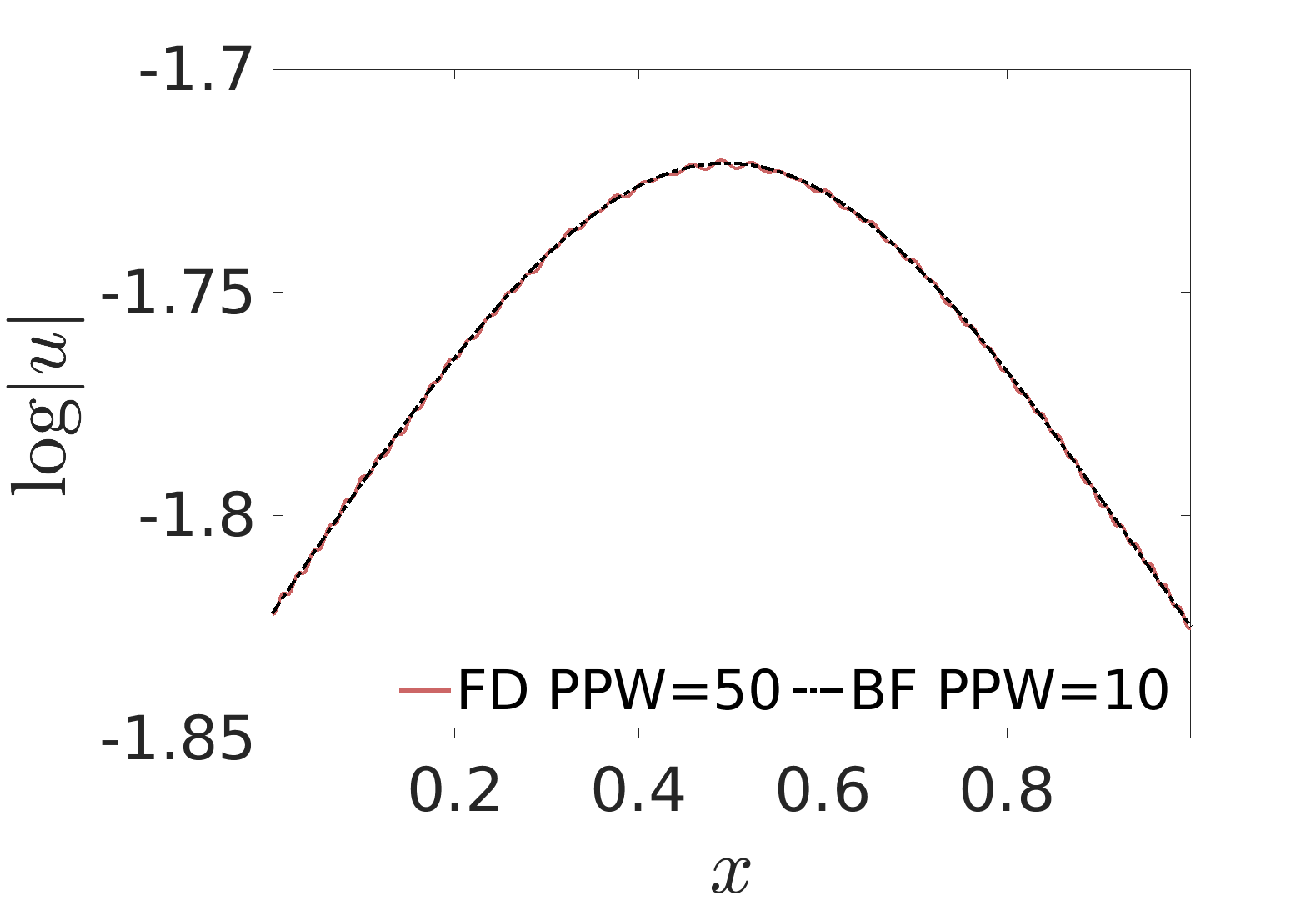}
	\end{subfigure}	
	\begin{subfigure}[t]{.29\textwidth}
		\centering
		\includegraphics[width=\linewidth]{\fpath/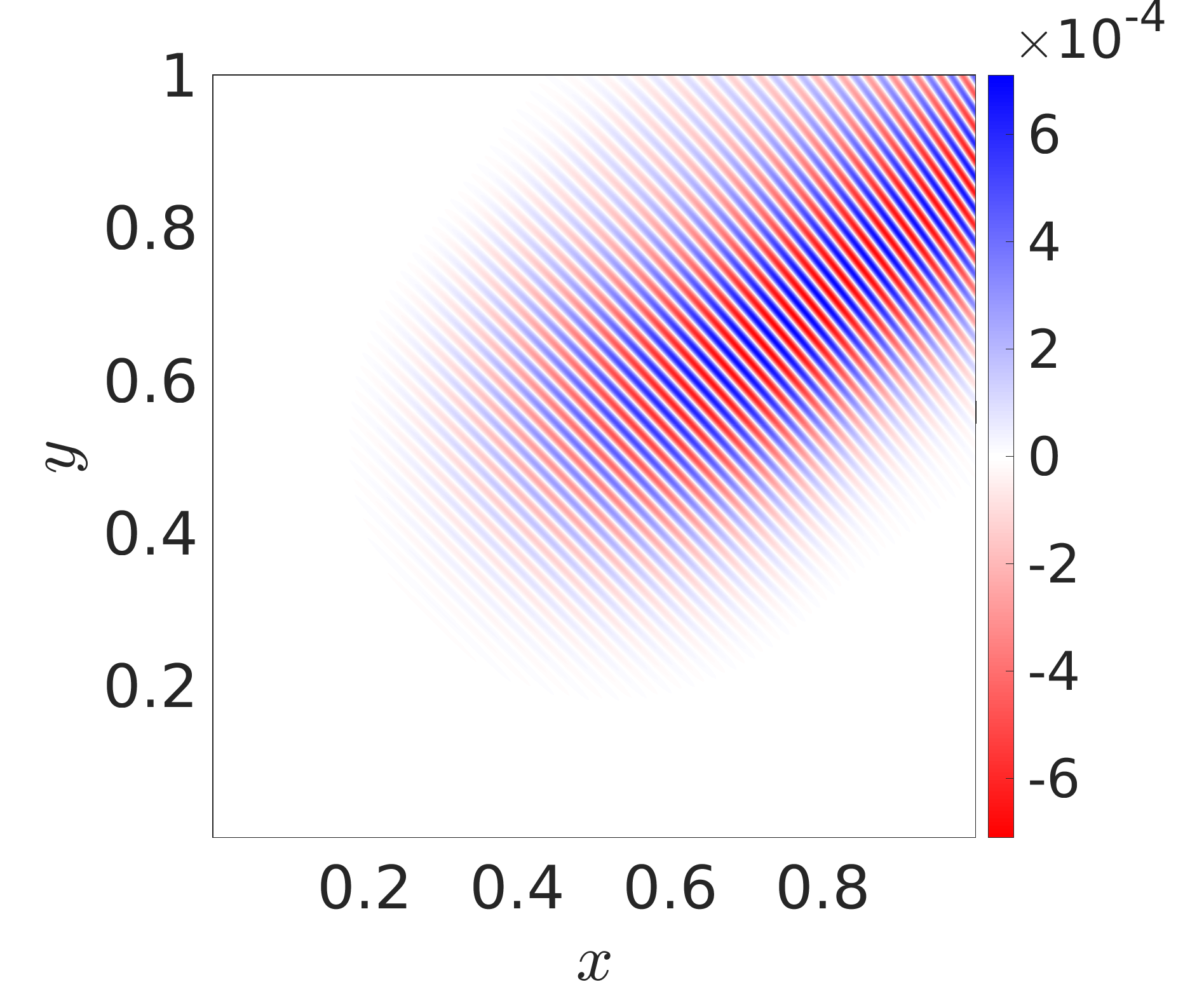}
	\end{subfigure}
	\begin{subfigure}[t]{.29\textwidth}
		\centering
		\includegraphics[width=\linewidth]{\fpath/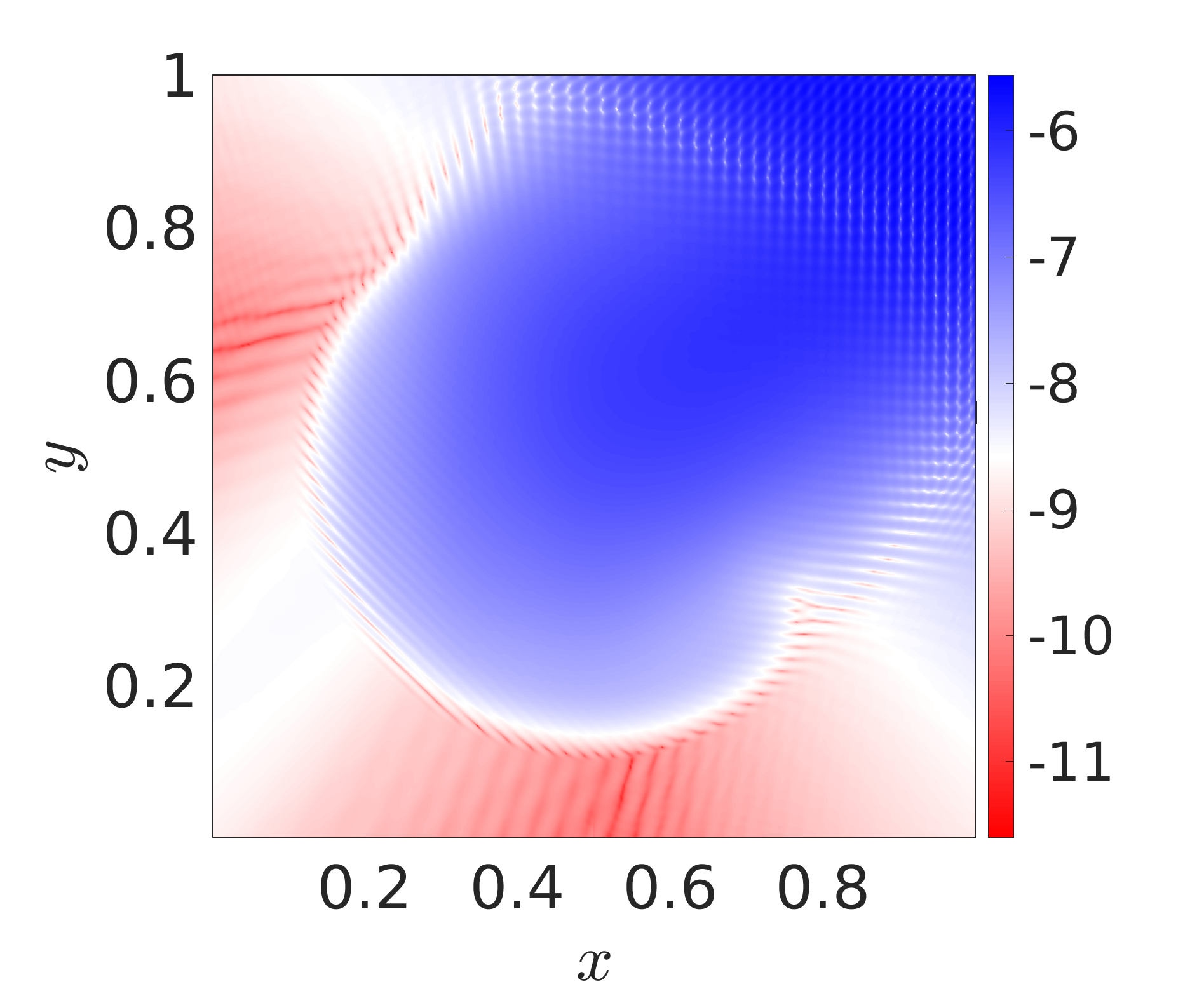}
	\end{subfigure}	
	\begin{subfigure}[t]{.37\textwidth}
		\centering
		\includegraphics[width=\linewidth]{\fpath/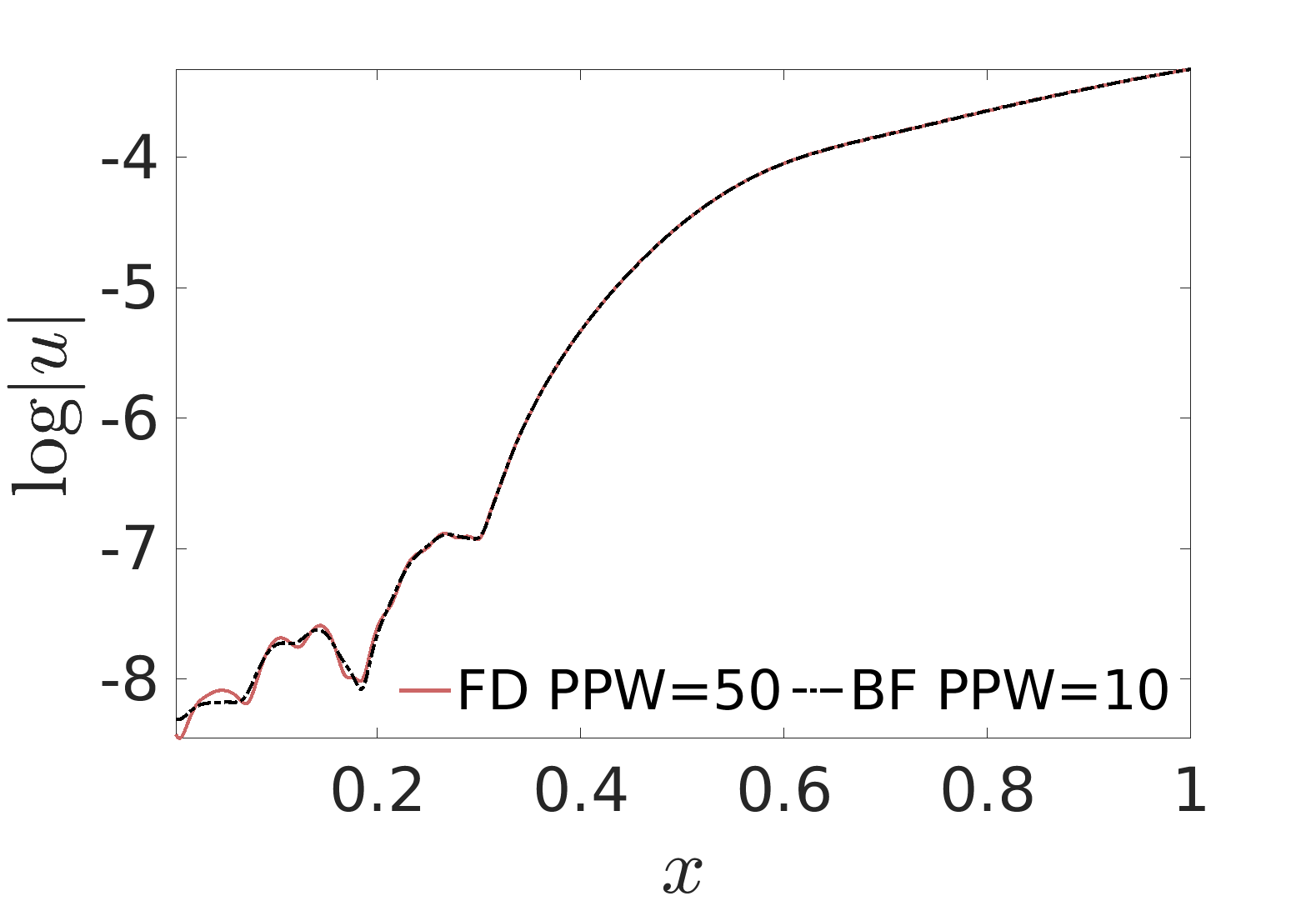}
	\end{subfigure}	
	
	\begin{subfigure}[t]{.29\textwidth}
		\centering
		\includegraphics[width=\linewidth]{\fpath/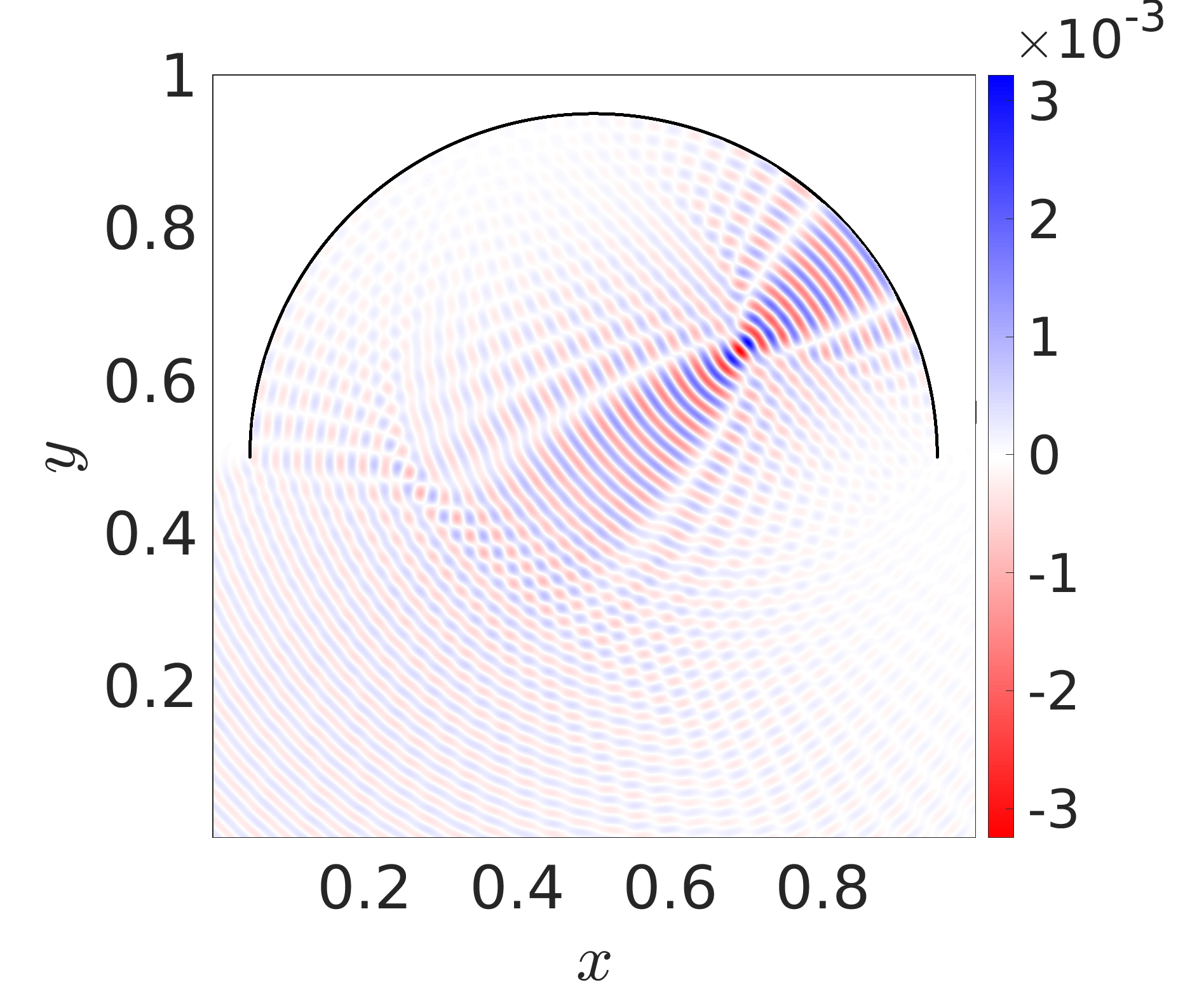}
	\end{subfigure}
	\begin{subfigure}[t]{.29\textwidth}
		\centering
		\includegraphics[width=\linewidth]{\fpath/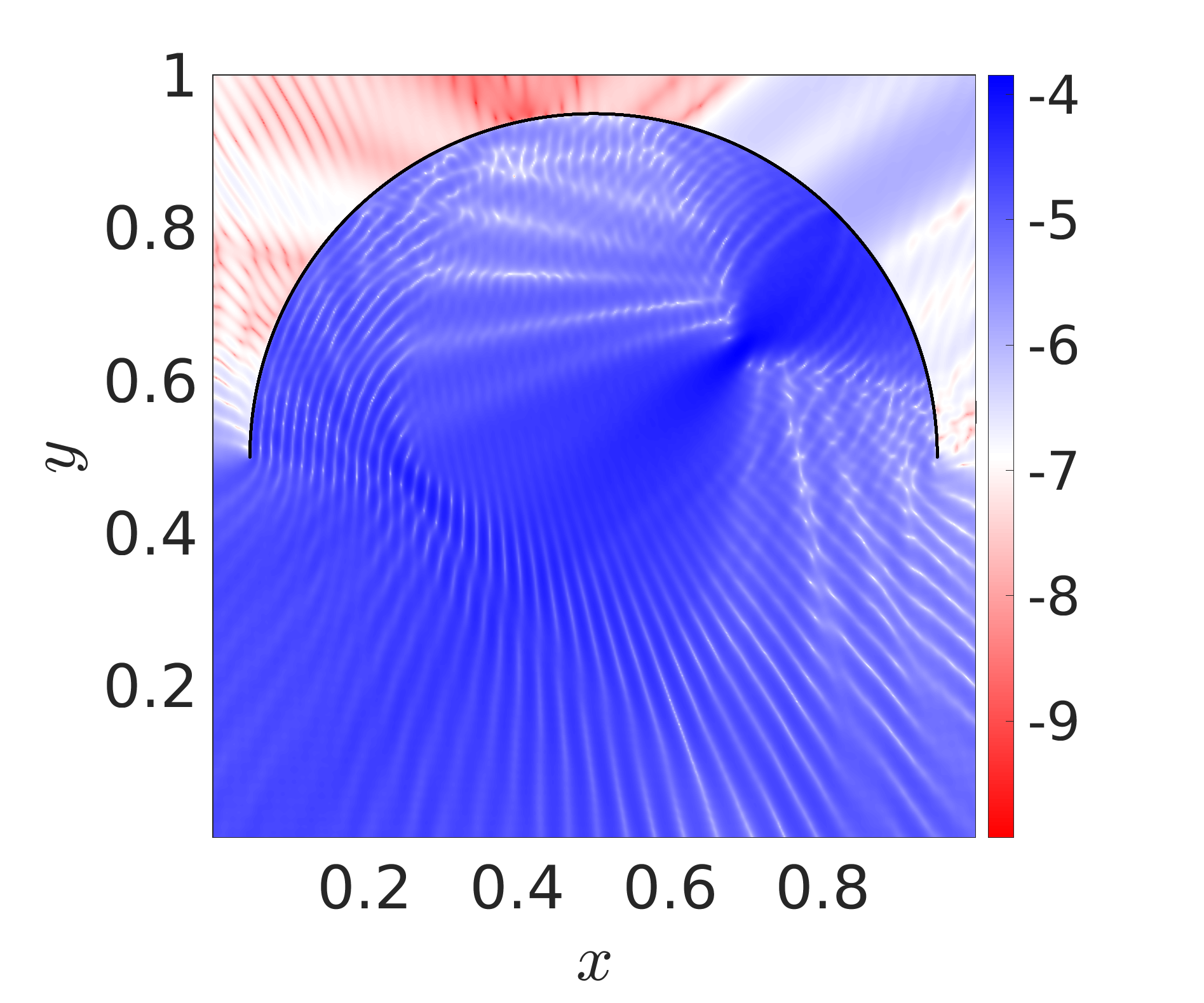}
	\end{subfigure}	
	\begin{subfigure}[t]{.37\textwidth}
		\centering
		\includegraphics[width=\linewidth]{\fpath/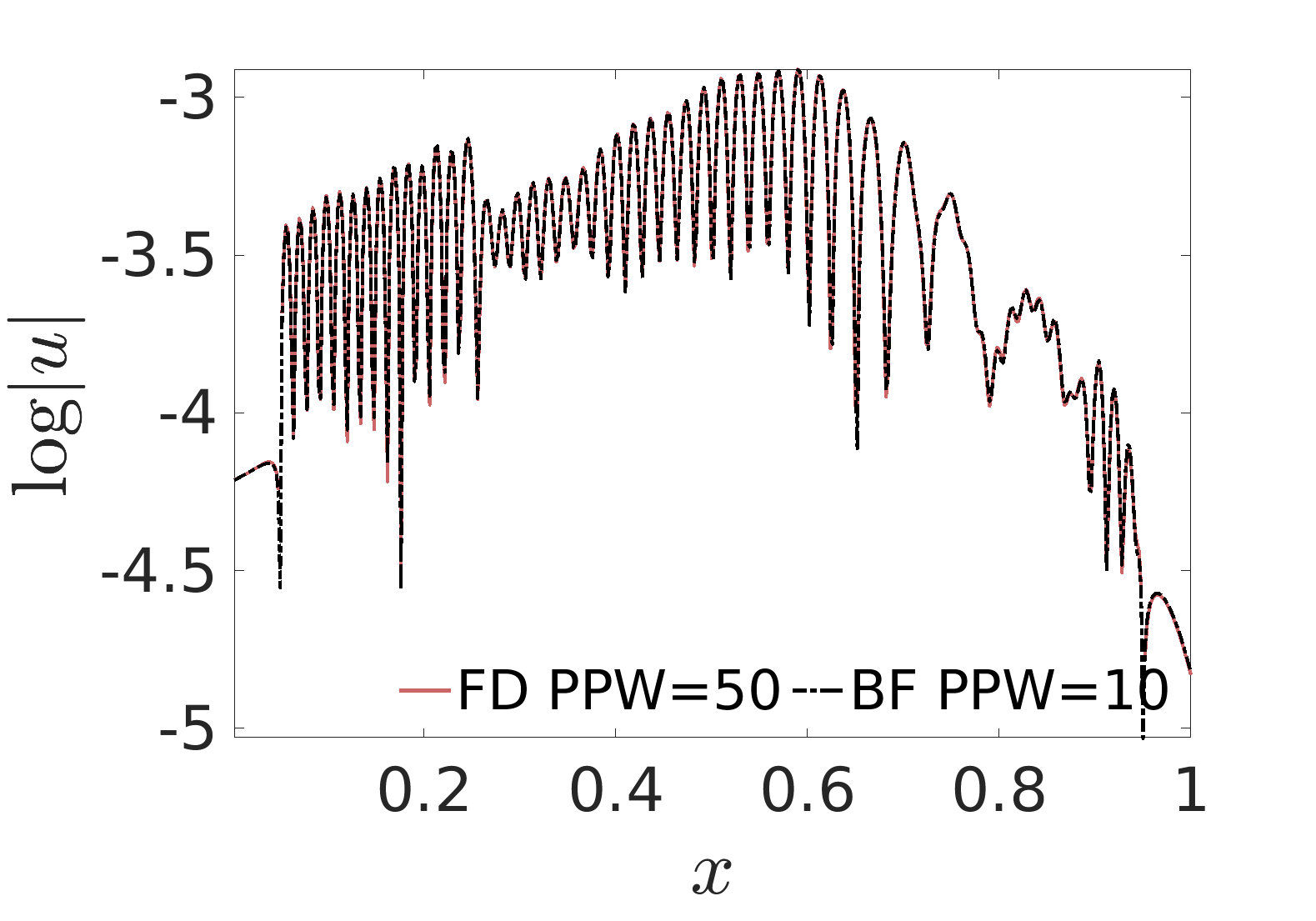}
	\end{subfigure}		
	
		\begin{subfigure}[t]{.29\textwidth}
		\centering
		\includegraphics[width=\linewidth]{\fpath/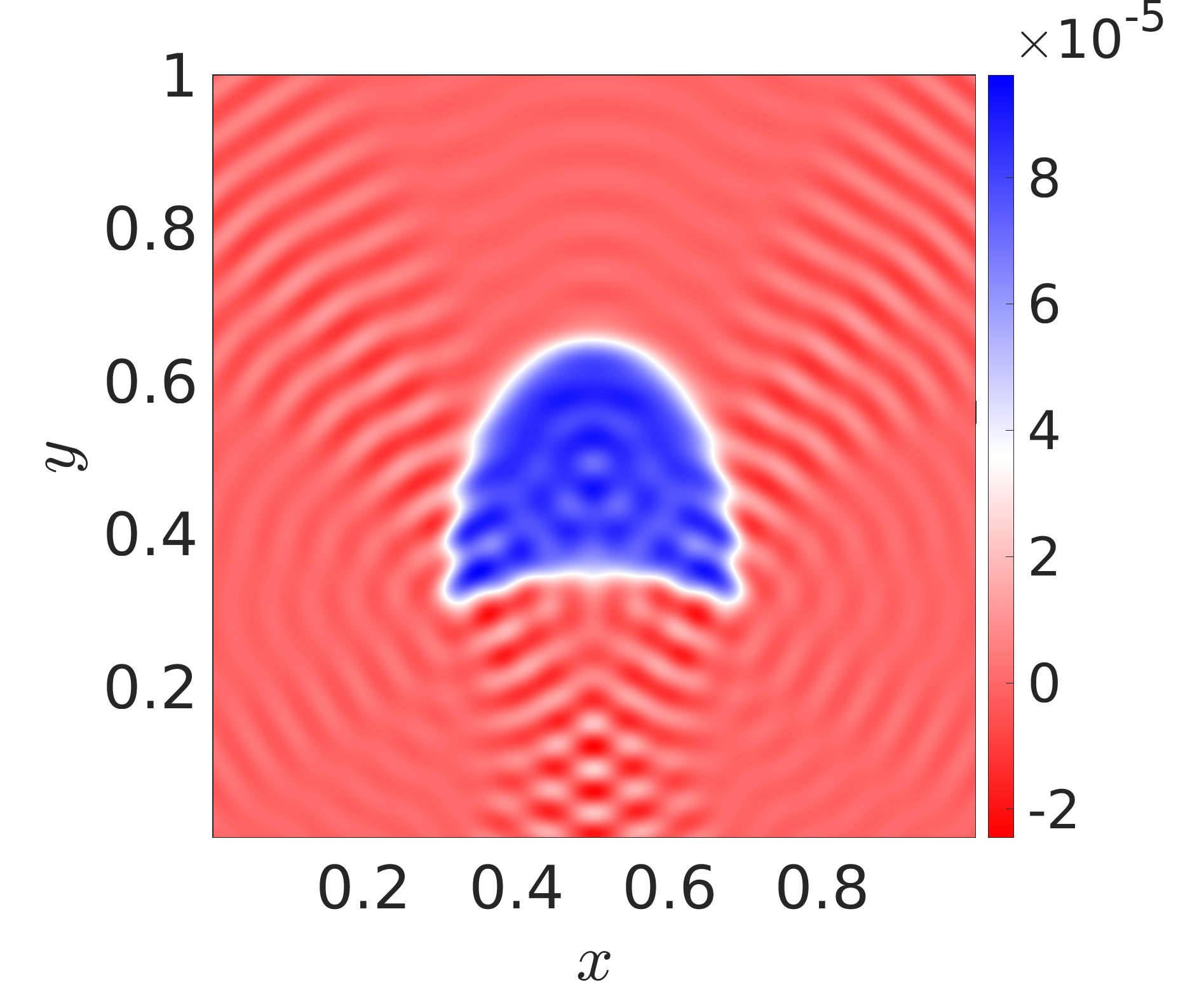}
	\end{subfigure}
	\begin{subfigure}[t]{.29\textwidth}
		\centering
		\includegraphics[width=\linewidth]{\fpath/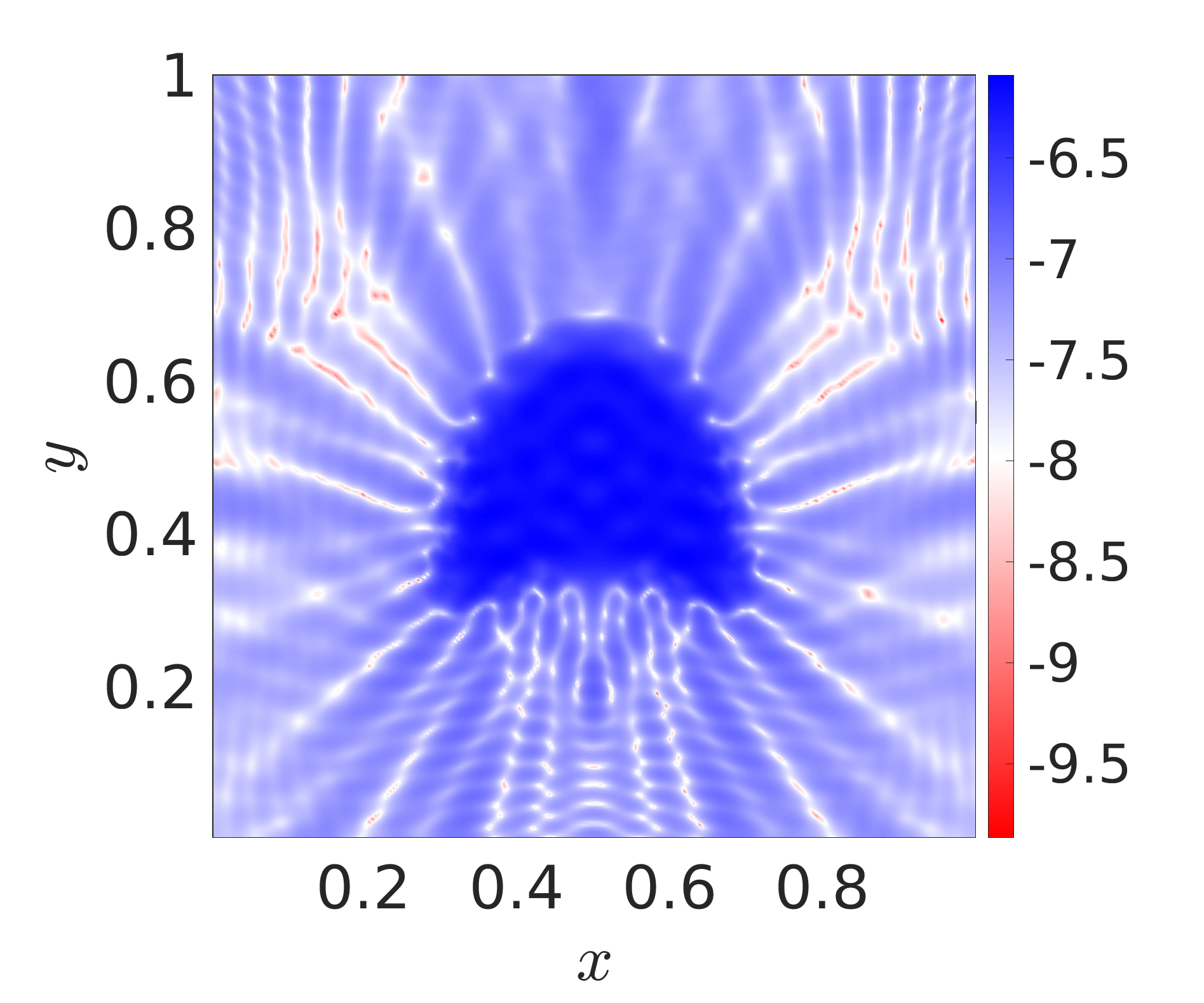}
	\end{subfigure}	
	\begin{subfigure}[t]{.37\textwidth}
		\centering
		\includegraphics[width=\linewidth]{\fpath/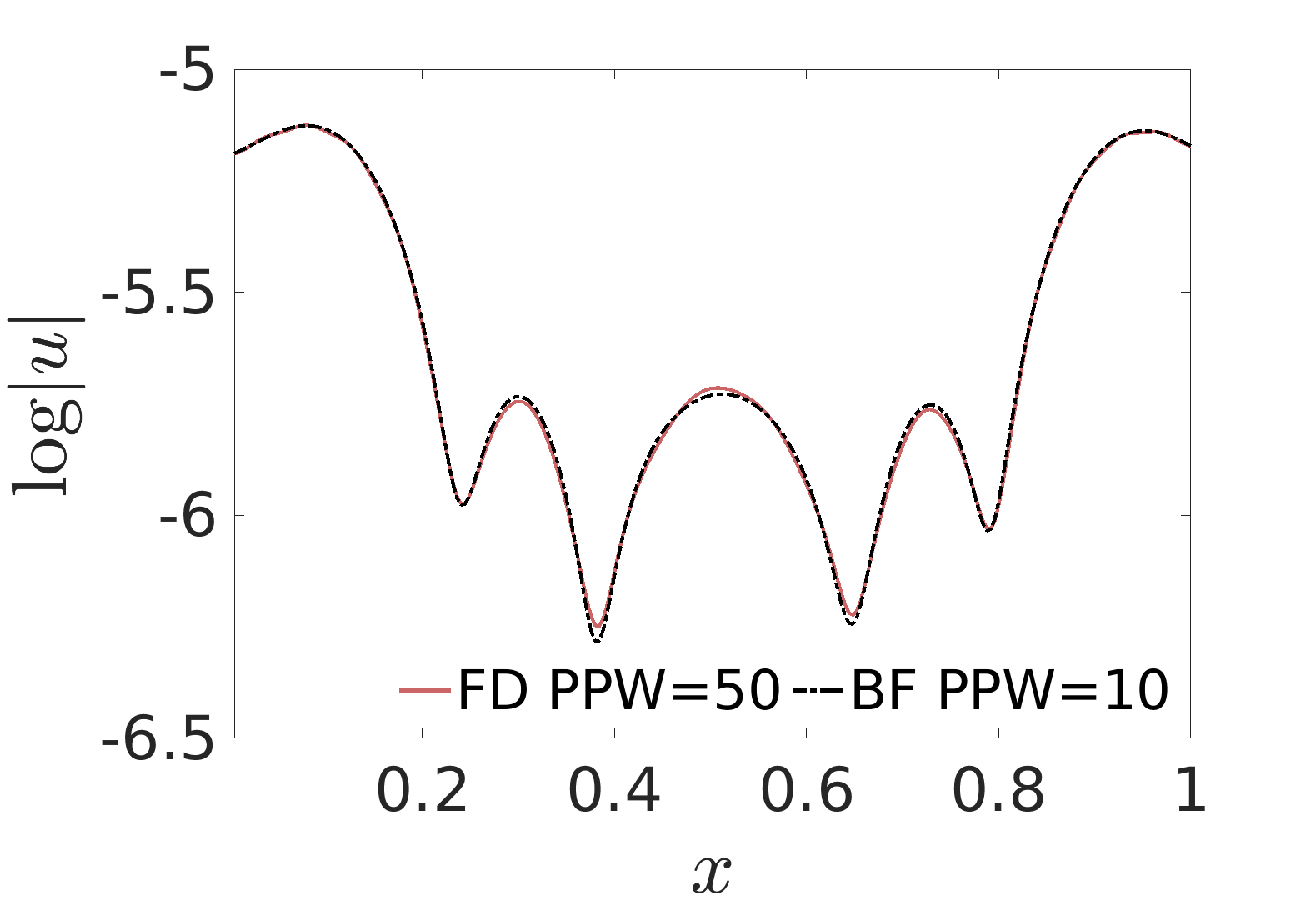}
	\end{subfigure}

	\vspace{-5pt}
	\caption{Sinusoidal media. $\omega=80\pi$  (50 wavelengths each direction) for the first three rows and $\omega=40\pi$ (25 wavelengths each direction) for the last row. Left column: the field $\mathrm{Re}(u_{\rm hb})$ (in the linear scale) computed by the proposed scheme. Middle column: difference $|u_{\rm hb}-u_{\rm fd}|$ (in the log scale) between the fields computed by the proposed scheme (PPW=10) and FDFD (PPW=50). Right column: the fields $|u_{\rm hb}|,|u_{\rm fd}|$ (in the log scale) drawn along the line $y=y_{\rm post}$. Row 1: point source with $y_{\rm post}=1-10h$ and $h$ corresponding to PPW=10. Row 2: Gaussian packet source  with $y_{\rm post}=1-10h$. Row 3: Gaussian packet source with $y_{\rm post}=0.5$ and a semi-circle inclusion. Row 4: concave kite-shaped source with $y_{\rm post}=1-10h$.}	
	\label{fig:ex3_f40}
\end{figure}

\begin{figure}[!htp]
	\centering
	\vspace{-7.5pt}
	\begin{subfigure}[t]{.29\textwidth}
		\centering
		\includegraphics[width=\linewidth]{\fpath/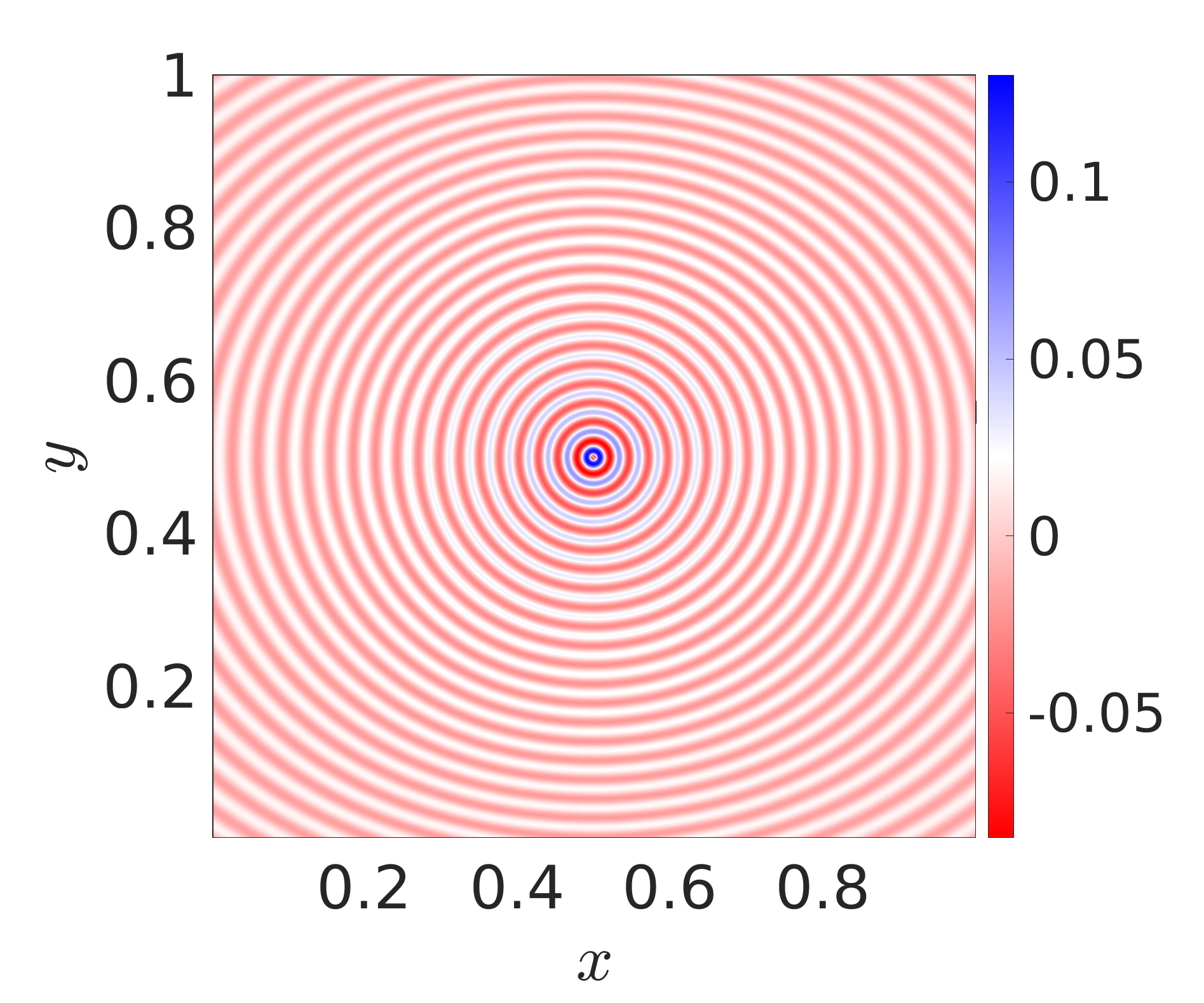}
	\end{subfigure}
	\begin{subfigure}[t]{.29\textwidth}
		\centering
		\includegraphics[width=\linewidth]{\fpath/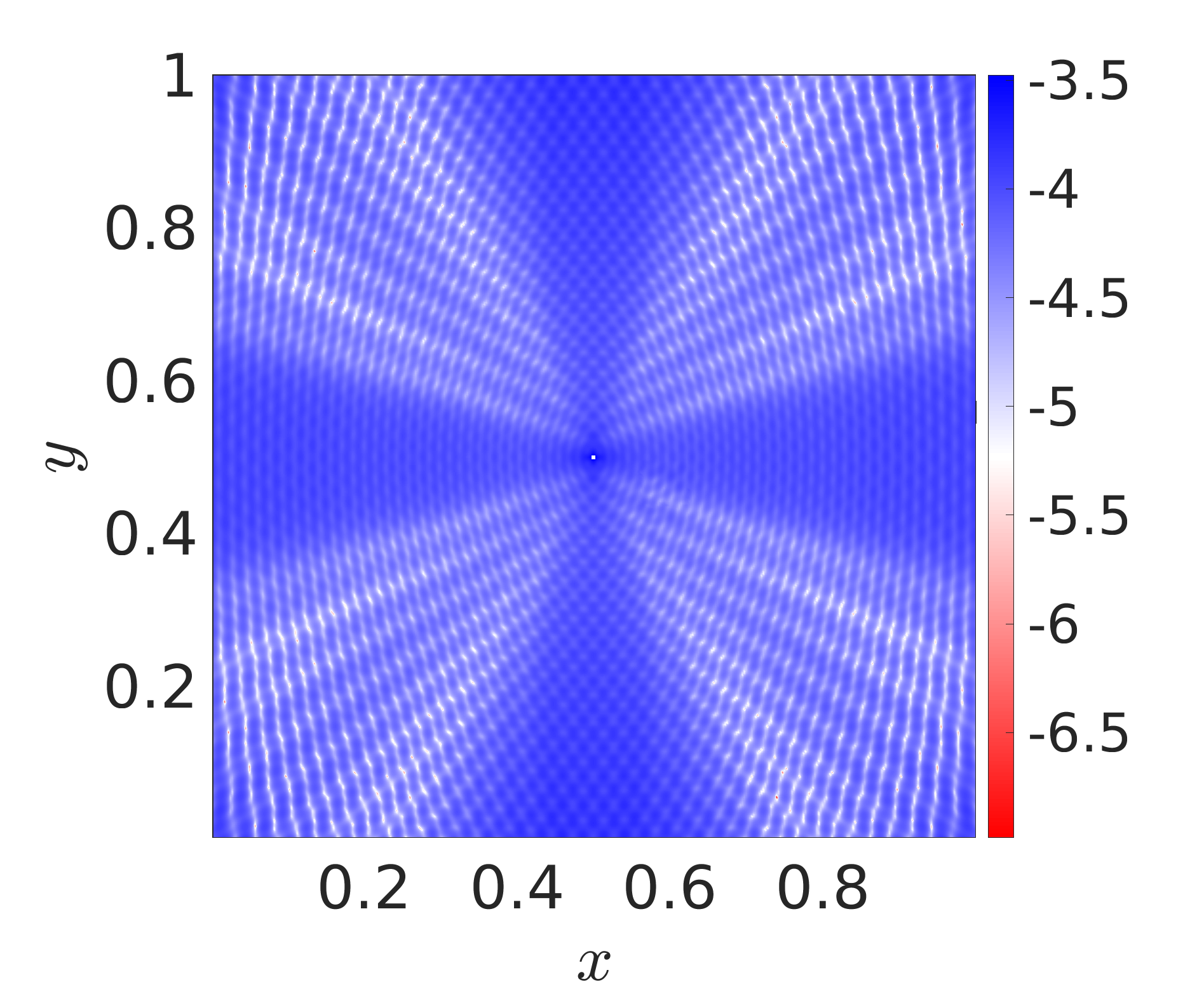}
	\end{subfigure}	
	\begin{subfigure}[t]{.37\textwidth}
		\centering
		\includegraphics[width=\linewidth]{\fpath/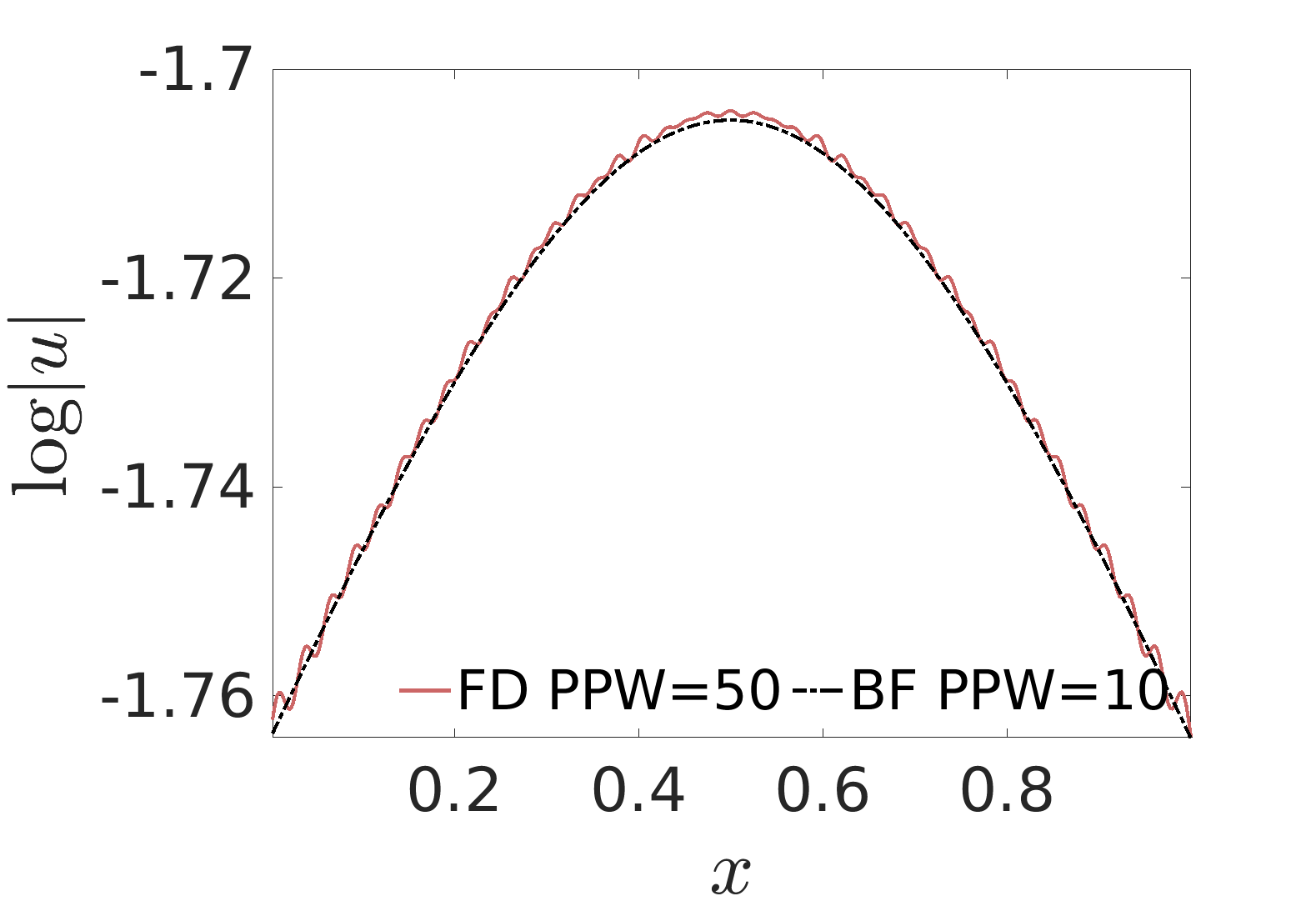}
	\end{subfigure}	
	\begin{subfigure}[t]{.29\textwidth}
		\centering
		\includegraphics[width=\linewidth]{\fpath/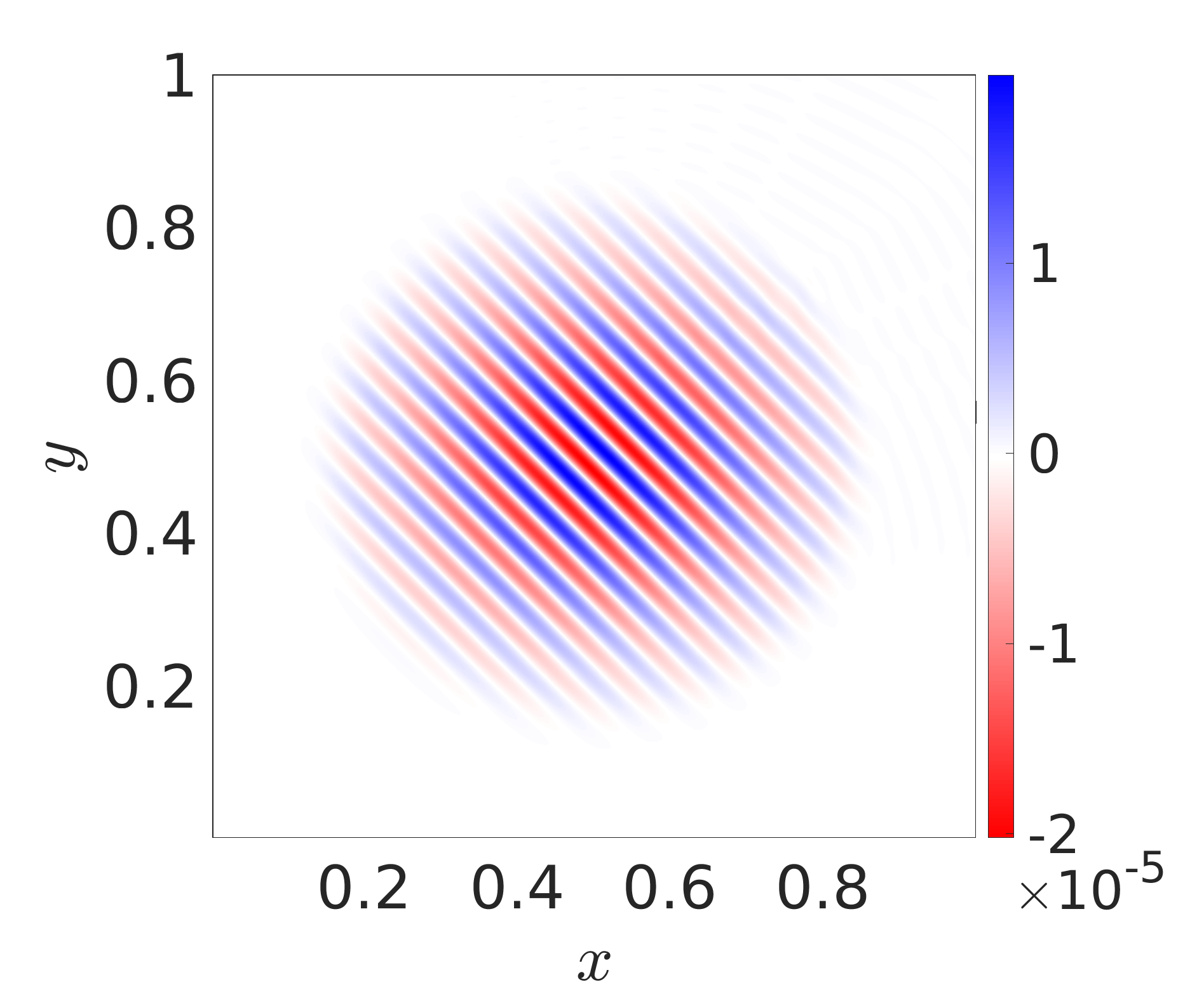}
	\end{subfigure}
	\begin{subfigure}[t]{.29\textwidth}
		\centering
		\includegraphics[width=\linewidth]{\fpath/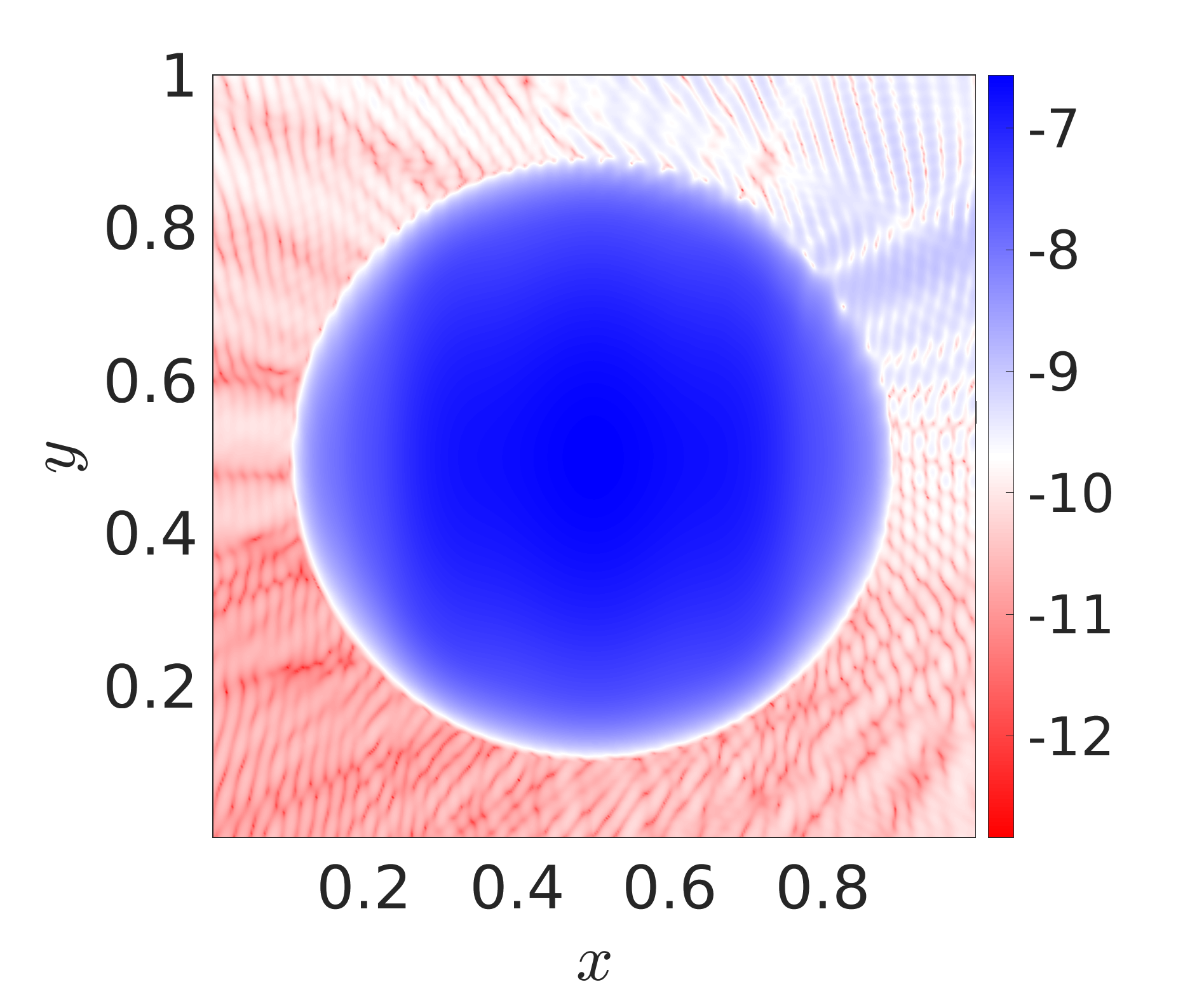}
	\end{subfigure}	
	\begin{subfigure}[t]{.37\textwidth}
		\centering
		\includegraphics[width=\linewidth]{\fpath/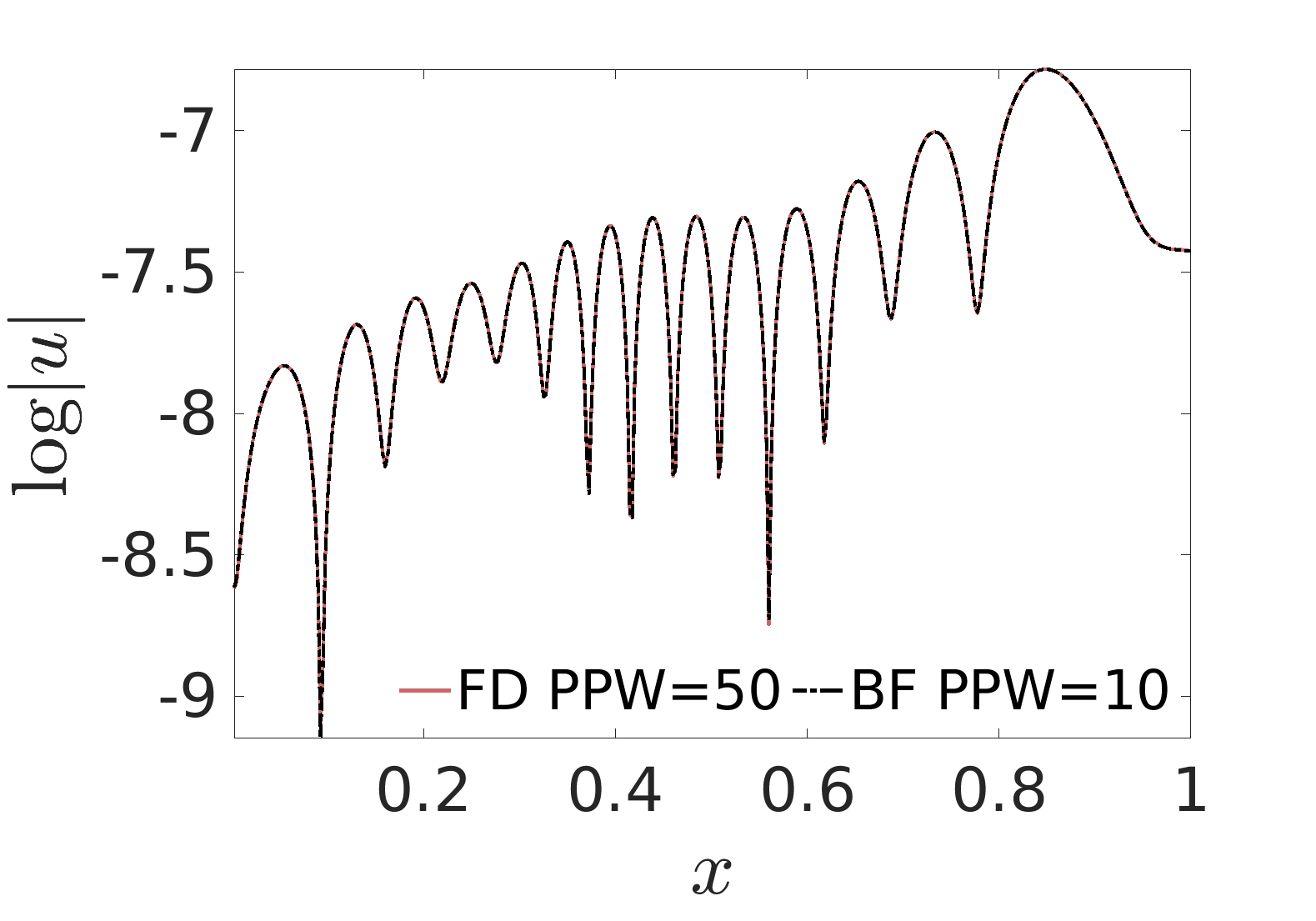}
	\end{subfigure}	
	
	\begin{subfigure}[t]{.29\textwidth}
		\centering
		\includegraphics[width=\linewidth]{\fpath/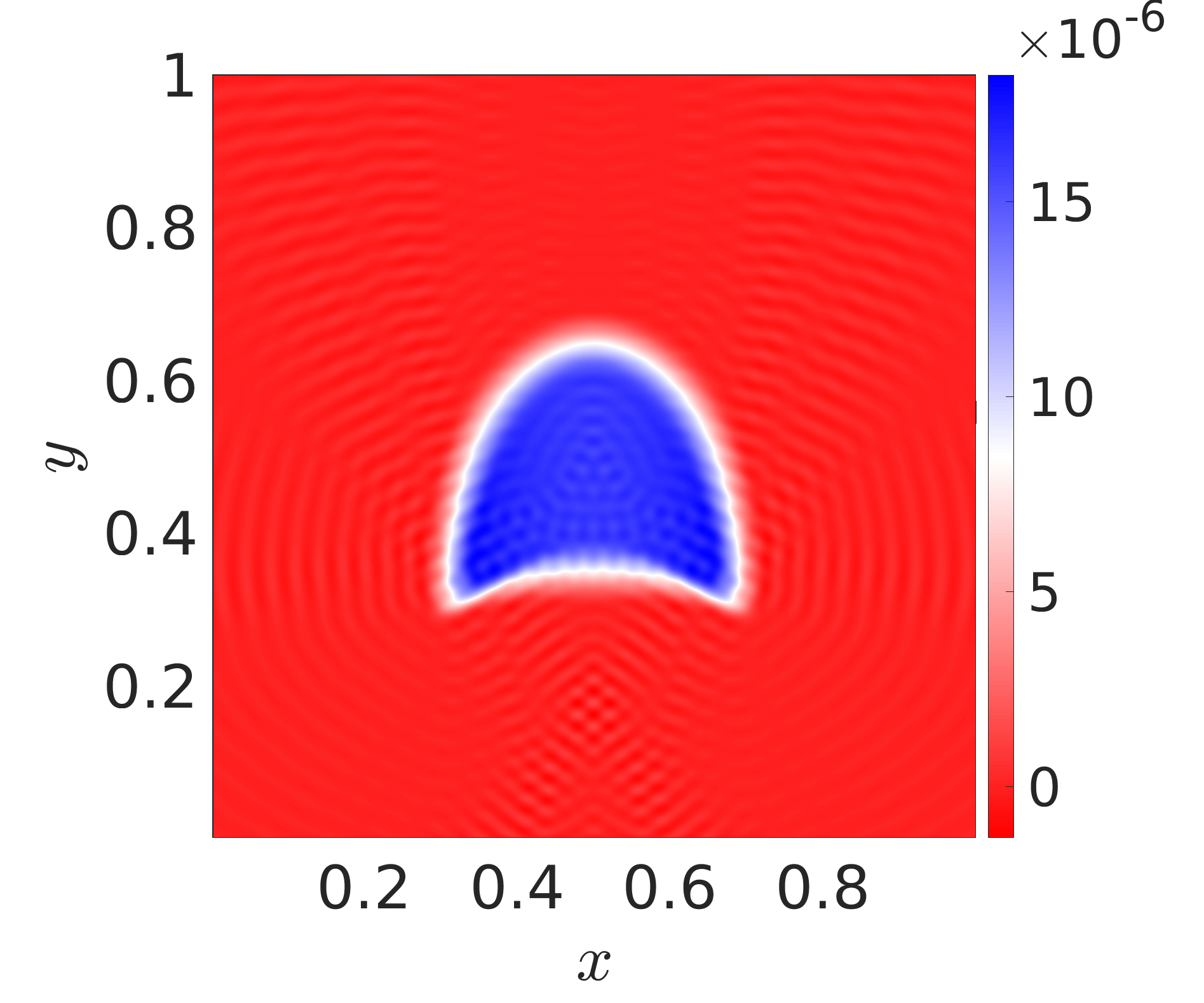}
	\end{subfigure}
	\begin{subfigure}[t]{.29\textwidth}
		\centering
		\includegraphics[width=\linewidth]{\fpath/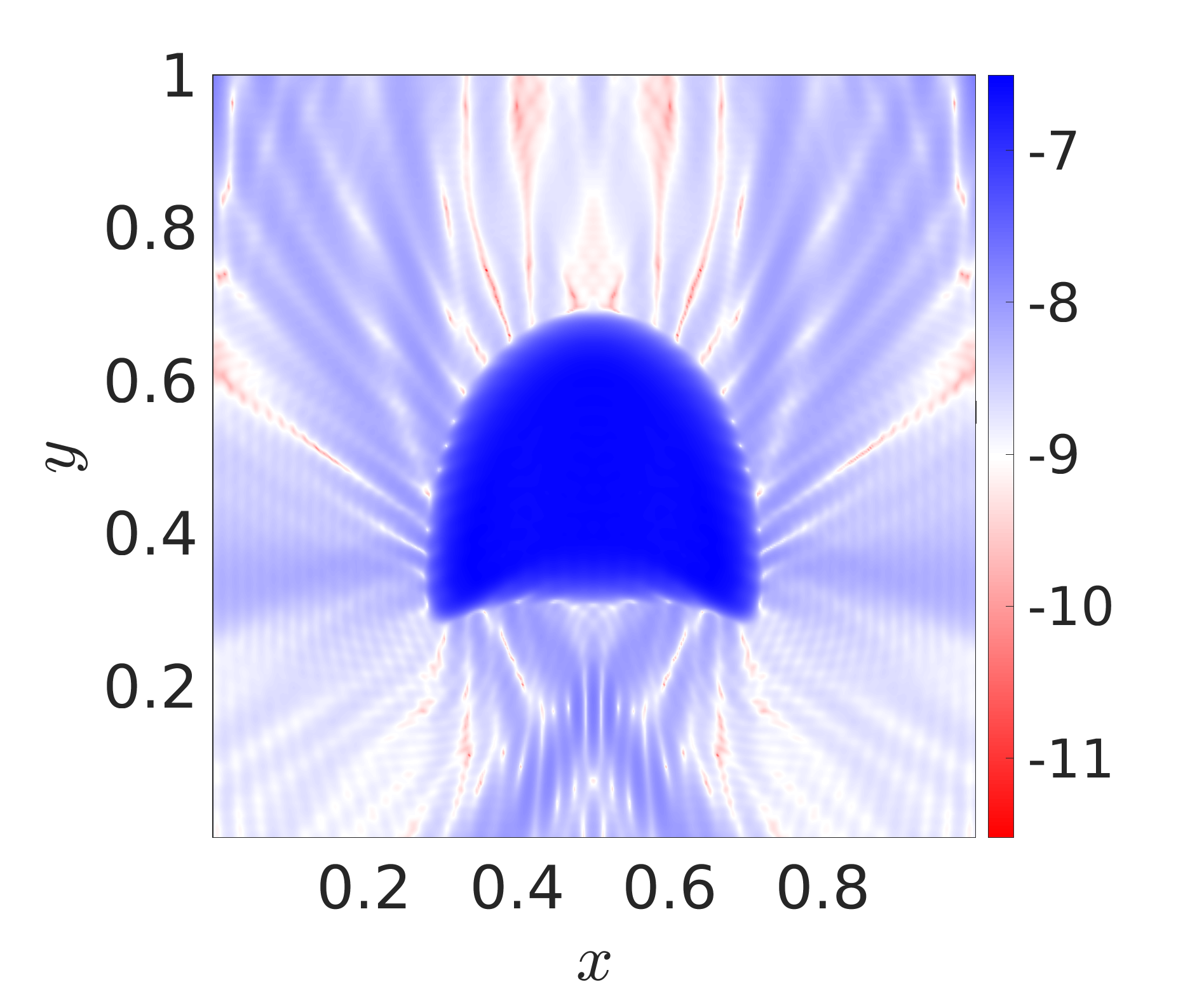}
	\end{subfigure}	
	\begin{subfigure}[t]{.37\textwidth}
		\centering
		\includegraphics[width=\linewidth]{\fpath/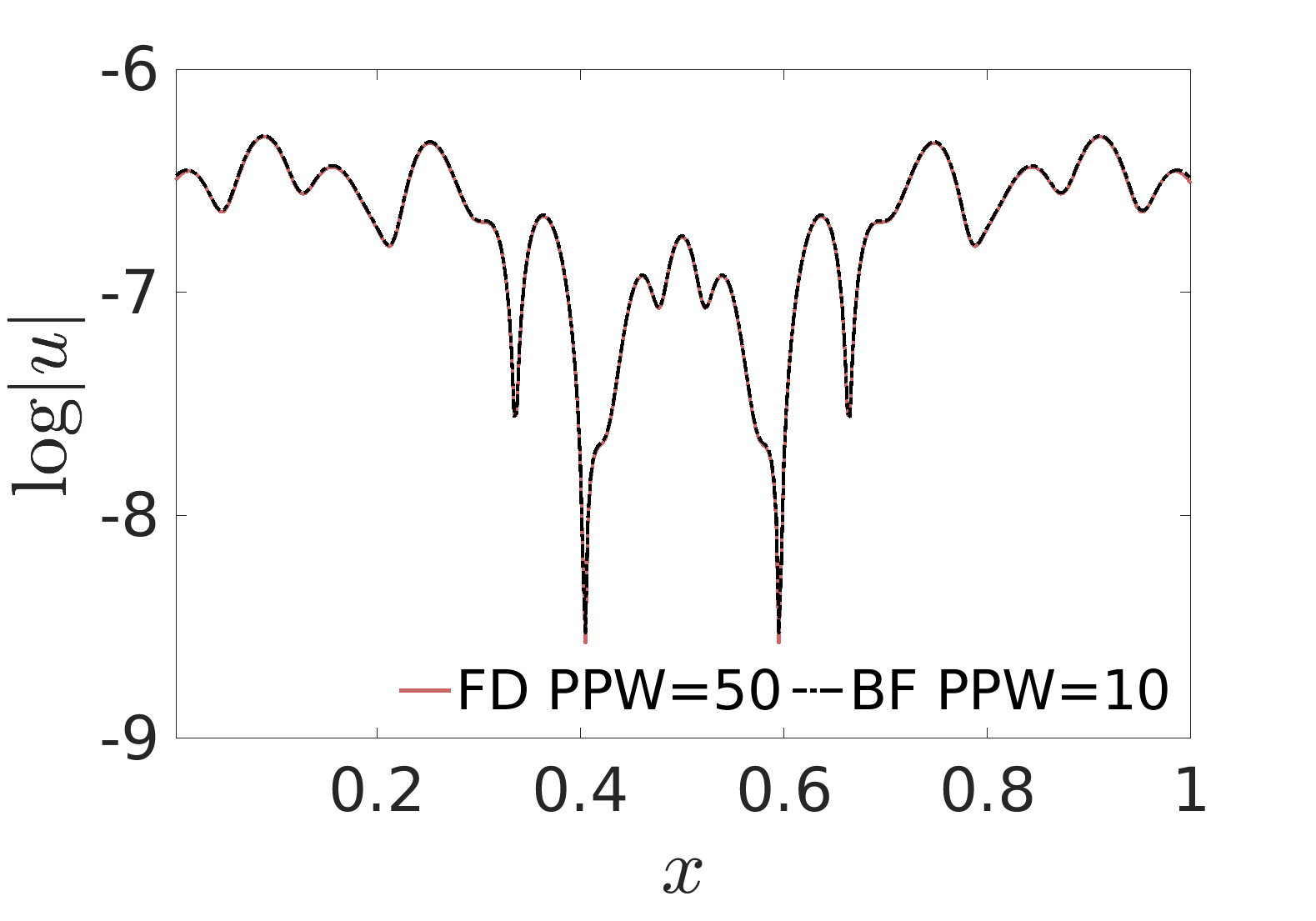}
	\end{subfigure}		

    \begin{subfigure}[t]{.29\textwidth}
		\centering
		\includegraphics[width=\linewidth]{\fpath/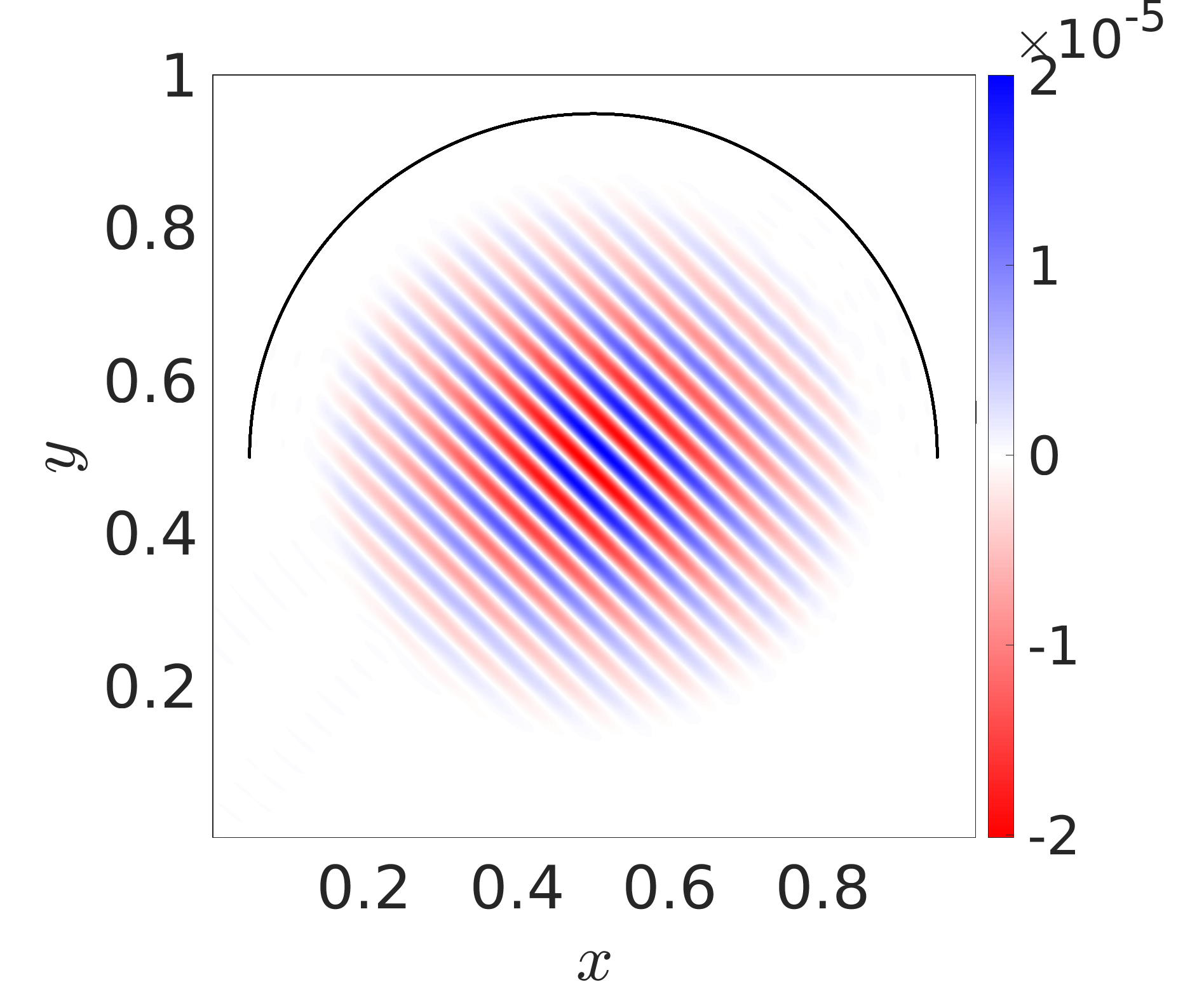}
	\end{subfigure}
	\begin{subfigure}[t]{.29\textwidth}
		\centering
		\includegraphics[width=\linewidth]{\fpath/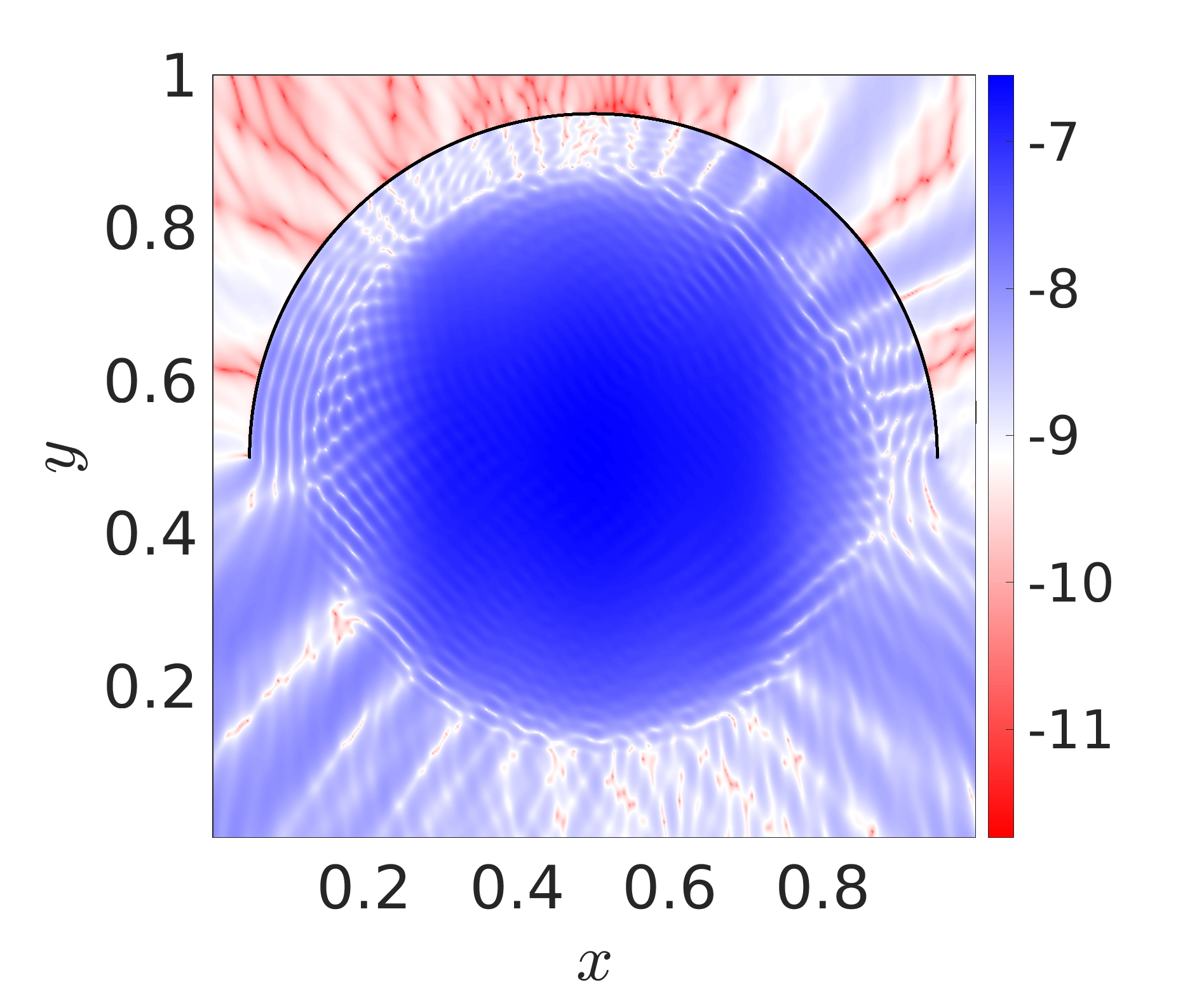}
	\end{subfigure}	
	\begin{subfigure}[t]{.37\textwidth}
		\centering
		\includegraphics[width=\linewidth]{\fpath/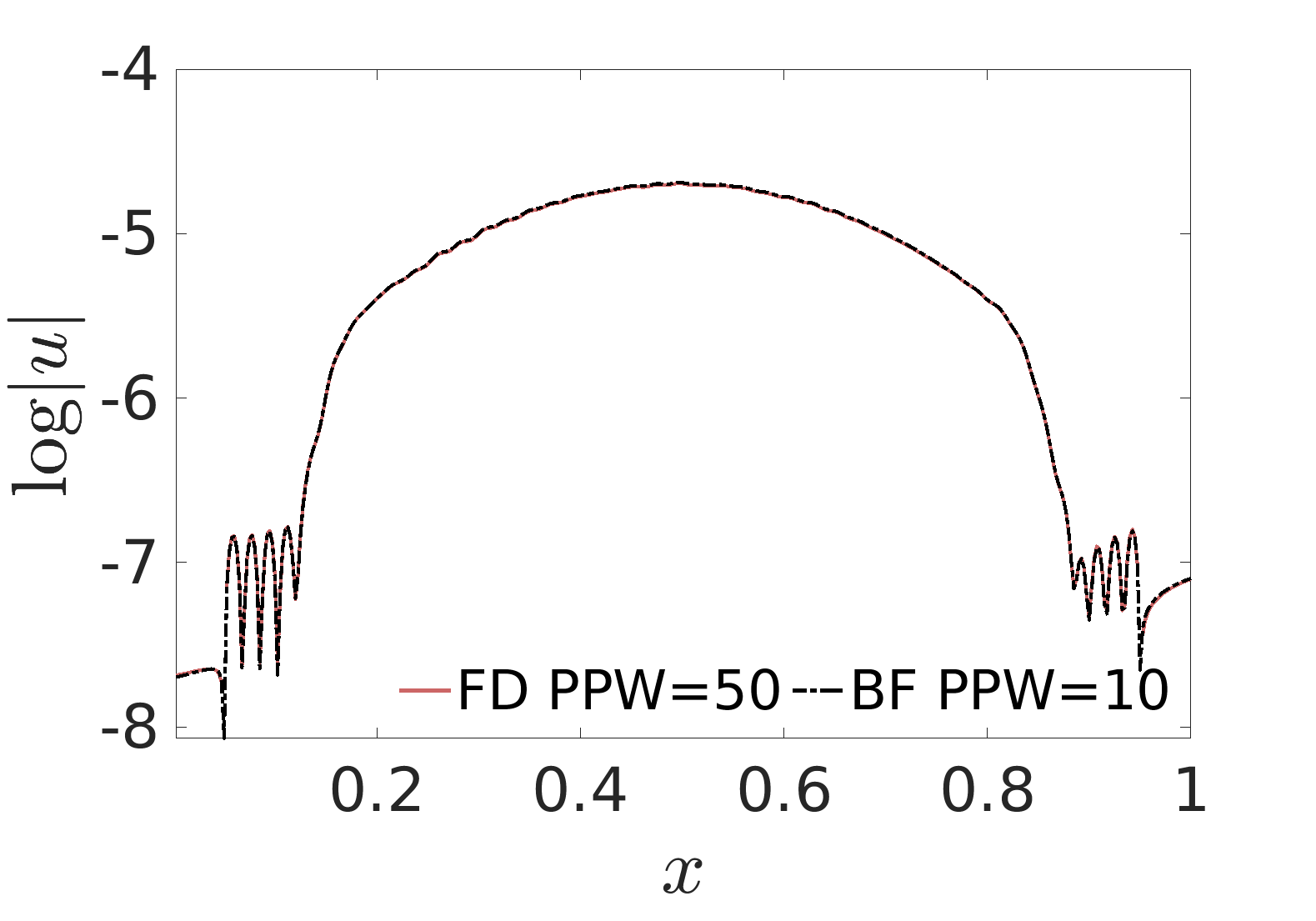}
	\end{subfigure}	
	
	\vspace{-5pt}
	\caption{Waveguide media. $\omega=40\pi$ (40 wavelengths each direction). Left column: the field $\mathrm{Re}(u_{\rm hb})$ (in the linear scale) computed by the proposed scheme. Middle column: difference $|u_{\rm hb}-u_{\rm fd}|$ (in the log scale) between the fields computed by the proposed scheme (PPW=10) and FDFD (PPW=50). Right column: the fields $|u_{\rm hb}|,|u_{\rm fd}|$ (in the log scale) drawn along the line $y=y_{\rm post}$. Row 1: point source with $y_{\rm post}=1-10h$ and $h$ corresponding to PPW=10. Row 2: Gaussian packet source  with $y_{\rm post}=1-10h$. Row 3: concave kite-shaped source with $y_{\rm post}=1-10h$. Row 4: Gaussian packet source with $y_{\rm post}=0.5$ and a semi-circle inclusion.}	
	\label{fig:ex4_f20}
\end{figure}

\subsubsection{3D domain without inclusion\label{sec:acc_3d}}
We consider the following examples of homogeneous and inhomogeneous media for $d=3$:
\begin{itemize}
	\item \textbf{Constant media}: the computational domain is $V=[0,0.5]^3$ with $n(\mathbf{r})=n(x,y,z)=2$. The phase function has an analytical formula $\tau(\mathbf{r},\mathbf{r}_0)=n(\mathbf{r}_0)|\mathbf{r}-\mathbf{r}_0|$. The HB coefficients have analytical form $v_0(\mathbf{r},\mathbf{r}_0)=n(\mathbf{r}_0)/(2\pi)$ and $v_1(\mathbf{r},\mathbf{r}_0)=0$. The HB integrator becomes $g(\mathbf{r},\mathbf{r}_0)=\exp(\mathrm{i}\omega n(\mathbf{r}_0) |\mathbf{r}-\mathbf{r}_0|)/(4\pi |\mathbf{r}-\mathbf{r}_0|)$, i.e., the well-known form of the free-space Green's function. We use the Lax-Friedrichs WENO method with $h_0=0.01$ to solve (\ref{eikonalnormal}) and (\ref{equVs}) with point sources, construct the low-rank representation with an order of $n_I=3$ for the Chebyshev interpolation, and compare the results with these exact formulas. 
	\item \textbf{Constant-gradient media}: the computational domain is $V=[0,0.52]^3$ with $n(\mathbf{r})=n(x,y,z)=\frac{1}{0.5-0.8(y-0.26)}=\frac{-1.25}{y-0.885}$. Note that $1/n$ has a nonzero constant partial derivative in $y$. The phase function has an analytical form $\tau(\mathbf{r},\mathbf{r}_0)=\frac{1}{|\mathbf{G_0}|}\mathrm{arccosh}\Big(1+\frac{1}{2}n(\mathbf{r})n(\mathbf{r}_0)|\mathbf{G}_0|^2|\mathbf{r}-\mathbf{r}_0|^2\Big)$ with $\mathbf{G}_0=[0,-0.8,0]$. The HB coefficients have no known analytical expressions. We use the fifth-order LxF WENO method with $h_0=0.01$ to solve (\ref{eikonalnormal}) and (\ref{equVs}) with point sources, and construct the low-rank representation with an order of $n_I=9$ for the Chebyshev interpolation. It is worth mentioning that the exact Green's function \cite{Holford1981Elementary} exists as $g(\mathbf{r},\mathbf{r}_0)=\frac{(|y-c||y_0-c|)^{1/2}}{2\pi RR'}\exp\big(2\mathrm{i}(a^2\omega^2-1/4)^{1/2}\mathrm{arctanh}(R/R')\big)$. Here $a=1.25$, $c=0.885$, $R=|\mathbf{r}-\mathbf{r}_0|$, and $R'=\sqrt{(x-x_0)^2 + (y+y_0-2c)^2 + (z-z_0)^2}$.   	
\end{itemize}	

To compute $K^{v2v}$, we discretize the domain with $h=2\pi/(\omega n_{\max}n_p)$ with $n_p\sim 5$ and $n_{\max}$ being the maximum refractive index over the domain. The HODBF compression of $K^{v2v}$ is computed with tolerance ${\rm tol}=10^{-6}$ in (\ref{eqn:ID}) and oversampling factor $\chi=4$ in (\ref{eq:proxyv2v}). We apply (\ref{eq:m_v2v}) to the following RHSs centered at $\mathbf{r}_c$ (the domain center): 
\begin{itemize}
	\item \textbf{Point source:} $s(\mathbf{r})=1/h^3$ if $\mathbf{r}$ is inside the source cell centered at $\mathbf{r}_c$.  
	\item \textbf{Gaussian wavepacket source:} $s(\mathbf{r})={\rm exp}(-|\mathbf{r}-\mathbf{r}_c|^2/(2\sigma^2)){\rm exp}(i\omega_0(\mathbf{r}\cdot\mathbf{d}))t(|\mathbf{r}-\mathbf{r}_c|,w_1,w_2)$ with $\sigma=0.15$, $w_1=0.05$, $w_2=0.05$, $\omega_0=0.9\omega$, $\mathbf{d}=\frac{1}{\sqrt 3}[1,1,1]$, and $t$ is the tapering function in (\ref{eq:taper}). 		
\end{itemize}

As for the reference FDFD solver, we use the 27-point staggered grid scheme in \cite{operto20073d}. The computational domain is extended in each direction with a perfectly matched layer (PML). The resulting sparse linear system is solved with STRUMPACK \cite{ghysels2017robust,liu2021sparse}.

For the constant medium, we use PMLs of thickness $3\pi/(\omega n_{\max})$ (1.5 wavelengths). We consider $\omega=40\pi$ (20 wavelengths in each direction). The fields computed by the proposed scheme (PPW=5) and FDFD (PPW=10) and their differences are shown in \cref{fig:ex1_f20_3d}. For the point source (Row 1 in \cref{fig:ex1_f20_3d}), the exact solution is also plotted (in dashed green) in the third column. We can see that for both the point source and Gaussian wavepacket source, the results from FDFD with PPW=10 are not satisfactory. The FDFD scheme with PPW=10 results in a sparse system of dimension $240^3=13,824,000$, and PPW=20 will lead to $480^3=110,592,000$. In contrast, the proposed scheme results in a dense, compressed matrix of dimension $N_v=1,030,301$.

For the constant-gradient medium, we use PMLs of thickness $10\pi/(\omega n_{\max})$ (5 wavelengths). We first consider $\omega=32\pi$ (27 wavelengths in each direction). The fields computed by the proposed scheme (PPW=4.56) and FDFD (PPW=9.12) and their differences are shown in \cref{fig:ex2_f16_3d}. For the point source (Row 1 in \cref{fig:ex2_f16_3d}), the exact solution is also plotted (in dashed green) in the third column. The FDFD scheme with PPW=9.12 results in a sparse system of dimension $352^3=43,614,208$. The proposed scheme results in a dense, compressed matrix of dimension $N_v=2,248,091$. \ylrevnew{Next, we consider $\omega=64\pi$ (54 wavelengths in each direction). The fields computed by the proposed scheme (PPW=4.56) and FDFD (PPW=4.56) and their differences are shown in \cref{fig:ex2_f16_3d}. For the point source (Row 3 in \cref{fig:ex2_f16_3d}), the exact solution is also plotted (in dashed green) in the third column. Clearly, the FDFD results are not reliable. The FDFD scheme with PPW=4.56 results in a sparse system of dimension $305^3=28,372,625$ and even higher PPWs for the FDFD scheme lead to large-scale linear systems of dimensions that no existing sparse direct solvers can deal with. In contrast, the proposed scheme results in a dense, compressed matrix of dimension $N_v=17,779,581$.}

\begin{figure}[!htp]
	\centering
	\vspace{-7.5pt}
	\begin{subfigure}[t]{.29\textwidth}
		\centering
		\includegraphics[width=\linewidth]{\tfpath/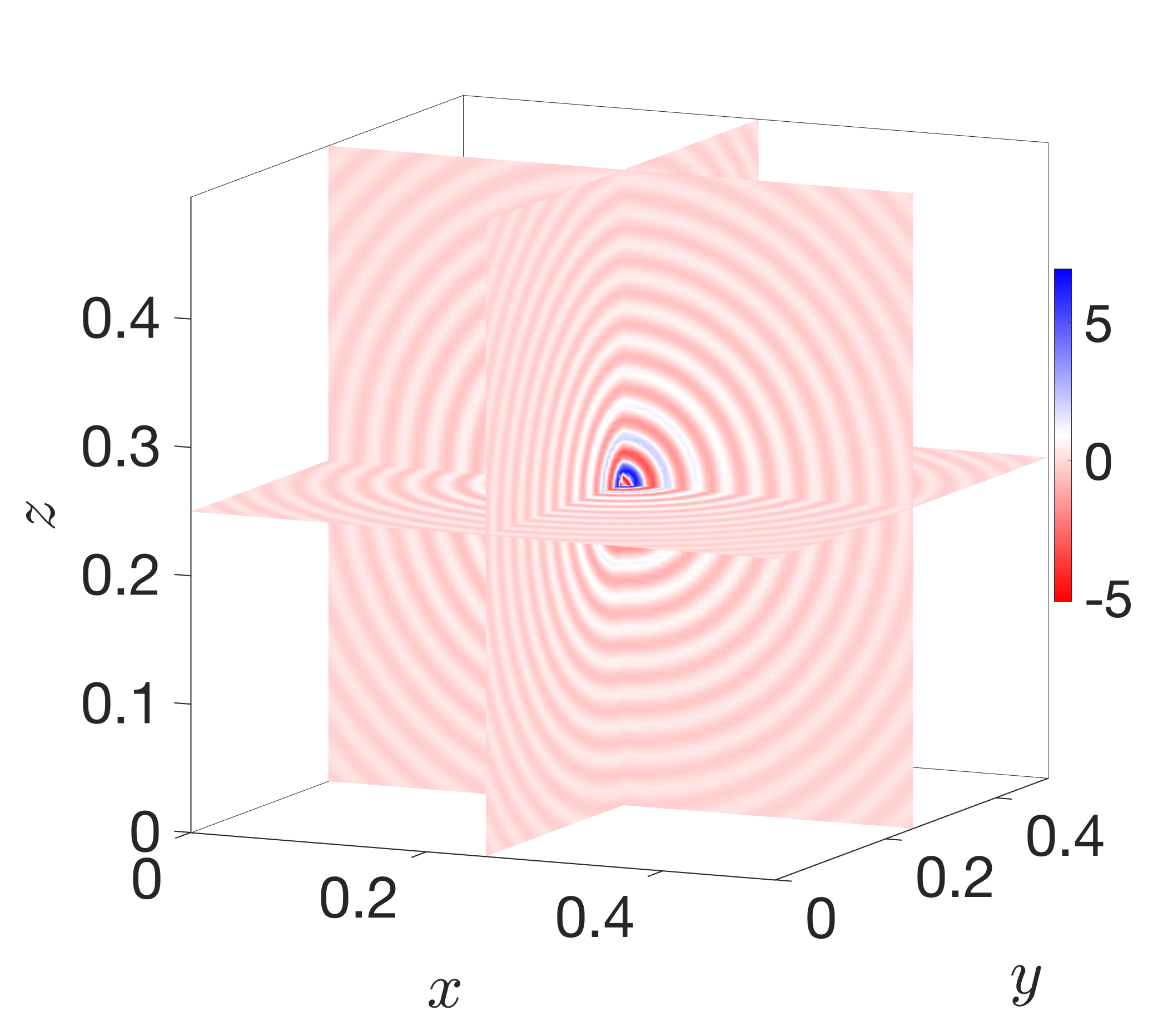}
	\end{subfigure}
	\begin{subfigure}[t]{.29\textwidth}
		\centering
		\includegraphics[width=\linewidth]{\tfpath/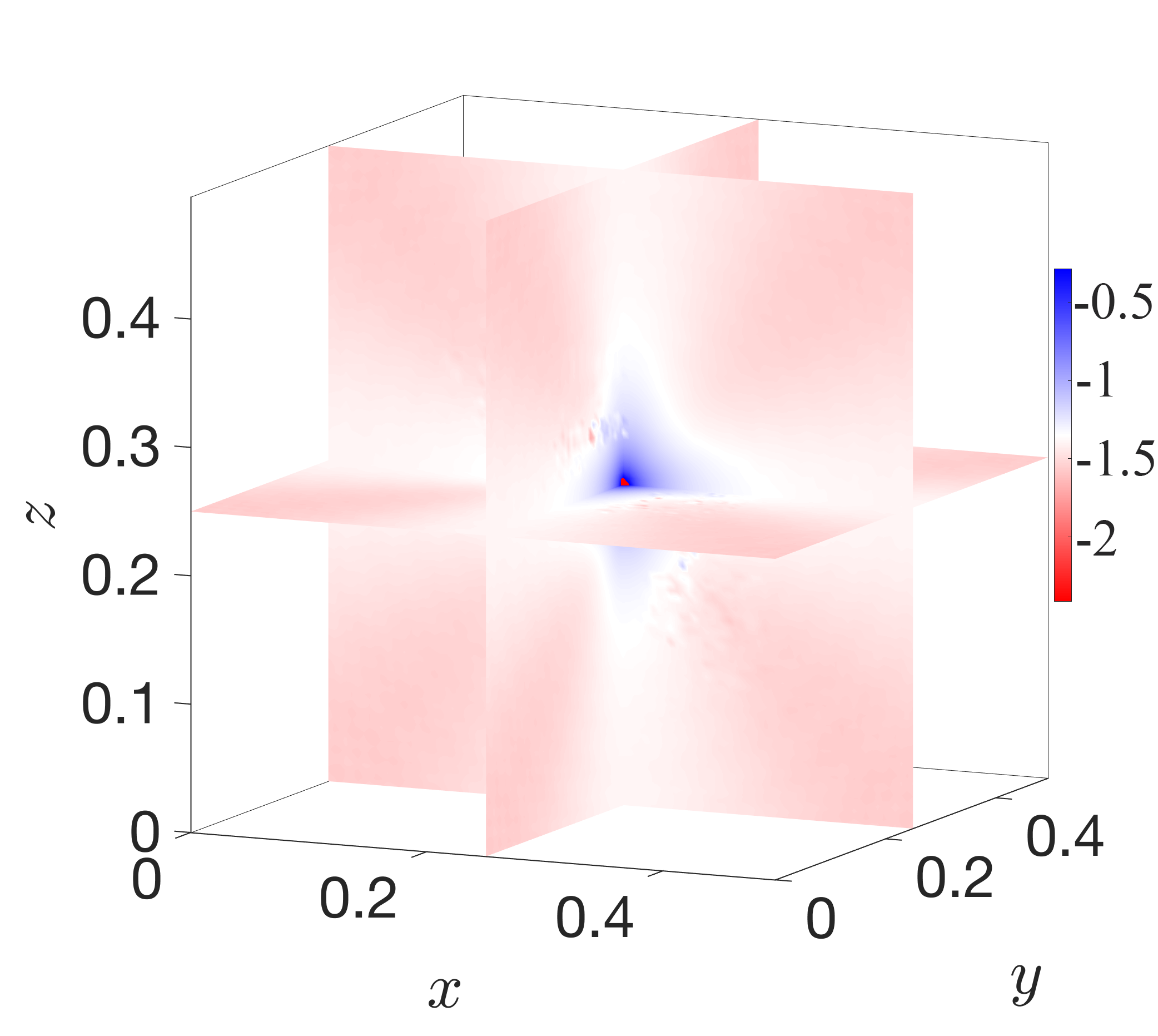}
	\end{subfigure}	
	\begin{subfigure}[t]{.37\textwidth}
		\centering
		\includegraphics[width=\linewidth]{\tfpath/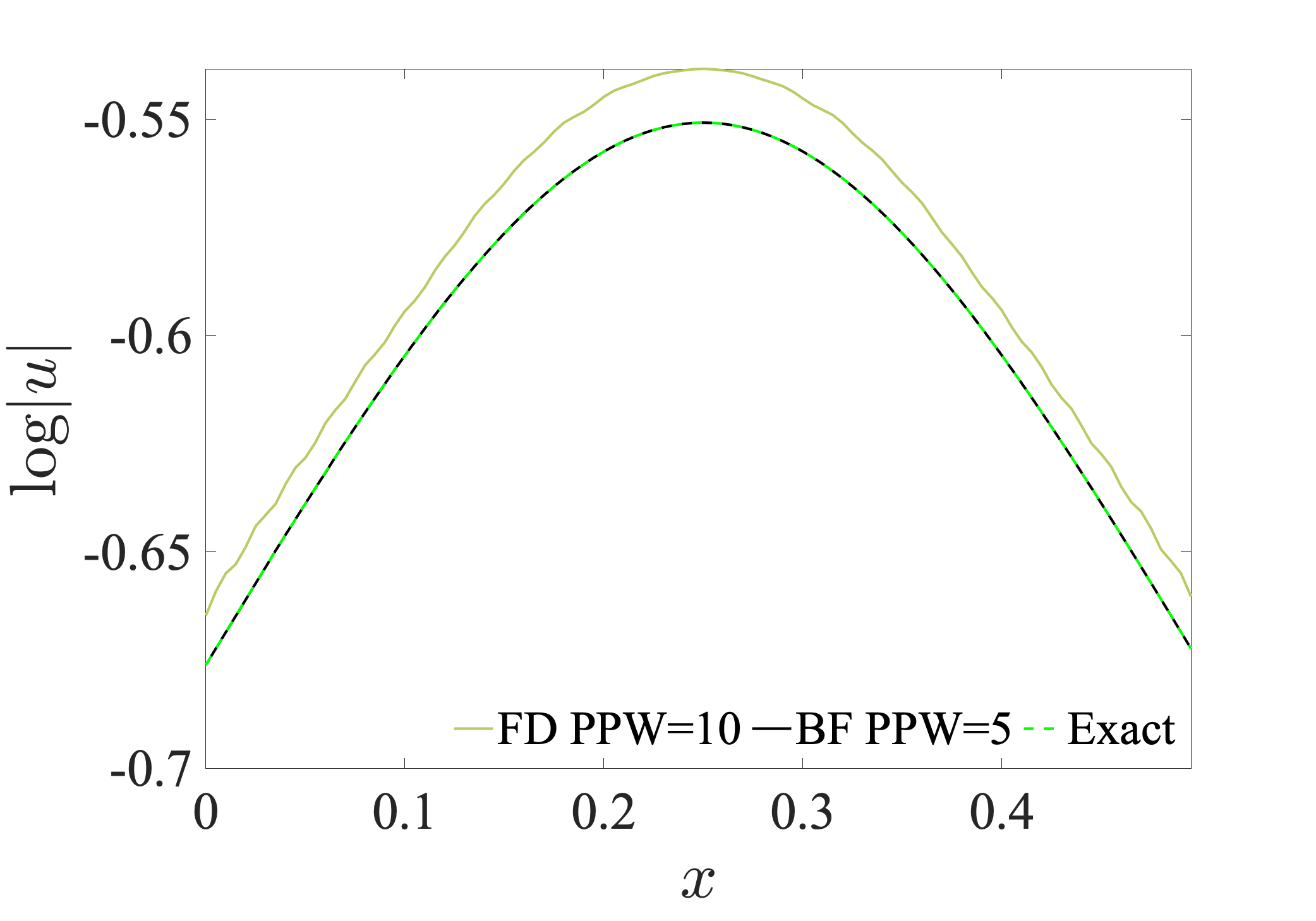}
	\end{subfigure}	
	\begin{subfigure}[t]{.29\textwidth}
		\centering
		\includegraphics[width=\linewidth]{\tfpath/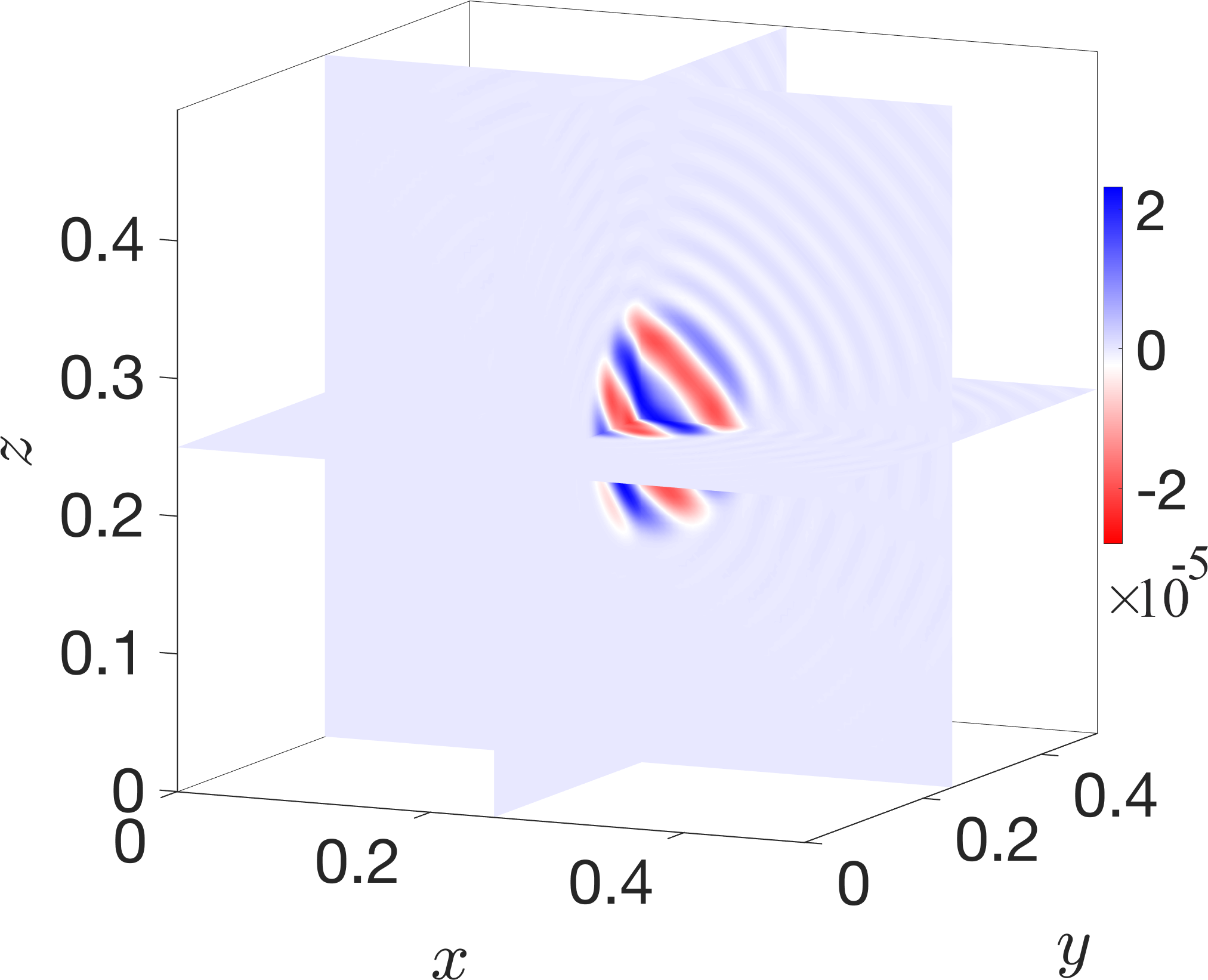}
	\end{subfigure}
	\begin{subfigure}[t]{.29\textwidth}
		\centering
		\includegraphics[width=\linewidth]{\tfpath/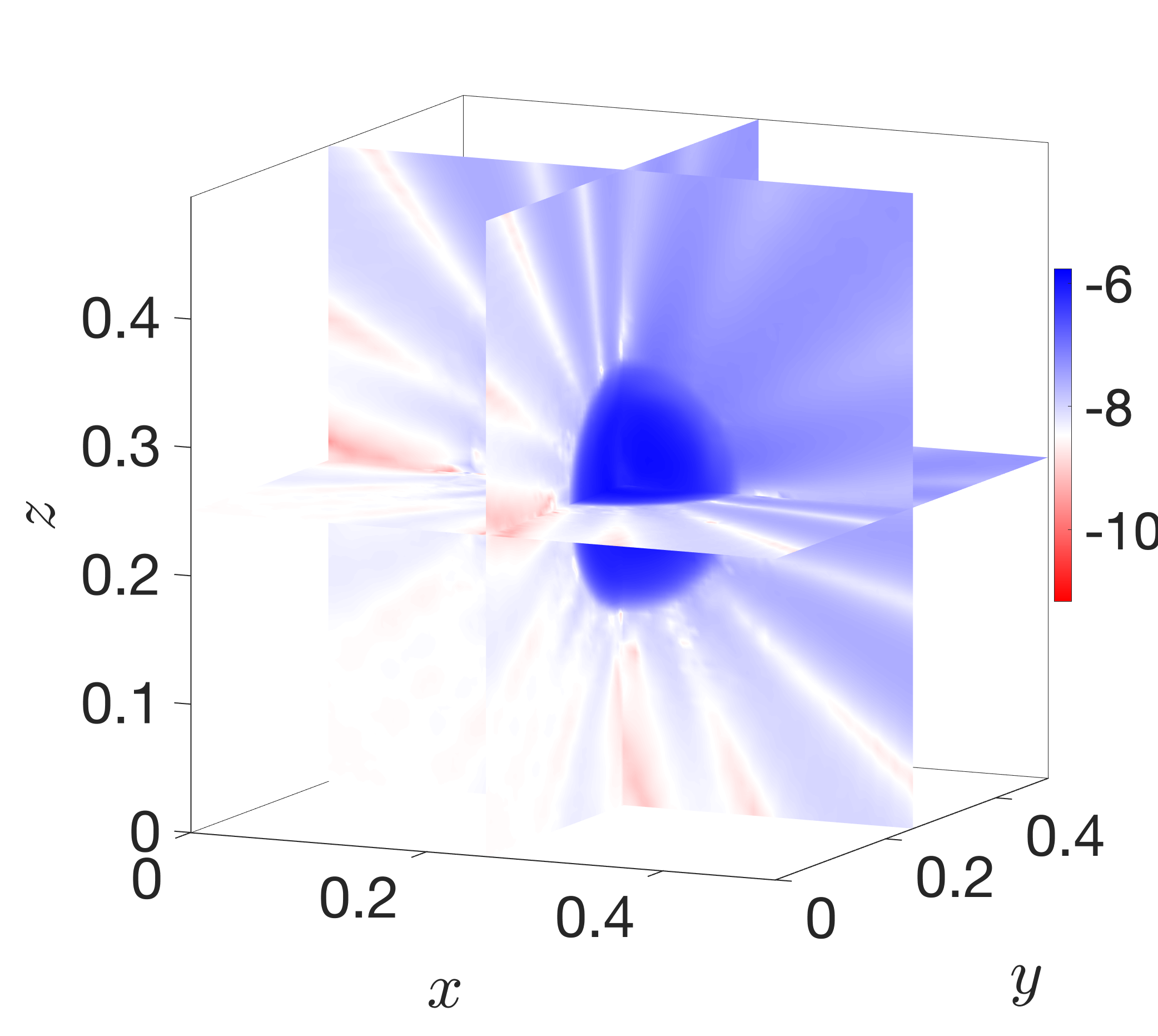}
	\end{subfigure}	
	\begin{subfigure}[t]{.37\textwidth}
		\centering
		\includegraphics[width=\linewidth]{\tfpath/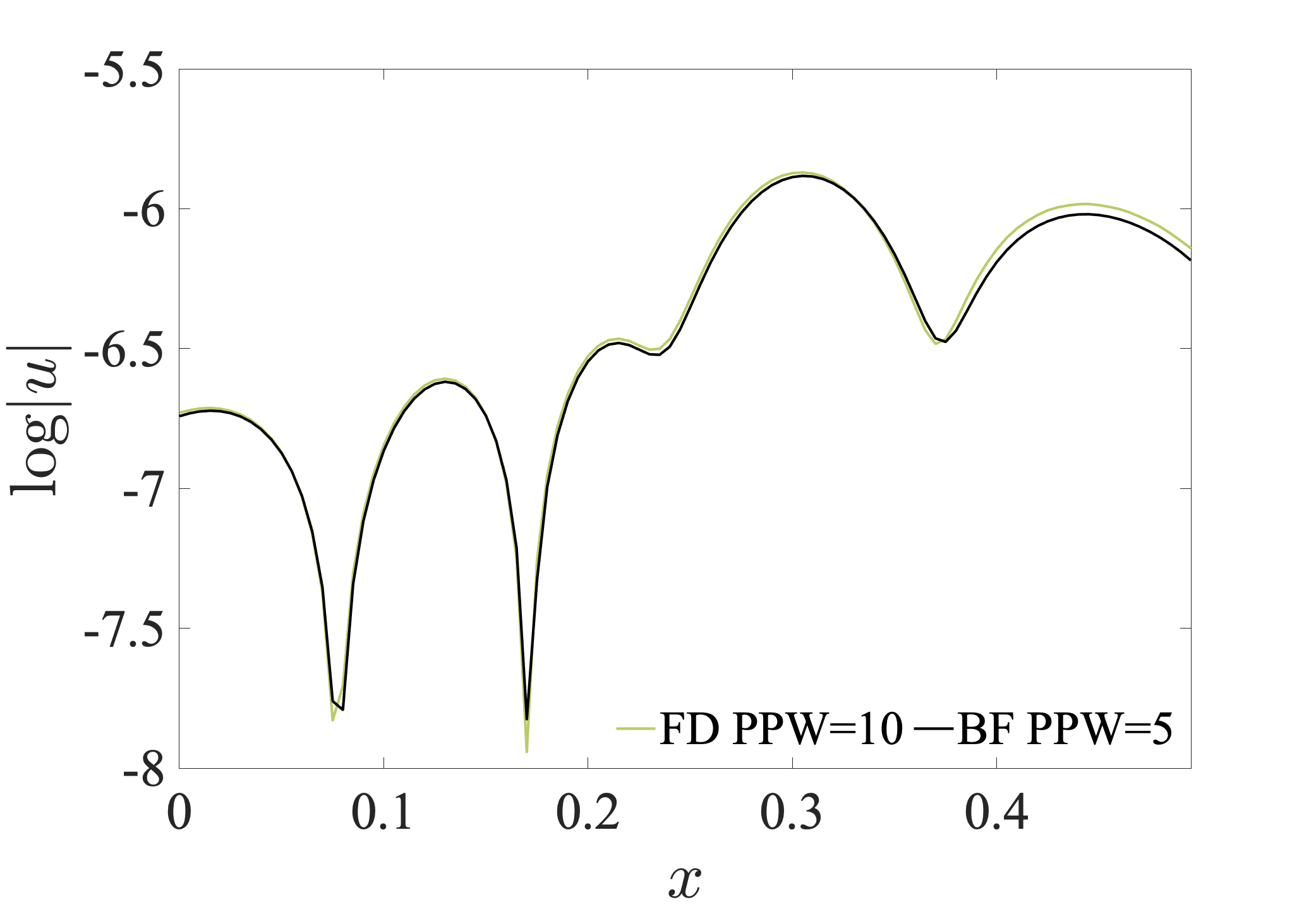}
	\end{subfigure}	
	
	\vspace{-5pt}
	\caption{Constant media in $d=3$. $\omega=40\pi$ (20 wavelengths each direction). Left column: the field $\mathrm{Re}(u_{\rm hb})$ (in linear scale) computed by the proposed scheme. Middle column: difference $|u_{\rm hb}-u_{\rm fd}|$ (in log scale) between the fields computed by the proposed scheme (PPW=5) and FDFD (PPW=10). Right column: the field $|u_{\rm hb}|,|u_{\rm fd}|,|u_{true}|$ (in log scale) drawn along the line $y=0.5-10h$  and $z=0.5-10h$ with $h$ corresponding to PPW=5. Row 1: point source. Row 2: Gaussian packet source.}
	\label{fig:ex1_f20_3d}
\end{figure}

\begin{figure}[!htp]
	\centering
	\vspace{-7.5pt}
	\begin{subfigure}[t]{.29\textwidth}
		\centering
		\includegraphics[width=\linewidth]{\tfpath/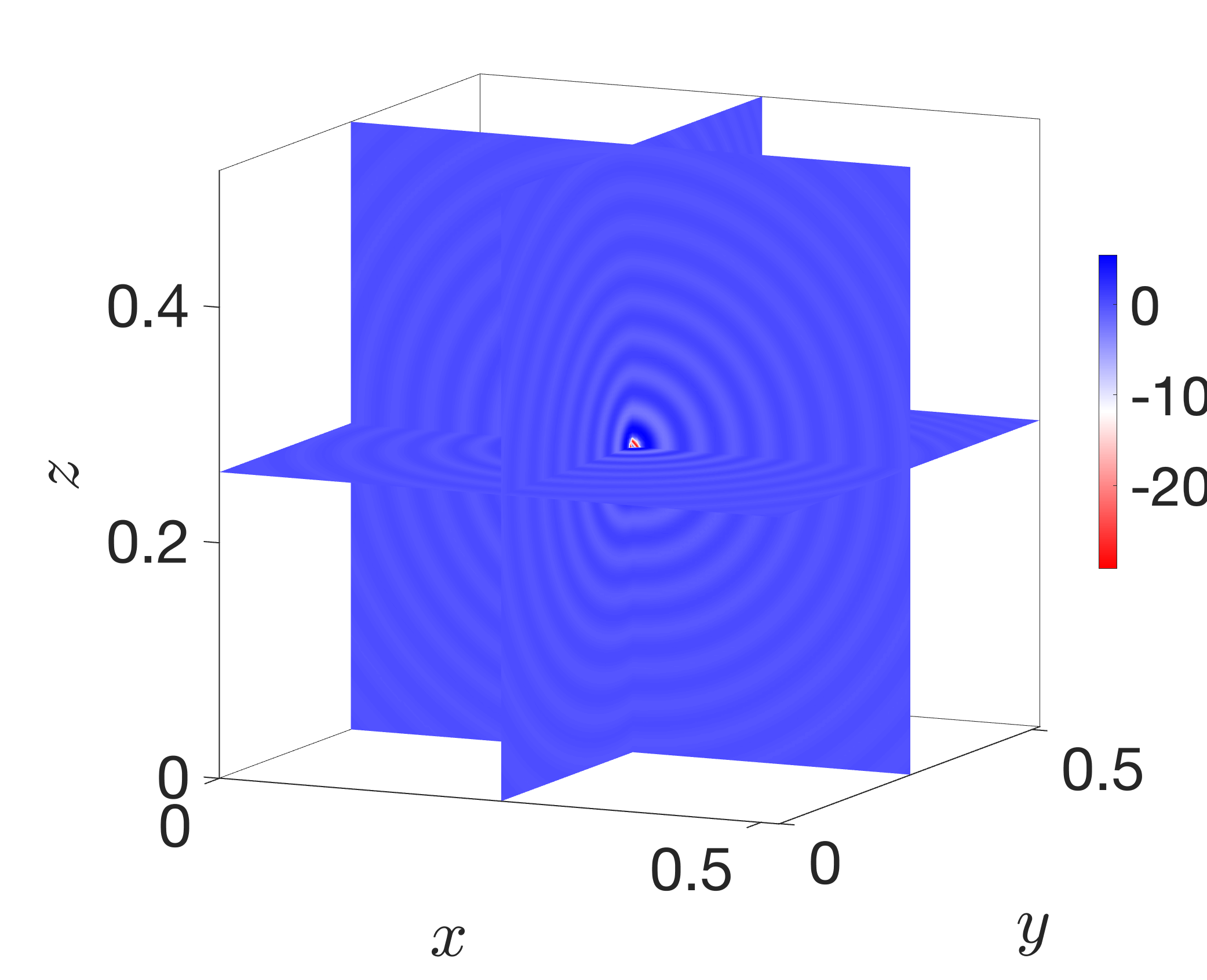}
	\end{subfigure}
	\begin{subfigure}[t]{.29\textwidth}
		\centering
		\includegraphics[width=\linewidth]{\tfpath/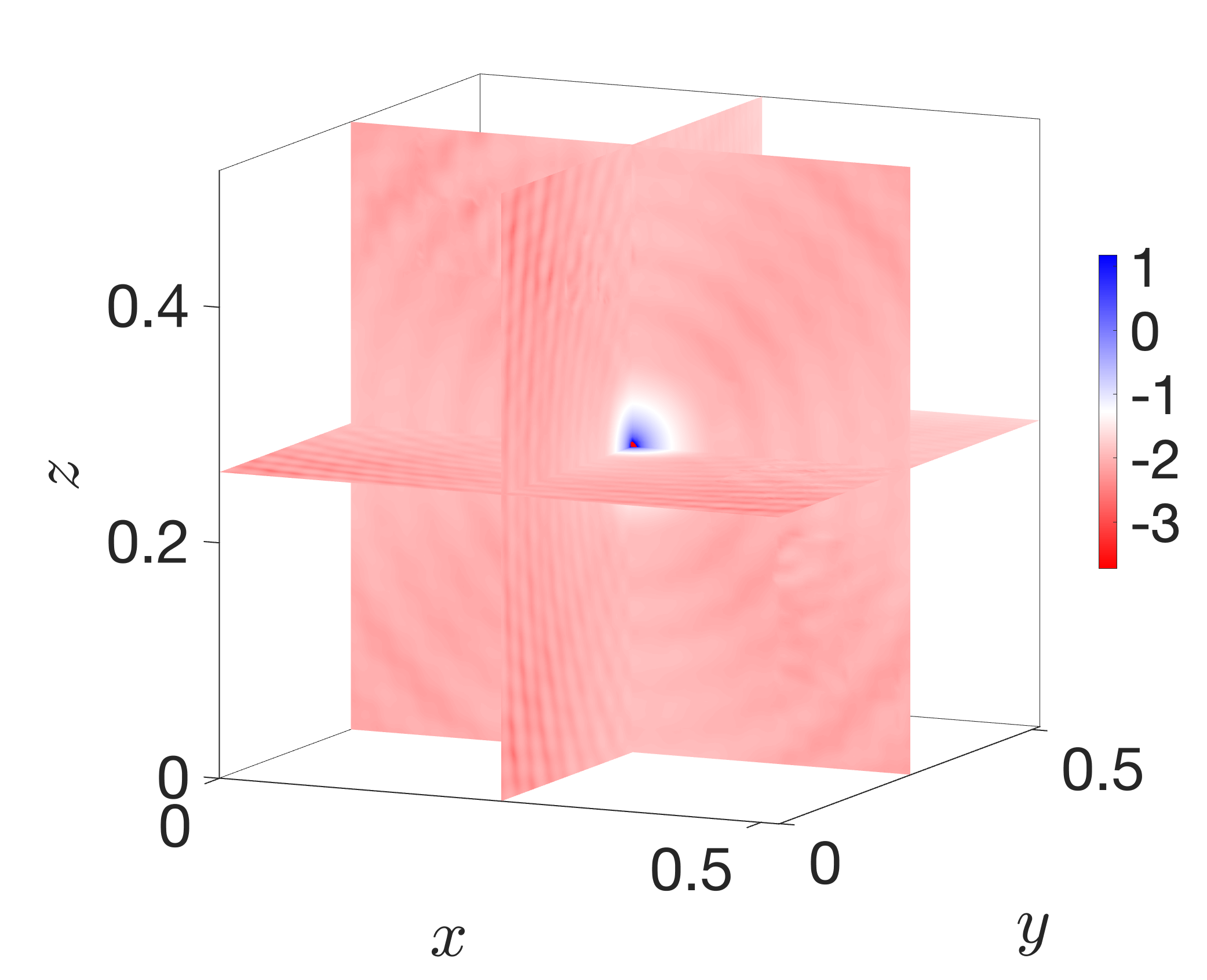}
	\end{subfigure}	
	\begin{subfigure}[t]{.37\textwidth}
		\centering
		\includegraphics[width=\linewidth]{\tfpath/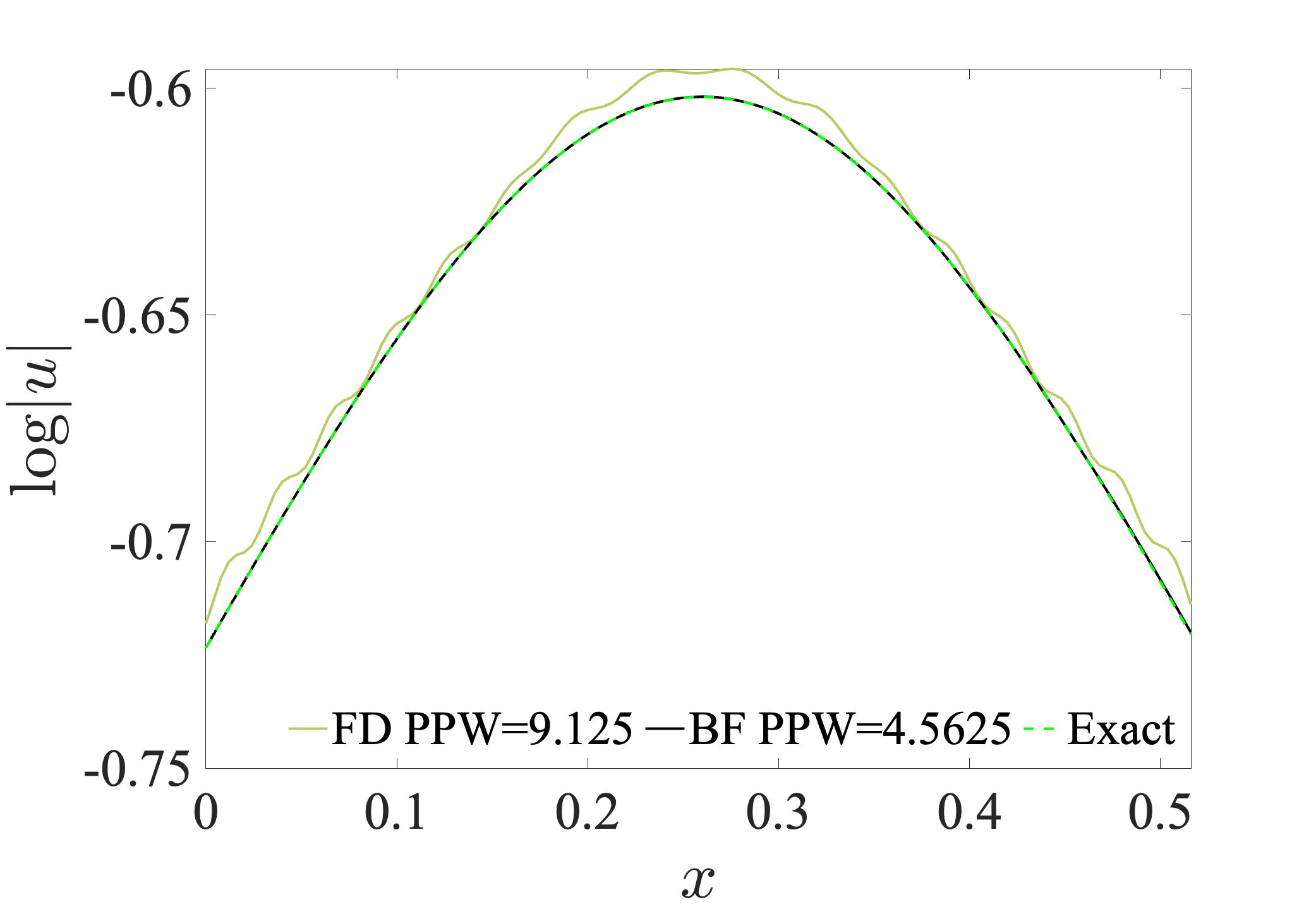}
	\end{subfigure}	
	\begin{subfigure}[t]{.29\textwidth}
		\centering
		\includegraphics[width=\linewidth]{\tfpath/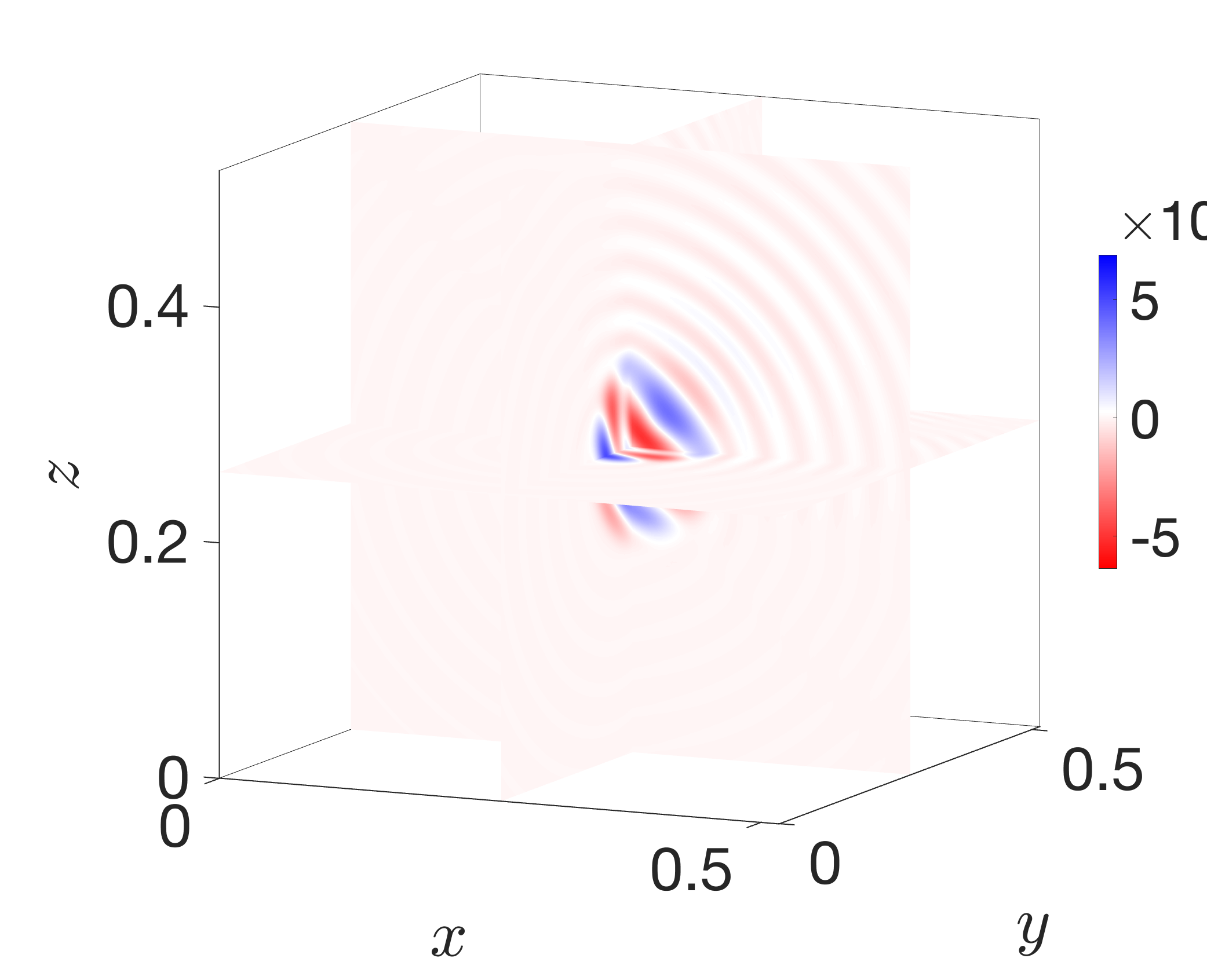}
	\end{subfigure}
	\begin{subfigure}[t]{.29\textwidth}
		\centering
		\includegraphics[width=\linewidth]{\tfpath/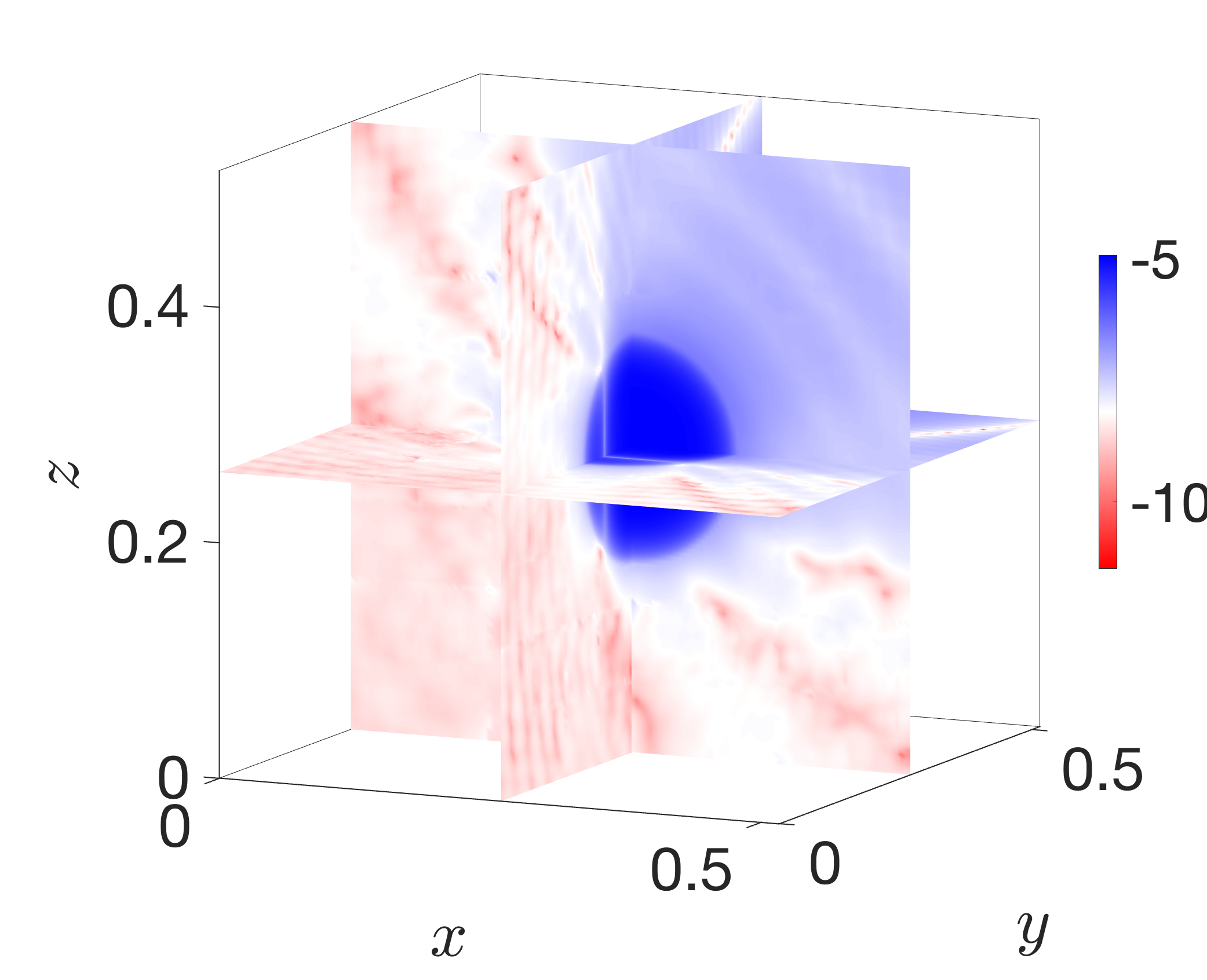}
	\end{subfigure}	
	\begin{subfigure}[t]{.37\textwidth}
		\centering
		\includegraphics[width=\linewidth]{\tfpath/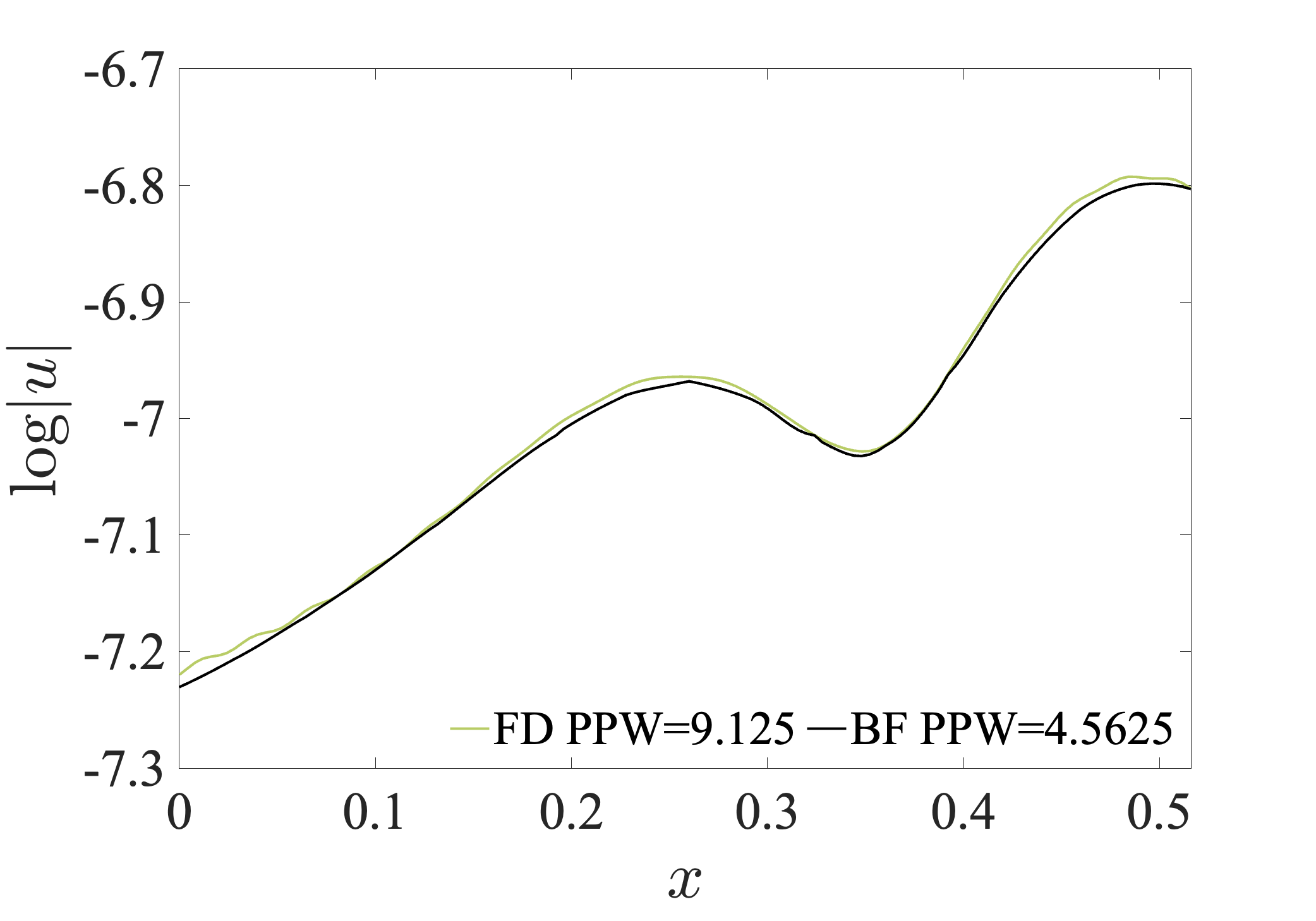}
	\end{subfigure}
	\begin{subfigure}[t]{.29\textwidth}
	\centering
	\includegraphics[width=\linewidth]{\tfpath/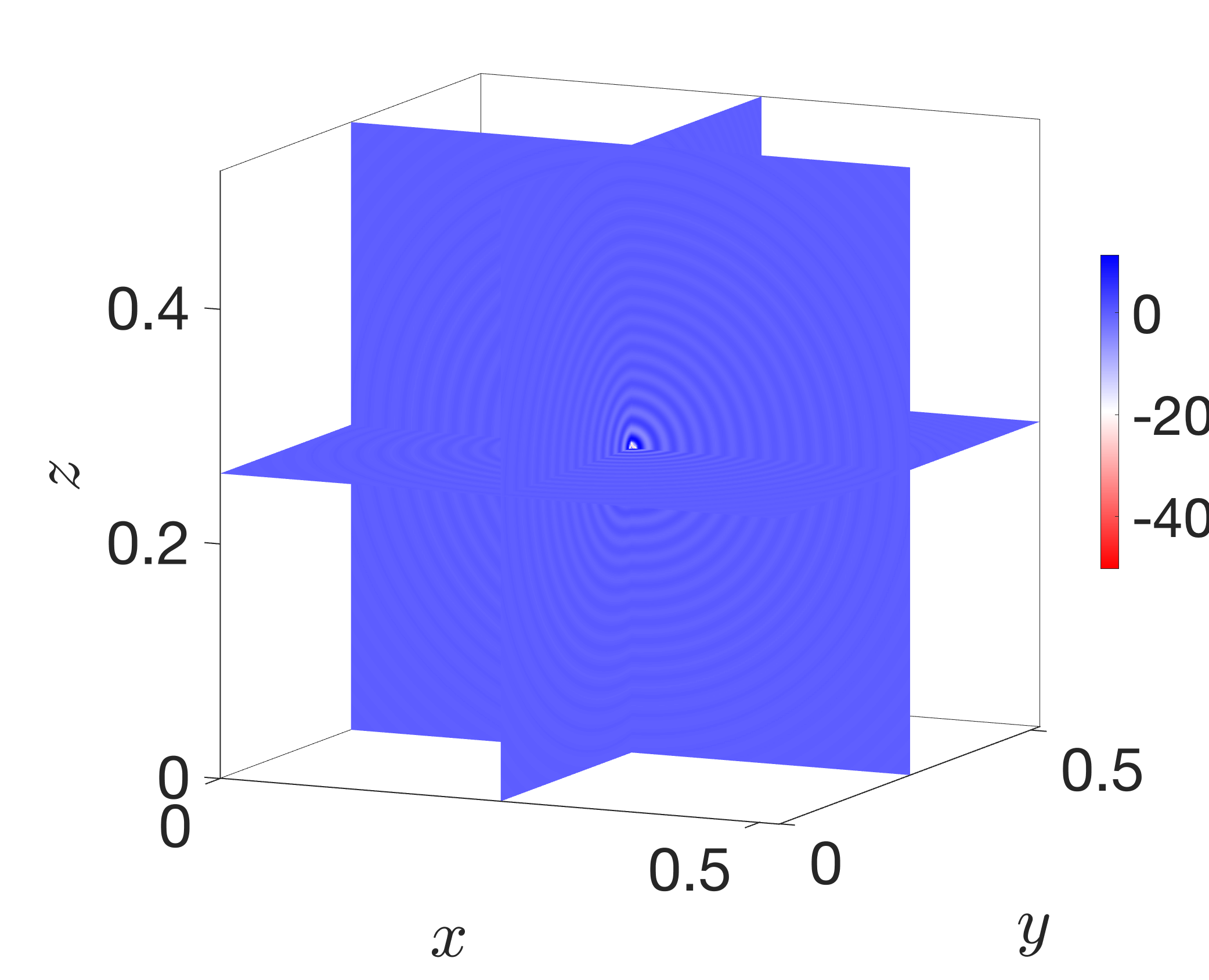}
\end{subfigure}
\begin{subfigure}[t]{.29\textwidth}
	\centering
	\includegraphics[width=\linewidth]{\tfpath/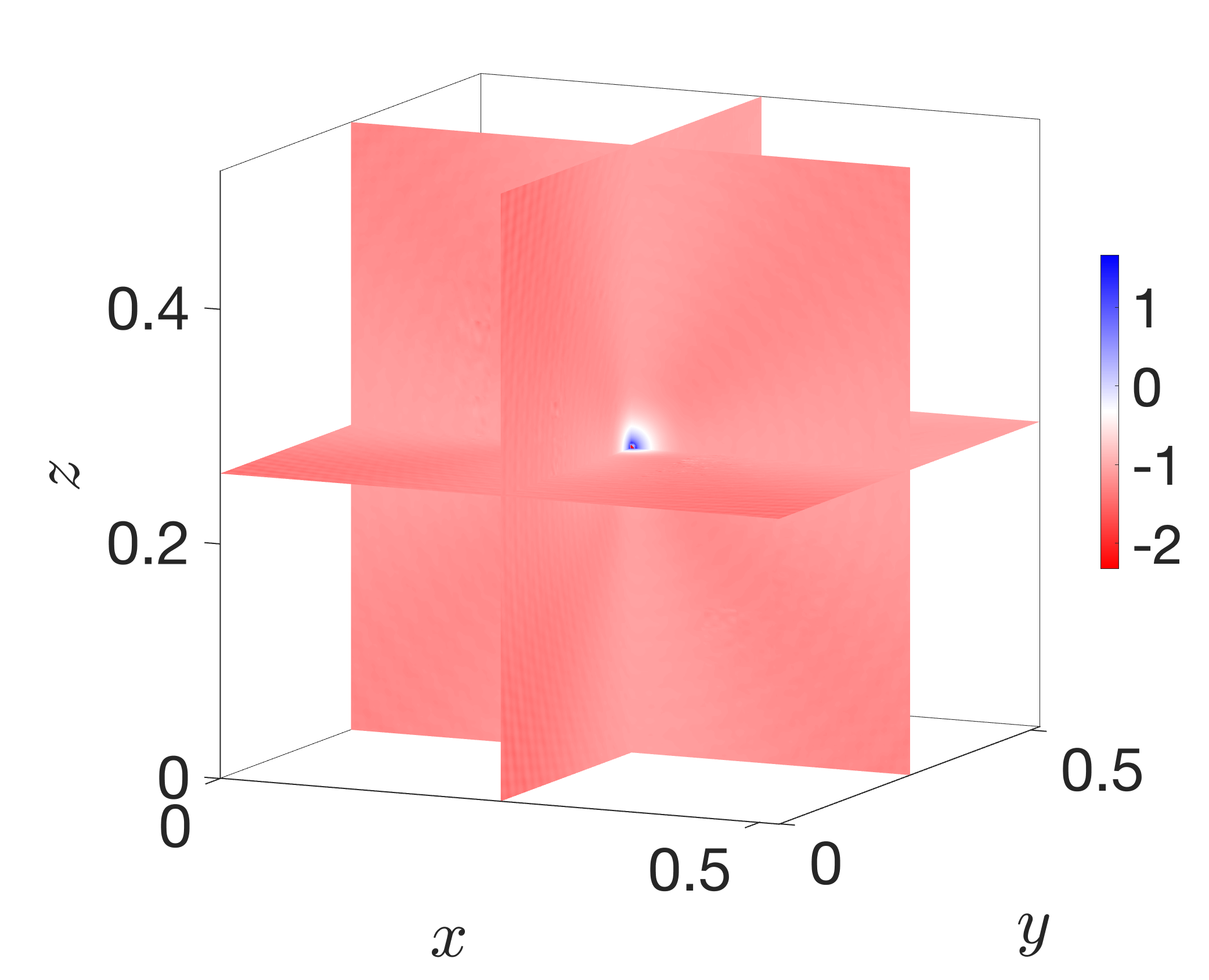}
\end{subfigure}	
\begin{subfigure}[t]{.37\textwidth}
	\centering
	\includegraphics[width=\linewidth]{\tfpath/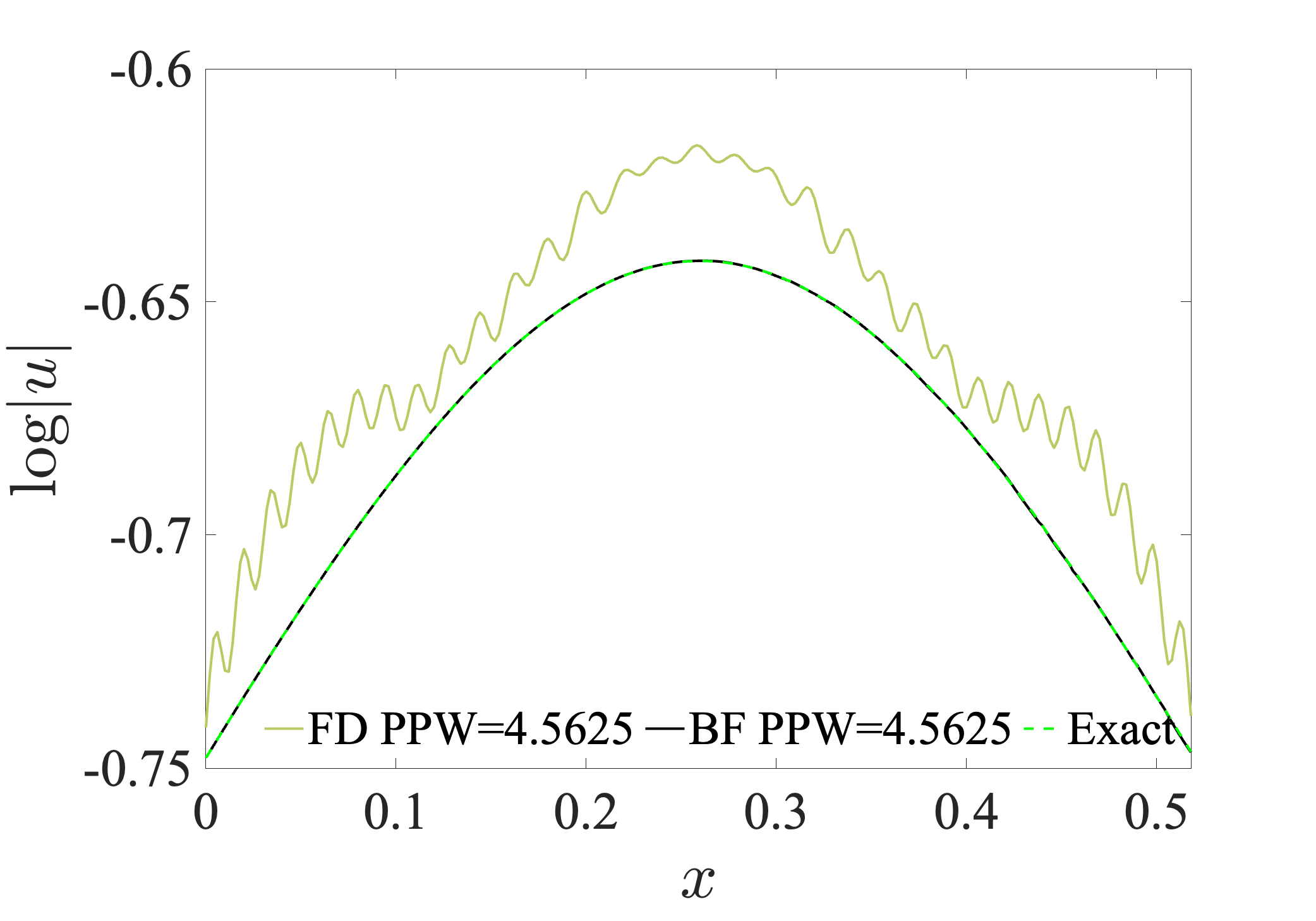}
\end{subfigure}		
	\vspace{-5pt}
	\caption{Constant-gradient media in $d=3$. Left column: the field $\mathrm{Re}(u_{\rm hb})$ (in linear scale) computed by the proposed scheme. Middle column: difference $|u_{\rm hb}-u_{\rm fd}|$ (in log scale) between the fields computed by the proposed scheme and FDFD. Right column: the field $|u_{\rm hb}|,|u_{\rm fd}|,|u_{true}|$ (in log scale) drawn along the line $y=10h$ and $z=10h$ with $h$ corresponding to PPW=4.56. Row 1: $\omega=32\pi$ (27 wavelengths each direction) with point source. Row 2: $\omega=32\pi$ (27 wavelengths each direction) with Gaussian packet source. \ylrevnew{Row 3: $\omega=64\pi$ (54 wavelengths each direction) with point source.}}	
	\label{fig:ex2_f16_3d}
\end{figure}

\subsection{Convergence test}
Next, we validate the convergence of high-order Lax-Friedrichs WENO schemes and the overall error estimates \eqref{totalerror} and \eqref{totalerrorAdded} of the HB ansatz using the 3-D constant-gradient model with a point source excitation, where $d=3$. As mentioned in \cref{sec:acc_3d}, both the phase $\tau$ and the Green's function in such a medium have exact formulas. 

First, errors of the phase computed by the first-, third- and fifth-order Lax-Friedrichs WENO schemes with varying $h_0$ are shown in \cref{fig:convergence3d} (left), which behave as $O(h_0^{\beta})$ with higher convergence order $\beta$ for higher order WENO schemes. When we apply the fifth-order Lax-Friedrichs WENO scheme to compute $\tau$, $v_0$ and $v_1$ are expected to have third-order and first-order accuracy, respectively, since $v_0$ and $v_1$ are computed from $\tau$. Because exact solutions of $v_0$ and $v_1$ are unknown, we will use the third- and first-order accuracy of $\tau$ as the reference accuracy for $v_0$ and $v_1$, respectively.   

Second, overall errors of wavefields (Green's functions) using the one-term ($N=0$) or two-term ($N=1$) HB ansatz \eqref{babich} with the HB ingredients computed by the fifth-order Lax-Friedrichs WENO scheme are shown in \cref{fig:convergence3d} (right), which behave as $O(\omega^{-1})$ and $O(\omega^{-2})$ as estimated by \eqref{totalerror} and \eqref{totalerrorAdded}, respectively, for the one-term and two-term expansions. 

When $N=1$, the HB coefficient $v_1$ only has first-order accuracy $O(h_0)$ which is dominant over the accuracy of $\tau$ and $v_1$, and thus the overall error $E_{\rm total}$ in \eqref{totalerrorAdded} reduces to 
\begin{equation}
\label{totalerrorN1}
   E_{\rm total}= O\left(({1}/{\omega})^{2}\right) + \ylrev{O(h_0^3) + O(\omega h_0^5) + O\left(\frac{h_0}{\omega}\right)},
\end{equation}
where the butterfly compression is not used and thus its error does not appear in the above. When \ylrev{the $O(({1}/{\omega})^{2})$ term is larger than the other terms combined}, the \ylrev{first error term} dominates so that we can observe the second-order asymptotic convergence in $1/{\omega}$ clearly; however, once $\omega$ is so large that the \ylrev{sum of three terms, $O(h_0^3)+O(\omega h_0^5) + O(\frac{h_0}{\omega})$,} dominates, the overall error nearly saturates since the \ylrev{$O(\omega h_0^5)$} term \ylrev{increases slowly} as $\omega$ does. Such convergence behavior can be seen clearly in \cref{fig:convergence3d}. 

When $N=0$, $v_1$ disappears in the HB expansion and $v_0$ has third-order accuracy $O(h_0^3)$ which is dominant over that of $\tau$; hence, the overall error $E_{\rm total}$ in \eqref{totalerror} reduces to 
\begin{equation}
\label{totalerrorN0}
   E_{\rm total}= O({1}/{\omega}) + O(h_0^3)+\ylrev{O(\omega h_0^5)},
\end{equation}
where the butterfly compression is not used and thus its error does not appear in the above. When \ylrev{the $O({1}/{\omega})$ term is larger than the other terms combined}, then the \ylrev{first error term} dominates so that we can observe the first-order asymptotic convergence in $1/{\omega}$ clearly for a much broader band of $\omega$. Such convergence behavior can be seen clearly in \cref{fig:convergence3d}. 

\begin{figure}[!tp]
	\centering
	\vspace{-7.5pt}
	\begin{subfigure}[t]{.49\textwidth}
		\centering
		\includegraphics[width=\linewidth]{\tfpath/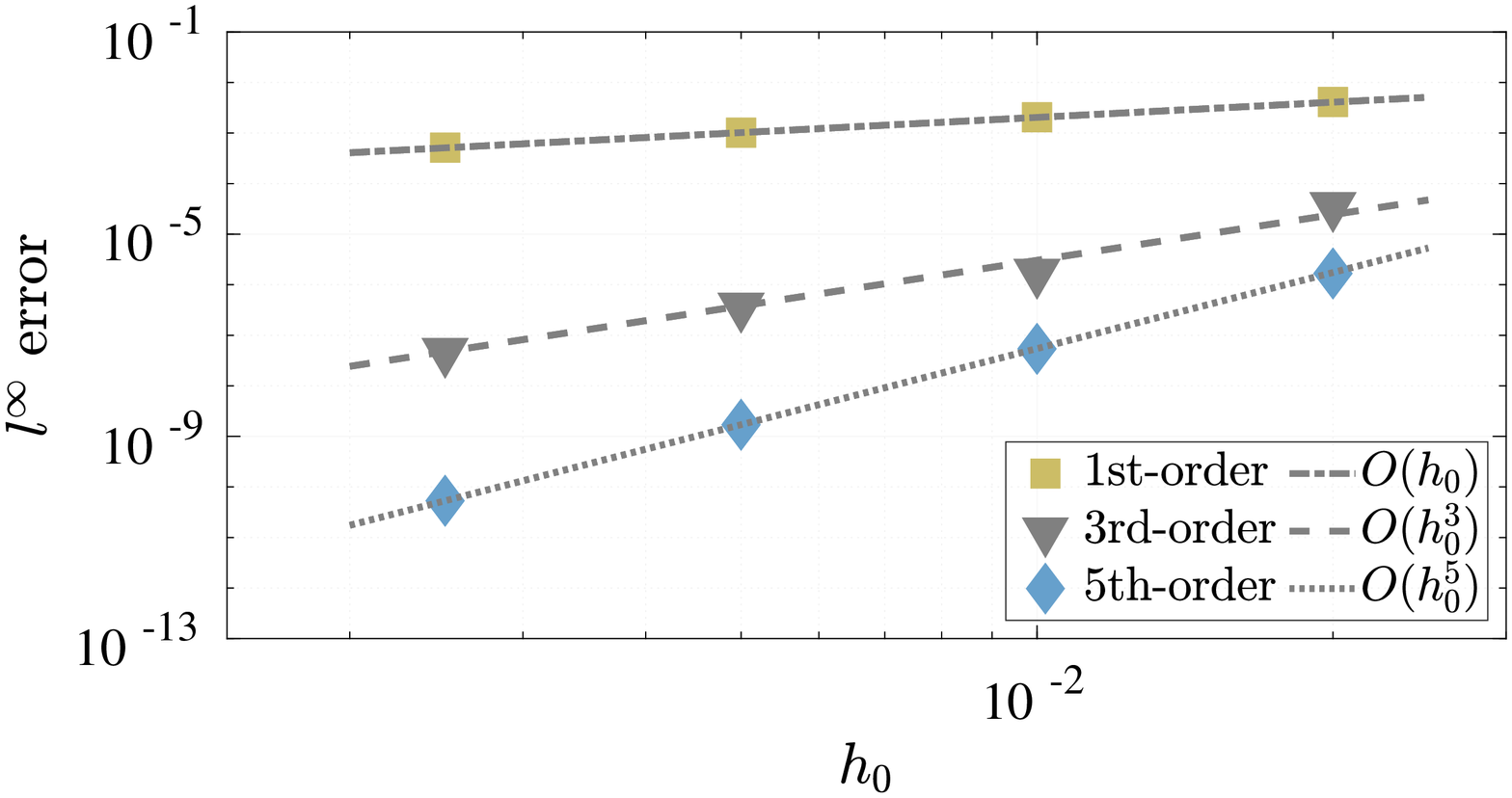}
	\end{subfigure}
	\begin{subfigure}[t]{.49\textwidth}
		\centering
		\includegraphics[width=\linewidth]{\tfpath/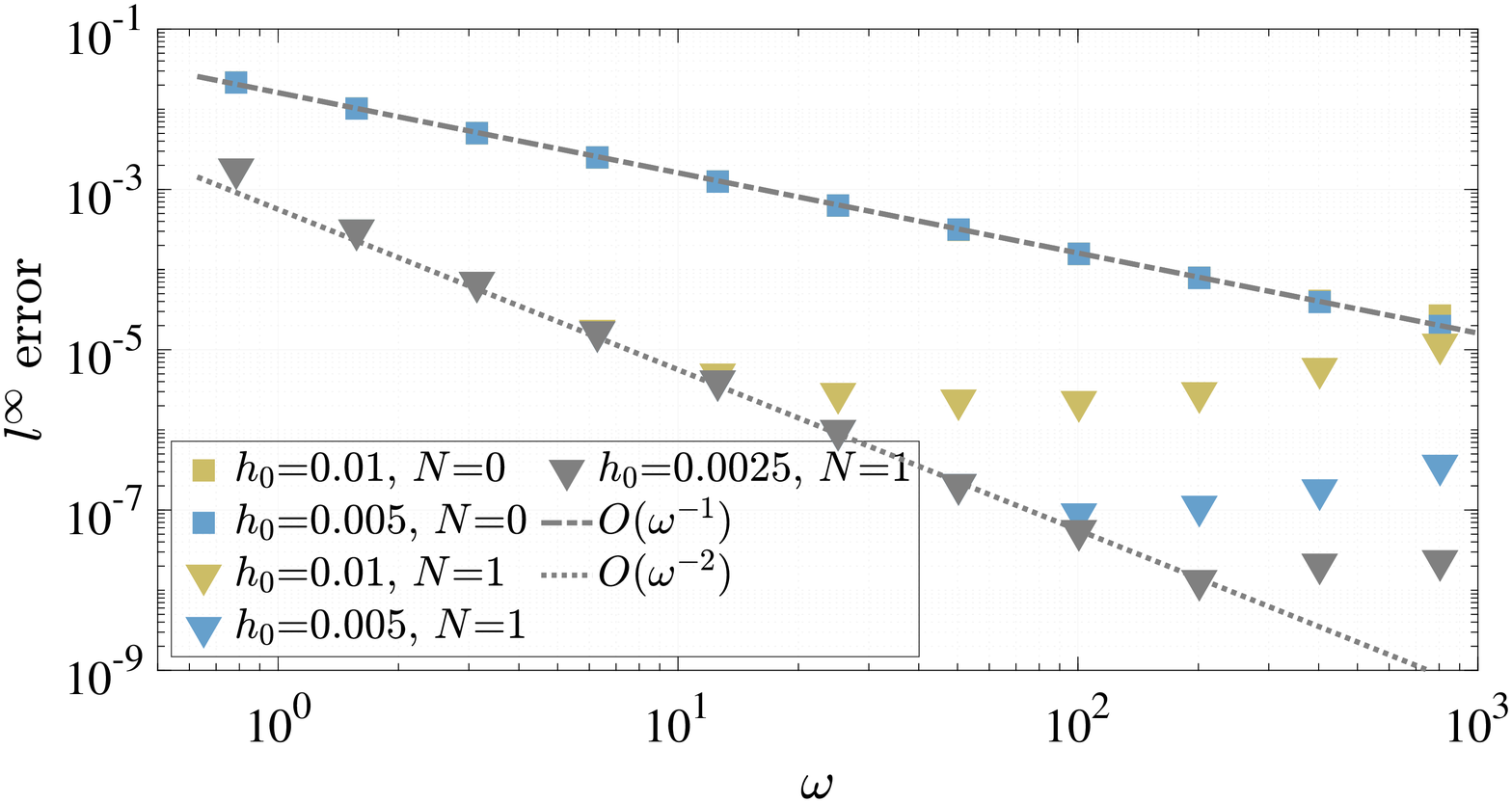}
	\end{subfigure}		
	\vspace{-5pt}
	\caption{Convergence test for a constant-gradient model in $d=3$ with a point source. Left:  Errors (w.r.t. the exact solution) of the phase function computed by the first-, third-, and fifth-order Lax-Friedrichs WENO schemes with varying $h_0$. Right: Errors (w.r.t. the exact solution; see \eqref{eq:inferror}) of the wavefield using the one-term ($N=0$) or two-term ($N=1$) HB ansatz \eqref{babich}, where the phase function is computed by the fifth-order Lax-Friedrichs WENO scheme.}
	\label{fig:convergence3d}
\end{figure}

\subsection{Complexity validation}
In this subsection, we validate the CPU and memory complexities of the proposed scheme claimed in subsections \ref{sec:v2v}, \ref{sec:s2v}, and \ref{sec:s2s} using the constant and constant-gradient medium for $d=2$ and $d=3$.   
\subsubsection{2D domains}
For the constant medium, we consider the computational domain $[0,2]^2$ with an open square inclusion of side length $0.8m$, as described in \cref{sec:acc_2d}. The domain and the inclusion are discretized with PPW $n_p=10$ and $n_p=500$, respectively. We vary the frequency and \ylrev{cell count} from $\omega=10\pi$ and $N_v=40,401$ to $\omega=320\pi$ and $N_v=40,972,801$, respectively. Note that $\omega=320\pi$ corresponds to $640$ wavelengths per direction. Each simulation uses 64 Cori Haswell nodes. The CPU time and memory requirement for computing $\mat{K}^{v2v}$, $\mat{K}^{s2v}$ and $\mat{K}^{s2s}$ (and its inverse) are plotted in \cref{fig:cc2d} (top). Note that the value of $N_s$ for each sample of $N_v$ is not shown here. For the memory requirement, all discretized integral operators scale at most as $O(N_v\log^2N_v)$ as expected. For the computational time, $\mat{K}^{v2v}$ and $\mat{K}^{s2s}$ scale at most as $O(N_v\log^2N_v)$, and $\mat{K}^{s2v}$ scales as $O(N_v^{7/6})$, which can be further improved by additional matrix partitioning or analytical interpolation-based compression. We note that the time for $\mat{K}^{s2v}$ is about 8 times faster than $\mat{K}^{v2v}$ when $\omega=320\pi$. 

For the constant-gradient medium, we consider the computational domain $[0,1]^2$ with an open square inclusion of side length $0.8$, as described in \cref{sec:acc_2d}. The domain and the inclusion are discretized with PPW $n_p=10$ and $n_p=500$, respectively. We vary the frequency and \ylrev{cell count} from $\omega=12.5\pi$ and $N_v=63,001$ to $\omega=200\pi$ and $N_v=16,008,001$, respectively. Note that $\omega=200\pi$ corresponds to $400$ wavelengths per direction. Each simulation uses 64 Cori Haswell nodes. The CPU time and memory requirement for computing $\mat{K}^{v2v}$, $\mat{K}^{s2v}$ and $\mat{K}^{s2s}$ (and its inverse) are plotted in \cref{fig:cc2d} (bottom). Just like the constant medium, the memory requirement and CPU time mostly scale as at most $O(N_v\log^2N_v)$. 

\subsubsection{3D domains}	
For the constant medium, we consider the computational domain $[0,0.5]^3$ as described in \cref{sec:acc_3d}. The domain is discretized with PPW $n_p=5$. We vary the frequency and \ylrev{cell count} from $\omega=10\pi$ and $N_v=26^3=17576$ to $\omega=80\pi$ and $N_v=201^3=8,120,601$, respectively. Note that $\omega=80\pi$ corresponds to $40$ wavelengths per direction. Each simulation uses 64 Cori Haswell nodes. The CPU time and memory requirement for computing $\mat{K}^{v2v}$ are plotted in \cref{fig:cc3d} (top). As estimated in \cref{sec:v2v}, the memory requirement scales as $O(N_v\log^2N_v)$, and the CPU time scales as $O(N_v^{4/3})$. From \cref{sec:v2v}, the matrix entry computation requires $O(N_v\log^2N_v)$ and the IDs require $O(N_v^{4/3})$. Both theoretical curves are plotted in \cref{fig:cc3d} (top). This sub-optimal CPU complexity for the IDs can be further improved via analytical interpolation schemes.  

For the constant-gradient medium, we consider the domain $[0,0.52]^3$ as described in \cref{sec:acc_3d}. The domain is discretized with PPW $n_p=4.56$. We vary the frequency and \ylrev{cell count} from $\omega=8\pi$ and $N_v=36^3=46656$ to $\omega=64\pi$ and $N_v=262^3=17,984,728$, respectively. Note that $\omega=64\pi$ corresponds to $54$ wavelengths per direction. 
Each simulation uses 64 Cori Haswell nodes. The CPU time and memory requirement for computing $\mat{K}^{v2v}$ are plotted in \cref{fig:cc3d} (top). The conclusion is very similar to the case of constant media.

\begin{figure}[!htp]
	\centering
	\vspace{-7.5pt}
	\begin{subfigure}[t]{.49\textwidth}
		\centering
		\includegraphics[width=\linewidth]{\fpath/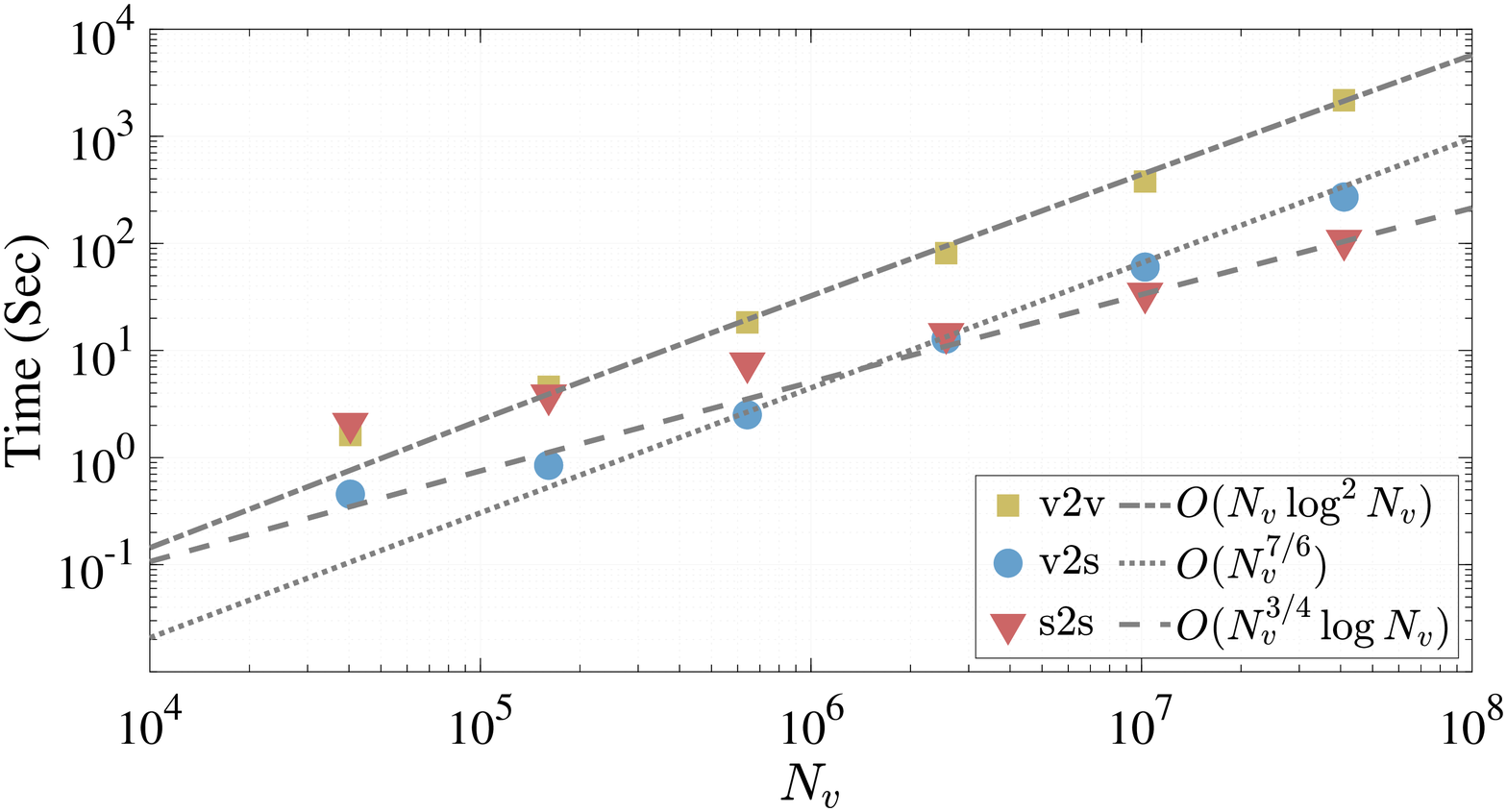}
	\end{subfigure}
	\begin{subfigure}[t]{.49\textwidth}
		\centering
		\includegraphics[width=\linewidth]{\fpath/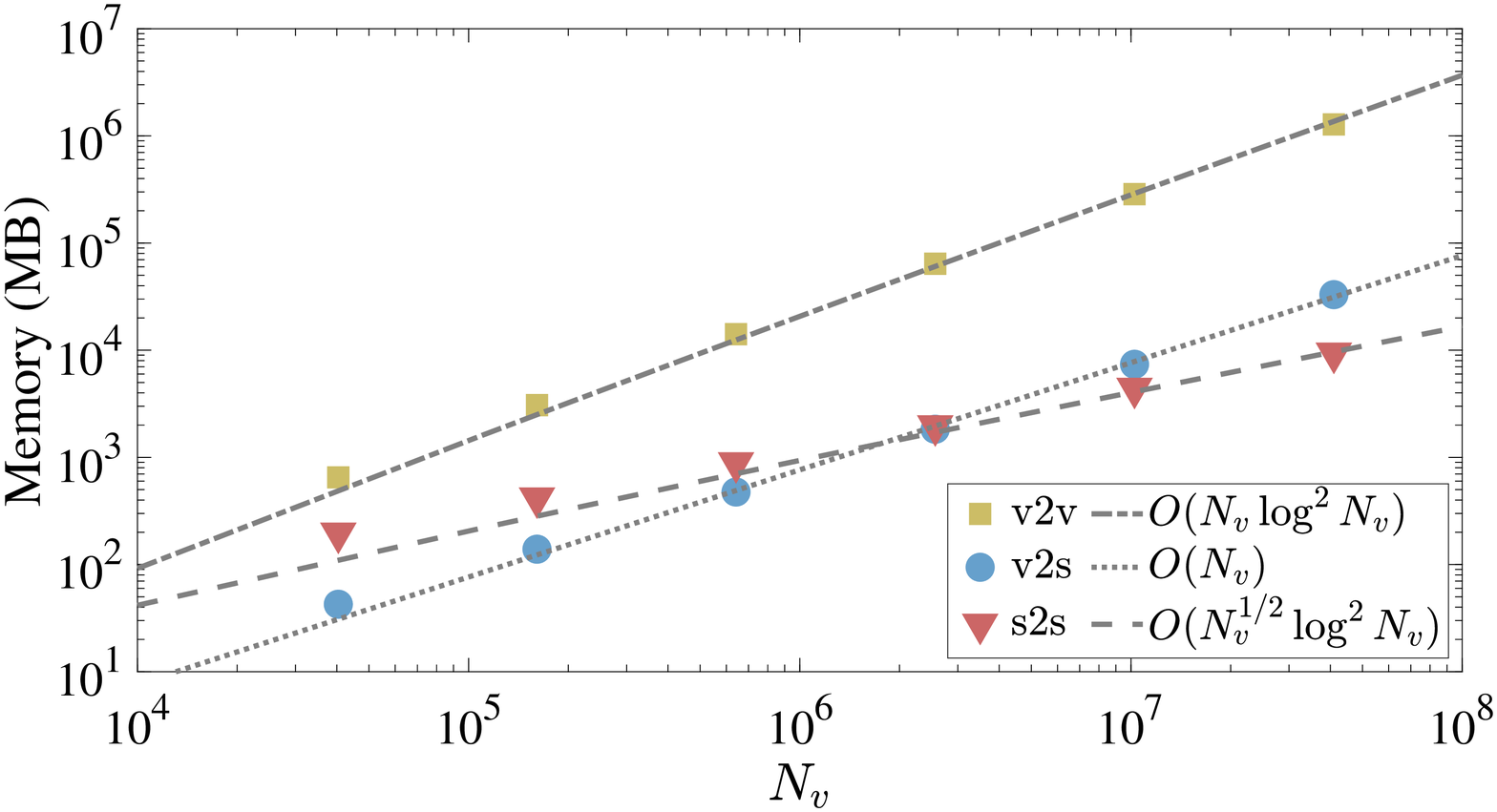}
	\end{subfigure}	
	\begin{subfigure}[t]{.49\textwidth}
	\centering
	\includegraphics[width=\linewidth]{\fpath/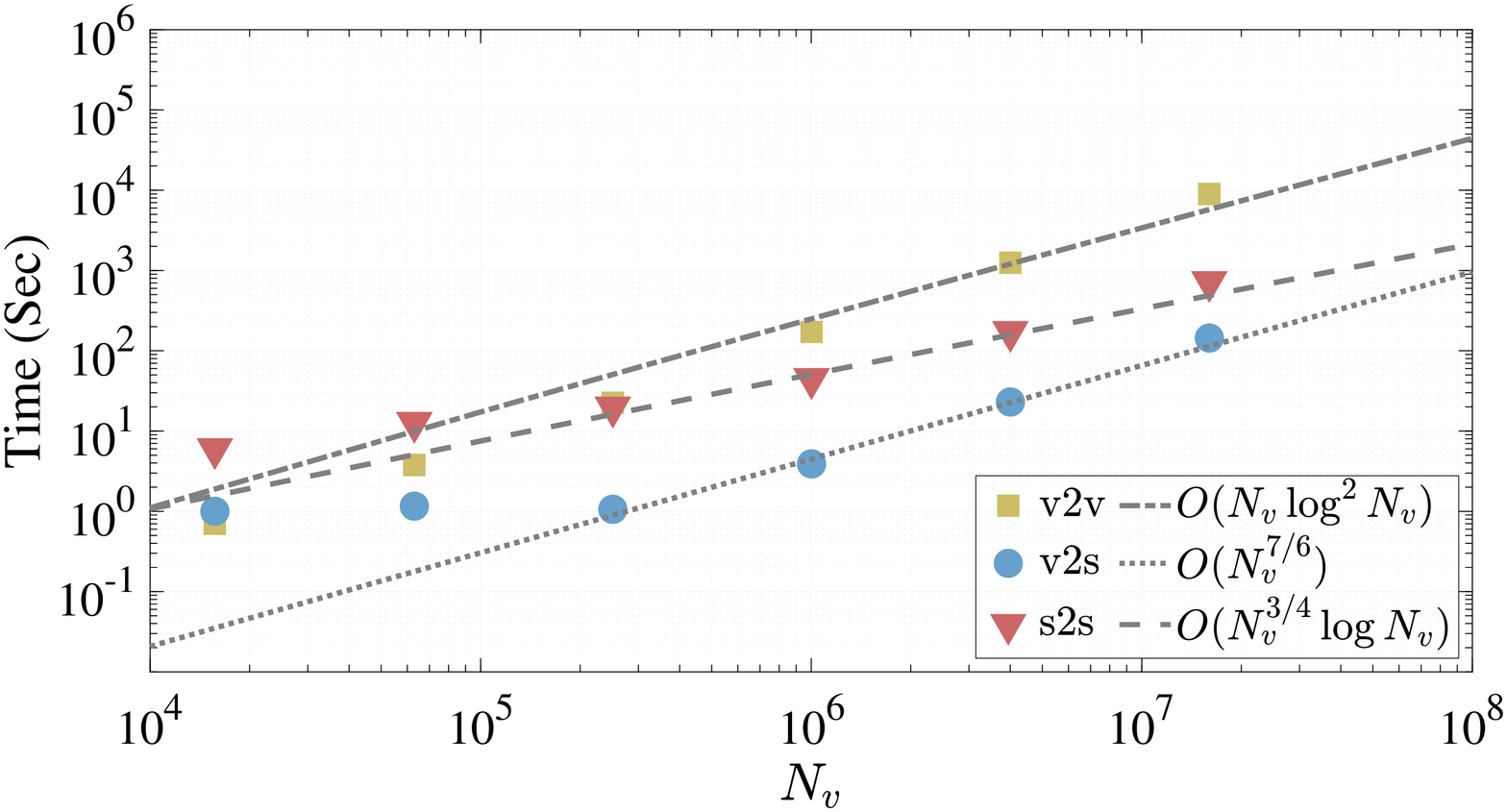}
\end{subfigure}
\begin{subfigure}[t]{.49\textwidth}
	\centering
	\includegraphics[width=\linewidth]{\fpath/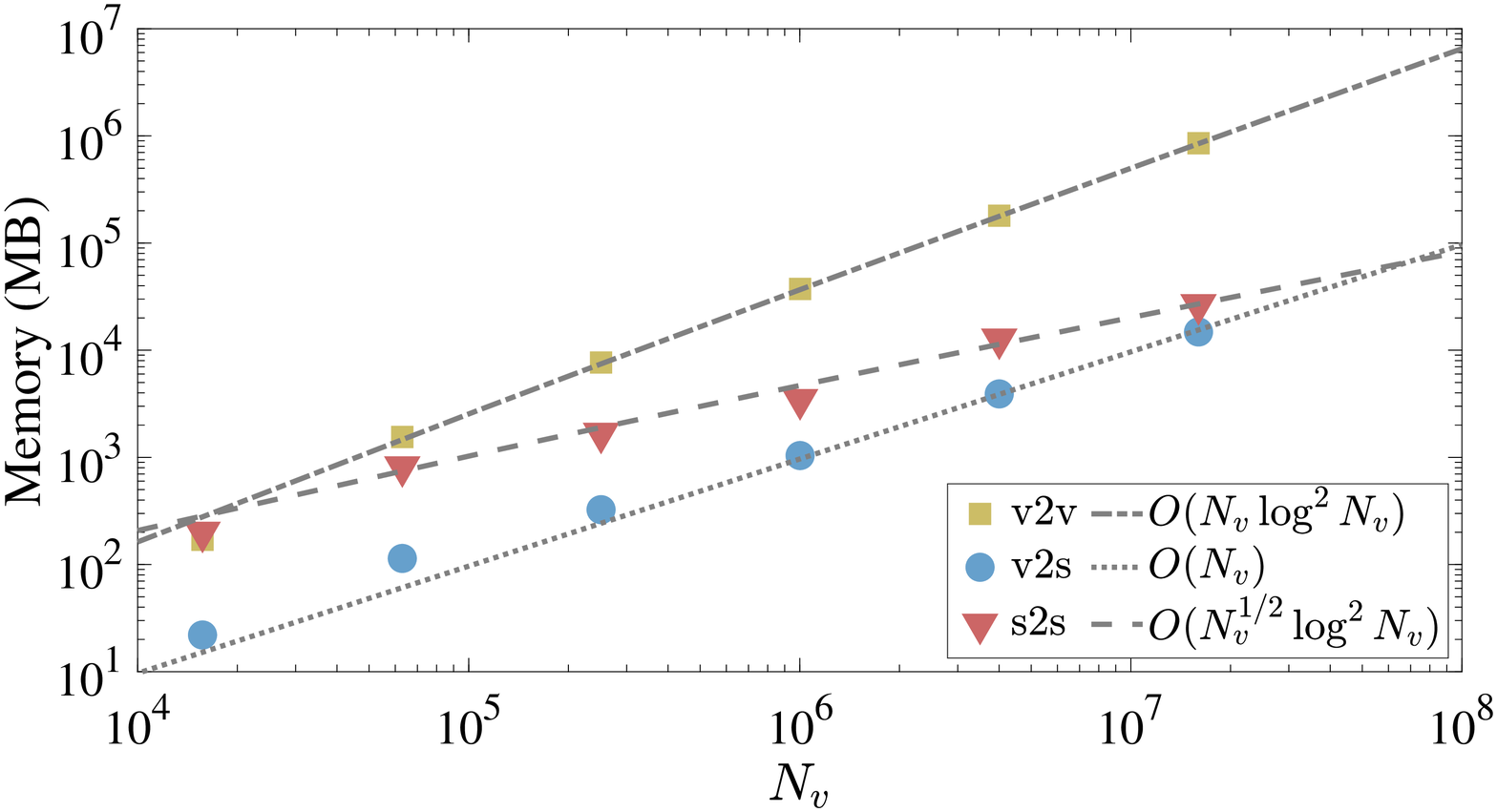}
\end{subfigure}	
	
	\vspace{-5pt}
	\caption{CPU time (left) and storage units (right) for computing $\mat{K}^{v2v}$, $\mat{K}^{s2v}$ and $\mat{K}^{s2s}$ (and its inverse) with problem dimension $d=2$. (Top): constant media with an open square inclusion. The largest data point corresponds to $640$ wavelengths per direction. \ylrev{(Bottom)}: constant-gradient media with an open square inclusion. The largest data point corresponds to $400$ wavelengths per direction.}
	\label{fig:cc2d}
   \end{figure}

\begin{figure}[!htp]
	\centering
	\vspace{-7.5pt}
	\begin{subfigure}[t]{.49\textwidth}
		\centering
		\includegraphics[width=\linewidth]{\tfpath/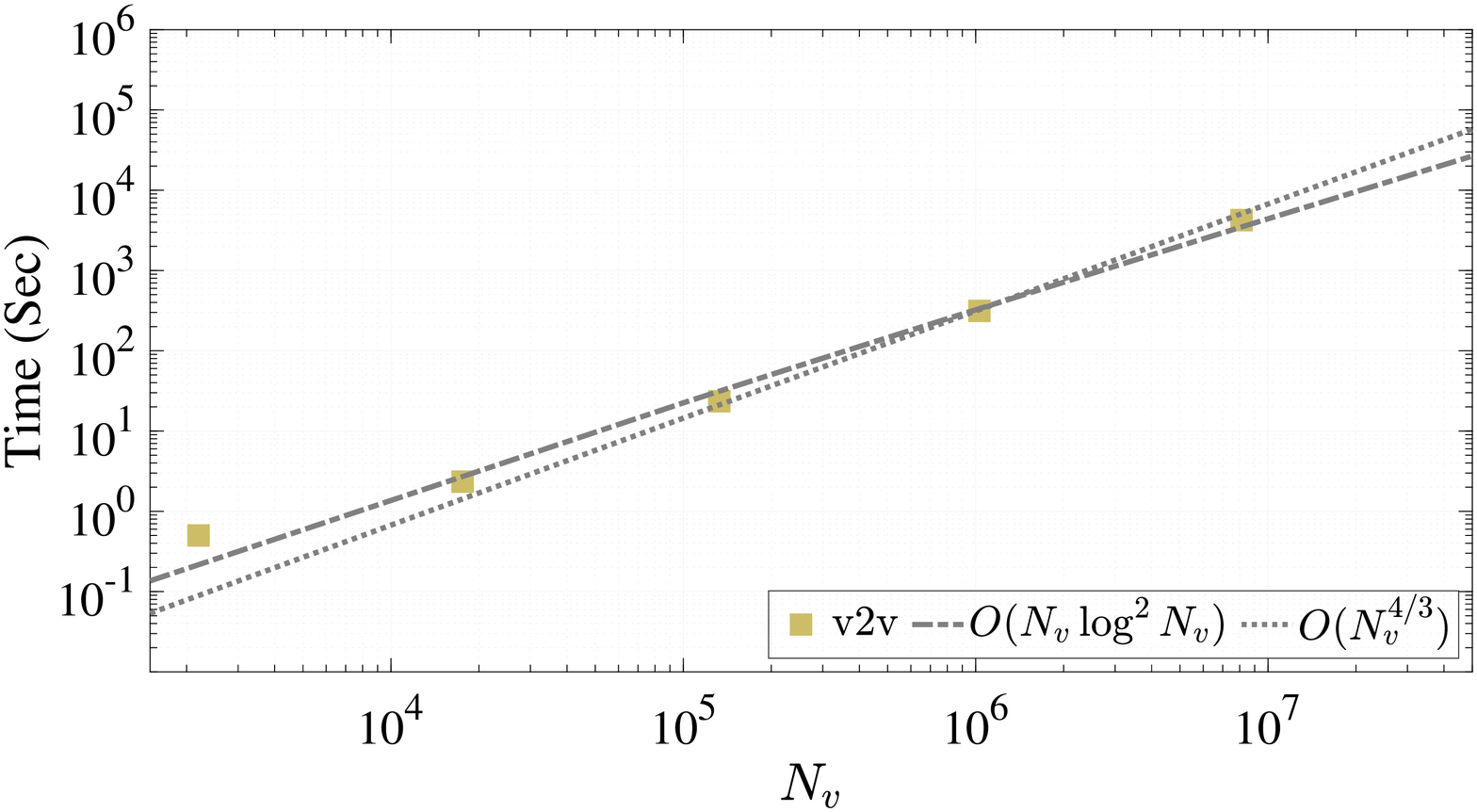}
	\end{subfigure}
	\begin{subfigure}[t]{.49\textwidth}
		\centering
		\includegraphics[width=\linewidth]{\tfpath/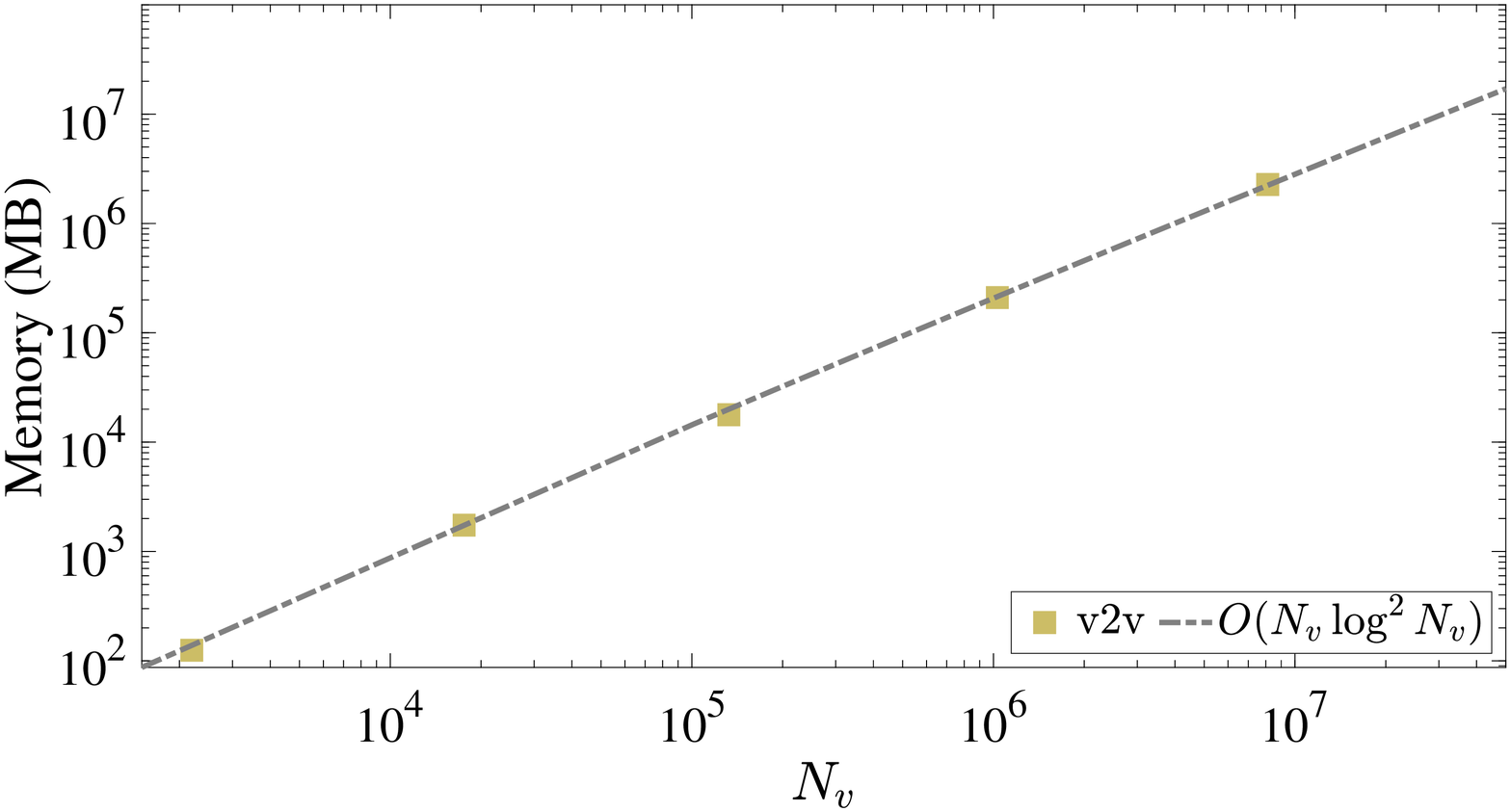}
	\end{subfigure}	
	\begin{subfigure}[t]{.49\textwidth}
	\centering
	\includegraphics[width=\linewidth]{\tfpath/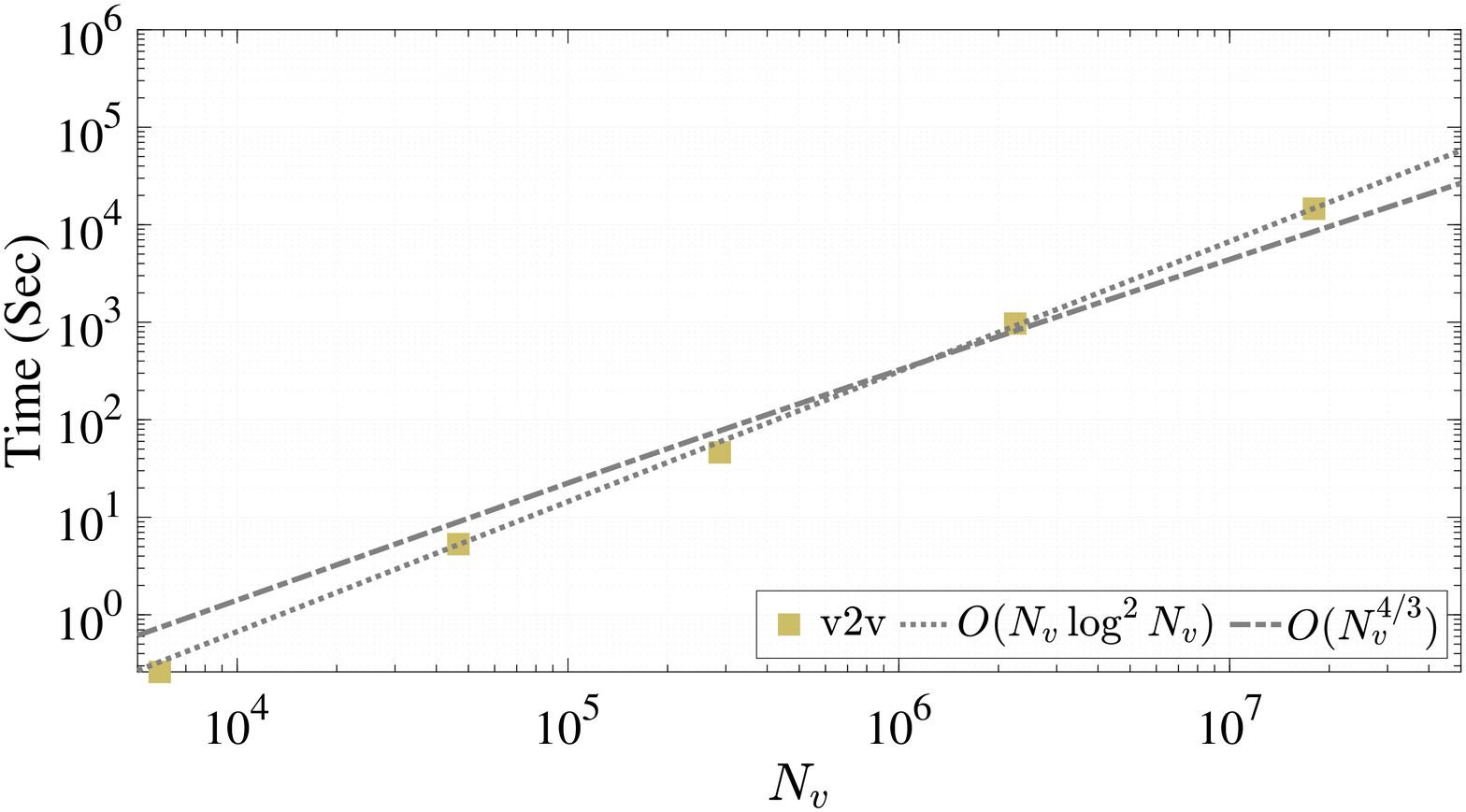}
\end{subfigure}
\begin{subfigure}[t]{.49\textwidth}
	\centering
	\includegraphics[width=\linewidth]{\tfpath/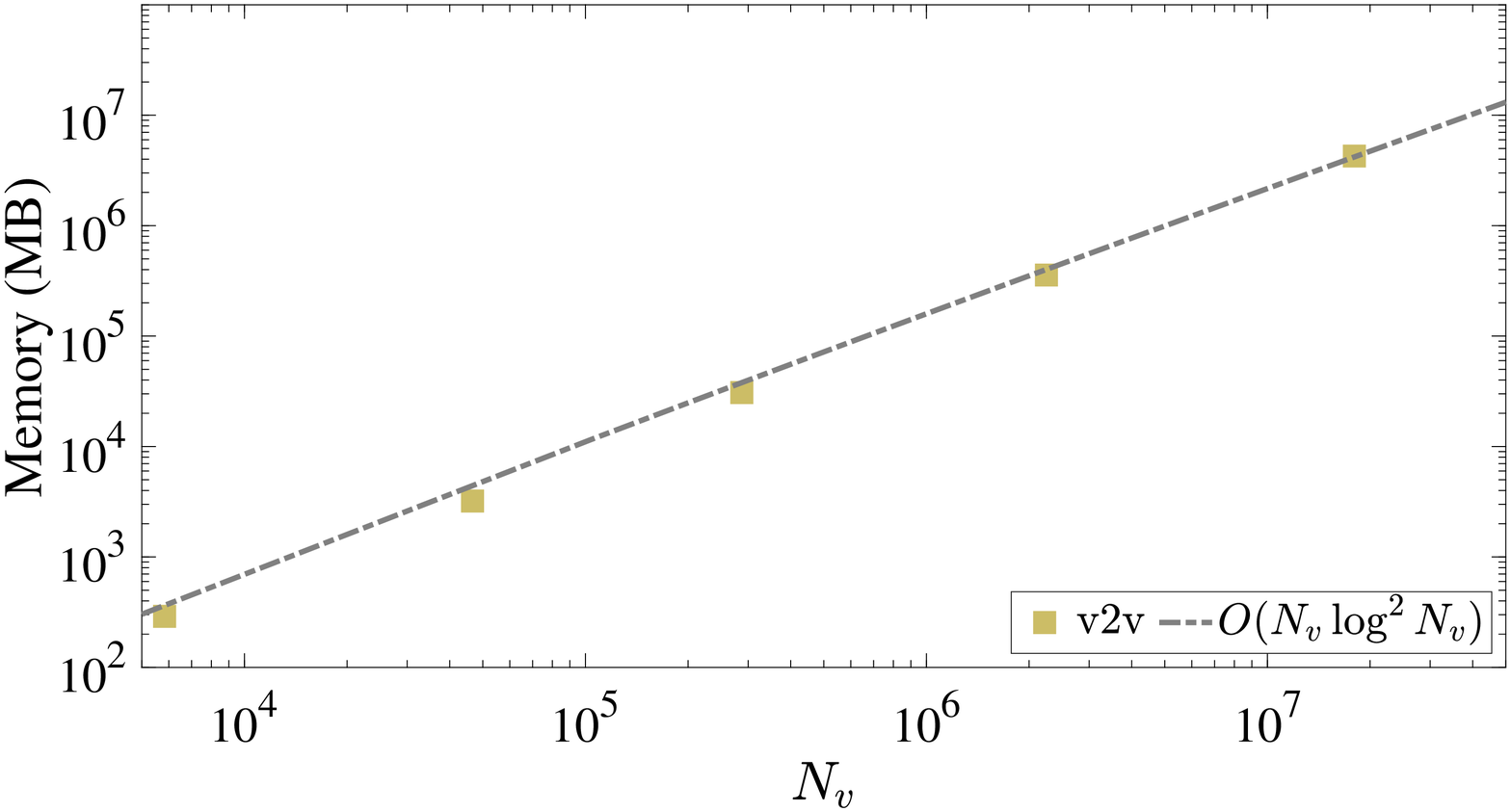}
\end{subfigure}	
	
	\vspace{-5pt}
	\caption{CPU time (left) and storage units (right) for computing $\mat{K}^{v2v}$ with problem dimension $d=3$. \ylrev{(Top)}: constant media. The largest data point corresponds to $40$ wavelengths per direction. \ylrev{(Bottom)}: constant-gradient media. The largest data point corresponds to $54$ wavelengths per direction.}
	\label{fig:cc3d}
\end{figure}

\section{Conclusion\label{sec:conclude}} 
We present a fast and accurate scheme based on the Hadamard-Babich integrator for solving high-frequency Helmholtz equations in smooth, inhomogeneous media with arbitrary sources. The scheme low-rank compresses the phase and HB coefficients in the HB integrator with Chebyshev interpolation by solving their governing equations via Lax-Friedrichs WENO schemes with point sources located at the Chebyshev nodes. Once compressed, the phase and HB coefficients are used in the butterfly and HODBF compression of the resulting HB integrator discretized using $N_v$ \ylrev{cells}. Construction and application of the HB integrator require $O(N_v\log^2N_v)$ CPU time and storage units. The scheme can also handle scattering problems involving sound-hard inclusion in the computational domain. In addition, the new scheme requires a much smaller number of discretization points per wavelength compared to finite-difference solvers. As a result, the scheme can model wave propagation for inhomogeneous media in so-far the largest 2D and 3D domains in terms of wavelength volume \ylrev{on a state-the-art supercomputer at Lawrence Berkeley National Laboratory}. Future research direction includes extension of the proposed scheme to non-smooth media or those media permitting presence of caustics, as well as to Maxwell's equations.

	\section*{Acknowledgements}
	This research was supported in part by the Exascale Computing Project (17-SC-20-SC), a collaborative effort of the U.S.~Department of Energy Office of Science and the National Nuclear Security Administration, and in part by the U.S.~Department of Energy, Office of Science, Office of Advanced Scientific Computing Research, Scientific Discovery through Advanced Computing (SciDAC) program through the FASTMath Institute under Contract No.~DE-AC02-05CH11231 at Lawrence Berkeley National Laboratory. This research used resources of the National Energy Research Scientific Computing Center (NERSC), a U.S.~Department of Energy Office of Science User Facility operated under Contract No.~DE-AC02-05CH11231. Qian is partially supported by NSF (grants \#2012046 and \#2152011). Qian is grateful to Lexing Ying for his insightful comments and suggestions on this project. We are also grateful to anonymous reviewers for constructive comments and suggestions.  
	
\appendix


\section{Computation of Self-interaction Terms}\label{sec:self}
\subsection{2-D self-interaction terms}
By the H-B ansatz \eqref{babich}, we need to integrate the leading order term, which reduces to integrating   
\begin{eqnarray}
f_0(\omega,\tau) &=& i\frac{\sqrt{\pi}}{2}H_0^{(1)}(\omega\tau(\mathbf{r}, \mathbf{r}_0)) \nonumber\\
&=& i\frac{\sqrt{\pi}}{2}H_0^{(1)}(\omega n(\mathbf{r})|\mathbf{r}- \mathbf{r}_0|)
\end{eqnarray}
over the cell $c_j$ of size $h$ with respect to $\mathbf{r}_0$, where $\mathbf{r}$ is the center of $c_j$. Here the leading HB coefficient is constant and is suppressed for now. 

\begin{lemma}
The integral 
\[I(\mathbf{r}) = \int_{c_j}H_0^{(1)} (\omega n(\mathbf{r})|\mathbf{r}- \mathbf{r}_0|)d\mathbf{r}_0\]
can be reduced to evaluating 
\begin{eqnarray}
I(\mathbf{r}) &=&\frac{1}{(n_0\omega)^2}\left[8\int_{0}^{\frac{\pi}{4}}{\frac{hn_0\omega}{2\cos\theta}}H_1^{(1)}\left({\frac{hn_0\omega}{2\cos\theta}}\right)d\theta+{4i} \right],
\label{HankelIntegral2dlemma}
\end{eqnarray}
where $n_0=n(\mathbf{r})$, and we need to use the Bessel function to evaluate the above integral.
\end{lemma}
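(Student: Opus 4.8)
The plan is to collapse the two--dimensional integral onto a one--dimensional one by first using the full symmetry of the square cell and then invoking a classical Bessel-function antiderivative identity. First I would translate the integration variable, writing $\mathbf{x}=\mathbf{r}_0-\mathbf{r}$ and $k:=n_0\omega=\omega n(\mathbf{r})$, so that
\[
I(\mathbf{r})=\int_{Q}H_0^{(1)}(k\lvert\mathbf{x}\rvert)\,d\mathbf{x},\qquad Q:=\Bigl[-\tfrac h2,\tfrac h2\Bigr]^2 .
\]
The integrand depends only on $\lvert\mathbf{x}\rvert$ and, near the origin, $H_0^{(1)}(k\lvert\mathbf{x}\rvert)=O\bigl(1+\lvert\ln\lvert\mathbf{x}\rvert\,\rvert\bigr)$, which is integrable in two dimensions; hence $I(\mathbf{r})$ is well defined and Fubini/Tonelli applies below. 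Cutting $Q$ along its two diagonals and its two perpendicular bisectors yields eight congruent right triangles, and since the integrand is invariant under the dihedral symmetry group $D_4$ of $Q$, the integral over $Q$ equals $8$ times the integral over the representative triangle $T$ with vertices $(0,0)$, $(h/2,0)$, $(h/2,h/2)$. In polar coordinates $(\rho,\theta)$, the triangle $T$ is $\{\,0\le\theta\le\pi/4,\ 0\le\rho\le R(\theta)\,\}$ with $R(\theta)=\tfrac{h}{2\cos\theta}$ coming from its vertical edge $x_1=h/2$, so
\[
I(\mathbf{r})=8\int_{0}^{\pi/4}\!\!\int_{0}^{R(\theta)}H_0^{(1)}(k\rho)\,\rho\,d\rho\,d\theta .
\]

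Next I would evaluate the inner radial integral with the substitution $u=k\rho$ together with the recurrence $\frac{d}{du}\bigl(u\,H_1^{(1)}(u)\bigr)=u\,H_0^{(1)}(u)$, valid for Hankel functions exactly as for $J_0,J_1$. This gives
\[
\int_{0}^{R(\theta)}H_0^{(1)}(k\rho)\,\rho\,d\rho=\frac{1}{k^2}\Bigl[u\,H_1^{(1)}(u)\Bigr]_{u=0}^{u=kR(\theta)} .
\]
The only delicate point is the lower endpoint: splitting $H_1^{(1)}=J_1+iY_1$ and using $J_1(u)\sim u/2$, $Y_1(u)\sim-2/(\pi u)$ as $u\to0^{+}$ yields $\lim_{u\to0^{+}}u\,H_1^{(1)}(u)=-2i/\pi$. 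Substituting $kR(\theta)=\tfrac{h n_0\omega}{2\cos\theta}$ and separating the two contributions,
\[
I(\mathbf{r})=\frac{8}{k^2}\int_{0}^{\pi/4}\frac{h n_0\omega}{2\cos\theta}\,H_1^{(1)}\!\Bigl(\frac{h n_0\omega}{2\cos\theta}\Bigr)\,d\theta+\frac{8}{k^2}\int_{0}^{\pi/4}\frac{2i}{\pi}\,d\theta ,
\]
and the last term equals $\tfrac{8}{k^2}\cdot\tfrac{2i}{\pi}\cdot\tfrac{\pi}{4}=\tfrac{4i}{k^2}$. Recalling $k=n_0\omega$ reproduces \eqref{HankelIntegral2dlemma}; the residual single integral is expressed through the Bessel/Hankel function $H_1^{(1)}$ and is then evaluated numerically (e.g., by adaptive quadrature in $\theta$, noting that $z\,H_1^{(1)}(z)$ stays bounded as $z\to0$).

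The main obstacle --- essentially the only nonroutine step --- is the $\rho=0$ endpoint: one must justify interchanging the $\theta$- and $\rho$-integrations in the presence of the logarithmic singularity of $H_0^{(1)}$, and one must correctly extract the finite limit $\lim_{\rho\to0}\rho\,H_1^{(1)}(k\rho)$ from the small-argument asymptotics of $Y_1$, since this limit is precisely what produces the additive $4i$ term. The symmetry reduction, the passage to polar coordinates, and the Bessel recurrence are all standard.
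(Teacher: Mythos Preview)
Your proof is correct and follows essentially the same route as the paper: translate to center the cell at the origin, exploit the eightfold dihedral symmetry to reduce to a single triangle in polar coordinates with outer radius $h/(2\cos\theta)$, apply the antiderivative identity $\frac{d}{du}\bigl(u\,H_1^{(1)}(u)\bigr)=u\,H_0^{(1)}(u)$, and extract the constant $2i/\pi$ from the lower endpoint. You are in fact slightly more careful than the paper in justifying integrability at the origin and in deriving the limit $\lim_{u\to0^+}u\,H_1^{(1)}(u)=-2i/\pi$ directly from the small-argument asymptotics of $J_1$ and $Y_1$, whereas the paper simply cites Abramowitz--Stegun 9.1.9.
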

\begin{proof}
The integral $I$ can be reduced to the integration over a cell of size $h$ centered at the origin. We further partition this cell into eight equal triangles. By using the geodesic polar coordinates centered at the origin, the integral $I$ can be reduced to evaluating the following integral over one triangle, 
\begin{eqnarray} 
I(\mathbf{r}) &=& 8\int_{0}^{\frac{\pi}{4}}d\theta\int_{0}^{\frac{h}{2\cos\theta}}H_0^{(1)}(\omega n_0 r)r dr.  
\end{eqnarray}

Using the following relation from \cite{abrste65}, formula 9.1.30, 
\[\frac{d}{dt}\left(t\;H_1^{(1)}(t)\right)=t\;H_0^{(1)}(t),\] 
we consider the integral 
\begin{eqnarray} 
\int_{0}^{\frac{h}{2\cos\theta}}H_0^{(1)}(\omega n_0 r)\;r\; dr 
&=&\frac{1}{(n_0\omega)^2}\int_{0}^{\frac{hn_0\omega}{2\cos\theta}} H_0^{(1)}(t)\;t\;dt \nonumber \\
&=&\frac{1}{(n_0\omega)^2}\int_{0}^{\frac{hn_0\omega}{2\cos\theta}}\frac{d}{dt}(tH_1^{(1)}(t)) dt \nonumber \\
&=&\frac{1}{(n_0\omega)^2}\left[{\frac{hn_0\omega}{2\cos\theta}}H_1^{(1)}\left({\frac{hn_0\omega}{2\cos\theta}}\right)+\frac{2i}{\pi} \right],  
\end{eqnarray}
where we have used the formula 9.1.9 of \cite{abrste65} to obtain the constant term. 

Now the integral $I$ can be further reduced to 
\begin{eqnarray}
I(\mathbf{r}) &=&\frac{1}{(n_0\omega)^2}\left[8\int_{0}^{\frac{\pi}{4}}{\frac{hn_0\omega}{2\cos\theta}}H_1^{(1)}\left({\frac{hn_0\omega}{2\cos\theta}}\right)d\theta+{4i} \right],
\label{HankelIntegral2d}
\end{eqnarray}
where we need to use the Bessel function to evaluate the above integral. This yields the formula \eqref{HankelIntegral2dlemma}. 
\end{proof}

Therefore, we approximate the self term as  
\begin{eqnarray}
\int_{c_j}g(\mathbf{r}, \mathbf{r}_0)d\mathbf{r}_0 &\approx& \int_{c_j} \left[v_0(\mathbf{r},\mathbf{r}_0)\frac{i\sqrt{\pi}}{2} H_0^{(1)}(\omega n(\mathbf{r})|\mathbf{r}- \mathbf{r}_0|)\right]d\mathbf{r}_0\nonumber\\ 
&\approx& v_0(\mathbf{r},\mathbf{r})\frac{i\sqrt{\pi}}{2} I(\mathbf{r})= \frac{1}{2\sqrt{\pi}}\frac{i\sqrt{\pi}}{2} I(\mathbf{r}) = \frac{i}{4} I(\mathbf{r}),
\end{eqnarray} 
where $I$ is defined in \eqref{HankelIntegral2dlemma}.    

\subsection{3-D self-interaction terms}
Near the source in the 3-D case, from formulas \eqref{babich} and \eqref{termf} we have by keeping the leading-order term, 
\begin{equation}
\label{babich1} g(\mathbf{r},\mathbf{r}_0) = n_0\frac{e^{i\omega\tau}}{4\pi\tau},
\end{equation}
where $n_0$ is the slowness at the source, and we refer to \cite{qiayualiuluobur16} for the reduction process. 

Formula \eqref{babich1} represents the 3-D Green's function near the source. What we need is its integration over the 3-D cube of side $h$. We first consider a special case. 
\subsubsection{Integration of Hankel in 3-D: a special case}
\label{specialcase}
The basic idea of the calculation is to write the integral of \eqref{babich1} over a cube of side $h$ with source point at the center, taken as the origin. The faces of the cube are the planes $x=\pm\frac{h}{2}$, $y=\pm\frac{h}{2}$, and $z=\pm\frac{h}{2}$. 
 
Since we have 
\begin{eqnarray}
h_0^{(1)}(r) &=& j_0(r)+i y_0(r) \quad\quad [\cite{abrste65}, 10.1.1] \nonumber\\
&=&\frac{\sin r - i\cos r}{r} \quad\quad [\cite{abrste65}, 10.1.11, 10.1.12] \nonumber\\
&=& -i\;\frac{e^{i\;r}}{r}, 
\label{e1}
\end{eqnarray}
which is a scaled form of \eqref{babich1}, we start with the integral of $h_0^{(1)}(r)$, where $r$ is the spherical polar radius.

We need to integrate this function over the cube of side $h$, so we need 
\begin{eqnarray}
Q = \int_{S_2}\int_0^{r_1(\mathbf{k})} h_0^{(1)}(r) r^2dr\;d\mathbf{k},
\label{e2} 
\end{eqnarray}
where $\mathbf{k}$ is the unit vector direction of $\mathbf{x}=(x,y,z)$ and $r_1(\mathbf{k})$ is the value of $r$ where the ray $\mathbf{k}$ pierces through the surface of the cube of ``radius'' $\frac{h}{2}$, and $S_2$ is the unit spherical surface in $R^3$. 

We will integrate this over the tetrahedron, 
\[0<z<\frac{h}{2},\quad 0<x<z,\quad 0<y<x.\]
So now let us consider the parameterization of $\mathbf{k}$. Let the ray in direction $\mathbf{k}$ cut the plane $z=1$ in $(\xi,\eta,1)$ or $(\rho\cos\phi,\rho\sin\phi,1)$, where $\rho$ is the cylindrical polar radius and $\phi$ is the azimuthal angle, varying from $0$ to $\frac{\pi}{4}$. We will integrate over the region $0<z<\frac{h}{2}, 0<x<y$, drawn as quite a small tetrahedron near the origin $\mathbf{0}$. The whole cube of side $h$ contains 48 of these small tetrahedra. 

Equation \eqref{e2} may be rewritten 
\begin{eqnarray}
Q &=& -48 i \int_{S_2^{'}} \int_{0}^{r_1(\mathbf{k})}e^{i\;r}rdr\;d\mathbf{k},
\label{e3}
\end{eqnarray}
where $d\mathbf{k}$ is the surface element of the unit sphere on which $\mathbf{k}$ lies, and $S_2^{'}$ is a part of the unit spherical surface to be made precise below. 

We parameterize $\mathbf{k}$ in the first instance by $(\xi,\eta)$, which are $x$ and $y$ coordinates in the plane $z=1$. 
Thus, we have 
\begin{eqnarray}
\mathbf{k} = \frac{(\xi,\eta,1)}{\sqrt{1+\xi^2+\eta^2}} = \frac{(\rho\cos\phi,\rho\sin\phi,1)}{\sqrt{1+\rho^2}}.
\label{e4}
\end{eqnarray}

Let us radically project the element $d\xi d\eta=\rho\;d\rho\;d\phi$ onto the unit sphere. Thus, 
\begin{equation}
d\mathbf{k} = \frac{1}{1+\rho^2} \cos\psi\;d\xi\;d\eta.
\end{equation}
Here $\psi$ is the angle between the normal to the plane $z=1$ and $\mathbf{k}$, i.e., 
\begin{equation}
\cos\psi = (0,0,1)^T\cdot (\xi,\eta,1)^T\frac{1}{\sqrt{1+\rho^2}}=\frac{1}{\sqrt{1+\rho^2}}.
\label{e6}
\end{equation}
Thus, 
\begin{equation}
d\mathbf{k} = \frac{d\xi\;d\eta}{(1+\rho^2)^{\frac{3}{2}}}= \frac{\rho\;d\rho\;d\phi}{(1+\rho^2)^{\frac{3}{2}}}. 
\label{e7}
\end{equation}
From \eqref{e2} and \eqref{e7}, we get 
\begin{eqnarray}
Q &=& -48i\int_{0}^{\frac{\pi}{4}}d\phi \int_{0}^{r_1(\rho)} r e^{ir} dr \frac{\rho\;d\rho}{(1+\rho^2)^{\frac{3}{2}}}, 
\label{e8}\\
r_1(\rho) &=& \frac{h}{2}(1+\rho^2)^{\frac{1}{2}}. 
\label{e9}
\end{eqnarray}
The $r$ integral can be done easily using integration by parts: 
\begin{eqnarray}
\int_{0}^{r_1} r e^{ir}dr &=& [\frac{1}{i}re^{ir}]_0^{r_1}-\int_{0}^{r_1}\frac{1}{i}e^{ir}dr \nonumber\\
                                    &=& \frac{1}{i}r_1e^{ir_1}+[e^{ir}]_{0}^{r_1} \nonumber\\
                                    &=& -i r_1e^{ir_1}+e^{ir_1} - 1\nonumber\\
                                    &=& e^{ir_1}(1-ir_1)-1. \label{e10}
\end{eqnarray}

So, from \eqref{e3}, 
\begin{eqnarray}
Q= -48 i \int_{0}^{\pi/4}d\phi \int_0^{\frac{1}{\cos\phi}}(e^{ir_1}(1-ir_1)-1)\frac{\rho\;d\rho}{(1+\rho^2)^{\frac{3}{2}}}.
\label{e11}
\end{eqnarray}
Set
\begin{eqnarray}
\rho^{'} = \rho\cos\phi,\quad\quad d\rho = \frac{1}{\cos\phi}d\rho^{'}. 
\label{e11a}
\end{eqnarray}
Then 
\begin{eqnarray}
Q = -48i \int_{0}^{\pi/4}d\phi\int_{0}^{1}d\rho'[e^{ir_1(\rho)}(1-ir_1(\rho))-1]\frac{\rho}{(1+\rho^2)^{\frac{3}{2}}}\frac{1}{\cos\phi}. 
\label{e12}
\end{eqnarray}
The function $f(\rho',\phi)$ is given by 
\begin{eqnarray}
f(\rho',\phi) = [e^{ir_1(\rho)}(1-ir_1(\rho))-1]\frac{\rho}{(1+\rho^2)^{\frac{3}{2}}}\frac{1}{\cos\phi},
\label{e13}
\end{eqnarray}
where 
\begin{eqnarray}
\rho &=& \frac{\rho'}{\cos\phi}, \nonumber\\
r_1(\rho)&=& \frac{h}{2}(1+\rho^2)^{\frac{1}{2}}
\label{e14}
\end{eqnarray}
by equations \eqref{e9} and \eqref{e11a}. 

\subsubsection{Integration of Hankel in 3-D: generic case}
Formula \eqref{babich1} represents the 3-D Green's function near the source. What we need is its integration over the 3-D cube of side $h$. To do that, we just need to carry out a coordinate transformation to transfer the integral of $h_0^{(1)}$ derived in \cref{specialcase} to our current setting. 

We have near the source
\begin{eqnarray}
\label{green3d} g(\mathbf{r},\mathbf{r}_0) &=& n_0\frac{e^{i\omega\tau}}{4\pi\tau} \nonumber\\
&=&\frac{i n_0\omega}{4\pi} (-i)\frac{e^{i\omega\tau}}{\omega\tau} \nonumber\\
&=&\frac{i n_0\omega}{4\pi} h_0^{(1)}({\omega\tau}) \nonumber\\
&=&\frac{i n_0\omega}{4\pi} h_0^{(1)}({\omega n_0 |\mathbf{r}-\mathbf{r}_0|}) \nonumber\\
&=&\frac{i n_0\omega}{4\pi} h_0^{(1)}({\omega n_0 r}), 
\end{eqnarray}
where $r=|\mathbf{r}-\mathbf{r}_0|$. 

The integration of the above Green's function in the cell centered at the source will be 
\begin{eqnarray}
I &=&\frac{in_0\omega}{4\pi}\int_{S_2}\int_0^{r_1(\mathbf{k})} h_0^{(1)}(\omega n_0 r) r^2dr\;d\mathbf{k} \nonumber \\
&=&\frac{in_0\omega}{4\pi}\int_{S_2}\int_0^{\omega n_0 r_1(\mathbf{k})} h_0^{(1)}(t) \frac{t^2}{\omega^2n_0^2}\frac{dt}{\omega n_0}\;d\mathbf{k} \nonumber \\
&=&\frac{i}{4\pi\omega^2 n_0^2}\int_{S_2}\int_0^{\omega n_0 r_1(\mathbf{k})} h_0^{(1)}(t)\;t^2\;dt\;d\mathbf{k} \nonumber \\
&=&\frac{i}{4\pi\omega^2 n_0^2}Q_s, 
\end{eqnarray}
where $Q_s$ is the scaled integral of $Q$ as defined in \eqref{e12} and hence is defined by the following integration, 
\begin{eqnarray}
Q_s &=& -48i \int_{0}^{\pi/4}d\phi\int_{0}^{1}d\rho'\left[e^{ir_1(\rho)}(1-ir_1(\rho))-1\right]\frac{\rho}{(1+\rho^2)^{\frac{3}{2}}}\frac{1}{\cos\phi} \nonumber \\
&=& -48i \int_{0}^{\pi/4}d\phi\int_{0}^{1}d\rho' f(\rho',\phi). 
\label{e312}
\end{eqnarray}
The function $f(\rho',\phi)$ is given by 
\begin{eqnarray}
f(\rho',\phi) = \left[e^{ir_1(\rho)}(1-ir_1(\rho))-1\right]\frac{\rho}{(1+\rho^2)^{\frac{3}{2}}}\frac{1}{\cos\phi},
\label{e313}
\end{eqnarray}
where 
\begin{eqnarray}
\rho= \frac{\rho'}{\cos\phi} \;\mbox{ and }\; r_1(\rho)=\omega n_0 \frac{h}{2}(1+\rho^2)^{\frac{1}{2}}
\label{e314}
\end{eqnarray}
by equation \eqref{e14}. Here $r_1(\rho)$ is scaled by the factor $\omega n_0$.  
	
\ylrev{	
\section{Numerical accuracy of the truncated 3-D H-B ansatz}\label{sec:go2hb}	
Since the H-B ansatz \eqref{hb} is based on Hankel functions, we may directly analyze the effect of accuracy of the H-B ingredients, 
such as the phase and H-B coefficients, on wave solution. However, since, {\it away from the point source}, the H-B ansatz is equivalent to the traditional geometrical optics \cite{qiayualiuluobur16,luqiabur16}, to avoid some technical details we will consider the following truncated geometrical-optics ansatz (valid for 3-D wave motion) as a proxy for the truncated H-B ansatz away from the point source:  
\begin{eqnarray}
g_{\rm GO}(\mathbf{r},\mathbf{r}_0) = \sum_{s=0}^N \frac{A_s(\mathbf{r},\mathbf{r}_0)}{(i\omega)^s}	e^{i\omega\tau(\mathbf{r},\mathbf{r}_0)},
\end{eqnarray}	
where $\mathbf{r}_0$ is the source, $\mathbf{r}$ is the observation point, $N$ is an integer, $\tau$ is the phase satisfying the eikonal equation, and $\{A_s\}_{s=0}^N$ are amplitudes satisfying transport equations \cite{avikel63,bab65,qiayualiuluobur16,luqiabur16}.  

Since these $A_s$ functions are directly linked to the H-B coefficients $v_s$ \cite{bab65,qiayualiuluobur16,luqiabur16}, we can assume that these $A_s$ functions are computed to the same orders of accuracy as $v_s$ away from the point source.  Starting from this assumption, we briefly analyze the error between 
$g_{\rm GO}(\mathbf{r},\mathbf{r}_0)	$ and its numerical solution $g^h_{\rm GO}(\mathbf{r},\mathbf{r}_0)$ for any point $\mathbf{r}$ away from the 
source $\mathbf{r}_0$. 	

When $N=0$, we have 
\begin{eqnarray}
|g_{\rm GO}(\mathbf{r},\mathbf{r}_0)-g_{\rm GO}^h(\mathbf{r},\mathbf{r}_0)|&=& |A_0(\mathbf{r},\mathbf{r}_0)e^{i\omega\tau(\mathbf{r},\mathbf{r}_0)}-A_0^h(\mathbf{r},\mathbf{r}_0)e^{i\omega\tau^h(\mathbf{r},\mathbf{r}_0)}|\nonumber\\
&\leq& O(h_0^3) + O(\omega h_0^5).
\end{eqnarray}	

When $N=1$, we have 
\begin{eqnarray}
|g_{\rm GO}(\mathbf{r},\mathbf{r}_0)-g_{\rm GO}^h(\mathbf{r},\mathbf{r}_0)|&\leq& |A_0(\mathbf{r},\mathbf{r}_0)e^{i\omega\tau(\mathbf{r},\mathbf{r}_0)}-A_0^h(\mathbf{r},\mathbf{r}_0)e^{i\omega\tau^h(\mathbf{r},\mathbf{r}_0)}|\nonumber\\
&& +\frac{1}{\omega} |A_1(\mathbf{r},\mathbf{r}_0)e^{i\omega\tau(\mathbf{r},\mathbf{r}_0)}-A_1^h(\mathbf{r},\mathbf{r}_0)e^{i\omega\tau^h(\mathbf{r},\mathbf{r}_0)}|\nonumber\\
&\leq& O(h_0^3) + O(\omega h_0^5) + O\left(\frac{h_0}{\omega}\right) + O(h_0^5) \nonumber \\
&=&O(h_0^3) + O(\omega h_0^5) + O\left(\frac{h_0}{\omega}\right). 
\end{eqnarray}	

Therefore, {\it away from the point source} we will use the above geometrical-optics estimates as the proxy for the truncated H-B estimates in the total error estimates. On the other hand, {\it near the source but  excluding the source}, such truncated H-B estimates also hold since we have the following two observations: (1) the H-B ansatz is an uniformly asymptotic solution to the point-source Helmholtz equation so that it can be treated as the exact solution of the point-source equation, and (2) the computed H-B ingredients in the truncated H-B expansion are initialized near the point source according to specified orders of accuracy.
}
	
	\bibliographystyle{plain}
	\bibliography{references,RayFEMreferences,myref}

\begin{thebibliography}{10}

\bibitem{abrste65}
M.~Abramowitz and I.~A. Stegun.
\newblock {\em Handbook of mathematical functions}.
\newblock Dover Publications, Inc., New York., 1965.

\bibitem{Amestoy_Weisbecker:compressed_MUMPS}
P.~Amestoy, C.~Ashcraft, O.~Boiteau, A.~Buttari, J.-Y. L'Excellent, and
  C.~Weisbecker.
\newblock Improving multifrontal methods by means of block low-rank
  representations.
\newblock {\em SIAM Journal on Scientific Computing}, 37(3):A1451--A1474, 2015.

\bibitem{avikel63}
G.~S. Avila and J.~B. Keller.
\newblock The high-frequency asymptotic field of a point source in an
  inhomogeneous medium.
\newblock {\em Commun. Pure Appl. Math.}, 16:363--381, 1963.

\bibitem{bab65}
V.~M. Babich.
\newblock {The short wave asymptotic form of the solution for the problem of a
  point source in an inhomogeneous medium}.
\newblock {\em USSR Computational Mathematics and Mathematical Physics},
  5(5):247--251, 1965.

\bibitem{Babuska:A_Generalized_Finite_Element_Method_for_solving_the_Helmholtz_equation_in_two_dimensions_with_minimal_pollution}
I.~Babuska, F.~Ihlenburg, E.~T. Paik, and S.~A. Sauter.
\newblock A generalized finite element method for solving the {H}elmholtz
  equation in two dimensions with minimal pollution.
\newblock {\em Computer Methods in Applied Mechanics and Engineering},
  128(3-4):325--359, 1995.

\bibitem{babsau00}
I.~M. Babu\u{s}ka and S.~A. Sauter.
\newblock Is the pollution effect of the {FEM} avoidable for the {H}elmholtz
  equation considering high wave numbers?
\newblock {\em SIAM Review}, 42:451--484, 2000.

\bibitem{Bebendorf:2008}
M.~Bebendorf.
\newblock {\em Hierarchical Matrices: A Means to Efficiently Solve Elliptic
  Boundary Value Problems}, volume~63 of {\em Lecture Notes in Computational
  Science and Engineering (LNCSE)}.
\newblock Springer-Verlag, 2008.
\newblock ISBN 978-3-540-77146-3.

\bibitem{Hackbusch:Hierarchical_matrices}
S.~Boerm, L.~Grasedyck, and W.~Hackbusch.
\newblock {\em Hierarchical matrices}.
\newblock Max-Planck- Institute Lecture Notes, 2006.

\bibitem{bremer2020SHT}
James Bremer, Ze~Chen, and Haizhao Yang.
\newblock {Rapid Application of the Spherical Harmonic Transform via
  Interpolative Decomposition Butterfly Factorization}.
\newblock {\em arXiv preprint arXiv:2004.11346}, 2020.

\bibitem{Candes_butterfly_2009}
Emmanuel Candès, Laurent Demanet, and Lexing Ying.
\newblock {A fast butterfly algorithm for the computation of Fourier integral
  operators}.
\newblock {\em Multiscale Model. Sim.}, 7(4):1727--1750, 2009.

\bibitem{Chen_Xiang:a_source_transfer_ddm_for_helmholtz_equations_in_unbounded_domain}
Z.~Chen and X.~Xiang.
\newblock A source transfer domain decomposition method for {H}elmholtz
  equations in unbounded domain.
\newblock {\em SIAM Journal on Numerical Analysis}, 51(4):2331--2356, 2013.

\bibitem{chen2013optimal}
Zhongying Chen, Dongsheng Cheng, Wei Feng, and Tingting Wu.
\newblock An optimal 9-point finite difference scheme for the helmholtz
  equation with pml.
\newblock {\em International Journal of Numerical Analysis \& Modeling}, 10(2),
  2013.

\bibitem{Davis:UMFPACK}
T.~A. Davis.
\newblock Algorithm 832: {UMFPACK} v4.3---an unsymmetric-pattern multifrontal
  method.
\newblock {\em ACM Transactions on Mathematical Software}, 30(2):196--199, June
  2004.

\bibitem{Duff_Reid:The_Multifrontal_Solution_of_Indefinite_Sparse_Symmetric_Linear}
I.~S. Duff and J.~K. Reid.
\newblock The multifrontal solution of indefinite sparse symmetric linear.
\newblock {\em ACM Trans. Math. Softw.}, 9(3):302--325, September 1983.

\bibitem{Eikrem2020IterativeLS}
Kjersti~Solberg Eikrem, Geir Nævdal, and Morten Jakobsen.
\newblock {Iterative solution of the Lippmann–Schwinger equation in strongly
  scattering acoustic media by randomized construction of preconditioners}.
\newblock {\em Geophysical Journal International}, 224(3):2121--2130, 10 2020.

\bibitem{EngquistYing:Sweeping_H}
B.~Engquist and L.~Ying.
\newblock Sweeping preconditioner for the {H}elmholtz equation: hierarchical
  matrix representation.
\newblock {\em Communications on Pure and Applied Mathematics}, 64(5):697--735,
  2011.

\bibitem{EngquistYing:Sweeping_PML}
B.~Engquist and L.~Ying.
\newblock Sweeping preconditioner for the {H}elmholtz equation: moving
  perfectly matched layers.
\newblock {\em Multiscale Modeling \& Simulation}, 9(2):686--710, 2011.

\bibitem{engzha18}
B.~Engquist and H.-K. Zhao.
\newblock Approximate separability of the {Green's} function of the {H}elmholtz
  equation in the high-frequency limit.
\newblock {\em {Comm. Pure Appl. Math.}}, 71:2220--2274, 2018.

\bibitem{fanqiazepzha18}
J.~Fang, J.~Qian, L.~Zepeda-Nunez, and H.~Zhao.
\newblock A hybrid approach to solve the high-frequency {Helmholtz} equation
  with source singularity in inhomogeneous media.
\newblock {\em J. Comp. Phys.}, 371:261--279, 2018.

\bibitem{fanqiazepzha17}
J.~Fang, J.~Qian, L.~Zepeda-Nunez, and H.-K Zhao.
\newblock Learning dominant wave directions for plane wave methods for
  high-frequency {Helmholtz} equations.
\newblock {\em {Research in the Mathematical Sciences}}, 4:9--39, 2017.

\bibitem{fomluozha09}
S.~Fomel, S.~Luo, and H.~K. Zhao.
\newblock Fast sweeping method for the factored eikonal equation.
\newblock {\em J. Comput. Phys.}, 228:6440--6455, 2009.

\bibitem{GeorgeNested_dissection}
A.~George.
\newblock Nested dissection of a regular finite element mesh.
\newblock {\em SIAM Journal on Numerical Analysis}, 10:345--363, 1973.

\bibitem{ghysels2017robust}
Pieter Ghysels, Xiaoye~Sherry Li, Christopher Gorman, and Fran{\c{c}}ois-Henry
  Rouet.
\newblock {A robust parallel preconditioner for indefinite systems using
  hierarchical matrices and randomized sampling}.
\newblock In {\em 2017 IEEE International Parallel and Distributed Processing
  Symposium (IPDPS)}, pages 897--906. IEEE, 2017.

\bibitem{Giladi_Keller:A_hybrid_numerical_asymptotic_method_for_scattering_problems}
E.~Giladi and Keller.~J. B.
\newblock A hybrid numerical asymptotic method for scattering problems.
\newblock {\em Journal of Computational Physics}, 174(1):226--247, 2001.

\bibitem{Han_2013_butterflyLU}
Han Guo, Jun Hu, and Eric Michielssen.
\newblock {On MLMDA/butterfly compressibility of inverse integral operators}.
\newblock {\em IEEE Antennas Wirel. Propag. Lett.}, 12:31--34, 2013.

\bibitem{Han_2017_butterflyLUPEC}
Han Guo, Yang Liu, Jun Hu, and Eric Michielssen.
\newblock {A butterfly-based direct integral-equation solver using hierarchical
  LU factorization for analyzing scattering from electrically large conducting
  objects}.
\newblock {\em IEEE Trans. Antennas Propag.}, 65(9):4742--4750, 2017.

\bibitem{Han_2018_butterflyLUDielectric}
Han Guo, Yang Liu, Jun Hu, and Eric Michielssen.
\newblock {A butterfly-based direct solver using hierarchical {LU}
  factorization for {Poggio-Miller-Chang-Harrington-Wu-Tsai} equations}.
\newblock {\em Microw Opt Technol Lett.}, 60:1381–1387, 2018.

\bibitem{Holford1981Elementary}
R.~L. Holford.
\newblock Elementary source‐type solutions of the reduced wave equation.
\newblock {\em The Journal of the Acoustical Society of America},
  70(5):1427--1436, 1981.

\bibitem{Howarth2014:New_generation_finite_element_methods_for_forward_seismic_modelling}
C.~Howarth.
\newblock {\em New generation finite element methods for forward seismic
  modelling}.
\newblock PhD thesis, University of Reading, 2014.

\bibitem{kaooshqia04}
C.~Y. Kao, S.~J. Osher, and J.~Qian.
\newblock {L}ax-{F}riedrichs sweeping schemes for static {H}amilton-{J}acobi
  equations.
\newblock {\em J. Comput. Phys.}, 196:367--391, 2004.

\bibitem{lamqia19}
C.~Y. Lam and J.~Qian.
\newblock Numerical microlocal analysis by fast {G}aussian wavepacket
  transforms and application to high-frequency {H}elmholtz problems.
\newblock {\em {SIAM J. Sci. Comput.}}, 41:{A2717--A2746}, 2019.

\bibitem{lax57}
P.~Lax.
\newblock Asymptotic solutions of oscillatory initial value problems.
\newblock {\em Duke Math. J.}, 24:627--645, 1957.

\bibitem{li_demmel03:SuperLU_DIST}
X.~S. Li and J.~W. Demmel.
\newblock {SuperLU DIST}: A scalable distributed-memory sparse direct solver
  for unsymmetric linear systems.
\newblock {\em ACM Trans. Mathematical Software}, 29(2):110--140, June 2003.

\bibitem{Yingzhou_2017_IBF}
Yingzhou Li and Haizhao Yang.
\newblock {Interpolative butterfly factorization}.
\newblock {\em SIAM J. Sci. Comput.}, 39(2):A503--A531, 2017.

\bibitem{li2015butterfly}
Yingzhou Li, Haizhao Yang, Eileen~R Martin, Kenneth~L Ho, and Lexing Ying.
\newblock {Butterfly factorization}.
\newblock {\em Multiscale Model. Sim.}, 13(2):714--732, 2015.

\bibitem{YingLiu:Sparsify_and_sweep}
F.~Liu and L.~Ying.
\newblock Sparsify and sweep: an efficient preconditioner for the
  lippmann-schwinger equation.
\newblock {\em ArXiv e-prints}, [math.NA] 1705.09443, 2017.

\bibitem{liu2021sparse}
Yang Liu, Pieter Ghysels, Lisa Claus, and Xiaoye~Sherry Li.
\newblock Sparse approximate multifrontal factorization with butterfly
  compression for high-frequency wave equations.
\newblock {\em SIAM Journal on Scientific Computing}, 0(0):S367--S391, 2021.

\bibitem{Liu_2017_HODBF}
Yang {Liu}, Han {Guo}, and Eric {Michielssen}.
\newblock {An HSS matrix-inspired butterfly-based direct solver for analyzing
  scattering from two-dimensional objects}.
\newblock {\em IEEE Antennas Wirel. Propag. Lett.}, 16:1179--1183, 2017.

\bibitem{liu2021butterfly}
Yang Liu, Xin Xing, Han Guo, Eric Michielssen, Pieter Ghysels, and
  Xiaoye~Sherry Li.
\newblock Butterfly factorization via randomized matrix-vector multiplications.
\newblock {\em SIAM Journal on Scientific Computing}, 43(2):A883--A907, 2021.

\bibitem{Yang_2020_BFprecondition}
Yang Liu and Haizhao Yang.
\newblock {A hierarchical butterfly LU preconditioner for two-dimensional
  electromagnetic scattering problems involving open surfaces}.
\newblock {\em J. Comput. Phys.}, 401:109014, 2020.

\bibitem{luqiabur16b}
W.~Lu, J.~Qian, and R.~Burridge.
\newblock Babich-like ansatz for three-dimensional point-source {Maxwell's}
  equations in an inhomogeneous medium at high frequencies.
\newblock {\em {SIAM J.} Multiscale Model. Simul.}, 14(3):1089--1122, 2016.

\bibitem{luqiabur16}
W.~Lu, J.~Qian, and R.~Burridge.
\newblock Babich's expansion and the fast {Huygens} sweeping method for the
  {Helmholtz} wave equation at high frequencies.
\newblock {\em J. Comput. Phys.}, 313:478--510, 2016.

\bibitem{luqiabur18}
W.~Lu, J.~Qian, and R.~Burridge.
\newblock Extending {B}abich's ansatz for point-source {M}axwell's equations
  using {H}adamard's method.
\newblock {\em {SIAM} J. Multiscale Model. Simul.}, 16:727--751, 2018.

\bibitem{luoqiabur14a}
S.~Luo, J.~Qian, and R.~Burridge.
\newblock Fast {H}uygens sweeping methods for {H}elmholtz equations in
  inhomogeneous media in the high frequency regime.
\newblock {\em J. Comput. Phys.}, 270:378--401, 2014.

\bibitem{luoqiabur14}
S.~Luo, J.~Qian, and R.~Burridge.
\newblock High-order factorization based high-order hybrid fast sweeping
  methods for point-source eikonal equations.
\newblock {\em SIAM J. Numer. Analy.}, 52:23--44, 2014.

\bibitem{Eric_1994_butterfly}
Eric Michielssen and Amir Boag.
\newblock {Multilevel evaluation of electromagnetic fields for the rapid
  solution of scattering problems}.
\newblock {\em Microw Opt Technol Lett.}, 7(17):790--795, 1994.

\bibitem{michielssen_multilevel_1996}
Eric Michielssen and Amir Boag.
\newblock {A multilevel matrix decomposition algorithm for analyzing scattering
  from large structures}.
\newblock {\em IEEE Trans. Antennas Propag.}, 44(8):1086--1093, 1996.

\bibitem{Monk_Wang:A_least-squares_method_for_the_Helmholtz_equation}
P.~Monk and D.-Q. Wang.
\newblock A least-squares method for the {H}elmholtz equation.
\newblock {\em Computer Methods in Applied Mechanics and Engineering},
  175(1–2):121--136, 1999.

\bibitem{DBLP:journals/jcphy/NguyenPRC15}
N.~C. Nguyen, J.~Peraire, F.~Reitich, and B.~Cockburn.
\newblock A phase-based hybridizable discontinuous {G}alerkin method for the
  numerical solution of the {H}elmholtz equation.
\newblock {\em J. Comput. Physics}, 290:318--335, 2015.

\bibitem{Oneil_2010_specialfunction}
Michael O'Neil, Franco Woolfe, and Vladimir Rokhlin.
\newblock {An algorithm for the rapid evaluation of special function
  transforms}.
\newblock {\em Appl. Comput. Harmon. A.}, 28(2):203 -- 226, 2010.
\newblock Special Issue on Continuous Wavelet Transform in Memory of Jean
  Morlet, Part I.

\bibitem{operto20073d}
St{\'e}phane Operto, Jean Virieux, Patrick Amestoy, Jean-Yves L’Excellent,
  Luc Giraud, and Hafedh Ben~Hadj Ali.
\newblock {3D finite-difference frequency-domain modeling of visco-acoustic
  wave propagation using a massively parallel direct solver: A feasibility
  study}.
\newblock {\em Geophysics}, 72(5):SM195--SM211, 2007.

\bibitem{Pang2020IDBF}
Qiyuan Pang, Kenneth~L. Ho, and Haizhao Yang.
\newblock {Interpolative decomposition butterfly factorization}.
\newblock {\em SIAM J. Sci. Comput.}, 42(2):A1097--A1115, 2020.

\bibitem{pic97}
A.~Pica.
\newblock Fast and accurate finite-difference solutions of the {3-D} eikonal
  equation parameterized in celerity.
\newblock In {\em Expanded Abstracts}, pages 1774--1777. Soc. Expl. Geophys.,
  Tulsa, OK, 1997.

\bibitem{qialuyualuobur16}
J.~Qian, W.~Lu, L.~Yuan, S.~Luo, and R.~Burridge.
\newblock {Eulerian geometrical optics and fast Huygens sweeping methods for
  three-dimensional time-harmonic high-frequency Maxwell's equations in
  inhomogeneous media}.
\newblock {\em SIAM J. Multiscale Modeling and Simulation}, 16:595--636, 2016.

\bibitem{qiasonlubur21}
J.~Qian, J.~Song, W.~Lu, and R.~Burridge.
\newblock {Hadamard-Babich} ansatz for point-source elastic wave equations in
  variable media at high frequencies.
\newblock {\em SIAM Multiscale Model. Simul.}, 19:46--86, 2021.

\bibitem{qiayualiuluobur16}
J.~Qian, L.~Yuan, Y.~Liu, S.~Luo, and R.~Burridge.
\newblock {B}abich's expansion and high-order {E}ulerian asymptotics for
  point-source {H}elmholtz equations.
\newblock {\em Journal of Scientific Computing}, 67:883--908, 2016.

\bibitem{Sadeed2022VIE}
Sadeed~Bin Sayed, Yang Liu, Luis~J. Gomez, and Abdulkadir~C. Yucel.
\newblock A butterfly-accelerated volume integral equation solver for broad
  permittivity and large-scale electromagnetic analysis.
\newblock {\em IEEE Transactions on Antennas and Propagation},
  70(5):3549--3559, 2022.

\bibitem{schaubert1984tetrahedral}
D~Schaubert, D~Wilton, and A~Glisson.
\newblock A tetrahedral modeling method for electromagnetic scattering by
  arbitrarily shaped inhomogeneous dielectric bodies.
\newblock {\em IEEE Trans. Antennas Propag.}, 32(1):77--85, 1984.

\bibitem{CStolk_rapidily_converging_domain_decomposition}
C.~C. Stolk.
\newblock A rapidly converging domain decomposition method for the {H}elmholtz
  equation.
\newblock {\em Journal of Computational Physics}, 241(0):240--252, 2013.

\bibitem{symqia03slowm}
W.~W. Symes and J.~Qian.
\newblock A slowness matching {E}ulerian method for multivalued solutions of
  eikonal equations.
\newblock {\em J. Sci. Comp.}, 19:501--526, 2003.

\bibitem{Taus_Demanet_Zepeda:HDG_Helmholtz}
M.~Taus, L.~Demanet, and L.~Zepeda-N\'u\~nez.
\newblock A short note on a fast and high-order hybridizable discontinuous
  {G}alerkin solver for the 2{D} high-frequency {H}elmholtz equation.
\newblock In {\em SEG Technical Program Expanded Abstracts 2016}, pages
  3835--3840, 2016.

\bibitem{Tygert_2010_spherical}
Mark Tygert.
\newblock {Fast algorithms for spherical harmonic expansions, III}.
\newblock {\em J. Comput. Phys.}, 229(18):6181 -- 6192, 2010.

\bibitem{GeuzaineVion:double_sweep}
A.~Vion and C.~Geuzaine.
\newblock Double sweep preconditioner for optimized {S}chwarz methods applied
  to the {H}elmholtz problem.
\newblock {\em Journal of Computational Physics}, 266(0):171--190, 2014.

\bibitem{Wang_Li_Sia_Situ_Hoop:Efficient_Scalable_Algorithms_for_Solving_Dense_Linear_Systems_with_Hierarchically_Semiseparable_Structures}
S.~Wang, X.~S. Li, Xia J., Y.~Situ, and M.~V. de~Hoop.
\newblock Efficient scalable algorithms for solving dense linear systems with
  hierarchically semiseparable structures.
\newblock {\em SIAM Journal on Scientific Computing}, 35(6):C519--C544, 2013.

\bibitem{xiozhazhashu10}
T.~Xiong, M.~Zhang, Y.~T. Zhang, and C.-W. Shu.
\newblock Fast sweeping fifth order weno scheme for static hamilton-jacobi
  equations with accurate boundary treatment.
\newblock {\em J. Sci. Comput.}, 45:514--536, 2010.

\bibitem{Haizhao_2018_Phase}
Haizhao Yang.
\newblock {A unified framework for oscillatory integral transforms: When to use
  NUFFT or butterfly factorization?}
\newblock {\em J. Comput. Phys.}, 388:103 -- 122, 2019.

\bibitem{Lexing_SFT_2009}
Lexing Ying.
\newblock {Sparse Fourier Transform via Butterfly Algorithm}.
\newblock {\em SIAM J. Sci. Comput.}, 31(3):1678--1694, 2009.

\bibitem{YURKIN2007558}
M.A. Yurkin and A.G. Hoekstra.
\newblock The discrete dipole approximation: An overview and recent
  developments.
\newblock {\em Journal of Quantitative Spectroscopy and Radiative Transfer},
  106(1):558--589, 2007.
\newblock IX Conference on Electromagnetic and Light Scattering by
  Non-Spherical Particles.

\bibitem{ZepedaDemanet:the_method_of_polarized_traces}
L.~Zepeda-N\'u\~nez and L.~Demanet.
\newblock The method of polarized traces for the {2D} {H}elmholtz equation.
\newblock {\em Journal of Computational Physics}, 308:347--388, 2016.

\bibitem{ZepedaZhao:Fast_Lippmann_Schwinger_solver}
L.~Zepeda-N\'u\~nez and H.~Zhao.
\newblock Fast alternating bidirectional preconditioner for the 2{D}
  high-frequency {L}ippmann--{S}chwinger equation.
\newblock {\em SIAM Journal on Scientific Computing}, 38(5):B866--B888, 2016.

\bibitem{zharechov05}
L.~Zhang, J.~W. Rector, and G.~M. Hoversten.
\newblock Eikonal solver in the celerity domain.
\newblock {\em Geophys. J. Internat.}, 162:1--8, 2005.

\bibitem{zhazhaqia06}
Y.~T. Zhang, H.~K. Zhao, and J.~Qian.
\newblock High order fast sweeping methods for static {H}amilton-{J}acobi
  equations.
\newblock {\em J. Sci. Comp.}, 29:25--56, 2006.

\bibitem{zha05}
H.~K. Zhao.
\newblock Fast sweeping method for eikonal equations.
\newblock {\em Math. Comp.}, 74:603--627, 2005.

\end{thebibliography}

\end{document}